\documentclass[11pt]{article}
\usepackage{fullpage}
\usepackage{amsmath,amsthm,amsfonts,amssymb}
\usepackage{graphicx,enumerate}
\usepackage[title]{appendix}
\usepackage[
CJKbookmarks=true,
bookmarksnumbered=true,
bookmarksopen=true,
colorlinks=true,
citecolor=red,
linkcolor=blue,
anchorcolor=red,
urlcolor=blue
]{hyperref}

\usepackage{caption}
\usepackage{subcaption}
\usepackage{enumitem}

\makeatletter
\DeclareFontEncoding{LS1}{}{}
\DeclareFontSubstitution{LS1}{stix}{m}{n}
\DeclareMathAlphabet{\mathscr}{LS1}{stixscr}{m}{n}
\makeatother

\newcommand{\floor}[1]{\lfloor #1 \rfloor}

\newcommand{\expect}{\mathbb{E}}

\newcommand{\M}{\mathscr{M}}
\newcommand{\X}{\mathcal{X}}
\newcommand{\e}{\mathrm{e}}
\newcommand{\prob}{\mathbb{P}}

\DeclareMathOperator*{\argmin}{arg\,min}
\DeclareMathOperator*{\argmint}{{arg\,min}^{[2]}}
\DeclareMathOperator*{\mint}{{min}^{[2]}}
\newcommand{\lra}{\leftrightarrow}

\newcommand*\xbar[1]{\hbox{\vbox{
			\hrule height 0.5pt 
			\kern0.4ex
			\hbox{\kern-0.1em
				\ensuremath{#1}%
				\kern-0.1em
	}}}
}

\usepackage{tikz}
\usepackage{tikz-qtree}
\usepackage{mathtools}
\usetikzlibrary{calc}
\usetikzlibrary{decorations.pathreplacing,decorations.markings}
\tikzstyle{dot}=[circle,fill,black,inner sep=1pt]

\newtheorem{theorem}{Theorem}
\newtheorem{lemma}{Lemma}
\newtheorem{proposition}{Proposition}
\newtheorem{corollary}{Corollary}
\newtheorem{definition}{Definition}
\newtheorem{remark}{Remark} 
\newtheorem*{observation}{Observation}

\usepackage{color}
\usepackage{algorithm}
\usepackage{algorithmic}
\usepackage{xspace}
\newcommand{\ie}{i.e.\xspace}

\renewcommand{\hat}{\widehat}
\renewcommand{\tilde}{\widetilde}
\newcommand{\overlap}{\mathrm{overlap}}

\newcommand{\rplus}{\mathbb{R}_+}
\newcommand{\nplus}{\mathbb{N}}
\renewcommand{\root}{\textrm{\o}}
\newcommand{\lwc}{\xrightarrow{\textrm{loc}}}
\newcommand{\one}{{\boldsymbol{1}}}

\newcommand{\Mmin}{M_{\textrm{min}}}
\newcommand{\Mplanted}{M^*}
\newcommand{\opt}{\textrm{opt}}
\newcommand{\Mopt}{M_{\min}}
\newcommand{\Mnopt}{M_{n,\min}}
\newcommand{\Minfopt}{\mathscr{M}_{\infty,\opt}}
\newcommand{\concat}[2]{#1 #2}
\newcommand{\bs}[1]{\boldsymbol{#1}}

\newcommand{\parent}{\textrm{parent}}
\newcommand{\gen}{\textrm{gen}}
\newcommand{\symdiff}{\triangle}
\newcommand{\dx}{\mathrm{d}x}
\newcommand{\dy}{\mathrm{d}y}
\newcommand{\ds}{\mathrm{d}s}
\newcommand{\dz}{\mathrm{d}z}
\newcommand{\dF}{\mathrm{d}F}
\newcommand{\dG}{\mathrm{d}G}
\newcommand{\dV}{\mathrm{d}V}
\newcommand{\dW}{\mathrm{d}W}
\newcommand{\dU}{\mathrm{d}U}
\newcommand{\dt}{\mathrm{d}t}

\newcommand{\Gs}{G_\circ}
\newcommand{\Gsp}{G'_\circ}

\newcommand{\Ns}{N_\circ}
\newcommand{\Nsp}{N'_\circ}
\newcommand{\Nso}{N_{\circ\circ}}
\newcommand{\Gstar}{\mathcal{G}_*}

\newcommand{\Gstarstar}{\mathcal{G}_{**}}
\newcommand{\Gmadstarstar}{\widehat{\mathcal{G}}_{**}}

\title{The Planted Matching Problem:\\ 
	Phase Transitions and Exact Results}

\author{Mehrdad Moharrami, Cristopher Moore, and Jiaming Xu\thanks{
		M.\ Moharrami (Electrical and Computer Engineering, U. Michigan, Ann Arbor) is supported by the Rackham Predoctoral Fellowship, a departmental Graduate Student Instructor appointment and by NSF Grants AST-1443972/AST-1516075.
		\texttt{moharami@umich.edu}.
		C.\ Moore (Santa Fe Institute and currently a Visiting Researcher at Microsoft New England) is supported in part by NSF Grant IIS-1838251. \texttt{moore@santafe.edu}.
		J.\ Xu (The Fuqua School of Business, Duke University, Durham NC, USA) is supported by NSF Grants IIS-1838124, CCF-1850743, and CCF-1856424. \texttt{jx77@duke.edu}.
}}
\date{\today}

\begin{document}
	
\pgfdeclarelayer{background}
\pgfdeclarelayer{foreground}
\pgfsetlayers{background,main,foreground}

\maketitle

\begin{abstract}
We study the problem of recovering a planted matching in randomly weighted complete bipartite graphs $K_{n,n}$. For some unknown perfect matching $\Mplanted$, the weight of an edge is drawn from one distribution $P$ if $e \in \Mplanted$ and another distribution $Q$ if $e \notin \Mplanted$. Our goal is to infer $\Mplanted$, exactly or approximately, from the edge weights. In this paper we take $P=\exp(\lambda)$ and $Q=\exp(1/n)$, in which case the maximum-likelihood estimator of $\Mplanted$ is the minimum-weight matching $\Mmin$. We obtain precise results on the overlap between $\Mplanted$ and $\Mmin$, i.e., the fraction of edges they have in common. For $\lambda \ge 4$ we have almost perfect recovery, with overlap $1-o(1)$ with high probability. For $\lambda < 4$ the expected overlap is an explicit function $\alpha(\lambda) < 1$: we compute it by generalizing Aldous' celebrated proof of the $\zeta(2)$ conjecture for the un-planted model, using local weak convergence to relate $K_{n,n}$ to a type of weighted infinite tree, and then deriving a system of differential equations from a message-passing algorithm on this tree. 
\end{abstract}

\section{Introduction}\label{sec:intro}

Consider a weighted complete bipartite graph $K_{n,n}$ with an unknown perfect matching $\Mplanted$, where for each edge $e$ the weight $w_e$ is independently distributed according to $P$ when $e \in \Mplanted$ and $Q$ when $e \notin \Mplanted$. The goal is to recover the ``hidden'' or ``planted'' matching $\Mplanted$ from the edge weights.

This problem is inspired by the long history of planted problems in computer science, where an instance of an optimization or constraint satisfaction problem is built around a planted solution in some random way. As we vary the parameters used to generate these instances, such as the size of a hidden clique or the density of communities in the stochastic block model of social networks, we encounter phase transitions in our ability to find this planted solution, exactly or approximately. In an inference problem, the instance corresponds to some noisy observation, such as a data set produced by a generative model, and the planted solution corresponds to the ground truth---the underlying structure we are trying to discover.

More concretely, we are motivated by the problem of tracking moving objects in a video, such as flocks of birds, motile cells, or particles in a fluid. Figure \ref{fig:particles}, taken from~\cite{Chertkov2010}, shows two frames of such a video, where each particle has moved from its original position by some amount. Our goal is then to find the most-likely matching between the two frames, assuming some probability distribution of these displacements.

\begin{figure}[H]
\begin{center}
\includegraphics[width=4in]{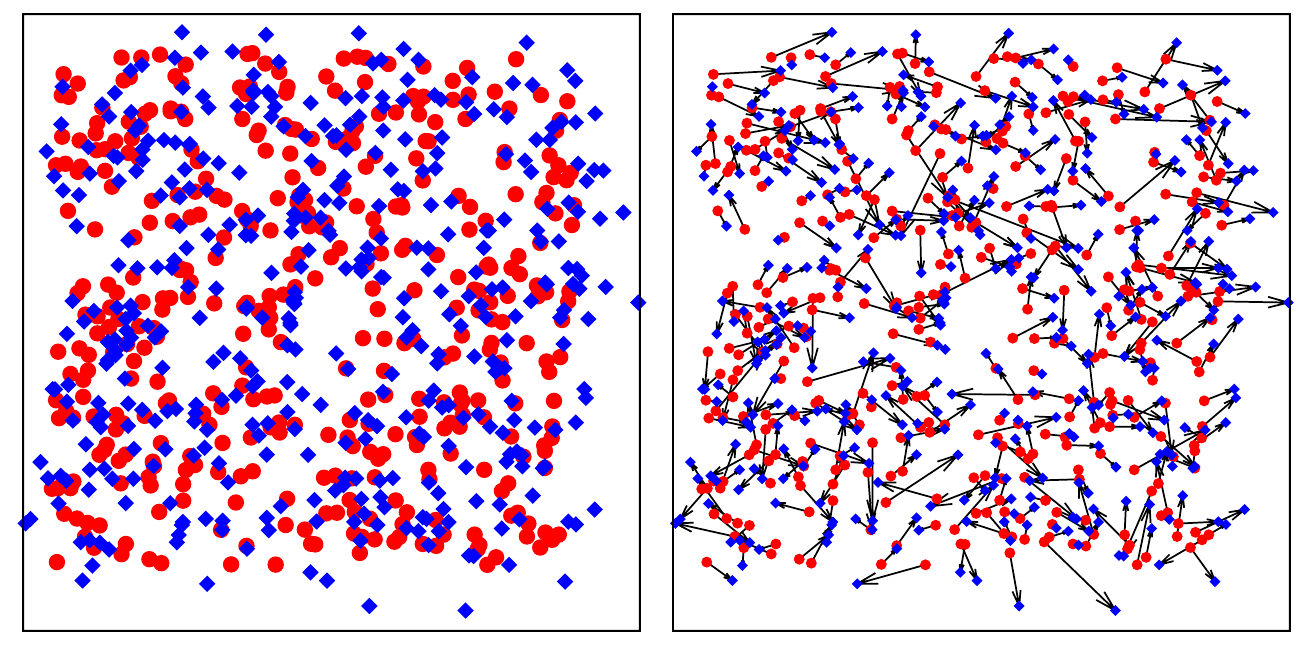}
\end{center}
\caption{On the left, the positions of particles in two frames of a video, with one frame red and the other blue. On the right, an inferred matching, hypothesizing how each particle has moved from one frame to the next. Taken from~\cite{Chertkov2010}.}
\label{fig:particles}
\end{figure}

For many planted problems such as Hidden Clique (e.g.~\cite{Barak2016}) or community detection in the stochastic block model (e.g.~\cite{Abbe2017,Moore2017}), there are two types of thresholds: information-theoretic and computational. When these are distinct, the region in between them has the interesting property that finding the planted solution, or at least approximating it better than chance, is information-theoretically possible but (conjecturally) computationally hard. These regions are also known as statistical-computational gaps. 

In the planted matching problem, one obvious estimator to try is the minimum weight matching (a.k.a.\ the linear assignment problem) which can be found in polynomial time. The natural question is then, as a function of the distributions $P$ and $Q$ on the planted and un-planted edges, how much the minimum matching $\Mmin$ has in common with the planted matching $\Mplanted$. In general, we define the \emph{overlap} of an estimator $M'$ with $\Mplanted$ as (assuming that $|M'|=n$)
\begin{equation}
\label{eq:overlap-def}
\overlap(\Mplanted,M') 
= 1 - \frac{1}{2n} | \Mplanted \symdiff M' | 
= \frac{1}{n} |\Mplanted \cap M'| \, . 
\end{equation} 
We say that $M'$ achieves \emph{almost perfect recovery} if $\expect[\overlap(\Mplanted,M')] = 1-o(1)$, or equivalently if $\overlap(\Mplanted,M') = 1-o(1)$ with high probability. We say that $M$ achieves \emph{partial recovery} if $\expect[\overlap(\Mplanted,M')] > 0$ as $n \to \infty$. 

Chertkov et al.~\cite{Chertkov2010} studied the case where $P=|\mathcal{N}(0,\kappa)|$ is a folded Gaussian and $Q$ is the uniform distribution over $[0,n]$. When $\kappa=O(1)$, the planted edges are competitive with the lightest un-planted edges at each vertex, which have expected weight $1$. This suggests a phase transition in this regime, and indeed they predicted a transition from almost perfect recovery to partial recovery at $\kappa \approx 0.17$ using the cavity method of statistical physics.

We focus on exponential weight distributions, $P=\exp(\lambda)$ and $Q=\exp(1/n)$, so that the planted and un-planted weights have expectation $1/\lambda$ and $n$ respectively. For this family of distributions we obtain exact results, proving a transition from almost perfect recovery to partial recovery at $\lambda = 4$, and determining the expected overlap between $\Mplanted$ and $\Mmin$ for $\lambda < 4$.

Many of our results apply more generally for any distribution of un-planted edge weights with density $Q'(0) = 1/n$, such as when $Q$ is uniform in the interval $[0,n]$. However, our assumption that the planted weights $P$ are exponentially distributed is important for two reasons. First, it makes possible to exactly analyze a message-passing algorithm, and obtain precise results for the expected overlap. Secondly, it has the pleasing consequence of making $\Mmin$ the maximum-likelihood estimator for $\Mplanted$. To see this, note that all $n!$ matchings are equally likely a priori. 
Let $G$ denote the observed complete bipartite graph with edge weights $W$. 
The posterior probability for a given matching $M'$, i.e., $\prob[\Mplanted=M' \mid G]$, is proportional to the density
\begin{equation}
\begin{aligned}
\prob[G \mid M'] 
= \prod_{e \in M'} P(w_e) \prod_{e \notin M'} Q(w_e) 
&\propto \prod_{e \in M'} \exp\!\left( -(\lambda-1/n) w_e \right)\\ 
&= \exp\!\left( -(\lambda-1/n) \sum_{e \in M'} w_e \right) \, \label{eq:density_planted}.
\end{aligned}
\end{equation}
Thus maximizing the likelihood is equivalent to minimizing the total weight of $M'$.

Our main results are as follows. 
\begin{itemize} 
\item In Theorem \ref{thm:first_moment_bound}, we show that the minimum matching $\Mmin$ achieves almost perfect recovery with high probability whenever $\lambda \ge 4$. This proof is a simple first-moment argument using the expected number of augmenting cycles of each length. 
\item In Theorem \ref{thm:main}, we compute the expected overlap between $\Mplanted$ and $\Mmin$ for $\lambda < 4$, showing that it is an explicit function $\alpha(\lambda)$ given by a system of differential equations. 
\end{itemize}

The proof of Theorem \ref{thm:main} takes up most of the paper.  Our proof is inspired by Aldous' analysis of the minimum matching in the un-planted case where all edges have the same weight distribution with $Q'(0)=1/n$. Using the machinery of local weak convergence~\cite{Aldous1992,Aldous2001,Aldous2004} Aldous gave a rigorous justification for the cavity method of statistical physics~\cite{Parisi1987}, modeling $K_{n,n}$ as a Poisson-weighted infinite tree (PWIT). The cost of matching a vertex with one of its children then follows a probability distribution which is the fixed point of a recursive distributional equation (RDE) which can then be transformed into an ordinary differential equation (ODE). Solving this ODE proves the conjecture of M\'ezard and Parisi~\cite{Parisi1987} that the expected cost per vertex is $\zeta(2)=\pi^2/6$. 

Generalizing Aldous' analysis to the planted case presents several challenges. We now have an infinite weighted tree we call the \emph{planted PWIT} with two types of edges and two types of vertices, since the partner of a vertex in $\Mplanted$ can be its parent or one of its children. The cost of matching a vertex with a child follows a pair of probability distributions fixed by a system of RDEs, which (when $P$ is exponential) we can transform into a system of four coupled ODEs. We use techniques from dynamical systems to show that this system has a unique solution consistent with its boundary conditions, and express the expected overlap $\alpha(\lambda)$ as an integral involving this solution. 

While we focus on the case where $P$ is exponential, we claim that a qualitatively similar picture to Theorems \ref{thm:first_moment_bound} and Theorem \ref{thm:main} holds for other distributions of planted weights. Indeed, much of our proof applies to any distribution $P$, including the general framework of a message-passing algorithm on the planted PWIT, and the resulting system of RDEs. Thus while the location of the threshold and the overlap would change, in any one-parameter family of distributions $P$ we expect there to be a phase transition from almost-perfect to partial recovery when $P$’s expectation crosses some critical value.

\section{Almost perfect recovery for $\lambda \ge 4$}\label{sec:first_moment_bound}
We start by proving that the minimum matching achieves almost perfect recovery whenever $\lambda \ge 4$. 

\begin{theorem}
\label{thm:first_moment_bound} 
For any $\lambda \ge 4$, we have $\expect[\overlap(\Mplanted,\Mmin)]= 1-o(1)$. In particular, $\expect[|\Mplanted \symdiff \Mmin|]$ is $O(1)$ for $\lambda > 4$ and $O(\sqrt{n})$ for $\lambda=4$.
\end{theorem}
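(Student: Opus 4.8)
The plan is to use the first-moment method over "augmenting structures." Recall that two perfect matchings $M_1, M_2$ of $K_{n,n}$ differ by a disjoint union of alternating cycles, each of even length $\ge 4$. So $\Mmin \ne \Mplanted$ exactly when there is at least one alternating cycle $C$ (with respect to $\Mplanted$) along which swapping the matched and unmatched edges does not increase the total weight; more precisely, $\Mplanted \symdiff \Mmin$ is a union of such cycles, and in particular every edge of $\Mplanted \symdiff \Mmin$ lies on an alternating cycle whose "swap cost" is $\le 0$. The first step is therefore to bound $\expect[|\Mplanted \symdiff \Mmin|]$ by $\sum_{k \ge 2} 2k \cdot N_k$, where $N_k$ is the expected number of alternating $2k$-cycles $C$ through a fixed structure whose swap is non-increasing; it suffices to control $\sum_k k \cdot \expect[(\text{number of such } 2k\text{-cycles})]$, since a union of vertex-disjoint bad cycles covering $t$ edges of the symmetric difference contributes at most $t$ to a sum that counts each bad cycle once with weight equal to its length.

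The second step is the expectation computation. An alternating $2k$-cycle relative to $\Mplanted$ is specified by choosing $k$ of the planted edges (the ones it "reroutes") and an ordering/pairing of them, giving roughly $(n)_k \cdot (k-1)!/2 \le n^k (k-1)!$ choices — or more carefully a count of order $n^k$ up to lower-order factors. Along such a cycle there are $k$ planted edges with weights $p_1, \dots, p_k \sim \exp(\lambda)$ i.i.d.\ and $k$ un-planted edges with weights $q_1, \dots, q_k \sim \exp(1/n)$ i.i.d. Swapping replaces $\sum p_i$ by $\sum q_i$, so the swap is non-increasing precisely when $\sum_{i=1}^k q_i \le \sum_{i=1}^k p_i$. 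Thus I need to bound $\prob[\sum_{i=1}^k q_i \le \sum_{i=1}^k p_i]$ where $p_i \sim \exp(\lambda)$ and $q_i \sim \exp(1/n)$. Conditioning on the $p_i$'s (whose sum is $\Theta(k/\lambda)$ typically), this is the probability that a sum of $k$ i.i.d.\ $\exp(1/n)$ variables — which is typically of order $kn$ — is at most something of order $k/\lambda$; since each $q_i$ has density $\le 1/n$ near $0$, the event $\{\sum q_i \le s\}$ has probability at most $(s/n)^k/k!$. Taking $s = \sum p_i$ and then taking expectation over the $p_i$, I get a bound of the form $\expect[(\sum_i p_i)^k]/(k!\, n^k)$. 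Since $\sum_i p_i \sim \mathrm{Gamma}(k,\lambda)$, $\expect[(\sum_i p_i)^k] = (2k-1)!/((k-1)!\,\lambda^k)$ or so; combining with the $\le n^k (k-1)!$ choices of cycle, the $n^k$ cancels, leaving a bound like $N_k \lesssim (k-1)!\,(2k)!\,/(k!\,(k-1)!\,\lambda^k) \cdot \text{const} \asymp \binom{2k}{k}/\lambda^k$ up to polynomial factors, i.e.\ geometrically decaying as soon as $4/\lambda < 1$, and exactly at the radius of convergence when $\lambda = 4$.

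The third step is to assemble these bounds: for $\lambda > 4$, $\sum_k 2k \, N_k$ is a convergent series dominated by a geometric with ratio $4/\lambda < 1$, hence $O(1)$; for $\lambda = 4$, the per-cycle bound $\binom{2k}{k}4^{-k} \asymp k^{-1/2}$ only makes each term of constant-ish size, so one must truncate: cycles of length up to $k = O(\log n)$ contribute the bulk and one shows the tail (long cycles) is $o(\sqrt n)$ by a more careful large-deviations estimate on $\prob[\sum q_i \le \sum p_i]$, or by noting that in fact the number of bad cycles of each short length is $O(1)$ in expectation but there are $\Theta(\log n)$ relevant lengths contributing, giving the stated $O(\sqrt n)$ — this requires keeping the polynomial-in-$k$ factors honest rather than absorbing them. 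Dividing by $2n$ then gives $\expect[\overlap] = 1 - o(1)$ in both cases.

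The main obstacle is the $\lambda = 4$ borderline: there the naive geometric bound degenerates and one needs the sharp asymptotics of both the cycle count (the $(k-1)!/2$ Stirling factors against the $k!$ from the small-ball estimate and the moments of the Gamma sum) and, likely, a genuine tail bound showing that alternating cycles of length $\omega(\sqrt n)$ essentially never have non-positive swap cost. Getting the constant in the exponent exactly right — so that the surviving sum is $\Theta(\sqrt n)$ and not, say, $\Theta(n)$ — is where the real work lies; for $\lambda > 4$ everything is soft.
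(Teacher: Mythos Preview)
Your plan is the paper's plan: bound $\expect[|\Mplanted\symdiff\Mmin|]$ by the expected total length of all augmenting alternating cycles, use the count $\binom{n}{k}(k-1)!\le n^k/k$ for alternating $2k$-cycles, and bound the probability a fixed one is augmenting. For $\lambda>4$ your sketch is complete.

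The only substantive difference is the tail bound for $\prob\big[\sum_{i\le k} q_i\le\sum_{i\le k} p_i\big]$. The paper uses a one-line Chernoff/MGF argument (its Lemma~\ref{lmm:erlang-diff}) to get $\le (4/(\lambda n))^k$, and then exploits the sharper cycle count $\binom{n}{k}(k-1)!\le\frac{1}{k}n^k e^{-k(k-1)/(2n)}$; at $\lambda=4$ this gives $\expect[|\symdiff|]\le 2e^{1/2}\sum_k e^{-k^2/(2n)}=O(\sqrt{n})$ by a Gaussian integral. Your small-ball plus Gamma-moment route gives $\le\frac{1}{2}\binom{2k}{k}(\lambda n)^{-k}$, which by Stirling is the paper's bound times an extra $\asymp k^{-1/2}$. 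That extra factor already does the job: combining with the crude count $n^k/k$, the length-$2k$ contribution to $\expect[|\symdiff|]$ is $\le\binom{2k}{k}\lambda^{-k}$, and at $\lambda=4$ this is $\asymp k^{-1/2}$, so $\sum_{k=2}^{n} k^{-1/2}=O(\sqrt{n})$ directly. No truncation, no separate large-deviations estimate for long cycles, and certainly no ``$\Theta(\log n)$ relevant lengths'' are needed; your last paragraph is over-worrying a case your own computation already closes once you keep the $k^{-1/2}$ from $\binom{2k}{k}4^{-k}$.
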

\noindent 
To prove Theorem \ref{thm:first_moment_bound}, we use the following Chernoff-like bound on the probability that one Erlang random variable exceeds another. The proof is elementary and appears in Appendix \ref{app:erlang-diff}.

\begin{lemma}
\label{lmm:erlang-diff}
Suppose $X_1$ is the sum of $t$ independent exponential random variables with rate $\lambda_1$, 
and $X_2$ is the sum of $t$ independent exponential random variables with rate $\lambda_2$ (and independent of $X_1$) where $\lambda_1 > \lambda_2$. Then
\[
\prob[X_1 > X_2] 
\le \left( \frac{4 \lambda_1 \lambda_2}{(\lambda_1+\lambda_2)^2} \right)^{\!t} 
\le \left( \frac{4 \lambda_2}{\lambda_1} \right)^{\!t} .
\] 
\end{lemma}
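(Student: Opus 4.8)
The plan is to run the standard Chernoff-type exponential-moment (``tilting'') bound on the difference $X_1 - X_2$ and then optimize over the tilting parameter. For any $s$ with $0 < s < \lambda_1$, Markov's inequality applied to $e^{s(X_1-X_2)}$ together with independence of $X_1$ and $X_2$ gives
\[
\prob[X_1 > X_2] = \prob[X_1 - X_2 > 0] \le \expect\!\left[ e^{s(X_1-X_2)} \right] = \expect\!\left[ e^{sX_1} \right]\, \expect\!\left[ e^{-sX_2} \right].
\]
Since $X_1$ is a sum of $t$ i.i.d.\ $\exp(\lambda_1)$ variables and $X_2$ a sum of $t$ i.i.d.\ $\exp(\lambda_2)$ variables, their moment generating functions factor into a $t$-th power of the single-exponential MGF, so $\expect[e^{sX_1}] = (\lambda_1/(\lambda_1 - s))^t$ (valid for $s < \lambda_1$) and $\expect[e^{-sX_2}] = (\lambda_2/(\lambda_2 + s))^t$ (valid for all $s > 0$). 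Hence
\[
\prob[X_1 > X_2] \le \left( \frac{\lambda_1 \lambda_2}{(\lambda_1 - s)(\lambda_2 + s)} \right)^{\!t}.
\]

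Next I would pick $s$ to minimize the right-hand side, equivalently to maximize the quadratic $g(s) = (\lambda_1 - s)(\lambda_2 + s)$ over $s \in (0,\lambda_1)$. Its unconstrained maximizer is $s^\star = (\lambda_1 - \lambda_2)/2$, and this is the one point that deserves a line of verification: because $\lambda_1 > \lambda_2 > 0$ we have $0 < s^\star < \lambda_1$, so the bound is non-trivial and all the exponential moments above are finite at $s = s^\star$. At this choice both factors equal $(\lambda_1 + \lambda_2)/2$, so $g(s^\star) = (\lambda_1+\lambda_2)^2/4$, and substituting yields the first claimed inequality
\[
\prob[X_1 > X_2] \le \left( \frac{4\lambda_1\lambda_2}{(\lambda_1+\lambda_2)^2} \right)^{\!t}.
\]

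The second inequality is then immediate: since $\lambda_1,\lambda_2 > 0$ we have $(\lambda_1+\lambda_2)^2 \ge \lambda_1^2$, whence $4\lambda_1\lambda_2/(\lambda_1+\lambda_2)^2 \le 4\lambda_2/\lambda_1$. The whole argument is elementary, and there is no genuine obstacle; the only items requiring care are the factorization of the MGFs (which is just independence of the $2t$ underlying exponentials) and checking that the optimal tilt $s^\star$ lies in the admissible interval so that the moment generating function of $X_1$ is finite there.
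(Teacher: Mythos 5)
Your proposal is correct and follows essentially the same route as the paper's proof in Appendix~\ref{app:erlang-diff}: a Chernoff bound via the moment generating function of $X_1-X_2$, optimized at $s=(\lambda_1-\lambda_2)/2$. The only addition you make is the (worthwhile) explicit check that this optimal tilt lies in the admissible interval $(0,\lambda_1)$.
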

\begin{proof}[Proof of Theorem \ref{thm:first_moment_bound}]
 An \emph{alternating cycle} is a cycle in $K_{n,n}$ that alternates between planted and un-planted edges, and an \emph{augmenting cycle} is an alternating cycle $C$ where the total weight of its planted edges $C \cap \Mplanted$ exceeds that of its un-planted edges $C \setminus \Mplanted$. 

Now recall that the symmetric difference $\Mplanted \symdiff \Mmin$ is a disjoint union of augmenting cycles. The number of cyclic permutations of $t$ things is $(t-1)!$. Thus the number of alternating cycles of length $2t$, i.e., containing $t$ planted edges and $t$ un-planted edges, is at most 
\begin{equation}
{n \choose t} (t-1)! 
= \frac{1}{t} \,n^{t} \left( 1-\frac{1}{n} \right) 
\left( 1-\frac{2}{n} \right) \cdots \left( 1-\frac{t-1}{n} \right)
\le \frac{1}{t} \,n^{t} \,\e^{-t (t-1) / (2n)} \, .
\label{eq:Celll-count}
\end{equation}
Applying Lemma \ref{lmm:erlang-diff} with $\lambda_1=\lambda$ and $\lambda_2=1/n$, the probability that a given alternating cycle of length $2t$ is augmenting is at most $(4/(\lambda n))^t$. 

Now the size of the symmetric difference $|\Mplanted \symdiff \Mmin|$ is at most the total length of all augmenting cycles. By the linearity of expectation, its expectation is bounded by
\[
\expect[|\Mplanted \symdiff \Mmin|]
\le \sum_{t=1}^n 2t \left( \frac{4}{\lambda n} \right)^{\!t} \frac{1}{t} \,n^{t} \,\e^{-t (t-1) / (2n)} 
\le 2 \e^{1/2} \sum_{t=1}^\infty \left( \frac{4}{\lambda} \right)^{\!t} \e^{-t^2 / (2n)} \, . 
\]
When $\lambda > 4$ the geometric sum $\sum_{t=1}^\infty (4/\lambda)^t$ converges, giving $\expect[|\Mplanted \symdiff \Mmin|] = O(1)$. When $\lambda = 4$, we have $\sum_{t=1}^\infty \e^{-t^2/(2n)} \le \int_0^\infty \e^{-t^2/(2n)} \mathrm{d}t = \sqrt{\pi n / 2}$, so $\expect[|\Mplanted \symdiff \Mmin|] = O(\sqrt{n})$. 

To complete the proof, let $\omega(1)$ be any function of $n$ that tends to infinity. By Markov's inequality, with high probability $|\Mplanted \symdiff \Mmin|$ is less than $\omega(1)$ times its expectation, and \eqref{eq:overlap-def} gives w.h.p. $\overlap(\Mplanted,\Mmin) 
= 1-o(1)$.
\end{proof}

We note that when $\lambda > 4$ is sufficiently large we have $\expect [|\Mplanted \symdiff \Mmin|] < 1$, implying that $\Mmin$ achieves perfect recovery, i.e., $\Mmin=\Mplanted$, with positive probability. We also note that a similar argument shows that, for $\lambda < 4$, the overlap is w.h.p. at least $1 - 2 \log \frac{4}{\lambda}$. But this bound is far from tight, and below we give much more precise results.

\section{Exact results for the expected overlap when $\lambda < 4$}\label{sec:main-infinite-tree}
In this section we provide a characterization of the asymptotic overlap of $\Mopt$, showing exactly how well $\Mopt$ achieves partial recovery when $\lambda < 4$. 

\begin{theorem}
\label{thm:main}
Suppose $0 < \lambda < 4$ is a fixed constant. Then the expected overlap between the minimum
matching and the planted one is
\[
 \lim_{n\to \infty} \frac{1}{n} \,\expect[\left| \Mmin \cap \Mplanted \right|] 
 = \alpha(\lambda) \, ,
\]
where 
\begin{equation}
\alpha(\lambda) =1 - 2 \int_0^\infty \left(1-F(x) \right) \left(1-G(x) \right) V(x) W(x) \,dx  < 1 \, , 
\label{eq:overlap_asym_limit}
\end{equation}
and where $(F, G, V, W)$ is the unique solution to the coupled system of ordinary differential equations \eqref{eq:ODE1}--\eqref{eq:ODE4} given below with boundary conditions~\eqref{eq:ODE_boundary}--\eqref{eq:ODE_regularity}.
\end{theorem}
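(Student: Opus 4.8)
The plan is to run Aldous' objective method for the random assignment problem in the planted setting, the essential new feature being that the relevant local weak limit is a tree with \emph{two} types of vertices and edges; I would organize the argument in four steps. \textbf{Step 1 (local weak convergence to the planted PWIT).} Rooting the weighted $K_{n,n}$ at a uniformly random \emph{planted} edge, I would show it converges in the local weak sense to an infinite tree $\mathcal{T}$: since $Q=\exp(1/n)$, the un-planted edges incident to any vertex with weight at most $t$ form, asymptotically, a rate-$1$ Poisson process; and each vertex has exactly one planted neighbour, which is its parent when the vertex was reached along a planted edge and is otherwise a distinguished extra child joined by an edge of weight $\exp(\lambda)$. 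This yields two vertex types — ``un-planted-parent'' vertices (carrying a planted child) and ``planted-parent'' vertices (carrying none) — and hence two kinds of pendant subtrees. The convergence itself is the standard Poissonization/coupling argument behind the PWIT, plus the observation that short cycles and repeated vertices are absent from bounded neighbourhoods w.h.p.

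\textbf{Step 2 (message passing and a system of RDEs).} On $\mathcal{T}$ I would use the cavity recursion computing, for each pendant subtree hanging off a vertex, the quantity $X=(\text{min cost to match the subtree with its root included})-(\text{min cost with its root excluded})$. The two vertex types split this into coupled recursions: with $\{\xi_i\}$ a rate-$1$ Poisson process and $p\sim\exp(\lambda)$, an un-planted-parent subtree gives $X \stackrel{d}{=} \min\big(p-Y,\ \min_i(\xi_i-X_i)\big)$ and a planted-parent subtree gives $Y\stackrel{d}{=}\min_i(\xi_i-X_i)$, with $X_i,Y$ independent copies of the appropriate laws. Applying the Poisson (Mecke) formula to the right-hand sides turns this into a closed pair of integral fixed-point equations for the tails $1-F(x)=\prob[X>x]$ and $1-G(x)=\prob[Y>x]$. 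The delicate point, exactly as in the un-planted case, is \emph{endogeny}: one must show the RDE system admits a unique fixed point which is the one realised as the local weak limit of the finite minimum matchings, so that local functionals of $\Mmin$ are genuinely computed by $(F,G)$. I expect to obtain this from monotonicity of the recursion in the tail functions (squeezing the fixed point between explicit sub- and super-solutions) together with the Aldous--Bandyopadhyay endogeny criteria.

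\textbf{Step 3 (from RDEs to a well-posed ODE system).} Using that $p$ is exponential — so that the convolution with the law of $p$ can be differentiated away — I would differentiate the integral equations of Step 2, introduce two auxiliary functions $V,W$ carrying the remaining information in the message laws needed to close the recursion, and arrive at the autonomous system \eqref{eq:ODE1}--\eqref{eq:ODE4}. The crux of the whole proof is then to show this boundary-value problem has a \emph{unique} solution satisfying \eqref{eq:ODE_boundary}--\eqref{eq:ODE_regularity}. Here I would resort to planar dynamical-systems techniques: locating the relevant equilibria, describing their stable/unstable manifolds, exhibiting invariant regions that trap the physical trajectory, and running a shooting/monotone-dependence argument to pin down the unique orbit meeting both boundary conditions. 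This is also where $\lambda<4$ enters — it is precisely the range in which the nontrivial solution branch (corresponding to partial rather than perfect recovery) exists and is separated from the ``all planted edges used'' solution; not coincidentally, $\lambda=4$ is the threshold in Theorem \ref{thm:first_moment_bound}. I expect this to be the main obstacle: unlike Aldous' scalar logistic fixed point, the planted recursion is genuinely two-dimensional and has no closed form, so both existence of the correct branch and its uniqueness demand careful qualitative ODE analysis.

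\textbf{Step 4 (from the tree back to the overlap).} By exchangeability of $\Mplanted$, $\expect[\tfrac1n|\Mmin\cap\Mplanted|]=\prob[e\in\Mmin]$ for a fixed planted edge $e$, and by Steps 1--2 this converges to the probability, computed on $\mathcal{T}$, that the root planted edge belongs to the limiting optimal matching. Unfolding the cavity characterisation one more level, that event is $\{\,p<A_1+A_2\,\}$, where $A_1,A_2$ are independent ``bonuses'' (each $\stackrel{d}{=}Y$) seen from the two endpoints of $e$ through their un-planted children and $p\sim\exp(\lambda)$ is the root weight; expanding $A_1,A_2$ against the Poisson processes and the law of $X$ rewrites $\prob[p\ge A_1+A_2]$ as $2\int_0^\infty(1-F(x))(1-G(x))V(x)W(x)\,dx$, which is \eqref{eq:overlap_asym_limit}. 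The interchange of $n\to\infty$ and expectation is justified by uniform integrability of the weights involved in $\Mplanted\symdiff\Mmin$, controlled by the same augmenting-cycle count as in the proof of Theorem \ref{thm:first_moment_bound} (which also yields $\alpha(\lambda)\to1$ as $\lambda\uparrow4$), and the strict inequality $\alpha(\lambda)<1$ holds because on $\mathcal{T}$ the root planted edge is omitted with strictly positive probability for every $\lambda<4$.
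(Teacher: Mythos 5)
Your architecture matches the paper's: local weak convergence of the planted model to a two-type Poisson weighted infinite tree, a coupled pair of RDEs from the cavity recursion (your $X,Y$ are the paper's $Y,X$ after relabeling), collapse to the ODE system \eqref{eq:ODE1}--\eqref{eq:ODE4} using the exponential law of the planted weight, a shooting/monotone-dependence argument pinning down the unique trajectory into the saddle at $(U,V,W)=(1,1,0)$ for $\lambda<4$ (the reduced system is three-dimensional after the conservation law, not planar, but the mechanism you describe is the right one), and finally $\alpha(\lambda)=\prob[\eta<X+X']$ evaluated against the ODE solution. Steps 1 and 3 and the closing integral identity are essentially the paper's proof.

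The genuine gap is in how you pass from local weak convergence of the \emph{weighted graphs} to convergence of the \emph{minimum matching}, and hence of the overlap. You propose to handle this by endogeny via the Aldous--Bandyopadhyay criteria; the paper explicitly does \emph{not} establish endogeny for the planted RDE (it is left there as an open question), so verifying the bivariate-uniqueness criterion is a substantial unproved step rather than a routine citation. What the paper does instead is (i) prove that the minimum \emph{involution invariant} random matching $\Minfopt$ on the planted PWIT is unique, via Aldous' ``second-best partner'' argument (Propositions~\ref{prop:ineq} and~\ref{prop:unique}), and (ii) prove a two-sided weight statement: a compactness/Fatou lower bound showing any subsequential limit of $\Mnopt$ is an involution invariant matching with root cost at least that of $\Minfopt$, together with an explicit construction of a near-optimal matching on $K_{n,n}$ from $\Minfopt$ (the bi-infinite tree and almost-doubly-stochastic matrix argument) for the matching upper bound. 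Only the combination of (i) and (ii) identifies the local limit of $(K_{n,n},\ell_n,\Mnopt)$ with $(T_\infty,\ell_\infty,\Minfopt)$ and hence the limit of the expected overlap. Your substitute --- uniform integrability of the weights in $\Mplanted\symdiff\Mmin$ controlled by the augmenting-cycle count --- does not do this job: for $\lambda<4$ the first-moment bound on augmenting cycles diverges, and in any case uniform integrability addresses interchanging limit and expectation, not the prior and harder problem of identifying \emph{which} matching on the tree is the local limit of $\Mmin$. Without (i) and (ii), or a genuine proof of endogeny, the identification $\lim_{n}\frac{1}{n}\expect[|\Mmin\cap\Mplanted|]=\prob[\eta<X+X']$ is unsupported.
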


Denote the weight of the minimum matching by $w(\Mopt) \triangleq \sum_{e\in\Mmin}w_e$,
where $w_e$ is the weight of the edge $e$.
We also derive the asymptotic value of $(1/n)\expect[w(\Mopt)]$ for $0<\lambda < 4$.
\begin{corollary}\label{cor:weight}
For $0<\lambda<4$, the weight of the minimum matching is
\begin{align*}
	&\lim_{n\to\infty} \frac{1}{n} \, \expect\big[w(\Mopt) \big] 
	= \beta_p(\lambda) + \beta_u(\alpha)
\end{align*}
where 
$\beta_p(\lambda)$ and $\beta_u(\lambda)$ are the contributions of planted and un-planted edges to the weight of $\Mmin$ respectively:
\begin{align*}
	&\beta_p(\lambda) = \int_{-\infty}^\infty \left(1-F(x)\right) \left(1-G(x) \right) V(x)\left[\int_{0}^{\infty}\lambda t\,  \e^{-\lambda t} \left(1-F(t-x)\right)\,\dt \right]\dx \\
	&\beta_u(\lambda) = \int_{-\infty}^\infty \left(1-F(y) \right) \big[ \left(1-G(y) \right) V(y)W(y)-\lambda (G(y) - W(y)) \big] \\
	&\qquad\qquad\times \left[\int_{0}^{\infty} t (1-F(t-y))W(t-y)\,\dt\right]\dy.
\end{align*}
\end{corollary}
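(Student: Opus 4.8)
The plan is to obtain Corollary~\ref{cor:weight} as a by-product of the same local weak convergence that underlies Theorem~\ref{thm:main}: the sequence $(K_{n,n},\Mplanted,W)$ rooted at a uniformly random vertex converges locally to the planted PWIT, and the optimal matching converges to the a.s.\ unique optimal matching on that tree, characterized by the cavity recursion whose fixed point is described by $(F,G,V,W)$. Whereas Theorem~\ref{thm:main} reads off the \emph{probability} that the root's matched edge is planted, here I would read off the expected \emph{weight} of that edge. Since each edge of $\Mmin$ is incident to exactly two vertices, $\tfrac1n w(\Mmin)=\tfrac1{2n}\sum_v w_{e(v)}$ where $e(v)$ is the matched edge at $v$; averaging over the $2n$ vertices and passing to the limit gives $\lim_n\tfrac1n\expect[w(\Mmin)]=\expect[w_{e(\root)}]$, the expected weight of the edge incident to the root of the planted PWIT in its optimal matching. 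The first thing to check is that the limit and the expectation can be interchanged: the overlap functional is bounded, but the weight is not, and an un-planted edge has mean $n$, so a truncation / uniform-integrability argument (in the spirit of Aldous' treatment of the un-planted model, exploiting that the optimal matching never uses a heavy edge) is required; this is routine but must be stated.

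Next I would split $w_{e(\root)}=w_{e(\root)}\one\{e(\root)\in\Mplanted\}+w_{e(\root)}\one\{e(\root)\notin\Mplanted\}$, which gives $\beta_p(\lambda)$ and $\beta_u(\lambda)$. On the planted PWIT, the root is joined to its planted partner by an edge of weight $\zeta\sim\exp(\lambda)$ whose subtree carries a ``bonus'' $Y\sim G$, and to its un-planted children by edges $\{\xi_i\}$ forming a rate-$1$ Poisson process, each subtree carrying an independent bonus $X_i\sim F$; the optimal rule --- exactly the one generating the RDEs in the proof of Theorem~\ref{thm:main} --- matches the root to whichever neighbor minimizes (edge weight)$-$(subtree bonus). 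Hence $\beta_p(\lambda)=\expect\big[\zeta\,\one\{\zeta-Y<\min_i(\xi_i-X_i)\}\big]$ and $\beta_u(\lambda)=\expect\big[\xi_{i^*}\,\one\{\xi_{i^*}-X_{i^*}<\zeta-Y\}\big]$ with $i^*=\argmin_i(\xi_i-X_i)$.

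The remaining step is to evaluate these two expectations and match them to the displayed integrals. For $\beta_p$, condition on $\min_i(\xi_i-X_i)$: its law is produced by the same Poisson computation that appears in the RDE analysis and is encoded through $F$ and the auxiliary function $V$, while integrating $\zeta$ against its $\exp(\lambda)$ density produces the inner factor $\int_0^\infty \lambda t\,\e^{-\lambda t}(1-F(t-x))\,\dt$ and the outer density $(1-F(x))(1-G(x))V(x)$. For $\beta_u$, apply the Mecke / Campbell formula to the Poisson process $\{\xi_i\}$: condition on the winning point, integrate its weight $t$ against the rate-$1$ intensity, multiply by the probability that its bonus-adjusted value beats both the planted competitor $\zeta-Y$ and all other un-planted points, and substitute the cavity densities; a single integration by parts, using the ODEs \eqref{eq:ODE1}--\eqref{eq:ODE4} to rewrite the resulting derivative/boundary terms, yields the combination $(1-G(y))V(y)W(y)-\lambda(G(y)-W(y))$ inside the bracket. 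The two places I expect to do real work are (i) the uniform-integrability step that licenses $\tfrac1n\expect[w(\Mmin)]\to\expect[w_{e(\root)}]$, delicate because $Q=\exp(1/n)$ has mean $n$; and (ii) the Poisson bookkeeping for $\beta_u$ --- pinning down what $V$ and $W$ compute and executing the integration by parts that produces the $-\lambda(G-W)$ term --- while the rest is parallel to the overlap computation already carried out for Theorem~\ref{thm:main}.
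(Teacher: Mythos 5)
Your overall strategy coincides with the paper's: reduce $\frac1n\expect[w(\Mmin)]$ to the expected weight of the root's matched edge on the planted PWIT, split according to whether that edge is planted, and evaluate the two pieces using the fixed-point message distributions together with the ODEs \eqref{eq:ODE1}--\eqref{eq:ODE4}. One remark on the first step: the passage from the finite model to the tree is not a ``routine truncation / uniform-integrability'' interchange. The paper never upgrades the local weak convergence to convergence of expected weights that way; it proves $\liminf\ge$ by compactness plus Fatou and $\limsup\le$ by explicitly building a near-optimal matching on $K_{n,n}$ out of $\Minfopt$ (the ``harder half''). For the corollary you should simply cite the already-established identity $\lim_n\expect[\ell_n(1,\Mnopt(1))]=\expect[\ell_\infty(\root,\Minfopt(\root))]$ from Theorem~\ref{thm:itallworks} and \eqref{eq:bothhalves}, rather than promise a UI argument that is not how the bound is actually obtained.

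The substantive gap is in your distributional assignments, which are exactly what the computation turns on. On the planted PWIT the message carried by the \emph{planted} child of the root has CDF $F$ (not ``$G$'' --- note $G(x)=F(-x)$ is decreasing, so it is not a law at all), while the message carried by each \emph{un-planted} child has CDF $F_Y=1-(1-F)W$ (not $F$). Equivalently, by Lemma~\ref{lem:match} and Corollary~\ref{cor:iidX}, the planted edge lies in $\Minfopt$ iff $\eta<X+X'$ with $X,X'$ i.i.d.\ $\sim F$, and an un-planted edge of weight $t$ lies in $\Minfopt$ iff $t<Y+Y'$ with $Y,Y'$ i.i.d.\ $\sim F_Y$; Mecke's formula then gives $\beta_u=\int_0^\infty t\,\prob[Y+Y'>t]\,\dt$. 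The bracket $(1-G)VW-\lambda(G-W)$ arises precisely from $f_Y=\frac{\dF}{\dy}W-(1-F)\frac{\dW}{\dy}=(1-F)\left[(1-G)VW-\lambda(G-W)\right]$, and the inner factor from $1-F_Y(t-y)=(1-F(t-y))W(t-y)$. With your assignment (un-planted bonuses $\sim F$), the density entering Mecke's formula would be $f$ rather than $f_Y$, and the $-\lambda(G-W)$ term could never appear; no integration by parts recovers it. Your one-sided ``argmin at the root'' conditions are fine --- they are equivalent to the symmetric edge conditions by Lemma~\ref{lem:match} --- but identifying the message laws correctly is the substance of the proof, not bookkeeping to be deferred.
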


\begin{proof}
We start by relating the planted model $(\ell_n,K_{n,n})$ where $\ell_n$ denotes the random edge weights, to a type of weighted infinite tree $(\ell_\infty, T_\infty)$ as Aldous did for the un-planted model~\cite{Aldous1992,Aldous2001}. This tree corresponds to the neighborhood of a uniformly random vertex, where ``local'' is defined in terms of shortest path length (sum of edge weights). While $K_{n,n}$ has plenty of short loops, this neighborhood is locally treelike since it is unlikely to have any short loops consisting entirely of low-weight edges. 

Starting at a root vertex $\root$, we define the tree $T_\infty$ shown in Figure \ref{fig:asympobject}. The root has a planted child, i.e., a child connected to it by a planted edge (bold in red), and a series of un-planted children (solid blue). We label these vertices with strings of integers as follows: the root is labeled with the empty string $\root$. Appending $0$ to a label indicates the planted child of that parent, if it has one---that is, if its partner in the planted matching is a child rather than its parent. We indicate the un-planted children by appending $i$ for $i \in \{1, 2, 3, \ldots\}$. 

\begin{figure} 
\begin{center}
	\includegraphics[scale=0.5]{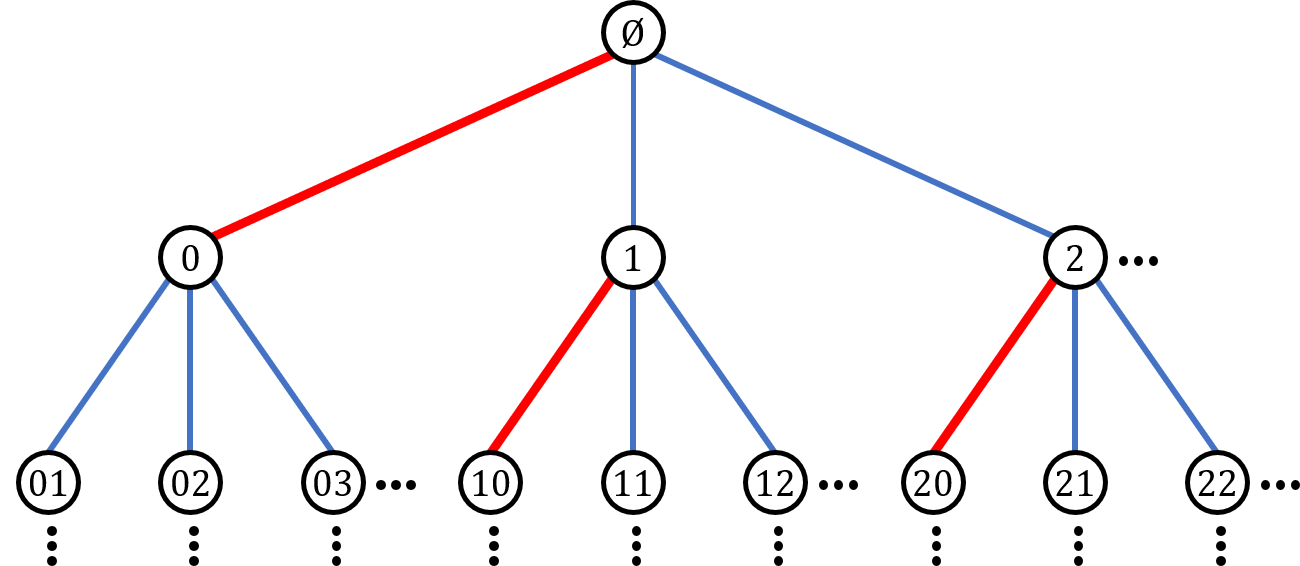}
\end{center}
\caption{The planted Poisson weighted infinite tree (planted PWIT) $(\ell_\infty, T_\infty)$ with the labeling scheme described in the text. The bold red edges are planted edges and the solid blue edges are un-planted. The root is the empty string $\root$. Appending $0$ to the label of a vertex indicates its planted child, if any, while appending $i \ge 1$ indicates its un-planted child with the $i$th lightest edge.
\label{fig:asympobject}}
\end{figure}

We sort the un-planted children of each vertex so that the one labeled with $i$ is the $i$th lightest, i.e., has the $i$th lightest edge. Since the distribution of un-planted weights has density $Q'(0)=1/n$ at $0$, these weights are asymptotically described by the arrivals of a Poisson process with rate $1$, while the weight of the planted edges are distributed as $\exp(\lambda)$. We call the resulting structure the planted Poisson weighted infinite tree, or planted PWIT, and use $\ell_\infty$ to denote its edge weights. We define all this formally in Section \ref{app:gen_setup} and Section \ref{app:planted-PWIT}, and prove that the finite planted model $(\ell_n, K_{n,n})$ weakly converges to $(\ell_\infty, T_\infty)$. 

Following Aldous~\cite{Aldous2001}, in Section \ref{app:optmatch} we then construct a matching $\Minfopt$ on the planted PWIT. Crucially, it has a symmetry property called \emph{involution invariance}, which roughly speaking means that it treats the root just like any other vertex in the tree. We prove that it is the unique involution invariant matching that minimizes the expected cost at the root. 

We define $\Minfopt$ in terms of the fixed point of a message-passing algorithm that computes, for each vertex $v$, the cost of matching $v$ with its best possible child. This cost is the minimum over $v$'s children $w$ of the weight of the edge between them, minus the analogous cost for $w$: 
\begin{align*}
X_v 
&= \min_{\text{children $w$ of $v$}} ( \ell_\infty(v,w) - X_w ) \, . 
\end{align*}
Now suppose that the $X_w$'s are independent, and our goal is to compute the distribution of $X_v$. Unlike the un-planted model, the two types of children will have their $X_w$ drawn from two different distributions. In the first case, $w$ is $v$'s planted child, and $w$'s children are all un-planted. In the second case, $w$ is an un-planted child of $v$, and has a planted child of its own. Let $X$ and $Y$ denote the distributions of $X_w$ in these two cases. Then assuming that $X_v$ obeys the appropriate distribution gives the following system of recursive distributional equations (RDE)s:
\begin{align}
X &\overset{d}{=} \min \{ \zeta_i - Y_i\}_{i=1}^\infty \label{eq:RDE2} \\ 
Y &\overset{d}{=} \min (\eta - X, \{\zeta_i - Y_i\}_{i=1}^\infty)
\overset{d}{=} \min (\eta - X, X' ) \, , \label{eq:RDE1}
\end{align}
where the $Y_i$'s are  i.i.d.,  $X$ and $X'$ are  i.i.d.,  $\eta \sim \exp(\lambda)$, and the $\zeta_i$ for $i=1,2,\ldots$ are jointly distributed as the arrivals of a Poisson process of rate 1. 

In general, analyzing recursive distributional equations (RDEs) is very challenging, since they act on the infinite-dimensional space of probability distributions over the reals. However, it is sometimes possible to ``collapse'' them into a finite-dimensional system of ordinary differential equations. For the un-planted case of the random matching problem, Aldous~\cite{Aldous2001} derived a single differential equation whose solution is the logistic distribution. 
For the planted case, we use a similar approach, but arrive at a more complicated system of four coupled ODEs. 

\begin{lemma}
\label{lem:rde-ode}
Let $f_X, f_Y, F_X(x) = \prob[X < x]$ and $F_Y(y) = \prob[Y < y]$ denote the probability density functions and cumulative distribution functions (CDFs) of $X$ and $Y$, and let $\bar{F}_X = 1-F_X$ and $\bar{F}_Y = 1-F_Y$. If \eqref{eq:RDE2}--\eqref{eq:RDE1} have a solution, then
\begin{equation}
\label{eq:RDE_ODE_1}
\frac{\dF_X(x)}{\dx} = \bar{F}_X(x) \bar{F}_X(-x) \,\expect[F_X(\eta+x)] .
\end{equation}
\end{lemma}

\begin{proof}
First note that \eqref{eq:RDE1} gives
\begin{equation}
\label{eq:fbary}
\bar{F}_Y(y) = 
\bar{F}_X(y) \,\expect[F_X(\eta-y)] \, . 
\end{equation}
Now the pairs $\{(\zeta_i,Y_i)\}$ in \eqref{eq:RDE2} form a two-dimensional Poisson point process $\{(z,y)\}$ on $\mathbb{R}_+ \times \mathbb{R}$ with density $f_Y(y) \,\dz \,\dy$. We have $X > x$ if and only if none of these points have $z-y < x$, so
\begin{equation}
\label{eq:fbarx}
\bar{F}_X(x) 
= \exp\!\left( - \iint_{z-y < x} f_Y(y) \,\dz \,\dy \right) 
= \exp\!\left( - \int_{z=-x}^\infty \bar{F}_Y(z) \,\dz \right) \, .
\end{equation}
Taking derivatives of both sides of this equation with respect to $x$ and using~\eqref{eq:fbary} gives
\begin{align*}
f_X(x) 
= \frac{\dF_X(x)}{\dx} 
= -\frac{\mathrm{d} \bar{F}_X(x)}{\dx} 
= \bar{F}_X(x) \bar{F}_Y(-x) 
= \bar{F}_X(x) \bar{F}_X(-x) \,\expect[F_X(\eta+x)] \, .
\end{align*}
\end{proof}

For the sake of simplicity, we omit the subscript $X$ in $F_X(\cdot)$ in the sequel. Define
\begin{equation}
\label{eq:GVW_def}
G(x) = F(-x), \quad V(x) = \expect[F(\eta+x)], \quad W(x) = V(-x). 
\end{equation}

\begin{lemma}\label{lmm:RDE_ODE_exp}
When $\eta \sim \exp(\lambda)$, $F$ is a solution to \eqref{eq:RDE_ODE_1} if and only if $(F,G,V,W)$ 
is a solution to the following four-dimensional system of ordinary differential equations (ODEs):
\begin{align}
&\frac{\dF}{\dx} = (1-F(x))(1-G(x))V(x)\allowdisplaybreaks \label{eq:ODE1} \\ 
&\frac{\dV}{\dx} = \lambda(V(x)-F(x)) \allowdisplaybreaks \label{eq:ODE2} \\ 
&\frac{\dG}{\dx} = -(1-F(x))(1-G(x))W(x)\allowdisplaybreaks \label{eq:ODE3} \\
&\frac{\dW}{\dx} = \lambda(G(x)-W(x))\allowdisplaybreaks \label{eq:ODE4}
\end{align}
with the boundary conditions
\begin{equation}\label{eq:ODE_boundary}
\begin{aligned}
&F(-\infty) = V(-\infty) = G(+\infty) = W(+\infty ) = 0 \\
&F(+\infty) = V(+\infty) = G(-\infty) = W(-\infty ) = 1.
\end{aligned}
\end{equation}
and 
\begin{align}
0 \le F ,G \le 1, \quad 0 < V, W \le 1 \, . \label{eq:ODE_regularity}
\end{align}
\end{lemma}

\begin{proof}
For one direction, suppose $F$ is a solution to \eqref{eq:RDE_ODE_1}. Then \eqref{eq:ODE1} and \eqref{eq:ODE3} directly follow from \eqref{eq:RDE_ODE_1} by plugging in the definition of $(F,G,V,W)$; thus they hold for any distribution of $\eta$. In contrast, \eqref{eq:ODE2} and \eqref{eq:ODE4} are derived via integration by parts under the assumption that $\eta\sim \exp(\lambda)$. The conditions \eqref{eq:ODE_boundary} and \eqref{eq:ODE_regularity} hold because $F$ must be a valid CDF. Note that $V(x),W(x)>0$ for any finite $x$ by definition, as  $\eta$ is larger than any fixed threshold with a positive probability. 

For the other direction, suppose $F$ is a solution to the system of ODEs \eqref{eq:ODE1}--\eqref{eq:ODE4} with conditions \eqref{eq:ODE_boundary}--\eqref{eq:ODE_regularity}.  Clearly $F$ satisfies \eqref{eq:RDE_ODE_1}. We only need to verify that $F$ is a valid CDF, which is equivalent to checking (1) $F$ is non-decreasing; (2) $F(+\infty)=1$ and $F(-\infty)=0$; and (3) $F$ is right continuous. All these properties  are satisfied automatically. 
\end{proof}

We comment that RDEs can be solved exactly for some other problems with random vertex or edge weights in the case of the exponential distribution, such as maximum weight independence sets and maximum weight matching in sparse random graphs~\cite{Gamarnik2004,Gamarnik2006,Gamarnik2008}. In some cases this is simply because the minimum of a set of exponential random variables is itself an exponential random variable. To our knowledge our situation involving integration by parts is more unusual.

An interesting consequence of \eqref{eq:ODE1}--\eqref{eq:ODE_regularity} is the following conservation law: 
\begin{equation}
F(x)W(x) + G(x)V(x) - V(x)W(x) = 0 \, . \label{eq:conservation_law}
\end{equation}
Since $F(0)=G(0)$ and $V(0)=W(0)$, this also implies that
\begin{equation}
\label{eq:v0f0}
V(0) = 2F(0) \, .
\end{equation}

Surprisingly, we find that the system~\eqref{eq:ODE1}--\eqref{eq:ODE4} exhibits a sharp phase transition at $\lambda=4$. On the one hand, when $\lambda\ge 4$, they have no solution consistent with~\eqref{eq:ODE_boundary}--\eqref{eq:ODE_regularity}, corresponding to Theorem~\ref{thm:first_moment_bound} that we have almost perfect recovery in that case. To see this, 
assume that $V(x) \neq 0$ and introduce a new function $U(x)$ as
\begin{align*}
 U(x) = \frac{F(x)}{V(x)} \, .
\end{align*}
Then $U(x)$ is differentiable and satisfies:
\begin{align}
\frac{\dU}{\dx} 
= - \lambda U(1-U) + (1-F) (1-G). \label{eq:U_diff}
\end{align}

\begin{lemma}
\label{lmm:ode_no_solution}
If $\lambda \ge 4$, then 
the system of ODEs \eqref{eq:ODE1}--\eqref{eq:ODE4} with conditions \eqref{eq:ODE_boundary}--\eqref{eq:ODE_regularity} has no solution.
\end{lemma}

\begin{proof}
We prove by contradiction. Suppose the system of ODEs \eqref{eq:ODE1}--\eqref{eq:ODE4} has a solution satisfying the conditions \eqref{eq:ODE_boundary}--\eqref{eq:ODE_regularity}. Then $U(x) \to 1$ as $x \to +\infty$. 
Since~\eqref{eq:v0f0} gives $U(0)=F(0)/V(0)=1/2$, this implies that there is some $x_0 \ge 0$ such that $U(x_0)=1/2$ and $U'(x_0) \geq 0$. Since $U(x_0) = F(x_0) / V(x_0)$ and $V(x_0) > 0$, we also have $F(x_0) = V(x_0)/2 > 0$, and we also have $G(x_0) \ge 0$. But then~\eqref{eq:U_diff} gives
\[
U'(x_0) = - \lambda/4 + \left( 1- F(x_0) \right) \left( 1- G(x_0) \right) < -\lambda /4 + 1 \le 0 \, .
\]
This contradicts $U'(x_0) \ge 0$, and shows that $U$ can never exceed its initial value of $1/2$ or tend to $1$ as $x \to +\infty$.
\end{proof}

On the other hand, Theorem \ref{thm:ode_unique_solution} in Section \ref{app:ODE_analysis} proves that for all $\lambda < 4$, there is a unique solution to \eqref{eq:ODE1}--\eqref{eq:ODE4} consistent with the conditions \eqref{eq:ODE_boundary}--\eqref{eq:ODE_regularity}, and hence giving the CDFs of $X$ and $Y$. The idea hinges on a dynamical fact, namely that the $(U,V,W) = (1,1,0)$ is a saddle point, and there is a unique initial condition that approaches it as $x \to \infty$ along its unstable manifold.

Along with Lemma \ref{lmm:RDE_ODE_exp}, this unique solution to the ODEs gives the unique solution to the RDEs \eqref{eq:RDE2} and \eqref{eq:RDE1}. Moreover Theorem \ref{thm:itallworks} in Section \ref{app:jointconv} tells us that the expected overlap of $\Mmin$ converges to that of $\Minfopt$, which in turn is the probability that the edge weight of a planted edge is less than the cost of matching its endpoints to other vertices: 
\[
\lim_{n\to \infty} \frac{1}{n} \,\expect[\left| \Mmin \cap \Mplanted \right|]
= \alpha(\lambda) = \prob[\eta < X + X' ] \, ,
\]
where $X$ and $X'$ are  i.i.d.\ with CDF given by $F$ and $\eta \sim \exp(\lambda)$ is independent. 
Finally, we compute $\prob[ \eta < X+ \hat{X} ]$ as follows:
\begin{align*} 
\prob[\eta < X+ \hat{X}]
&= 1 - \expect_\eta\left[\int_{-\infty}^{+\infty} f(x) F(\eta-x) \,\dx\right] \\
&= 1 - \int_{-\infty}^{+\infty} \frac{\dF(x)}{\dx}  \,\expect_\eta [F(\eta-x)] \,\dx \\
&= 1 - \int_{-\infty}^{+\infty} (1-F(x))(1-G(x)) V(x) W(x) \,\dx \\
&= 1 - 2 \int_{0}^{+\infty} (1-F(x))(1-G(x)) V(x) W(x) \,\dx \, .
\end{align*}
where in the last line we used the fact that the integrand is an even function of $x$. 

Similarly, by Theorem \ref{thm:itallworks}, Lemma \ref{lem:match} and Corollary \ref{cor:iidX}, we have
\begin{align*}
	\lim_{n\to\infty} \frac{1}{n} \,\expect\!\left[ \sum_{e\in\Mmin}\ell_n(e) \right] 
	&= \lim_{n \to \infty}\expect[\ell_n(1,\Mmin(1))] = \expect[\ell_\infty(\root,\Minfopt(\root))] \\
	&= \expect[\eta \,\one(\eta < X+X')] + \expect[\zeta \,\one(\zeta < Y+Y')],
	\end{align*}
where $X$ and $X'$ are  i.i.d.\ with CDF given by $F$, $Y$ and $Y'$ are  i.i.d.\ with CDF given by $1-(1-F)W$, $\eta \sim \exp(\lambda)$ is independent of $X$ and $X'$. Finally $\zeta$ is ``uniformly distributed'' over $\mathbb{R}_+$, i.e., with Lebesgue measure on $\mathbb{R}_+$, and is independent of $Y$ and $Y'$. Note that
\begin{align*}
	\beta_p(\lambda) &\triangleq \expect[\eta \,\one(\eta < X+X')] = \int_{0}^{\infty} \lambda t\, \e^{-\lambda t }\int_{-\infty}^{\infty}\frac{\dF(x)}{dx}(1-F(t-x))\,\dx\,\dt\\
	&=\int_{-\infty}^\infty \left(1-F(x) \right) \left(1-G(x) \right) V(x)\int_{0}^{\infty}\lambda t \, \e^{-\lambda t} \left(1-F(t-x)\right)\,\dt \,\dx,
\end{align*}
and
\begin{align*}
	\beta_u(\lambda) &\triangleq \expect[\zeta \,\one(\zeta < Y+Y')] \\
	&=  \int_{0}^{\infty} t  \int_{-\infty}^{\infty} \frac{\dF_Y(y)}{\dy} \left( 1- F_Y(t-y) \right) \,\dy \,\dt\\
	&=  \int_{0}^{\infty} t  \int_{-\infty}^{\infty} \left(\frac{\dF(y)}{\dy}W(y)-(1-F(y))\frac{\dW(y)}{\dy}\right) (1-F(t-y))W(t-y)\,\dy\,\dt\\
	&=\int_{-\infty}^\infty \left(1-F(y) \right) \left[ \left(1-G(y) \right) V(y)W(y)-\lambda (G(y) - W(y)) \right]\\
	&\qquad\qquad\times \left[\int_{0}^{\infty} t (1-F(t-y))W(t-y)\,\dt\right]\dy.
\end{align*}
This completes the proof of Theorem \ref{thm:main} and Corollary \ref{cor:weight}. 
\end{proof}

To illustrate our results, we plot the function $\alpha(\lambda)$ and $\beta_p(\lambda) + \beta_u(\lambda)$ in Figure \ref{fig:overlap_experiment}, and compare with experimental results from finite graphs with $n=1000$.
As the plot shows, when $\lambda \ge 4$ and the overlap is $1-o(1)$ w.h.p., $\beta_p(\lambda) + \beta_u(\lambda) =1/\lambda$, i.e., the expected weight of a planted edge. Conversely when $\lambda \to 0$ and the overlap is $o(1)$ w.h.p., $\beta_p(\lambda) + \beta_u(\lambda) \to \zeta(2)$ as in the un-planted model, since $\Mmin$ consists almost entirely of un-planted edges.

We comment that the connection between the finite planted model and the planted PWIT is an integral part of the above argument. We explore this connection in detail in Sections \ref{app:gen_setup}--\ref{app:jointconv}. The results presented in these sections are true for any distribution of un-planted edge weights with density $Q'(0) = 1/n$, and any distribution of planted edge weights $P$.

\section{Open questions} 

We conclude the discussion of our main results with some open questions.
\begin{enumerate}

\item We have computed the expected overlap and cost (per vertex) of the minimum matching. However, in Figure~\ref{fig:overlap_experiment} both quantities appear to concentrate on their expected values, and we conjecture that they both have variance $O(1/n)$. We give numerical evidence for this in Figure~\ref{fig:openproblem}, where we plot empirical values of these variances for fixed $\lambda=1$ and varying $n$. For the unplanted model with exponentially distributed edge weights, W{\"a}stlund~\cite{Wstlund2005,Wstlund2010} derived the precise asymptotic formula $(4\zeta(2)-4\zeta(3))/n + O(1/n^2)$ for $\mathrm{Var}\!\left[ \frac{1}{n}w(\Mmin) \right]$; it is not known if this holds for e.g., the uniform distribution. In the planted case, we can adapt the arguments in~\cite[Section 10]{talagrand1995concentration} from the unplanted model to show that
\[
\mathrm{Var}\!\left[ \frac{1}{n}w(\Mmin) \right] = O\!\left( \frac{(\log n)^4}{ N (\log \log n)^2 } \right) 
\]
via Talagrand's inequality.\footnote{
The only minor adaptation is to show \cite[Prop.10.3]{talagrand1995concentration} continues to hold 
for the planted model, that is, the bipartite graph $K_{n,n}$ after truncating edges with weights above 
$2u \log n$ is $(u\log n)$-expanding with high probability. This can be proved by first coupling the planted model to  the unplanted one in a way that every edge weight in the planted model is no larger and then directly invoking Prop.10.3.}
We do not know how to tighten this to $O(1/n)$, or to prove a similar bound for the overlap. We believe both would follow from correlation decay of messages in the planted PWIT, in which case distant pairs of edges in $K_{n,n}$ would be asymptotically independent.

\begin{figure}
\centering
\begin{subfigure}[!t]{.45\linewidth}
	\begin{center}
		\includegraphics[width=\linewidth]{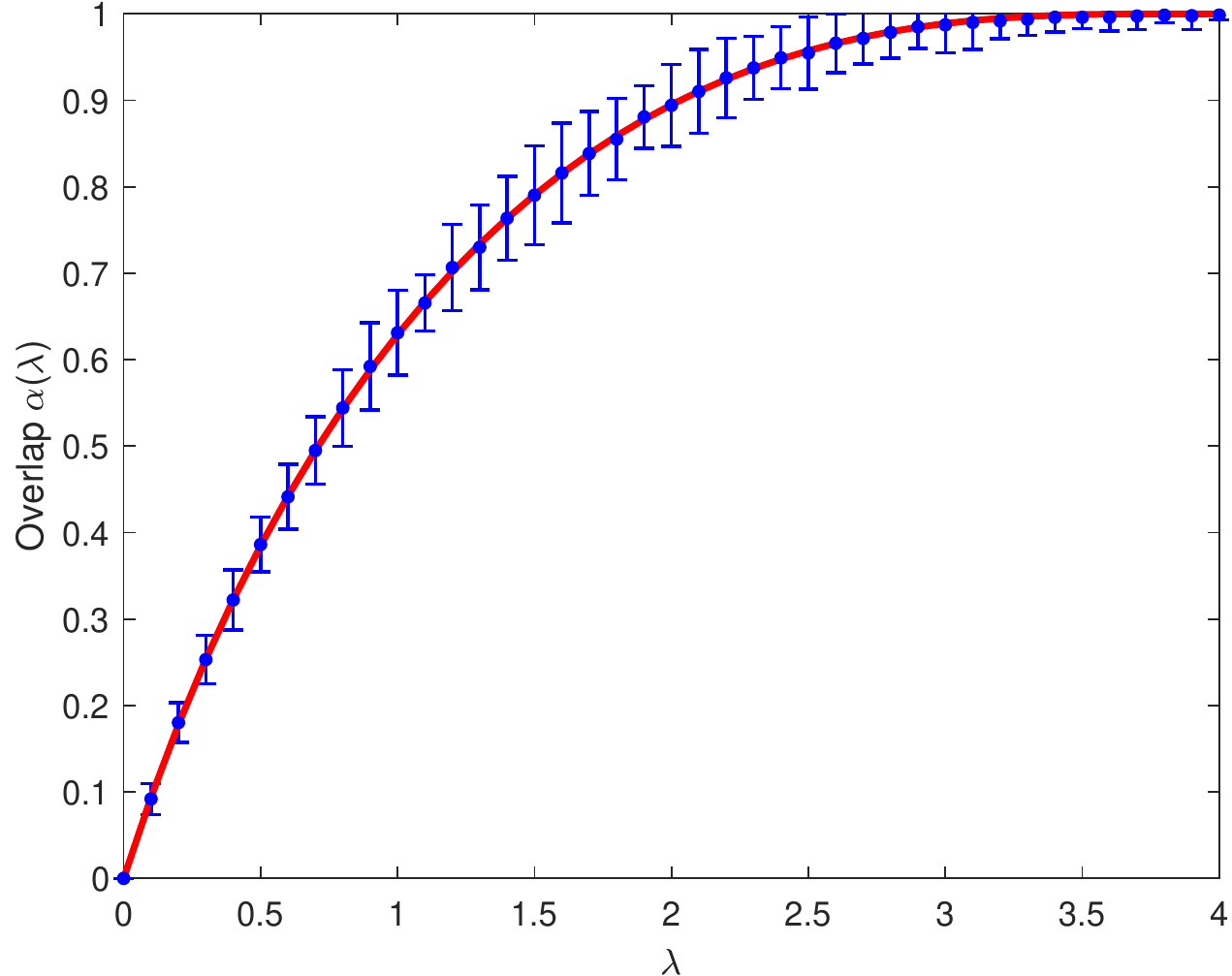}
		\caption*{Overlap of $\Mmin$ and $\Mplanted$.}
	\end{center}
\end{subfigure}
\hfill
\begin{subfigure}[!t]{.45\linewidth}
	\begin{center}
		\includegraphics[width=\linewidth]{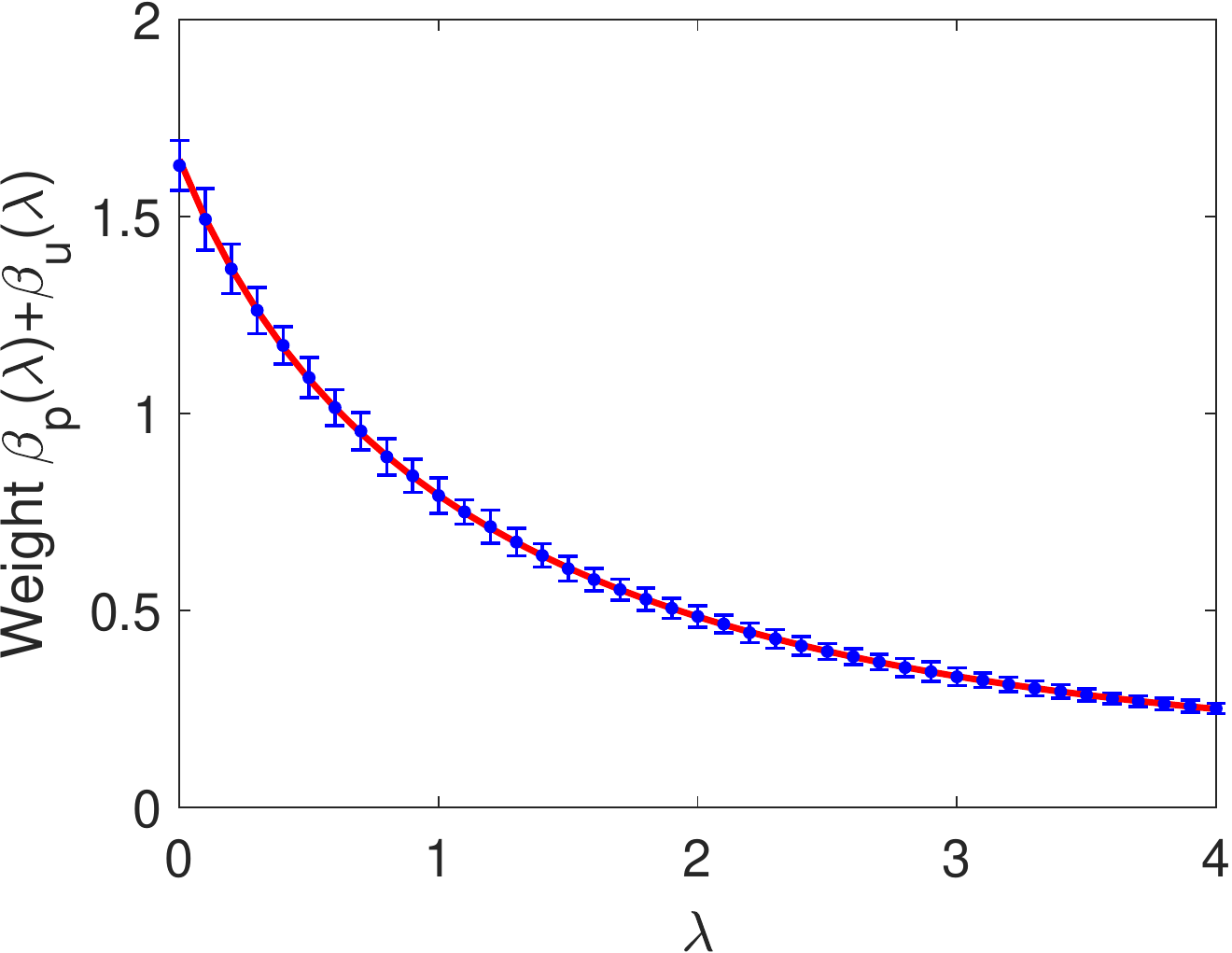}
		\caption*{Weight of $\Mmin$ divided by $n$.}
	\end{center}
\end{subfigure}
\caption{The solid red line is the theoretical value computed by numerically solving the system of ODEs \eqref{eq:ODE1}--\eqref{eq:ODE4}. The blue dots, is the empirical mean of the corresponding quantity on bipartite graphs generated by the planted model with $n=1000$. Each dot is the average of $100$ independent trials. The error bars show the 95\% confidence interval of the distribution, i.e., the range into which 95 out of 100 trials fell, suggesting that both quantities are concentrated around their expectations.}
\label{fig:overlap_experiment}
\end{figure}

\begin{figure}
	\centering
	\begin{subfigure}[!t]{.47\linewidth}
		\begin{center}
			\includegraphics[width=\linewidth]{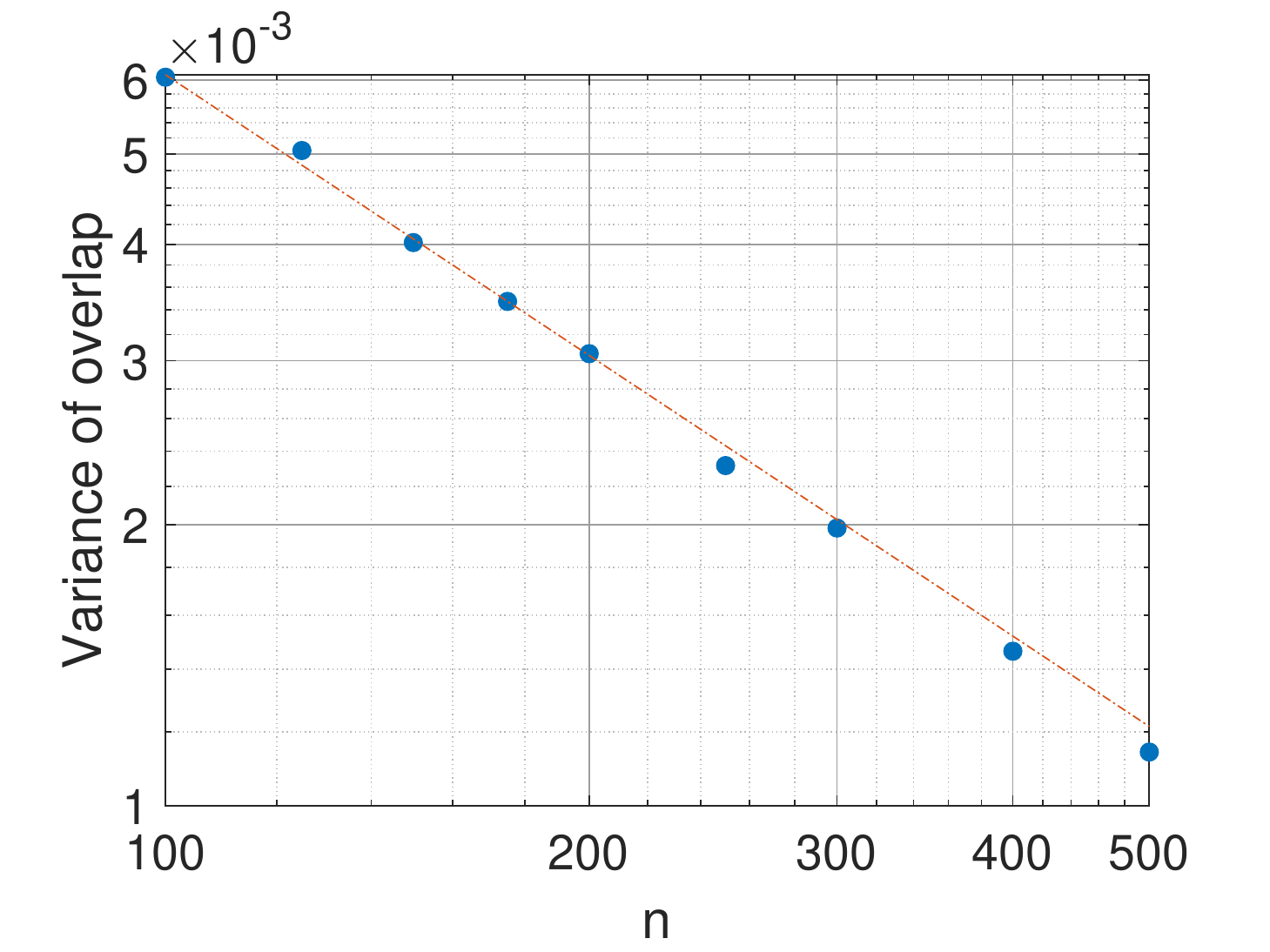}
			\caption*{$\mathrm{Var}\!\left[ \frac{1}{n} \,\left| \Mmin \cap \Mplanted \right| \right]$ for different values of $n$.}
		\end{center}
	\end{subfigure}
	\hfill
	\begin{subfigure}[!t]{.47\linewidth}
		\begin{center}
			\includegraphics[width=\linewidth]{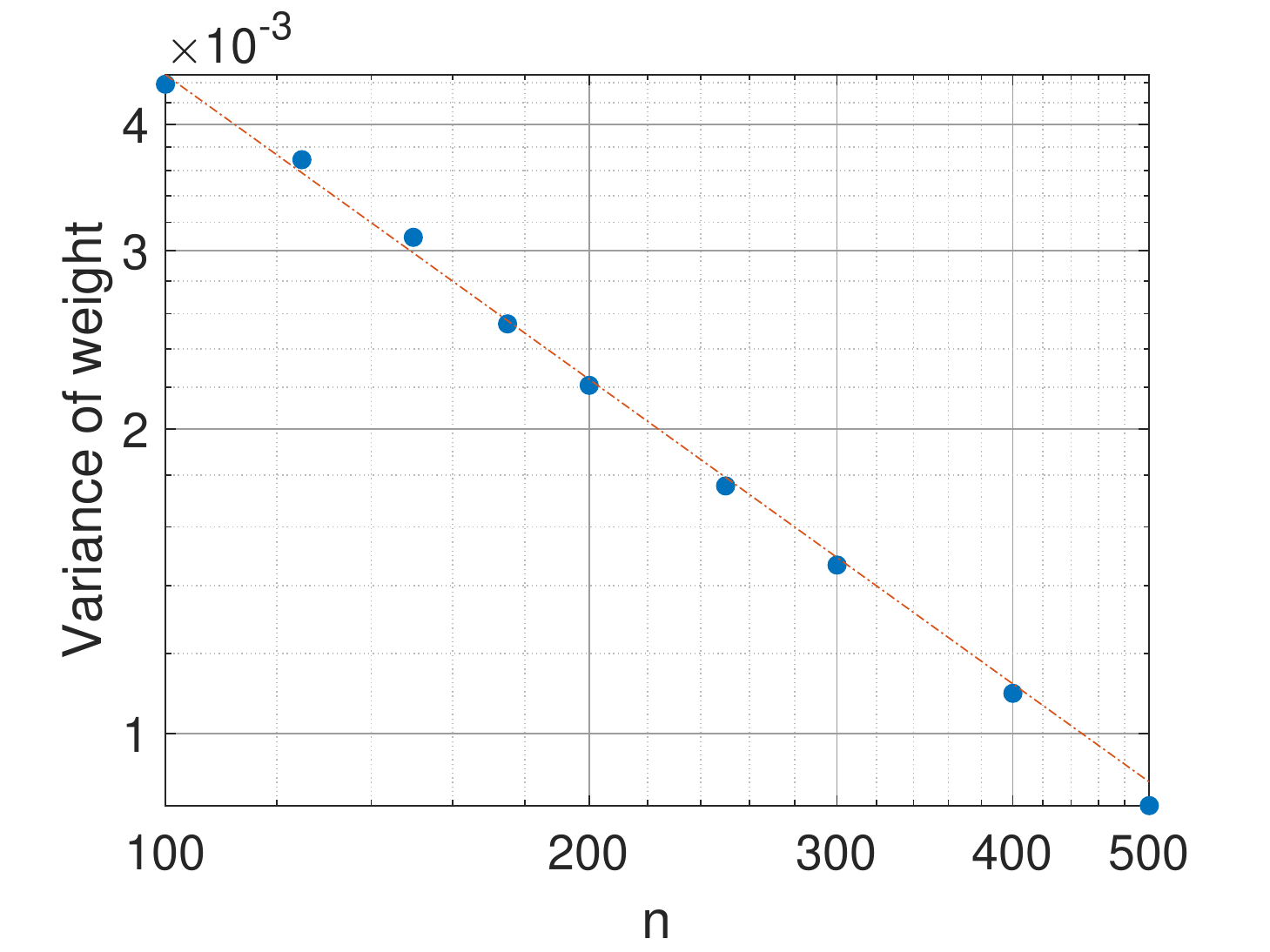}
			\caption*{$\mathrm{Var}\!\left[ \frac{1}{n} \,w(\Mmin) \right]$ for different values of $n$.}
		\end{center}
	\end{subfigure}
	\caption{The blue dots are computed by finding the variance of the corresponding quantity on bipartite graphs generated by the planted model with $\lambda = 1$. Each dot is the average of $1000$ independent trials. For both the overlap and the cost the variance appears to decrease as $O(1/n)$ as shown by the dashed lines with slope $-1$ on this log-log plot.}
	\label{fig:openproblem}
\end{figure}

\item We have focused here on the maximum-likelihood estimator, which for the exponential distribution is also the minimum-weight matching. In physical terms, we have considered this problem at zero temperature. In contrast, the posterior distribution $\prob[\Mplanted = M^\prime \mid G]$ given in \eqref{eq:density_planted} is a Gibbs distribution at nonzero temperature. The estimator with the largest expected overlap would then be the maximum marginal estimator, i.e., the set of edges $e$ for which $\prob[ e \in \Mplanted \mid G ] \ge \prob[ e \notin \Mplanted \mid G]$. This estimator is not generally a matching or even of size $n$; nevertheless, one can restrict to estimators which are perfect matchings while increasing the expected misclassification rate $| M^\prime \triangle \Mplanted |/(2n)$ at most by a factor of two. We leave for future work the problem of computing the expected overlap of this estimator. 
It is possible that it achieves almost-perfect recovery for some $\lambda < 4$, i.e., that the information-theoretic threshold for almost-perfect recovery is different from the threshold we have computed here, but we conjecture this is not the case.
	
\item In physics, a phase transition is called \textit{continuous} if the order parameter (in this case, the overlap) is continuous at the threshold, and as \emph{$p$th order} if its $(p-2)$th derivative is continuous. Although we have not proved this, $\alpha(\lambda)$ in Figure \ref{fig:overlap_experiment} appears to have zero derivative at $\lambda=4$. \label{question:1}
This suggests that the transition in the optimal overlap is of third or higher order, unlike other well-known problems in random graphs such as the emergence of the giant component (second order) \cite{Bollobas2001}, the stochastic block model with two groups (second order) or with four or more groups (first order)~\cite{Moore2017}, or the appearance of the $k$-core for $k \ge 3$ (first order)~\cite{Pittel1996}. Very recently, a non-rigorous argument that neglects small terms in the RDE~\cite{Semerjian2020} was given that suggests the transition is infinite order, i.e., with all derivatives continuous at $\lambda=4$. Proving this is an attractive open problem.

\item A related question is how the minimum matching changes when the graph undergoes a small perturbation. Aldous and Percus~\cite{Aldous2003} introduced this problem formally and classified combinatorial optimization problems based on how the cost of the optimal solution scales with the size of the perturbation. Using a cavity-based analysis and Monto Carlo simulation, they suggested that the minimum cost among all perfect matchings that differ from the minimum matching by at least $\delta n$ edges is larger than the cost of the minimum matching by $\Theta(\delta^3)$.
This framework has been studied rigorously in \cite{Aldous2008} and \cite{Aldous2009} for different combinatorial optimization problems. It would be interesting to explore this same kind of perturbation in the planted model, where we 
study the minimum cost among all matchings that differ from the planted matching by at least $\delta n $ edges.


\item 
Can Lemma~\ref{lmm:ode_no_solution} be turned into a proof of almost perfect recovery? More generally, when the RDEs~\eqref{eq:RDE2} and~\eqref{eq:RDE1} lack a solution supported on $\mathbb{R}$ (i.e., excluding weights in $\{\pm\infty\}$) does this imply almost perfect recovery?

\item We have given two proofs that that the RDEs have a unique solution if $\lambda < 4$. Theorem \ref{thm:ode_unique_solution} uses the dynamics of the ODEs, while Theorem \ref{thm:unique-from-pwit} uses the uniqueness of $\Minfopt$ on the planted PWIT. These two types of reasoning seem completely orthogonal, but they must be connected. When do the properties of the optimum involution invariant object on an appropriate type of infinite tree imply the dynamical fact that a system of RDEs has a unique fixed point?

\item What can we say about distributions $P(w)$ of planted weights other than exponential? For what distributions is it possible to collapse the RDEs into a finite-dimensional system of ODEs? As stated above, Chertkov et al.~\cite{Chertkov2010} studied the folded Gaussian distribution $P=|\mathcal{N}(0,\kappa)|$, but we have been unable to reduce the RDEs to ODEs in this case. 
Nevertheless, in the spirit of universality classes in physics, we expect any reasonable family of distributions $P$ to undergo a phase transition similar to what we have shown here for the exponential distribution, namely from almost-perfect to partial recovery at some critical value of $P$’s expectation (where this critical value may depend on the shape of the distribution $P$). Moreover, with respect to question \#\ref{question:1} above, we expect the order of this phase transition, and other scaling properties in its vicinity, to be robust as long as $P^\prime(0)>0$.

\item Finally, what about planted models with spatial structure, as in the original problem of particle tracking from~\cite{Chertkov2010}?

\end{enumerate}

\section*{Acknowledgements}
We are very grateful to Venkat Anantharam, Charles Bordenave, Jian Ding, David Gamarnik, Christopher Jones, Vijay Subramanian, Yihong Wu, and Lenka Zdeborov\'a for helpful conversations. C.M. is also grateful to Microsoft Research New England for their hospitality. We also thank an anonymous reviewer for helpful comments.

\bibliographystyle{plain}
\bibliography{bibliography}
\appendix
\section{Proof of Lemma~\ref{lmm:erlang-diff}}\label{app:erlang-diff}
The moment generating function for an exponential random variable $Y$ with rate $\lambda$ is
\[
\expect[\e^{\mu Y}] = \frac{\lambda}{\lambda-\mu} \, . 
\]
Since $X_1$ and $X_2$ are independent, the exponential generating function for $X_1-X_2$ is
\begin{equation}
\label{eq:mgf-x1-x2}
\expect[\e^{\mu (X_1-X_2)}]
= \expect[\e^{\mu X_1}] \,\expect[\e^{-\mu X_2)}]
= \left( \frac{\lambda_1 \lambda_2}{(\lambda_1-\mu)(\lambda_2+\mu)} \right)^{\!t} \, ,
\end{equation}
By Markov's inequality
\[
\prob[X_1 > X_2] = \prob[\e^{\mu (X_1-X_2)} > 1] \le \expect[\e^{\mu (X_1-X_2)}] 
\]
for any $\mu > 0$. The right-hand side of~\eqref{eq:mgf-x1-x2} is minimized when $\mu = (\lambda_1-\lambda_2)/2$, giving the desired result.

\section{Analysis of system of ODEs}\label{app:ODE_analysis}
In this section, we state and prove Theorem \ref{thm:ode_unique_solution}.
\begin{theorem}\label{thm:ode_unique_solution}
When $\lambda<4$, the system of ODEs \eqref{eq:ODE1}--\eqref{eq:ODE4} has a unique solution $(F,G, V, W)$ 
satisfying  conditions \eqref{eq:ODE_boundary}--\eqref{eq:ODE_regularity}. 
\end{theorem}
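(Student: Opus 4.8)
The plan is to convert the boundary--value problem for \eqref{eq:ODE1}--\eqref{eq:ODE4} into a one--parameter shooting problem for an autonomous flow and then combine the local picture at its fixed points with global a priori bounds to single out one admissible trajectory. First I would eliminate one variable with the conservation law \eqref{eq:conservation_law}: with $U=F/V$ as in \eqref{eq:U_diff} the law reads $G=W(1-U)$, $F=UV$, so every admissible quadruple is recorded by $(U,V,W)$, which solves the autonomous system
\begin{align}
U' &= -\lambda U(1-U) + (1-UV)\bigl(1-W(1-U)\bigr), \label{eq:plan-UVW1}\\
V' &= \lambda V(1-U), \label{eq:plan-UVW2}\\
W' &= -\lambda U W. \label{eq:plan-UVW3}
\end{align}
Because $G(x)=F(-x)$ and $W(x)=V(-x)$ by \eqref{eq:GVW_def}, the sought solution is invariant under $(F,G,V,W)(x)\mapsto(G,F,W,V)(-x)$, so $F(0)=G(0)$, $V(0)=W(0)$, and hence $V(0)=2F(0)$ as in \eqref{eq:v0f0}. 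Writing $c=F(0)$, it therefore suffices to integrate \eqref{eq:plan-UVW1}--\eqref{eq:plan-UVW3} forward on $[0,\infty)$ from the one--parameter family $(U,V,W)(0)=(\tfrac12,2c,2c)$, $c\in(0,\tfrac12]$, reflect by the symmetry, and note that \eqref{eq:ODE_boundary}--\eqref{eq:ODE_regularity} then reduce to the single demand that the forward orbit obeys $(U,V,W)\to(1,1,0)$ as $x\to+\infty$ (using also that $0\le U\le1$ and $0<V,W\le1$ hold along any genuine CDF solution, since $F$ is nondecreasing and $\eta>0$).

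Next I would record the local structure: for $\lambda<4$ the only fixed points of \eqref{eq:plan-UVW1}--\eqref{eq:plan-UVW3} in the physical box are $P_+:=(1,1,0)$ and $P_-:=(0,0,1)$, and linearizing at $P_+$ gives eigenvalues $\{\lambda,-1,-\lambda\}$, so $P_+$ is a hyperbolic saddle with a one--dimensional unstable and a two--dimensional stable manifold $\mathcal{W}^s(P_+)$ (and $P_-$ has eigenvalues $\{\lambda,1,-\lambda\}$). In these terms an admissible solution is exactly an orbit landing on $\mathcal{W}^s(P_+)$; equivalently, the initial--data segment $\Sigma=\{(\tfrac12,2c,2c):c\in(0,\tfrac12]\}$ must meet $\mathcal{W}^s(P_+)$, and the theorem reduces to showing this meeting is a single point --- the same as saying the two--dimensional manifolds $\mathcal{W}^u(P_-)$ and $\mathcal{W}^s(P_+)$ share exactly one orbit.

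Finally I would run the shooting argument in $c$. From the signs in \eqref{eq:plan-UVW1}--\eqref{eq:plan-UVW3} and invariance of a suitable enlargement of the box, one establishes a priori bounds showing that as $c$ ranges over $(0,\tfrac12]$ the orbit can fail to reach $P_+$ in only two ways: for $c$ near $\tfrac12$, $V$ overshoots $1$ and the orbit escapes ``above'' $\mathcal{W}^s(P_+)$; for small $c$, $U$ crosses $1$ (after which $V$ turns around and decreases) and the orbit escapes ``below'' it. One then shows these two failure sets are complementary open subintervals of $(0,\tfrac12]$ --- the monotone dependence on $c$ is the technical heart --- so a unique threshold $c^{*}$ survives, whose orbit stays in the box and, having no other $\omega$-limit available, converges to $P_+$. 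The hypothesis $\lambda<4$ enters exactly here: it is the condition under which the Lemma~\ref{lmm:ode_no_solution}--type inequality $-\lambda U(1-U)\ge-\lambda/4$ at $U=\tfrac12$ still lets $U'$ turn positive, so the ``lower'' family genuinely drives $U$ past $1$ and the transition between the two failure modes must cross an admissible orbit; when $\lambda\ge4$ the lower family instead stalls with $U\le\tfrac12$ and no admissible orbit exists, recovering Lemma~\ref{lmm:ode_no_solution}.

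The main obstacle is this global analysis. Since the linearizations at $P_\pm$ are the same for every $\lambda>0$, the threshold $\lambda=4$ is a genuinely global effect, so the crux is the last step: building invariant regions and a priori bounds that (i) show the two escape modes are the only possibilities and depend monotonically on $c$, and (ii) rule out spurious $\omega$-limits, in particular periodic orbits and orbits converging to $P_+$ along atypical directions of $\mathcal{W}^s(P_+)$. I expect this bookkeeping, not any local or algebraic computation, to be where the real work lies.
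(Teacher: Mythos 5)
Your plan is correct and follows essentially the same route as the paper: reduce via the conservation law to the autonomous $(U,V,W)$ system, shoot on the single parameter $V(0)=W(0)$, identify the two escape modes ($V$ reaching $1$ versus $U$ reaching $1$) as complementary one-sided-open sets of initial data, and pin down the unique surviving threshold; the paper carries out the "technical heart" you flag (monotone dependence on the shooting parameter) by analyzing the variational equations for $\partial U/\partial\epsilon$, $\partial V/\partial\epsilon$, $\partial W/\partial\epsilon$. Your local computation at $(1,1,0)$ (eigenvalues $\{\lambda,-1,-\lambda\}$) and your account of where $\lambda<4$ enters also match the paper's Lemmas on basic bounds, blow-up dichotomy, and basins of attraction.
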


When $V(x) \neq 0$, recall that 
$$
U(x) = \frac{F(x)}{V(x)};
$$
Thus, when $V(x) \neq 0$ and $W(x) \neq 0$,  the conservation law $FW+GV-VW=0$ is equivalent to 
$$
\frac{G(x)}{W(x)} = 1- U(x).
$$
Also, recall that from the conservation law 
we have $V(0)=W(0)=2F(0)=2G(0)$.
Hence,  the previous $4$-dimensional system of ODEs \eqref{eq:ODE1}--\eqref{eq:ODE4} 
with conditions \eqref{eq:ODE_boundary}--\eqref{eq:ODE_regularity}
reduces to the following $3$-dimensional system of ODEs: 
\begin{equation}
\begin{aligned}
&\frac{\dU}{\dx}  = - \lambda U(1-U ) + (1- UV)  \left(1- (1-U) W \right)  \\
&\frac{\dV}{\dx}   = \lambda V (1-U)  \\
& \frac{\dW}{\dx} = - \lambda W U 
\end{aligned}\label{eq:ODE_UVW}
\end{equation}
with initial condition
\begin{align}
& U(0) = \frac{1}{2}, \quad V(0)=W(0)=\epsilon, \quad \epsilon \in [0,1] \label{eq:ODE_UVW_initial}.
\end{align}

Note that the partial derivatives of the right hand side of \eqref{eq:ODE_UVW} with respect to $(U,V, W)$ are continuous. Therefore, by the standard existence and uniqueness theorem for solutions of systems of ODEs (see e.g.~\cite[Theorem 2]{Pontryagin1962})
it follows that the system \eqref{eq:ODE_UVW}
with the initial condition \eqref{eq:ODE_UVW_initial}
has a unique solution in a neighborhood of $0$ for a fixed $\epsilon \in [0,1]$.
We write this unique solution as $U(x,\epsilon)$, $V(x,\epsilon)$, and $W(x,\epsilon)$,
which we abbreviate as $\left( U, V, W \right)$ whenever the context is clear. 
We extend the neighborhood to the maximum interval $I_\epsilon$ 
in which the solution $(U(x,\epsilon), V(x,\epsilon), W(x,\epsilon))$ is finite, so that  
the existence and uniqueness theorem applies to $I_\epsilon$.
In particular, if the solution $(U(x,\epsilon), V(x,\epsilon), W(x,\epsilon))$ is finite for all $x \in [0, \infty)$, then
$I_\epsilon=[0,\infty)$; otherwise, at least one of  $U, V, W$ converges to $\infty$ as $x$ converges to some finite $x_0$
and then $I_\epsilon=[0,x_0)$. In the latter case, we say the solution is $\pm \infty$ for $x \geq x_0$.

Therefore, to prove Theorem \ref{thm:ode_unique_solution},
it suffices to show that the system of ODEs \eqref{eq:ODE_UVW}
with the initial condition \eqref{eq:ODE_UVW_initial} has a solution $\left(U(x,\epsilon_0),V(x,\epsilon_0), W(x,\epsilon_0)\right)$
for $x \in [0, +\infty)$ satisfying the boundary condition 
$U(+\infty,\epsilon_0)=V(+\infty,\epsilon_0)=1$ and $W(+\infty,\epsilon_0)=0$ for a unique $\epsilon_0 \in [0,1]$. 
Geometrically speaking, this is due to the fact that $(U=1,V=1,W=0)$ is a \emph{saddle point}, and 
there is a unique choice of $\epsilon_0$ so that the trajectory $(U(x,\epsilon_0),V(x,\epsilon_0),W(x,\epsilon_0))$
falls into the stable manifold i.e., set of initial conditions $(U(0,\epsilon_0), V(0,\epsilon_0),W(0,\epsilon_0))$ such that
$(U(x,\epsilon_0),V(x,\epsilon_0),W(x,\epsilon_0)) \to (1,1,0)$ as $x\to +\infty$.
For any other choice of $\epsilon \neq \epsilon_0$, 
the trajectory $(U(x,\epsilon),V(x,\epsilon),W(x,\epsilon))$ veers away from $(1,1,0)$ to infinity. 

The outline of the proof is as follows. We first prove some basic properties satisfied by the
solution $(U,V,W)$ in Section \ref{app:basic_ode}. Then based on these properties, in Section \ref{app:monotone_ode} 
we prove that the solution satisfies
some monotonicity properties with respect to $\epsilon$ by studying the sensitivity of the solution to the initial
condition. Next, in Section \ref{app:limit_ode} we characterize the limiting behavior of the solution depending on whether it
hits $1$ or not. The monotonicity properties and the limiting behavior enable us to completely characterize the basins of attraction
in  Section \ref{app:basin_ode}. In particular, we show that the basin of attraction for $(U=1,V=1,W=0)$ is a singleton, \ie, 
there is a unique choice of $\epsilon_0 \in [0,1]$ such that 
$(U(x,\epsilon_0),V(x,\epsilon_0),W(x,\epsilon_0) \to (1,1,0)$ as $x\to +\infty$. 
Finally, we connect the 3-dimensional system of ODEs \eqref{eq:ODE_UVW}
back to the 4-dimensional system of ODEs \eqref{eq:ODE1}--\eqref{eq:ODE4}
and finish the proof of  Theorem \ref{thm:ode_unique_solution} in Section \ref{app:finish_ode}.

\subsection{Basic properties of the solution}\label{app:basic_ode}
In the following lemma, we prove some basic properties of the solution. 
\begin{lemma}\label{lmm:basic0}
Fix any $\epsilon \in (0,1]$. 
Then for any $x \in [0, +\infty)$ such that the unique solution $(U,V,W)$ 
is well-defined (not equal to $\pm \infty$), 
it holds that
\begin{align*}
V(x)= W(x) \,\e^{\lambda x},  \quad UV <1, \quad (1-U) W <1 ,  \quad U, V, W > 0.
\end{align*}
\end{lemma}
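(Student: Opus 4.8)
The plan is to prove the four assertions in sequence, relying throughout on the elementary fact that, along the given trajectory, each quantity of interest satisfies a homogeneous linear scalar ODE $y'=c(x)y$ with continuous coefficient $c$, so that $y(0)=0$ forces $y\equiv 0$ and $y(0)>0$ keeps $y>0$. I fix $\epsilon\in(0,1]$ and work on the maximal interval $I_\epsilon\subseteq[0,\infty)$ on which $(U,V,W)$ is finite (hence continuous); since $I_\epsilon$ is an interval containing $0$, every $x$ at which the solution is well-defined lies in a compact $[0,x]\subseteq I_\epsilon$, so all integrals below converge.

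For $V=W\e^{\lambda x}$, I would set $h(x)=V(x)-W(x)\e^{\lambda x}$ and compute, from \eqref{eq:ODE_UVW}, that $h'=\lambda V(1-U)+\lambda WU\e^{\lambda x}-\lambda W\e^{\lambda x}=\lambda(1-U)\bigl(V-W\e^{\lambda x}\bigr)=\lambda(1-U)h$; since $h(0)=\epsilon-\epsilon=0$ and $\lambda(1-U)$ is continuous on $I_\epsilon$, uniqueness for linear ODEs gives $h\equiv 0$. (Equivalently one may integrate $(\log V-\log W)'=\lambda(1-U)+\lambda U=\lambda$.) For positivity of $V$ and $W$: since $V'=\lambda(1-U)V$ and $W'=-\lambda UW$ are homogeneous linear in $V$ and $W$ with continuous coefficients, $V(x)=\epsilon\exp\!\bigl(\lambda\int_0^x(1-U(s))\,\ds\bigr)>0$ and $W(x)=\epsilon\exp\!\bigl(-\lambda\int_0^x U(s)\,\ds\bigr)>0$, using $\epsilon>0$.

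The heart of the argument is the pair of bounds $UV<1$, $(1-U)W<1$, and then $U>0$. The key step is the change of variables $A:=UV$, $B:=(1-U)W$, which linearizes the awkward quadratic $(1-UV)(1-(1-U)W)$ in the $U$-equation: a one-line substitution into \eqref{eq:ODE_UVW} yields the closed pair $A'=(1-A)(1-B)V$ and $B'=-(1-A)(1-B)W$, that is, $(1-A)'=-\bigl[(1-B)V\bigr](1-A)$ and $(1-B)'=\bigl[(1-A)W\bigr](1-B)$. Along the fixed solution these are again homogeneous linear ODEs for $1-A$ and $1-B$ with continuous coefficients, and $1-A(0)=1-B(0)=1-\epsilon/2>0$, so $1-A$ and $1-B$ remain strictly positive on $I_\epsilon$; hence $UV<1$ and $(1-U)W<1$. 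Finally, from $1-A(x)=(1-\epsilon/2)\exp\!\bigl(-\int_0^x(1-B(s))V(s)\,\ds\bigr)$ together with $V>0$ and $1-B>0$, the exponent is $\le 0$, so $A(x)\ge\epsilon/2>0$, whence $U=A/V>0$. This closes all four claims.

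I do not expect a genuine analytic obstacle: the only non-mechanical step is recognizing the linearizing variables $A=UV$, $B=(1-U)W$, and then noticing that each required inequality can be read off as positivity for a linear ODE whose coefficient is an already-known continuous function of $x$ along the unique trajectory, which dispenses with any continuity/connectedness bootstrap over a ``good set''. If one instead prefers such a bootstrap, it is equally short: on the maximal subinterval where $A,B<1$ one has $A$ increasing and $B$ decreasing, so neither can reach $1$, and a first-exit argument extends this to all of $I_\epsilon$.
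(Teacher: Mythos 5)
Your proof is correct, and its backbone is the same as the paper's: you pass to the variables $A=UV$ and $B=(1-U)W$ (which are exactly the paper's $F$ and $G$) and exploit the fact that they satisfy the closed system $A'=(1-A)(1-B)V$, $B'=-(1-A)(1-B)W$, whose right-hand sides carry the factor $(1-A)(1-B)$; the formulas $V(x)=\epsilon\exp(\lambda\int_0^x(1-U))$ and $W(x)=\epsilon\exp(-\lambda\int_0^x U)$ for positivity and for $V=W\e^{\lambda x}$ are also the ones the paper uses. Where you differ is in how positivity is extracted from this structure. For $UV<1$ and $(1-U)W<1$, the paper runs a backward-uniqueness contradiction: if $\max\{F,G\}$ hit $1$ at some $x_0$, the constant trajectory through $(F(x_0),G(x_0))$ would be a second solution of the same ODE on $[0,x_0]$, forcing $F(0)=G(0)=1$ and contradicting $F(0)=G(0)=\epsilon/2<1$. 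You instead note that $1-A$ and $1-B$ each satisfy a homogeneous linear ODE with continuous coefficient along the trajectory and write down the explicit integrating-factor solution, which is manifestly positive; these are two packagings of the same mechanism, with yours the more direct. Your treatment of $U>0$ is genuinely different and slightly stronger: the paper argues by contradiction at a first zero of $U$ using the sign of $U'$ there, whereas you read off the quantitative bound $A(x)\ge\epsilon/2$ from the same integrating-factor formula (equivalently, $A'>0$ once $A,B<1$ and $V>0$ are known) and divide by $V>0$. Both routes are sound; yours dispenses with the first-crossing argument and yields the extra information $UV\ge\epsilon/2$ for free.
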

\begin{proof}
It follows from \eqref{eq:ODE_UVW} that 
\begin{align}
V(x) &=\epsilon \exp \left(  \lambda \int_{0}^x \left(1- U(y) \right) \dy \right) >0 , 
\label{eq:V_expression}\\
W(x) & =\epsilon \exp \left(  -\lambda \int_{0}^x U(y) \dy \right) >0. 
\label{eq:W_expression}
\end{align}
Hence $V(x)=W(x) \,\e^{\lambda x}$. Thus the conservation law $FW+GV-VW=0$ implies that 
\[
V= F +G \e^{\lambda x} , \qquad W= F \e^{-\lambda x} + G.
\]
Recall that $F=UV$ and $G=(1-U)W$. Then 
\begin{equation}
\begin{aligned}
\frac{\dF}{\dx} & = \left(1- F \right) \left( 1- G \right) V  = \left(1- F \right) \left( 1- G \right)  \left(F +G \e^{\lambda x} \right)  \\
\frac{\dG}{\dx} & = - \left(1- F \right) \left( 1- G \right) W  = -\left(1- F \right) \left( 1- G \right) \left( F \e^{-\lambda x} + G \right). 
\end{aligned}\label{eq:ODE_init2}
\end{equation}
For the sake of contradiction, suppose $\max \{ F(x), G(x) \} \ge 1$ for some finite $x>0$. 
Since $F(x)$ and $G(x)$ are continuous in $x$ and $F(0)=G(0)<1$,  there is an $x_0>0$ such that $\max \{ F(x_0), G(x_0) \} = 1$. 
Define $\tilde{F}(x) \equiv F(x_0)$ and $\tilde{G}(x) \equiv G(x_0)$. Then
$(\tilde{F}(x), \tilde{G}(x) )$ is a solution to ODE \eqref{eq:ODE_init2} in 
$x\in[0, x_0]$ running backward with its initial value at $x = x_0$ 
given by $\left( F(x_0), G(x_0) \right)$. 
Note that $\left( F(x), G(x) \right)$ is also a solution to ODE \eqref{eq:ODE_init2} in the backward time $x\in[0, x_0]$ with its initial value at $x = x_0$ given by $\left( F(x_0), G(x_0) \right)$. Also note that the right hand side of ODE \eqref{eq:ODE_init2}
is continuous in $x$ and the partial derivatives with respect to $F$ and $G$
are continuous. By existence and uniqueness~\cite[Theorem 2]{Pontryagin1962} 
it follows that 
$F(x) \equiv F(x_0)$ and $G(x) \equiv G(x_0)$ for $x \in [0, x_0]$. 
Hence, $\max\{ F(0), G(0) \} = 1$, which contradicts that 
$F(0)=G(0)=\epsilon/2<1$. Thus, $F=UV<1$ and $G=(1-U)W<1$. 

Next, we argue that $U> 0$. Suppose not,  since $U(0) =1/2$, by the differentiability of $U$ in $x$,
there exists a finite $x_0>0$ such that  $U(x_0)=0$ and $U'(x_0) \le 0$. 
However, 
$$
\frac{\dU}{\dx} \bigg|_{x=x_0}= \left[ 1- U (x_0) V(x_0) \right] \left[1- (1-U(x_0) ) W(x_0) \right] = \left[ 1- F(x_0) \right] \left[ 1- G(x_0)\right]>0,
$$
which leads to a contradiction. 
\end{proof}

\subsection{Monotonicity to the initial condition}\label{app:monotone_ode}
The key to our proof is to study how the solution of 
the system of ODEs \eqref{eq:ODE_UVW}
changes with respect to the initial condition \eqref{eq:ODE_UVW_initial}. 

Standard ODE theory (see~\cite[Theorem 15]{Pontryagin1962}) shows that $U(x,\epsilon)$ is differentiable in $\epsilon$ and 
the mixed partial derivatives satisfy 
$$
\frac{\partial^2 U(x,\epsilon)}{\partial x \partial \epsilon} =\frac{\partial^2 U(x,\epsilon)}{\partial \epsilon \partial x};
$$
similarly for $V$ and $W$. Moreover, the partial derivatives $(\partial U/\partial \epsilon,
\partial V/\partial \epsilon, \partial W/\partial \epsilon)$ satisfy the following system of equations:
\begin{equation}
\begin{aligned}
\frac{\partial }{\partial x} \frac{\partial U}{\partial \epsilon} & = 
\left[ - \lambda (1-2U )  - V \left(1- (1-U) W \right) + (1- UV) W 
\right] \frac{\partial U}{\partial \epsilon} \\
& \qquad  -U  \left(1- (1-U) W \right) \frac{\partial V}{\partial \epsilon}
- (1- UV) (1-U) \frac{\partial W}{\partial \epsilon} \\
\frac{\partial }{\partial x} \frac{\partial V}{\partial \epsilon}  & = - \lambda  V  \frac{\partial U}{\partial \epsilon}   + \lambda (1-U)
\frac{\partial V}{\partial \epsilon} \\
\frac{\partial }{\partial x} \frac{\partial W}{\partial \epsilon}  &=  - \lambda W \frac{\partial U}{\partial \epsilon} -\lambda U 
\frac{\partial W}{\partial \epsilon},
\end{aligned} \label{eq:ODE_initial}
\end{equation}
with initial condition
\begin{align}
 \frac{\partial U (0, \epsilon) }{\partial \epsilon} =0, \quad  \frac{\partial V(0,\epsilon)}{\partial \epsilon} =1, \quad  \frac{\partial W(0,\epsilon)}{\partial \epsilon} =1.
 \label{eq:ODE_sens_ini_cond}
\end{align}
The system of equations \eqref{eq:ODE_initial} is known as
the \emph{system of variational equations} and can be derived 
by differentiating \eqref{eq:ODE_UVW} with respect to $\epsilon$ and interchange
$\partial x$ and $\partial \epsilon$. The initial condition \eqref{eq:ODE_sens_ini_cond}
can be derived by differentiating \eqref{eq:ODE_UVW_initial} with respect to $\epsilon$.

The following key lemma shows that whenever $U(x,\epsilon)\le1$, 
$U(x,\epsilon)$ is decreasing in $\epsilon$,
while $V(x,\epsilon)$ and $W(x,\epsilon)$ are increasing in $\epsilon$.
\begin{lemma}\label{lmm:initial_diff}
Fix $\epsilon \in (0,1)$. Suppose $U(x,\epsilon) \le 1$ for $x \in (0, c]$ for a finite constant $c>0$.
Then for all $x \in (0, c]$, 
\begin{align}
\frac{\partial U(x, \epsilon)  }{\partial \epsilon} <0, \quad \text{ and } \quad \frac{\partial W (x, \epsilon)}{\partial \epsilon} >0. \label{eq:UW_initial_derivative}
\end{align} 
 Moreover, it follows that all $x \in (0, c]$,
\begin{align}
\frac{\partial V(x, \epsilon) }{\partial \epsilon} \ge \exp \left(  \lambda \int_{0}^x \left( 1- U(t,\epsilon) \right) d t \right) = \frac{V(\epsilon)}{\epsilon}\ge 1.
\label{eq:V_initial}
\end{align}
\end{lemma}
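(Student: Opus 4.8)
The plan is to analyze the system of variational equations \eqref{eq:ODE_initial} together with its initial condition \eqref{eq:ODE_sens_ini_cond} as a linear (nonautonomous) ODE in the unknown vector $(\partial U/\partial\epsilon, \partial V/\partial\epsilon, \partial W/\partial\epsilon)$, and to extract the sign information by a continuity/first-exit argument near $x=0$ combined with Gronwall-type control. Write $u = \partial U/\partial\epsilon$, $v = \partial V/\partial\epsilon$, $w = \partial W/\partial\epsilon$. At $x=0$ we have $u=0$, $v=w=1$, and from the first equation in \eqref{eq:ODE_initial}, $u'(0) = -U(1-(1-U)W)\cdot v(0) - (1-UV)(1-U)\cdot w(0)$ evaluated at $x=0$. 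Using $U(0)=1/2$ and $V(0)=W(0)=\epsilon$, both bracketed coefficients $U(1-(1-U)W)$ and $(1-UV)(1-U)$ are strictly positive by Lemma~\ref{lmm:basic0} (which gives $UV<1$, $(1-U)W<1$, $U,V,W>0$), so $u'(0)<0$. Hence $u(x)<0$ on an initial interval $(0,\delta)$, and then the third equation gives $w' = -\lambda W u - \lambda U w$, so $(w \e^{\lambda\int_0^x U})' = -\lambda W u \,\e^{\lambda\int_0^x U} > 0$ on $(0,\delta)$, forcing $w(x)>w(0)\e^{-\lambda\int_0^x U} > 0$ there.

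Next I would run a maximal-interval argument to push these signs all the way to $x=c$. Let $c^* \le c$ be the supremum of $x$ such that $u<0$ and $w>0$ on $(0,x)$. Suppose for contradiction $c^* < c$. By continuity, at $x=c^*$ at least one of $u(c^*)=0$ or $w(c^*)=0$ holds. If $w(c^*)=0$: on $(0,c^*)$ we have $u<0$, so the integrated identity $(w\,\e^{\lambda\int_0^x U})' = -\lambda W u\,\e^{\lambda\int_0^x U} > 0$ shows $w\,\e^{\lambda\int_0^x U}$ is strictly increasing on $[0,c^*]$, so $w(c^*)\,\e^{\lambda\int_0^{c^*}U} > w(0) = 1 > 0$, a contradiction. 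If instead $w(c^*)>0$ but $u(c^*)=0$: then from the first equation in \eqref{eq:ODE_initial}, at $x=c^*$ the term multiplying $u$ drops out and $u'(c^*) = -U(1-(1-U)W)\,v(c^*) - (1-UV)(1-U)\,w(c^*)$. The coefficients are positive (again by Lemma~\ref{lmm:basic0}, using $U\le 1$ so $1-U\ge 0$, and positivity of the factors $1-F$, $1-G$), $w(c^*)>0$, and I also need $v(c^*)\ge 0$ — which is where the $V$ bound comes in (see below) — so $u'(c^*)<0$, meaning $u$ was still negative just before $c^*$ but would become positive just after, contradicting that $c^*$ is the first exit time (i.e. $u$ cannot cross $0$ from below with negative derivative). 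Hence $c^*=c$ and \eqref{eq:UW_initial_derivative} holds on all of $(0,c]$.

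For the $V$-bound \eqref{eq:V_initial}: having established $u<0$ on $(0,c]$, I use the second variational equation $v' = -\lambda V u + \lambda(1-U)v$. The integrating factor is $\e^{-\lambda\int_0^x(1-U)}$, and since $U\le 1$ and $V>0$, the source term $-\lambda V u \ge 0$; integrating gives
\begin{align*}
v(x) &= \e^{\lambda\int_0^x(1-U)}\left( v(0) + \int_0^x (-\lambda V(t) u(t))\,\e^{-\lambda\int_0^t(1-U)}\,\dt \right) \\
&\ge \e^{\lambda\int_0^x(1-U(t,\epsilon))\,\dt} = \frac{V(x,\epsilon)}{\epsilon},
\end{align*}
using $v(0)=1$ and the representation \eqref{eq:V_expression} for $V(x,\epsilon)/\epsilon$; finally $\e^{\lambda\int_0^x(1-U)}\ge 1$ because $U\le 1$, giving $\ge 1$. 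Note this last step also retroactively supplies the $v(c^*)\ge 0$ needed in the maximal-interval argument, so the two parts should be organized together (first prove $v>0$ wherever $u\le 0$ via the integrating-factor identity, which only needs $u\le 0$ not $u<0$, then run the exit-time argument for $u$ and $w$).

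The main obstacle I anticipate is the logical ordering at the first-exit time: the sign of $v$ is needed to conclude $u'(c^*)<0$, but $v$'s sign in turn was derived assuming $u\le 0$ on the relevant interval. This is resolved by noting that the integrating-factor bound for $v$ only requires $u\le 0$ on $(0,x)$ (which holds up to and including $c^*$ by definition of the exit time and continuity), so there is no circularity — but it must be stated carefully. A secondary technical point is ensuring the coefficients $U(1-(1-U)W)$ and $(1-UV)(1-U)$ appearing as multipliers of $v$ and $w$ in $u'$ are genuinely nonnegative on all of $(0,c]$: this uses $0\le U\le 1$ (the hypothesis plus $U>0$ from Lemma~\ref{lmm:basic0}) together with $F=UV<1$ and $G=(1-U)W<1$ from the same lemma, so it is a clean invocation of already-established facts.
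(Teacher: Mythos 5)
Your proposal is correct and follows essentially the same route as the paper: a sign analysis of the variational system \eqref{eq:ODE_initial} via a first-exit-time contradiction for $\partial U/\partial\epsilon$ and $\partial W/\partial\epsilon$, with the integrating-factor bound for $\partial V/\partial\epsilon$ supplying the positivity of $v$ needed at the exit time (the paper organizes this by defining separate first-crossing points $a$ and $b$ and splitting into the cases $a\le b$ and $a>b$, which is the same argument). One phrasing slip: when $u(c^*)=0$ the contradiction is that $u$ reaching $0$ from below forces $u'(c^*)\ge 0$ while the ODE gives $u'(c^*)<0$ --- not that $u$ ``would become positive just after'' --- but your parenthetical already states the correct version.
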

\begin{proof}
We first show that \eqref{eq:V_initial} holds whenever $\partial U(x, \epsilon)/\partial \epsilon <0$
for $x \in (0,c]$. Recall that in Lemma \ref{lmm:basic0} we have shown that $V>0$.
It follows from  \eqref{eq:ODE_initial} that for all $x \in (0, c]$
$$
\frac{\partial }{\partial x} \frac{\partial V}{\partial \epsilon}  \ge  \lambda (1-U) \frac{\partial V}{\partial \epsilon}.
$$
Thus for all $x \in (0,c]$
$$
\frac{\partial V(x, \epsilon)  }{\partial \epsilon} \ge \exp \left( \lambda \int_{0}^x \left(1-U(s) \right) ds \right)
=\frac{V(x,\epsilon)}{\epsilon} \ge 1,
 $$
 where the equality holds due to \eqref{eq:V_expression}.

Next we show \eqref{eq:UW_initial_derivative}.
For the sake of contradiction, suppose not, \ie, there exists a $x_0 \in (0,c]$ 
such that either $\frac{\partial U(x_0, \epsilon)  }{\partial \epsilon} \ge 0$ or 
$\frac{\partial W(x_0, \epsilon)  }{\partial \epsilon} \le 0$. 

Define
$$
a= \inf \left\{ x \in (0,c]: \frac{\partial U(x, \epsilon)  }{\partial \epsilon}  \ge 0 \right\}
$$
and 
$$
b=\inf \left\{ x \in (0,c]: \frac{\partial W(x, \epsilon)  }{\partial \epsilon}  \le 0 \right\},
$$
with the convention that the infimum of an empty set is $+\infty$. 
Then $\min \{a, b\} \le x_0 \le c$. 

{\bf Case 1:} Suppose $a \le b$. 
Due to the initial condition \eqref{eq:ODE_sens_ini_cond} and 
the initial condition \eqref{eq:ODE_UVW_initial}, we have that
$$
 \frac{\partial U (0, \epsilon) }{\partial \epsilon} =0, 
 \quad \frac{\partial }{\partial x} \frac{\partial U (0, \epsilon) }{\partial \epsilon} 
 =  - \left( 1- \frac{\epsilon}{2} \right) <0. 
 $$
 Then we have $a>0$. Moreover, by the differentiability of $\frac{\partial U(x, \epsilon)  }{\partial \epsilon}$ in $x$
and the definition of $a$,  we have
$$
\frac{\partial U(x, \epsilon)  }{\partial \epsilon} <0,   \quad \forall x \in (0, a), 
\quad \frac{\partial U(a, \epsilon)  }{\partial \epsilon}  =0, \quad \text{ and } \quad \frac{\partial }{\partial x}  
\frac{\partial U(a, \epsilon)  }{\partial \epsilon}  \ge 0.
$$
It follows from our previous argument for proving  \eqref{eq:V_initial} that $\partial V(x,\epsilon)/\partial \epsilon \ge 1$ for all
$x \in (0,a]$.
Since $a \le b$, we also have that 
$$
 \frac{\partial W (a, \epsilon) }{\partial \epsilon} \ge 0.
$$

Recall that in Lemma \ref{lmm:basic0} we have shown that $UV<1$, $(1-U)W<1$ and $U>0$.
Moreover, by assumption we have $U \le 1$. Thus we get from ODE \eqref{eq:ODE_initial} that
$$
\frac{\partial }{\partial x}  
\frac{\partial U(a, \epsilon)  }{\partial \epsilon} =
-U  \left(1- (1-U) W \right) \frac{\partial V(a, \epsilon) }{\partial \epsilon}
- (1- UV) (1-U) \frac{\partial W(a, \epsilon) }{\partial \epsilon}
<0,
$$
which contradicts $\frac{\partial }{\partial x}  
\frac{\partial U(a, \epsilon)  }{\partial \epsilon}  \ge 0$. 

{\bf Case 2}: Suppose $a>b$. Due to the initial condition \eqref{eq:ODE_initial}, we have that
$$
\frac{\partial W(0,\epsilon)}{\partial \epsilon} =1.
$$
Thus $b>0$. 
By the differentiability of $\frac{\partial W(x, \epsilon)  }{\partial \epsilon}$ in $x$,  we have that 
$$
\frac{\partial W(b, \epsilon)  }{\partial \epsilon}  =0, \quad \text{ and } \quad \frac{\partial }{\partial x}  
\frac{\partial W(b, \epsilon)  }{\partial \epsilon}  \le 0, \quad \text{ and } \quad  \frac{\partial U (b, \epsilon) }{\partial \epsilon} < 0.
$$
Recall that in Lemma \ref{lmm:basic0} we have shown that  $W>0$. It follows from ODE \eqref{eq:ODE_initial} that
$$
\frac{\partial }{\partial x} \frac{\partial W(b,\epsilon)}{\partial \epsilon}  
=  - \lambda W \frac{\partial U(b,\epsilon) }{\partial \epsilon} >0,
$$
which contradicts $\frac{\partial }{\partial x}  
\frac{\partial W(b, \epsilon)  }{\partial \epsilon}  \le 0$.
\end{proof}

Based on  Lemma \ref{lmm:initial_diff}, we prove another ``monotonicity'' lemma, showing that if  
$U(x,\epsilon_0) < 1$ for all $x\ge 0$ and some $\epsilon_0 \in (0,1)$, 
then $U(x,\epsilon) <1$ for all $x \ge 0$ and all $\epsilon \in (\epsilon_0,1)$.
\begin{lemma}\label{lmm:U_monotone}
Suppose $U(x,\epsilon_0) < 1$ 
for all $x\ge 0$ and some $\epsilon_0 \in (0,1)$.  
Then $U(x,\epsilon) < 1$ for all $\epsilon \in (\epsilon_0,1)$ and all $x\ge 0$.
\end{lemma}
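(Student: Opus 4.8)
The plan is a proof by contradiction using a barrier argument that tracks the pair $(x,\delta)$ jointly over a box, so that the monotonicity of Lemma~\ref{lmm:initial_diff} can be applied without the circular dependence one runs into if $\delta$ is frozen. Fix $\epsilon\in(\epsilon_0,1)$ and suppose, toward a contradiction, that $U(x_1,\epsilon)\ge1$ for some $x_1\ge0$. Set
\[
c^* \;=\; \sup\Big\{\, c\ge0 \;:\; (U,V,W)(\cdot,\delta)\ \text{is finite on}\ [0,c]\ \text{and}\ U(x,\delta)<1\ \text{for all}\ (x,\delta)\in[0,c]\times[\epsilon_0,\epsilon]\,\Big\}.
\]
Because $U(0,\delta)=1/2$ for every $\delta$ and the solution depends continuously on $(x,\delta)$, a neighborhood of $\{0\}\times[\epsilon_0,\epsilon]$ lies in $\{U<1\}$, so $c^*>0$; and since $U(x_1,\epsilon)\ge1$ forces $x_1$ out of the defining set, $c^*\le x_1<\infty$.

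The first point to nail down is that no blow-up occurs up to $c^*$: on the region where $U\le1$, \eqref{eq:V_expression}--\eqref{eq:W_expression} give $0<W\le V\le\epsilon\,\e^{\lambda c^*}$, while $0<U<1$ by Lemma~\ref{lmm:basic0}, so $(U,V,W)$ stays in a fixed compact set on $[0,c^*)\times[\epsilon_0,\epsilon]$ and the solution extends continuously to $[0,c^*]$; passing to the limit from $c<c^*$ then gives $U\le1$ on the closed box $[0,c^*]\times[\epsilon_0,\epsilon]$. Now use compactness of this box. If $U<1$ on all of it, then $U\le1-\rho$ for some $\rho>0$, and continuous dependence lets us enlarge the box to $[0,c^*+\eta]\times[\epsilon_0,\epsilon]$ for small $\eta>0$, contradicting the definition of $c^*$. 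Otherwise $U$ attains the value $1$ at some $(x^*,\delta^*)\in[0,c^*]\times[\epsilon_0,\epsilon]$; here $x^*>0$ since $U(0,\cdot)\equiv1/2$, and $\delta^*>\epsilon_0$ since $U(\cdot,\epsilon_0)<1$ everywhere by hypothesis, so $\delta^*\in(\epsilon_0,\epsilon]\subset(0,1)$.

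Since $U(x,\delta^*)\le1$ for $x\in(0,c^*]\supseteq(0,x^*]$, Lemma~\ref{lmm:initial_diff} (applied with initial-data parameter $\delta^*$ and $c=x^*$) yields $\partial_\delta U(x^*,\delta^*)<0$. Hence $U(x^*,\delta^*-h)>U(x^*,\delta^*)=1$ for all small $h>0$, while $\delta^*-h\in(\epsilon_0,\epsilon)$ and $x^*\le c^*$, so $(x^*,\delta^*-h)$ lies in the box $[0,c^*]\times[\epsilon_0,\epsilon]$ on which we just proved $U\le1$ --- a contradiction. Therefore $c^*=\infty$, i.e.\ $U(x,\epsilon)<1$ for all $x\ge0$; since $\epsilon\in(\epsilon_0,1)$ was arbitrary, the lemma follows. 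The genuinely delicate steps are (i) excluding finite-time blow-up on the closed box, which is exactly where the explicit integral representations~\eqref{eq:V_expression}--\eqref{eq:W_expression} together with $U\le1$ and Lemma~\ref{lmm:basic0} are used, and (ii) the decision to run the continuation argument in $(x,\delta)$ simultaneously rather than fixing $\delta$; I expect (i) to be the main obstacle to write cleanly.
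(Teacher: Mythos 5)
Your proof is correct, and it runs on the same engine as the paper's: locate a ``first touching'' point where $U=1$, check that the hypothesis $U\le 1$ of Lemma~\ref{lmm:initial_diff} holds up to that point, and derive a sign contradiction on $\partial U/\partial\epsilon$ there. The organizational difference is real but modest. The paper fixes an arbitrary $x_0$ and scans in $\epsilon$, taking $\epsilon_1=\inf\{\epsilon: U(x_0,\epsilon)\ge 1\}$, so that $\partial U(x_0,\epsilon_1)/\partial\epsilon\ge 0$ comes for free from the infimum; it then verifies $U(\cdot,\epsilon_1)<1$ on $(0,x_0)$ via the forward invariance of $\{U\ge 1\}$ (i.e.\ $\dU/\dx>0$ when $U\ge 1$), which is the step your closed-box argument replaces. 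You instead continue jointly in $(x,\delta)$ over $[0,c]\times[\epsilon_0,\epsilon]$ and get the contradiction from $U(x^*,\delta^*-h)>1$ inside a box where $U\le 1$ has already been established. What your version buys is an explicit treatment of finite-time blow-up (your step (i), using \eqref{eq:V_expression}--\eqref{eq:W_expression} and Lemma~\ref{lmm:basic0} to trap $(U,V,W)$ in a compact set while $U\le 1$), a point the paper's definition of $\epsilon_1$ handles only implicitly via the clause ``$1\le U(x_0,\epsilon)<+\infty$''; what the paper's version buys is brevity, since the invariance of $\{U\ge 1\}$ under the flow of \eqref{eq:ODE_UVW} is a one-line observation. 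Both are complete proofs.
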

\begin{proof}
Fix an arbitrary but finite $x_0>0$. 
We claim that $U(x_0,\epsilon)  < 1$ for all $\epsilon \in (\epsilon_0, 1)$.
Suppose not. Then define 
$$
\epsilon_1 \triangleq \inf \left\{ \epsilon \in (\epsilon_0, 1):  
1 \le U(x_0,\epsilon) <+\infty \right\}
$$ 
Note that by assumption, $U(x_0,\epsilon_0) < 1$.
By the definition of $\epsilon_1$ and the differentiability of 
$U(x_0,\epsilon)$ in $\epsilon$, we have
$$
U(x_0,\epsilon_1) =1, \quad \frac{\partial U(x_0, \epsilon_1)  }{\partial \epsilon} \ge 0.
$$
We claim that $U(x, \epsilon_1)<1$ for all $x \in (0, x_0)$.  If not, then there exists an
$x_1 \in (0, x_0)$ such that $U(x_1,\epsilon_1)=1$. 
Note that $\frac{\dU(x,\epsilon_1) }{\dx} >0$ if $U(x) \ge 1$. 
Thus $U(x,\epsilon_1)>1$ for all $x> x_1$, 
which contradicts the fact that 
$U(x_0,\epsilon_1)=1$. Therefore,  we can apply Lemma \ref{lmm:initial_diff} with $c=x_0$ and
get that
$$
\frac{\partial U(x_0, \epsilon_1)  }{\partial \epsilon}  < 0,
$$
which contradicts the fact that $\frac{\partial U(x_0, \epsilon_1)  }{\partial \epsilon} \ge 0$. 
Since $x_0$ is arbitrarily chosen, we conclude that 
$U(x,\epsilon)  < 1$ for all $\epsilon \in (\epsilon_0, 1)$ and all $x>0$. 
\end{proof}

\subsection{Limiting behavior of $(U,V,W)$}\label{app:limit_ode}
In this section, we characterize the limiting 
behavior of $(U,V,W)$, depending on whether $U$ or $V$ hit $1$.

First, we state a simple lemma, showing that if 
both $U$ and $V$ do not hit $1$ in finite time, 
then they converge to $1$ as $x \to \infty$.
\begin{lemma}\label{lmm:UV_limit}
If
$U(x,\epsilon_0) < 1$ and $V(x,\epsilon_0) < 1$ for all $x \ge 0$ and some $\epsilon_0 \in (0,1)$, then 
$U(x,\epsilon_0) \to 1$, $V(x,\epsilon_0) \to 1$, and $W(x, \epsilon_0) \to 0$ as $x \to \infty$.
\end{lemma}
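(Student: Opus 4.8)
The plan is to read off the behaviour of $V$ and $W$ directly from \eqref{eq:ODE_UVW}, then bootstrap an integral bound on $1-U$ into the pointwise limit $U\to 1$, and finally pin down $\lim V$ using the $U$-equation.

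First, observe that the hypotheses $U<1$ and $V<1$ on $[0,\infty)$, together with $W(x)=V(x)\,\e^{-\lambda x}<\e^{-\lambda x}$ from Lemma~\ref{lmm:basic0}, keep $U,V,W$ bounded, so the solution is global on $[0,\infty)$. From $\dV/\dx=\lambda V(1-U)$ with $V>0$ and $U<1$, the function $V$ is strictly increasing; being bounded above by $1$, it converges to some $V_\infty\in(\epsilon_0,1]$. The identity $W=V\,\e^{-\lambda x}$ then forces $0<W(x)\le\e^{-\lambda x}\to 0$, which is the claimed limit for $W$.

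Next I would show $U(x)\to 1$. Integrating the $V$-equation gives $\lambda\int_0^\infty V(x)\bigl(1-U(x)\bigr)\,\dx=V_\infty-\epsilon_0<\infty$, and since $V(x)\ge\epsilon_0>0$ this yields $\int_0^\infty\bigl(1-U(x)\bigr)\,\dx<\infty$ with a nonnegative integrand. On the other hand, every term on the right-hand side of the $U$-equation is bounded: using $0<U<1$, $0<V<1$, $0<W\le 1$, and the inequalities $UV<1$ and $(1-U)W<1$ from Lemma~\ref{lmm:basic0}, one gets $|U'(x)|\le\lambda/4+1$ for all $x$. Hence $1-U$ is a nonnegative, uniformly Lipschitz function with finite integral on $[0,\infty)$, so $1-U(x)\to 0$ (Barbalat's lemma; equivalently, if $U(x_n)\le 1-\delta$ along some $x_n\to\infty$, then by the Lipschitz bound $1-U\ge\delta/2$ on intervals around the $x_n$ of length bounded below by a constant multiple of $\delta$, contradicting integrability). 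Therefore $U(x,\epsilon_0)\to 1$.

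Finally, knowing $U(x)\to 1$, $W(x)\to 0$, and $V(x)\to V_\infty$, pass to the limit in the $U$-equation: $U(1-U)\to 0$ and $(1-U)W\to 0$, so $U'(x)\to 1-V_\infty$. If $V_\infty<1$, then $U'(x)\ge(1-V_\infty)/2>0$ for all large $x$, forcing $U(x)\to+\infty$ and contradicting $U(x)\to 1$; hence $V_\infty=1$, i.e.\ $V(x,\epsilon_0)\to 1$. The only step that is not purely mechanical is the passage from $\int_0^\infty(1-U)<\infty$ to $U(x)\to 1$, which is exactly where the uniform bound on $U'$ (equivalently, uniform continuity of $1-U$) enters; everything else is monotonicity of $V,W$ and taking limits in \eqref{eq:ODE_UVW}.
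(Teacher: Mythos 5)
Your proof is correct and follows essentially the same route as the paper: $W\to 0$ from the identity $W=V\e^{-\lambda x}$, then $U\to 1$ because $V$ is increasing and bounded so $\dV/\dx\to 0$, and finally $V\to 1$ by passing to the limit in the $U$-equation. In fact your argument is more careful than the paper's, which asserts $\dV/\dx\to 0$ and $\dU/\dx\to 0$ without justification; your integrability-plus-Lipschitz (Barbalat) step and the contradiction argument for $V_\infty=1$ are exactly the details needed to make those assertions rigorous.
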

\begin{proof}
By Lemma \ref{lmm:basic0}, we have $W(x) = V(x) \,\e^{-\lambda x} \to 0$ as $x \to \infty$. 
Recall that according to~\eqref{eq:ODE_UVW},
$$
\frac{\dV}{\dx} = \lambda V  \left(1-U \right) > 0 \, .
$$
Since $V(x)  < 1$ for all $ x \ge 0$, it follows that $\frac{\dV}{\dx} \to 0$ and hence $U(x) \to 1$
as $x \to +\infty$.  Thus, as $x \to +\infty$, 
$$
\frac{\dU}{\dx} = -\lambda U (1-U) + \left(1- UV \right) \left( 1- (1-U) W  \right) \to 0 \, ,
$$
which implies that $V(x) \to 1$ as $x \to +\infty$. 
\end{proof}

The next lemma shows the behavior of $U$ and $V$ if they hit $1$ for finite $x$.

\begin{lemma}
\label{lmm:UV_blowup}
Let $x_0>0$ be finite. 
\begin{itemize}
\item If $V(x_0)=1$, then $V(x)$ monotonically increases to $+\infty$ 
and $U(x) \to 0$ for $x \ge x_0$.
\item If $U(x_0)=1$, then $U(x)$ monotonically increases to $+\infty$ and $V(x) \to 0$ for $x \ge x_0$.
\end{itemize}
\end{lemma}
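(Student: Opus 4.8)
The plan is to run the whole argument on a small amount of sign information. By Lemma~\ref{lmm:basic0}, on the maximal interval of existence $[x_0,x_*)$ (with $x_*\le+\infty$) one has $U,V,W>0$, $UV<1$, $(1-U)W<1$, and $V=We^{\lambda x}$; and, exactly as in the proof of Lemma~\ref{lmm:U_monotone}, whenever $U(x)\ge 1$ one has $U'=\lambda U(U-1)+(1-UV)\bigl(1-(1-U)W\bigr)>0$, since $\lambda U(U-1)\ge0$, $1-UV>0$ by Lemma~\ref{lmm:basic0}, and $1-(1-U)W=1+(U-1)W\ge1$. I would also record that $F=UV$ is strictly increasing on $[x_0,x_*)$: from \eqref{eq:ODE_UVW} (equivalently, from $F'=(1-F)(1-G)V$ in \eqref{eq:ODE_init2}) one gets $(UV)'=(1-UV)\bigl(1-(1-U)W\bigr)V>0$, because $V>0$, $1-UV>0$, and $1-(1-U)W>0$ (for $U<1$ because $(1-U)W<1$, and for $U\ge1$ because then $(1-U)W\le0$). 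Consequently $UV<1$ is increasing, so $L:=\lim_{x\uparrow x_*}U(x)V(x)$ exists and lies in $(0,1]$.

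For the first bullet, suppose $V(x_0)=1$. Since $U(x_0)V(x_0)<1$ we get $U(x_0)<1$, so $V'(x_0)=\lambda V(x_0)(1-U(x_0))>0$. I would then show $U<1$ on all of $(x_0,x_*)$: if $x_1>x_0$ were the first point with $U(x_1)=1$, then $V$ is strictly increasing on $(x_0,x_1)$, so $V(x_1)>V(x_0)=1$, whence $U(x_1)V(x_1)>1$, contradicting $UV<1$. Hence $V'=\lambda V(1-U)>0$ throughout $(x_0,x_*)$, so $V$ is strictly increasing and $V\ge1$; then $0<U<1/V\le1$, and together with $V\le e^{\lambda(x-x_0)}$ (from $V'\le\lambda V$) and $W<V$ this keeps all three components bounded on bounded intervals, so $x_*=+\infty$. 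It remains to see $V\to+\infty$: if instead $V\uparrow V_\infty<\infty$ (necessarily $V_\infty>1$), then $U=(UV)/V\to L/V_\infty<1$, so $V'=\lambda V(1-U)\to\lambda(V_\infty-L)>0$, which is incompatible with $V$ converging. Thus $V\to+\infty$, and then $U=(UV)/V\to0$.

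For the second bullet, suppose $U(x_0)=1$. Then $U(x_0)V(x_0)<1$ gives $V(x_0)<1$, so $U'(x_0)=1-V(x_0)>0$; since $U'>0$ whenever $U\ge1$, $U$ can never decrease back to $1$, so $U\ge1$ on $[x_0,x_*)$ and $U$ is strictly increasing on $(x_0,x_*)$. Then $V'=\lambda V(1-U)<0$ there, so $0<V<V(x_0)<1$ and $0<W<V$ are bounded. The crux is $U\to+\infty$ as $x\uparrow x_*$: I would argue that if $U$ were bounded on $[x_0,x_*)$, then $(U,V,W)$ would remain in a compact set, which contradicts the maximality of $x_*$ via the standard continuation theorem when $x_*<\infty$, and when $x_*=+\infty$ forces $U\uparrow U_\infty>1$ with $U'\ge\lambda U(U-1)$ bounded below by a positive constant for $x\ge x_1$ (any fixed $x_1>x_0$), again a contradiction; hence $U$ is unbounded and, being increasing, tends to $+\infty$. (Indeed $x_*<\infty$, since once $U\ge2$ we have $U'\ge\frac{\lambda}{2}U^2$, forcing $1/U$ to reach $0$ within time $1/\lambda$.) Since $UV\to L\le1$ and $U\to+\infty$, we conclude $V=(UV)/U\to0$.

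I expect the only delicate step to be excluding a finite limit for $V$ (first bullet) and for $U$ (second bullet). In both cases the idea is the same — a finite limit would make the relevant derivative asymptotically a positive constant — but the second bullet additionally needs the superlinear bound $U'\ge\frac{\lambda}{2}U^2$ to guarantee that the blow-up actually occurs and does so in finite time; everything else is routine sign bookkeeping resting on Lemma~\ref{lmm:basic0} and the monotonicity of $UV$.
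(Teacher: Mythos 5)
Your proof is correct and follows essentially the same route as the paper's: both arguments rest entirely on the invariants of Lemma~\ref{lmm:basic0} ($U,V,W>0$, $UV<1$, $(1-U)W<1$) to read off the signs of $U'$ and $V'$ in the two regimes. The only difference is one of care: the paper simply asserts that the monotone quantity diverges, whereas you explicitly rule out convergence to a finite limit (via the monotonicity of $UV$ and the lower bounds on the derivatives), which closes a step the paper leaves implicit.
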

\begin{proof}
Suppose $V(x_0)=1$. 
Recall that in Lemma \ref{lmm:basic0}, we have shown that $UV<1$. 
According to ODE \eqref{eq:ODE_UVW},
we get that $dV/dx>0$ if $V\ge 1$ as $UV<1$. 
Thus $V(x)$ monotonically increases to $+\infty$ for $x \ge x_0$.
Moreover $U(x) \to 0$ for $x \ge x_0$.

Suppose $U(x_0)=1$. Recall that in Lemma \ref{lmm:basic0}, we have shown that 
$(1-U)WV<1$. According to ODE \eqref{eq:ODE_UVW},
we get that $dU/dx>0$ if $U \ge 1$. Hence,  $U(x)$ monotonically increases to $+\infty$
for $x \ge x_0$. As $UV<1$, it further follows that $V(x) \to 0$ for $x \ge x_0$.
\end{proof}
\subsection{Basins of attraction}\label{app:basin_ode}
In view of Lemma \ref{lmm:UV_limit} and Lemma \ref{lmm:UV_blowup}, define the basin of attraction for $( U=0, V=+\infty)$ as
$$
S_1= \left\{ \epsilon \in [0,1]:  V(x,\epsilon ) \ge 1 \text{ for some finite $x>0$ } \right\};
$$
the basin of attraction for $(U=+\infty, V=0)$  as 
$$
S_2=  \left\{ \epsilon \in [0,1]: U(x,\epsilon) \ge 1 \text{ for some finite $x>0$ }   \right\};
$$
and the basin of attraction for $(U=1, V=1)$ as 
$$
S_0=  \left\{ \epsilon \in [0,1]: U(x,\epsilon) <1  \text{ and }  V(x,\epsilon ) <1  \text{ for all finite $x>0$ }  \right\}.
$$

When $\epsilon$ is either $0$ or $1$, we have the following simple characterizations of the solution. 
\begin{lemma}\label{lmm:sol_eps_01}
Suppose $\lambda<4$. 
\begin{itemize}
\item If $\epsilon=0$, then $V(x)\equiv 0$, $W(X) \equiv 0$ and 
$U(x)$ monotonically increases to $+\infty$.
\item If $\epsilon=1$, then $V(x)$ monotonically increases to $+\infty$ and $U(x) \to 0$.
\end{itemize}
\end{lemma}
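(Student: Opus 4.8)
The plan is to handle the two boundary values of $\epsilon$ separately; in each case the solution is essentially explicit, so the work reduces to a few elementary comparison arguments for scalar ODEs. For $\epsilon=0$ we have $V(0)=W(0)=0$, and since the equations $\dV/\dx=\lambda V(1-U)$ and $\dW/\dx=-\lambda W U$ are linear and homogeneous in $V$ and in $W$ respectively (for any continuous $U$), the triple $(U,V,W)=(U,0,0)$ solves \eqref{eq:ODE_UVW}--\eqref{eq:ODE_UVW_initial} as soon as $U$ solves the scalar Riccati equation $\dU/\dx=\lambda U^{2}-\lambda U+1$ with $U(0)=1/2$; by uniqueness of solutions this is \emph{the} solution, so $V\equiv W\equiv 0$. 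Since $\lambda<4$, the quadratic $\lambda u^{2}-\lambda u+1$ has negative discriminant $\lambda(\lambda-4)$ and positive leading coefficient, so $\lambda u^{2}-\lambda u+1\ge 1-\lambda/4>0$ for every real $u$. Hence $\dU/\dx\ge 1-\lambda/4>0$ wherever the solution is finite, so $U$ is strictly increasing and $U(x)\ge 1/2+(1-\lambda/4)x$, which eventually exceeds every threshold; therefore $U$ increases monotonically to $+\infty$. (In fact the blow-up happens at a finite $x$, since $\dU/\dx\sim\lambda U^{2}$ for large $U$, consistent with the convention that the solution is $+\infty$ thereafter.)

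For $\epsilon=1$ we have $\epsilon\in(0,1]$, so Lemma~\ref{lmm:basic0} applies on $I_{1}$ and gives $UV<1$, $(1-U)W<1$, $U,V,W>0$, and $V=W\,\e^{\lambda x}$. From $V(0)=1$ and the initial slope $\dV/\dx=\lambda V(0)(1-U(0))=\lambda/2>0$ at $x=0$, we get $V>1$ just to the right of $0$; moreover whenever $V\ge1$ the bound $U\le UV<1$ forces $\dV/\dx=\lambda V(1-U)>0$. A first-exit argument (using that $\dV/\dx>0$ at any point where $V=1$) then shows that $V$ can never return to the level $1$, so $V>1$ and $\dV/\dx>0$ for all $x>0$ in $I_{1}$, i.e.\ $V$ is strictly increasing. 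Next, $U<1/V$ gives $\dV/\dx>\lambda V(1-1/V)=\lambda(V-1)$, and integrating this inequality (for instance via $\tfrac{\mathrm{d}}{\dx}\ln(V-1)>\lambda$) yields $V(x)-1\ge(V(x_{0})-1)\,\e^{\lambda(x-x_{0})}$ for any fixed $x_{0}>0$, which tends to $+\infty$. To promote this to the claims that $V$ actually reaches $+\infty$ and that $U\to0$, I would rule out finite-time blow-up of $V$: if $V\to+\infty$ as $x\to x^{*}<\infty$, then $0<U<1/V\to0$, so for $x$ near $x^{*}$ we have $(1-U)W=(1-U)V\,\e^{-\lambda x}\ge \tfrac12 V\,\e^{-\lambda x^{*}}\to+\infty$, contradicting $(1-U)W<1$. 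Hence $I_{1}=[0,\infty)$, $V$ increases monotonically to $+\infty$, and $0<U(x)<1/V(x)\to0$, so $U(x)\to0$.

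The underlying computations are routine; the only places that call for a little care are the two global-in-$x$ assertions in the $\epsilon=1$ case---that $V$ stays above $1$ for all $x>0$ and that it does not escape to $+\infty$ in finite time---and both are closed off using the a priori bounds $UV<1$ and $(1-U)W<1$ from Lemma~\ref{lmm:basic0} together with the identity $W=V\,\e^{-\lambda x}$. (The $\epsilon=1$ statement is morally the $x_{0}\downarrow0$ version of Lemma~\ref{lmm:UV_blowup}, but since that lemma is stated with $x_{0}>0$ and $V(x_{0})=1$ exactly, I would redo the short argument directly rather than invoke it.)
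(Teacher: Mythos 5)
Your proof is correct and follows essentially the same route as the paper: for $\epsilon=0$ the paper likewise observes $V\equiv W\equiv 0$ and that $\frac{\dU}{\dx}=\lambda U^{2}-\lambda U+1>0$ when $\lambda<4$, and for $\epsilon=1$ it simply invokes Lemma~\ref{lmm:UV_blowup}, whose proof is the same $UV<1$ argument you reproduce at $x_{0}=0$ (you are right that the lemma is stated for $x_{0}>0$, so inlining it is the cleaner fix). Your extra step ruling out finite-time blow-up of $V$ via $(1-U)W<1$ is correct but not needed, since the paper's convention already reads ``increases to $+\infty$'' as permitting blow-up at a finite abscissa.
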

\begin{proof}
First, consider the case $\epsilon=0$. Then according to 
the system of ODEs \eqref{eq:ODE_UVW}, we immediately get that 
$V(x)\equiv 0$, $W(X) \equiv 0$. Thus 
$$
\frac{\dU}{\dx}  = - \lambda U(1-U ) + 1 >0,
$$
where the last inequality holds due to $\lambda<4$. 
Hence, $U(x)$ monotonically increases to $+\infty$. 

The conclusion in the case $\epsilon=1$ simply follows from Lemma \ref{lmm:UV_blowup}.
\end{proof}

Now, we are ready to prove a lemma, which completely characterizes the 
basins of attraction $S_0,$ $S_1,$ and $S_2.$

\begin{lemma}\label{lmm:unique_epsilon}
Suppose $\lambda<4$. Then there exists a unique $\epsilon_0 \in (0,1)$ such that
\begin{align}
S_0  = \{\epsilon_0\}, \quad S_1 = (\epsilon_0, 1], \quad S_2 = [0, \epsilon_0). \label{eq:S012}
\end{align} 
\end{lemma}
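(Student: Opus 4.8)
The plan is to use the three sets $S_0$, $S_1$, $S_2$ to partition $[0,1]$ and then exploit the monotonicity of $U(x,\epsilon)$ in $\epsilon$ established in Lemmas \ref{lmm:initial_diff} and \ref{lmm:U_monotone}. First I would observe that $S_0$, $S_1$, $S_2$ are pairwise disjoint by construction, and that they cover $[0,1]$: if $\epsilon \notin S_0$ then either $U$ or $V$ hits $1$ at some finite $x$, and by Lemma \ref{lmm:UV_blowup} exactly one of them does (once one hits $1$ it blows up while the other tends to $0$), so $\epsilon \in S_1 \cup S_2$. From Lemma \ref{lmm:sol_eps_01} we immediately get $1 \in S_1$ and $0 \in S_2$, so both $S_1$ and $S_2$ are nonempty, while $S_0$ is not yet known to be nonempty.

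Next I would prove that $S_2$ is an initial segment of $[0,1]$ and $S_1$ is a final segment. The key monotonicity input is Lemma \ref{lmm:U_monotone}: if $U(\cdot,\epsilon_0)<1$ everywhere then $U(\cdot,\epsilon)<1$ everywhere for all larger $\epsilon\in(\epsilon_0,1)$. Contrapositively, if $\epsilon \in S_2$ (so $U(x,\epsilon)\ge 1$ for some finite $x$) then $\epsilon' \in S_2$ for all $\epsilon' < \epsilon$ in $(0,1)$ — because if some smaller $\epsilon'$ were in $S_0$ or $S_1$, then $U(\cdot,\epsilon')<1$ everywhere, forcing $U(\cdot,\epsilon)<1$ everywhere, a contradiction. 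Hence $S_2$ is downward closed. For $S_1$ I would argue similarly but through $V$: using $V(x,\epsilon) = W(x,\epsilon)\,\e^{\lambda x}$ and the fact (from Lemma \ref{lmm:initial_diff}) that $\partial W/\partial\epsilon > 0$ while $U\le 1$, so $W$ — and hence $V$ — is increasing in $\epsilon$ on the relevant range; therefore once $V$ reaches $1$ for some $\epsilon$, it does so for all larger $\epsilon$ as well, making $S_1$ upward closed. Setting $\epsilon_0 = \sup S_2 = \inf S_1$, disjointness and the segment structure then force $S_2 = [0,\epsilon_0)$, $S_1 = (\epsilon_0,1]$, and $S_0 = \{\epsilon_0\}$, with $\epsilon_0 \in (0,1)$ since $0 \in S_2$ and $1 \in S_1$ are strict.

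The two points that need care — and the main obstacle — are: first, showing $\epsilon_0$ itself lies in $S_0$ rather than in $S_1$ or $S_2$, i.e. that the boundary point is not absorbed into one of the open pieces; and second, that the supremum of $S_2$ equals the infimum of $S_1$ rather than leaving a gap. For the first, I would use a continuity/openness argument: if $\epsilon_0 \in S_2$ then $U(x_*,\epsilon_0) > 1$ for some finite $x_*$ (strict, since after hitting $1$, $U$ strictly increases by the sign of $dU/dx$ when $U\ge 1$), and by continuity of $U(x_*,\cdot)$ this persists for $\epsilon$ slightly below $\epsilon_0$, contradicting $\epsilon_0 = \sup S_2$ only if... — more cleanly, $S_2$ and $S_1$ are each \emph{open} in $[0,1]$: the condition ``$U(x,\epsilon)\ge 1$ for some finite $x$'' is open in $\epsilon$ because once $U$ exceeds $1$ strictly it does so on an interval and robustly under perturbation, and likewise for $V$. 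Two disjoint nonempty open sets cannot cover the connected interval $[0,1]$, so their complement $S_0$ is nonempty; combined with the segment structure, $S_0$ is exactly the single point $\epsilon_0$. I would need to handle the edge case where $U$ or $V$ equals $1$ only tangentially, but Lemma \ref{lmm:UV_blowup} rules this out since hitting $1$ immediately forces strict blow-up. Finally, uniqueness of $\epsilon_0$ is automatic once $S_0$ is shown to be a single point.
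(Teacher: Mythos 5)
Your structural setup is sound: disjointness of $S_1$ and $S_2$ via Lemma~\ref{lmm:UV_blowup}, the endpoints $1\in S_1$ and $0\in S_2$ via Lemma~\ref{lmm:sol_eps_01}, downward closure of $S_2$ and upward closure of $S_1$ via Lemma~\ref{lmm:U_monotone} and Lemma~\ref{lmm:initial_diff}, and nonemptiness of $S_0$ from openness of $S_1,S_2$ plus connectedness of $[0,1]$ (the paper does essentially the same, with one-sided openness). But there is a genuine gap at the final step. What you have established is that $S_2=[0,a)$ and $S_1=(b,1]$ for some $a\le b$, so that $S_0=[a,b]$ is a nonempty closed interval. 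Nothing in ``disjointness and the segment structure'' forces $a=b$: two disjoint open segments whose complement is nonempty can perfectly well leave a nondegenerate interval between them. The sentence ``combined with the segment structure, $S_0$ is exactly the single point $\epsilon_0$'' is a non sequitur, and the singleton claim is precisely the uniqueness assertion of the lemma, i.e.\ its crux.

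To close the gap you need the two ingredients the paper combines here and which your proposal never actually uses together: the \emph{uniform} lower bound $\partial V(x,\epsilon)/\partial\epsilon \ge 1$ from \eqref{eq:V_initial} (not merely positivity of $\partial W/\partial\epsilon$), and the limiting behavior $V(x,\epsilon_0)\to 1$ from Lemma~\ref{lmm:UV_limit} for $\epsilon_0\in S_0$. Fix $\epsilon_0\in S_0$ and $\epsilon>\epsilon_0$; integrating the derivative bound gives $V(x,\epsilon)\ge V(x,\epsilon_0)+(\epsilon-\epsilon_0)$, and since $V(x,\epsilon_0)\to 1$ the right side exceeds $1$ for large $x$, so $\epsilon\in S_1$. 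Symmetrically, if some $\epsilon_1<\epsilon_0$ were not in $S_2$, the same bound would force $V(x,\epsilon_1)\le V(x,\epsilon_0)-(\epsilon_0-\epsilon_1)<1-(\epsilon_0-\epsilon_1)$ for all $x$, contradicting $V(x,\epsilon_1)\to 1$. Mere strict monotonicity of $V$ in $\epsilon$ would not suffice, since two trajectories could both stay below $1$ while both tending to $1$; the bound uniform in $x$ is what rules this out and collapses $S_0$ to a point.
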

\begin{proof}
Lemma \ref{lmm:sol_eps_01} implies that $1 \in S_1$ and $0 \in S_2$. 
Note that $UV<1$ by Lemma \ref{lmm:basic0}.
Thus it follows from Lemma \ref{lmm:UV_blowup} that $S_1$ and $S_2$ are disjoint.

We first prove that $S_1$ is left open. Fix any $\epsilon \in S_1$. 
Since $V(x,\epsilon) \ge 1$ for some finite $x$, it follows from Lemma \ref{lmm:UV_blowup} that 
there exists an $x_0$ such that $V(x_0, \epsilon)>1$. 
By the continuity of $V(x_0,\epsilon)$ in $\epsilon$, there exists a $\delta>0$ such that for all
$ \epsilon' \in [\epsilon-\delta, \epsilon] $, $V(x_0, \epsilon')>1$,
and thus $V(x, \epsilon') \to +\infty$ and 
$U(x,\epsilon')\to 0$ as $x \to +\infty$. 
Hence, $[\epsilon-\delta, \epsilon]  \subset S_1$. Thus $S_1$ is left open.

Analogously, we can prove that $S_2$ is right open. 
Note that $S_0=[0,1] \setminus (S_1 \cup S_2)$, and $S_1$ and $S_2$ are disjoint. 
It follows that $S_0$ is non-empty. Let $\epsilon_0$ be any point in $S_0.$
Next we prove \eqref{eq:S012}.

We first fix any $\epsilon \in (\epsilon_0,1)$. 
Since $\epsilon_0 \in S_0$, it follows that $U(x,\epsilon_0) < 1$ and $V(x,\epsilon_0) < 1$ for all $x\ge0$. 
In view of Lemma \ref{lmm:U_monotone}, we have that $U(x,\epsilon)  < 1$ for all $\epsilon \in (\epsilon_0, 1)$ and all $x>0$. 
It follows from Lemma \ref{lmm:initial_diff} that $\partial V(x, \epsilon)/\partial \epsilon \ge 1$
for all $x>0$ and all $\epsilon \in (\epsilon_0,1)$.
Thus for all $x \ge 0$, 
$$
V(x, \epsilon) = V(x, \epsilon_0) + \int_{\epsilon_0}^{\epsilon}  \frac{\partial V(x, \eta) }{\partial \eta} d \eta
\ge V(x, \epsilon_0) + (\epsilon-\epsilon_0). 
$$
Since $V(x,\epsilon_0) \to 1$ as $x \to +\infty$, 
there exists an $x_0$ such that for all $x \ge x_0$,
$$
V(x, \epsilon_0) \ge 1- (\epsilon-\epsilon_0)/2.
$$ 
Combining the last two displayed equation gives that for all $x \ge x_0$, 
$$
V(x,\epsilon) \ge 1+ (\epsilon-\epsilon_0)/2>1. 
$$
We conclude that $\epsilon \in S_1$ and thus $(\epsilon_0,1] \subset S_1$.

Next we fix any $\epsilon \in (0, \epsilon_0)$ and show that $\epsilon \in S_2.$ 
Suppose not. Then there exists an 
$\epsilon_1 \in (0,\epsilon_0)$ such that $U(x,\epsilon_1)<1$  for all $x\ge0$. 
By Lemma \ref{lmm:U_monotone}, we have that 
$U(x,\epsilon)  < 1$ for all $\epsilon \in (\epsilon_1, 1)$ and all $x>0$. 
In view of Lemma \ref{lmm:initial_diff}, it immediately follows that 
\eqref{eq:V_initial} holds for all $x>0$ and all $\epsilon \in (\epsilon_1,1)$. 
Thus, 
$$
V(x, \epsilon_0) = V(x, \epsilon_1) + \int_{\epsilon_1}^{\epsilon_0}  \frac{\partial V(x, \epsilon) }{\partial \epsilon} d \epsilon
\ge V(x, \epsilon_1) + (\epsilon_0-\epsilon_1) .
$$
Note that since $\epsilon_0 \in S_0$, $V(x, \epsilon_0)<1$ for all $x \ge 0$, it follows that for all $x \ge 0$,
$$
V(x, \epsilon_1)  < 1 -  ( \epsilon_0-\epsilon_1),
$$
which contradicts the conclusion of Lemma \ref{lmm:UV_limit}. 
Thus we conclude that $\epsilon \in S_2$ and thus $[0,\epsilon_0) \subset S_2$. 

Since $S_0$, $S_1$, and $S_2$ are all disjoint,  the desired \eqref{eq:S012} readily follows.
\end{proof}
\subsection{Proof of Theorem \ref{thm:ode_unique_solution}}\label{app:finish_ode}
We are now ready to prove Theorem \ref{thm:ode_unique_solution}. 
Let $S_0=\{\epsilon_0\}$, and let $(U(x,\epsilon_0), V(x,\epsilon_0),W(x,\epsilon_0))$ 
be the unique solution to the system of ODEs \eqref{eq:ODE_UVW}
with the initial condition \eqref{eq:ODE_UVW_initial}. 
For $x \in [0, +\infty)$, define
 \begin{align*}
 F(x) = U(x,\epsilon_0) V(x, \epsilon_0) \, , \quad & F(-x)= \left(1-U(x,\epsilon_0) \right) W(x,\epsilon_0)  \\
 V(x)= V(x,\epsilon_0) \, , \quad & V(-x) = W(x, \epsilon_0) \\
 G(x)=F(-x) \, , \quad & G(-x)=F(x) \\
 W(x)=V(-x) \, , \quad & W(-x)=V(x) \, .
 \end{align*}

We show that $(F,G,V,W)$ is a solution to
the system of ODEs \eqref{eq:ODE1}--\eqref{eq:ODE4}
with conditions \eqref{eq:ODE_boundary}--\eqref{eq:ODE_regularity}.
First, by construction $(F,G,V,W)$ satisfy the system of ODEs \eqref{eq:ODE1}--\eqref{eq:ODE4}.
In particular, since $U(0,\epsilon_0) = 1/2$ and $V(0,\epsilon_0)=W(0,\epsilon_0)=\epsilon_0$, we have
\begin{align*}
	\lim_{x\downarrow 0} \frac{\dV}{\dx} (x, \epsilon) &= \lambda V(0,\epsilon_0)  \left(1-U(0,\epsilon_0) \right) = \frac{\lambda \epsilon_0}{2} \\
	  \lim_{x\uparrow 0} \frac{\dV}{\dx} (x, \epsilon)&= - \lim_{x\downarrow 0} \frac{\dW}{\dx}(x, \epsilon)  =  \lambda W(0,\epsilon_0)  U(0,\epsilon_0) =  \frac{\lambda \epsilon_0}{2} \, .
\end{align*}
Therefore, 
$V$ is differentiable at $0$. 
Analogously, we can verify that $F$, $G$, and $W$ are differentiable at $0$. 

Second, since $\epsilon_0 \in S_0$, by definition
$U(x,\epsilon_0)<1$ and $V(x,\epsilon_0)<1$ for all $x\ge0$.
Thus it follows from Lemma \ref{lmm:UV_limit} that as $x \to +\infty$, $U(x,\epsilon_0) \to 1$, $V(x,\epsilon_0)\to 1$, 
and $W(x,\epsilon_0) \to 0$. Hence, $(F,G,V,W)$ satisfy condition \eqref{eq:ODE_boundary}.
Thirdly, in view of Lemma \ref{lmm:basic0}, we have that 
$U(x,\epsilon_0), V(x,\epsilon_0),W(x,\epsilon_0)>0$, $W(x,\epsilon_0)<1$,
$U(x,\epsilon_0) V(x,\epsilon_0)<1$, 
and $\left( 1- U(x,\epsilon_0) \right) W(x,\epsilon_0)<1$.
Therefore, $0<V, W<1$ and $0<F, G<1$, satisfying condition \eqref{eq:ODE_regularity}.

Next, we show that the solution $(F,G,V,W)$ is unique. Let $(\tilde{F},\tilde{G},\tilde{V},\tilde{W})$
denote another solution to system of ODEs \eqref{eq:ODE1}--\eqref{eq:ODE4}
with conditions \eqref{eq:ODE_boundary}--\eqref{eq:ODE_regularity}.
Let $\tilde{U}=\tilde{F}/\tilde{V}$. Then $(\tilde{U}, \tilde{V},\tilde{W})$ 
is a solution to the system of ODEs \eqref{eq:ODE_UVW},
satisfying the initial condition \eqref{eq:ODE_UVW_initial} with $\epsilon=\tilde{V}(0)=\tilde{W}(0)$.
Moreover, $\tilde{U}(x)<1$ and $\tilde{V}(x)<1$ for all $x\ge0$, because otherwise
by Lemma \ref{lmm:UV_blowup}, either $\tilde{U}(x) \to +\infty$ or $\tilde{V}(x) \to +\infty$,
violating that $\tilde{F}(x), \tilde{V}(x) \to 1$. As a consequence, $\tilde{V}(0) \in S_0$. 
It follows from Lemma \ref{lmm:unique_epsilon} that $\epsilon_0=\tilde{V}(0)$. 
By the uniqueness of the solution to system of ODEs \eqref{eq:ODE_UVW}
with the initial condition \eqref{eq:ODE_UVW_initial}, we have
$\tilde{U}(x) \equiv U(x,\epsilon_0)$, $\tilde{V} (x) \equiv V(x,\epsilon_0)$,
$\tilde{W} (x) \equiv V(x,\epsilon_0)$. Thus, $(\tilde{F},\tilde{G},\tilde{V},\tilde{W})=(F,G,V,W).$

\section{Planted Networks and Local Weak Convergence}\label{app:gen_setup}
In this section and the succeeding ones we define the planted Poisson Weighted Infinite Tree (planted PWIT), define a matching $\Minfopt$ on it, prove that it is optimal and unique, and prove that the minimum weight matching $\Mnopt$ on $K_{n,n}$ converges to it in the local weak sense. We follow the strategy of Aldous' celebrated proof of the $\pi^2/6$ conjecture in the un-planted model~\cite{Aldous1992,Aldous2001}, and in a few places the review article of Aldous and Steele~\cite{Aldous2004}. There are some places where we can simply re-use the steps of that proof, and others where the planted model requires a nontrivial generalization or modification, but throughout we try to keep our proof as self-contained as possible.

In this section we lay out our notation, and formally define local weak convergence. We apologize to the reader in advance for the notational complications they are about to endure: there are far too many superscripts, subscripts, diacritical marks, and general doodads on these symbols. But some level of this seems to be unavoidable if we want to carefully define the various objects and spaces we need to work with. 

First off, the exponential distribution with rate $\nu$ is denoted by $\exp(\nu)$. Its cumulative distribution function is $\prob[x > t]=\e^{-\nu t}$ and its mean is $1/\nu$. For a Borel space $(S,\mathcal{S})$ consisting of a set $S$ and a $\sigma$-algebra $\mathcal{S}$, $\mathcal{P}(S)$ is the set of all Borel probability measures defined on $S$.

We use $\mathbb{Z}$ for the integers, $\mathbb{N}_0=\{0,1,2,\ldots\}$ and $\mathbb{N}_+=\{1,2,\ldots\}$ for the natural numbers, and $\mathbb{R}_+$ for the set of non-negative real numbers. The number of elements of a set $A$ is denoted by $|A|$. Random variables are denoted by capital letters; when we need to refer to a specific realization we sometimes use small letters.

Our graphs will be simple and undirected unless otherwise specified. Given an undirected graph $G=(V,E)$, a \emph{(perfect) matching} $M \subset E$ is a set of edges where every vertex $v \in V$ is incident to exactly one edge in $M$.  For each $v$, we refer to the unique $v'$ such that $\{ v,v' \} \in M$ as the \emph{partner} to $v$, and will sometimes denote it as $M(v)$; then $M(M(v))=v$. In a bipartite graph, a matching defines a one-to-one correspondence between the vertices on the left and those on the right. In a forgivable abuse of notation, will often write $M(v,v')=1$ if $\{v,v'\} \in M$ and $0$ otherwise.

A \emph{rooted graph} $G_{\circ} = (V,E,\root)$ is a graph $G= (V,E)$ with a distinguished vertex $\root \in V$. The \emph{height} of a vertex $v \in V$ in a rooted graph $G_{\circ} = (V,E,\root)$ is the shortest-path distance from $\root$ to $v$, i.e., the minimum number of edges among all paths from $\root$ to $v$. 

A \emph{planted graph} $G = (V,E,\Mplanted)$ is a graph $(V,E)$ together with a \emph{planted matching} $\Mplanted \subset E$. 
Similarly a \emph{rooted planted graph} $\Gs = (V,E,\Mplanted,\root)$ is a planted graph with a distinguished vertex $\root$. We refer to the edges in $\Mplanted$ and $E \setminus \Mplanted$ as the planted edges and un-planted edges respectively. 

Two planted graphs $G =(V,E,\Mplanted)$ and $G' = (V',E',{\Mplanted}')$ are said to be isomorphic if there exists a bijection $\gamma:V \to V'$ such that $\{v_1,v_2\} \in E$ if and only if $\gamma(\{v_1,v_2\}) \coloneqq \{ \gamma(v_1), \gamma(v_2) \} \in E'$, and $\{v_1,v_2\} \in \Mplanted$ if and only if $\gamma(\{v_1,v_2\}) = \{ \gamma(v_1), \gamma(v_2) \} \in {\Mplanted}'$. Thus the isomorphism $\gamma$ preserves the planted and un-planted edges. A rooted isomorphism from $G_\circ = (V,E,\Mplanted,\root)$ to $G'_\circ = (V',E',{\Mplanted}',\root')$ is an isomorphism between $G = (V,E,\Mplanted)$ and $G'=(V',E',{\Mplanted}')$ such that $\gamma(\root)=\root'$.

Next we endow a planted graph with a weight function. A \emph{planted network} $N = (G,\ell)$ is a planted graph $G = (V,E,\Mplanted)$ together with a function $\ell: E \to \mathbb{R}_+$ that assigns weights to the edges. For the sake of brevity, we write $\ell(v,w)$ instead of $\ell(\{v,w\})$. 

Now let $K_{n,n} = (V_n,E_n,\Mplanted_n)$ denote a complete bipartite graph together with a planted matching. We use $[n]$ to denote the set of integers $\{1,2,\ldots,n\}$. We label the vertices on the left-hand side of $K_{n,n}$ as $\{1,2,\ldots,n\}$, and the vertices on the right-hand side as $\{1',2',\ldots,n'\}$. In a slight abuse of notation, we denote these sets of labels $[n]$ and $[n']$ respectively. 
Thus $V_n = [n] \cup [n']$, $E_n = \{\{i,j'\} : i \in [n] \text{ and } j' \in [n'] \}$, and $\Mplanted_n =\{\{i,i'\} : i\in[n] \text{ and } i'\in[n'] \}$.

Let $\ell_n$ denote a random function that assigns weights to the edges of $K_{n,n}$ as follows: if $e \in \Mplanted$, then $\ell_n(e) \sim \exp(\lambda)$, and if $e \notin \Mplanted_n$ then $\ell_n(e) \sim \exp(1/n)$. We denote the resulting planted network as $(K_{n,n},\ell_n)$. We denote the minimum matching on $(K_{n,n},\ell_n)$ as $\Mnopt$. Figure \ref{fig:pmod} illustrates a realization of the planted model.

\begin{figure}[t!]
	\begin{center}
		\includegraphics[scale=0.6]{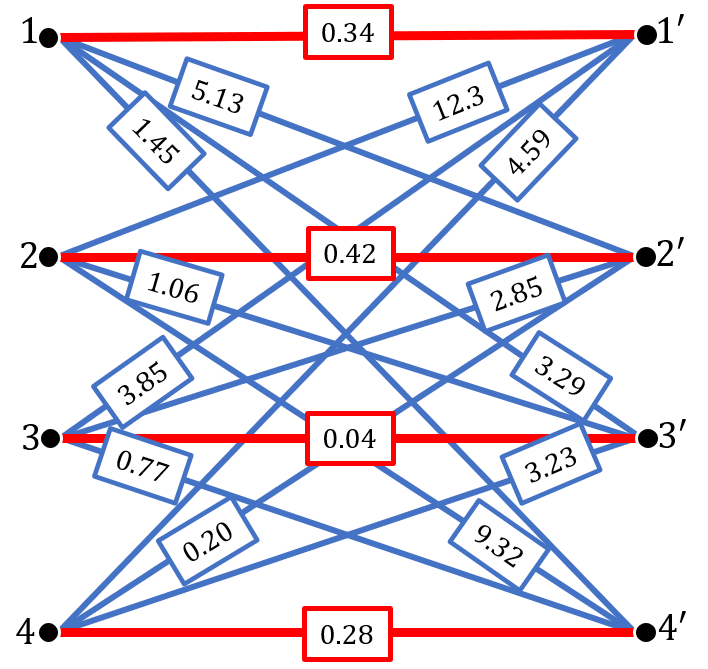}
		\caption{A realization of $(K_{4,4},\ell_4)$ for $n=4$ and $\lambda = 1$. Red bold edges are in $\Mplanted_n$ (planted edges), and solid blue edges are in $E_n\setminus \Mplanted_n$.}\label{fig:pmod}
	\end{center}
\end{figure}

We want to define a metric on planted networks, or rather on their isomorphism classes. Two planted networks $N = (G,\ell)$ and $N' = (G',\ell')$ are isomorphic if there is an isomorphism $\gamma$ between $G$ and $G'$ that preserves the length of the edges, i.e., if $\ell(v_1,v_2) = \ell'(\gamma(v_1),\gamma(v_2))$. A rooted planted network $\Ns = (\Gs,\ell)$ is a rooted planted graph $\Gs$ together with a weight function $\ell$, and we define rooted isomorphism as before. 
Let $[\Ns]$ denote the class of rooted planted networks that are isomorphic to $\Ns$. Henceforth, we use $\Ns$ to denote a typical member of $[\Ns]$.

Next, we define a distance function 
$d_\ell(v,v')$ as the shortest-path weighted distance between vertices but treating planted edges as if they have zero weight. That is,
\begin{equation}
\label{eq:d-ell}
d_\ell(v,v') \coloneqq \inf_{\text{paths $p$ from $v$ to $v'$}} \sum_{e \in p \setminus \Mplanted} \ell(e) \, . 
\end{equation}
For any vertex $v \in V$ and any $\rho \in \rplus$, we can consider the neighborhood $N_\rho(v) = \{ v' : d_\ell(v,v') \le \rho \}$. A network is \emph{locally finite} if $|N_\rho(v)|$ is finite for all $v$ and all $\rho$.


Now let $\Gstar$ denote the set of all isomorphism classes $[\Ns]$, where $\Ns$ ranges over all connected locally finite rooted planted networks. There is a natural way to equip $\Gstar$ with a metric. 
Consider a connected locally finite rooted planted network $\Ns = (\Gs,\ell)$. Now, for $\rho \in \rplus$, we can turn the neighborhood $N_\rho(\root)$ into a rooted subgraph $(\Gs)_\rho$. To be precise,
$(\Gs)_\rho = (V_\rho,E_\rho,\Mplanted_\rho,\root)$ is given as follows:
\begin{enumerate}[]
	\item \emph{Vertex set}: $V_\rho = N_\rho(\root) = \{v : d_\ell(v,\root) \leq \rho \}$.
	\item \emph{Edge set}: $e \in E_\rho$ if $e \in p$ for some path $p$ starting from $\root$ such that $\sum_{e \in p\setminus \Mplanted} \ell(e) \leq \rho$.
	\item \emph{Planted matching}: $\Mplanted_\rho = \Mplanted \cap E_\rho$.
\end{enumerate}
Given this definition, for any $[\Ns],[\Nsp] \in \Gstar$ a natural way to define a distance is 
\begin{align*}
d([\Ns],[\Nsp]) = \frac{1}{R+1} ,
\end{align*}
where $R$ is the largest $\rho$ at which the corresponding rooted subnetworks $((\Gs)_\rho, \ell) $ and $((\Gsp)_\rho,\ell')$ cease to be approximately isomorphic in the following sense:
\begin{equation}
\label{eq:gstar-metric}
R = \sup\left\{\rho \geq 0: \begin{minipage}{0.55\textwidth}
there exists a rooted isomorphism $\gamma_\rho : (\Gs)_\rho \to (\Gsp)_\rho$ 
such that $\forall e \in E_\rho$, 
$|\ell(e)-\ell'(\gamma_\rho(e))| < 1/\rho$ 
\end{minipage} \right\} \, .
\end{equation} 
(Note that this isomorphism is between the rooted subgraphs $(\Gs)_\rho$ and $(\Gsp)_\rho$, not the corresponding rooted networks, so it is not required to preserve the weights exactly.) In other words, $\Ns$ and $\Nsp$ are close whenever there is a large neighborhood around $\root$ where the edge weights are approximately the same, up to isomorphism. In particular, a continuous function is one that we can approximate arbitrarily well by looking at larger and larger neighborhoods of the root.

Equipped with this distance, we say that a sequence $([N_{n,\circ} = (G_{n,\circ},\ell_n)])_{n=1}^\infty$ \emph{converges locally to} $[N_{\infty,\circ}]$, and write $[N_{n,\circ}] \lwc [N_{\infty,\circ}]$,  
if the following holds: for all $\rho \in \rplus$ such that 
$N_{\infty,\circ}$ does not have a vertex at a distance exactly $\rho$ from the root $\root$, there is an $n_\rho \in \mathbb{N}_0$ such that for all $n > n_\rho$ there is a rooted isomorphism $\gamma_{n,\rho}:(G_{n,\circ})_\rho \to (G_{\infty,\circ})_\rho$ such that $\ell_n(\gamma^{-1}_{n,\rho}(e)) \to \ell(e)$ for all $e \in E_\rho$ where $E_\rho$ is defined from $G_{\infty,\circ}$ as above. That is, as $n$ increases, $N_{n,\circ}$ becomes arbitrarily close to $N_{\infty,\circ}$ on arbitrarily large neighborhoods. 

It is easy to check that $d$ defines a metric on $\Gstar$. Moreover, $\Gstar$ equipped with this metric is a Polish space: a complete metric space which is \emph{separable}, i.e., it has a countable dense subset. Hence, we can use the usual tools in the theory of weak convergence to study sequence of probability measures on $\Gstar$. More precisely, define $\mathcal{P}(\Gstar)$ as the set of all probability measures on $\Gstar$ and endow this space with the topology of weak convergence: a sequence $\mu_n\in \mathcal{P}(\Gstar)$ converges weakly to $\mu_{\infty}$, denoted by $\mu_n\xrightarrow{w}\mu$, if for any continuous bounded function $f:\Gstar \to \mathbb{R}$,
\begin{align*}
\int_{\Gstar} f\,d\mu_n \to \int_{\Gstar} f\,d\mu_{\infty}.
\end{align*}
Since $\Gstar$ is a Polish space, $\mathcal{P}(\Gstar)$ is a Polish space as well with the L\'evy-Prokhorov metric~\cite[pp. 394--395, Thm. 11.3.1 and Thm. 11.3.3]{Dudley2002}. Also, Skorokhod's theorem~\cite[p. 79, Thm. 4.30]{Kallenberg2002} implies that $\mu_n$ converges weakly to $\mu_{\infty}$ if an only if there are random variables $[N_{n,\circ}]$ and $[N_{\infty,\circ}]$ defined over $\Gstar$ such that $[N_{n,\circ}]\sim \mu_n$, $[N_{\infty,\circ}] \sim \mu_{\infty}$, and $[N_{n,\circ}]  \lwc [N_{\infty,\circ}]$ almost surely.

This notion of convergence in $\Gstar$ was first discussed by Aldous and Steele in~\cite{Aldous2004}. It is called {\em local weak convergence} to emphasize the fact that this notion of convergence only informs us about the local properties of measure around the root. We are going to use this framework to study the asymptotics of a sequence of finite planted networks. This methodology is known as \emph{the objective method}~\cite{Aldous2004} and has been used to analyze combinatorial optimization problems in a variety of random structures (e.g.~\cite{Aldous2004,Salez2009,Sundaresan2014,Gamarnik2006,Gamarnik2008,Gamarnik2004}).

In order to apply this machinery to random finite planted networks, consider a finite planted network $N = (G,\ell)$. For a vertex $v \in V$, let $\Ns(v)$ denote the planted network rooted at $v$ consisting of $v$'s connected component. Then we can define a measure $U(N) \in \mathcal{P}(\Gstar)$ as follows,
\begin{equation}
\label{eq:local-uniform}
U(N) = \frac{1}{|V|} \sum_{v\in V} \delta_{[\Ns(v)]} \, ,
\end{equation}
where $\delta_{[\Ns(v)]}\in\mathcal{P}(\Gstar)$ is the Dirac measure that assigns $1$ to $[\Ns(v)] \in \Gstar$ and $0$ to to any other member of $\Gstar$.	In other words, $U(N)$ is the law of $[\Ns(\root)]$ where $\root$ is picked uniformly from $V$. Now, to study the local behavior of a sequence of finite networks $( N_n )_n$, the objective method suggests studying the weak limit of the sequence of measures $( U(N_n) )_n$.
\begin{definition}(\textbf{Random Weak Limit})
\label{def:random-weak-limit}
	A sequence of finite planted networks $( N_n )_{n=1}^{\infty}$ has a random weak limit $\mu \in \mathcal{P}(\Gstar)$ if $U(N_n) \xrightarrow{w} \mu$. 
\end{definition}
If $N_n$ is a random planted network, we replace $U(N_n)$ in the above definition with $\expect U(N_n)$, where
\begin{equation}
\label{eq:local-uniform-expectation}
\expect U(N)\, (A) \coloneqq \expect[U(N)(A)] \qquad \text{for all Borel sets $A \subseteq \Gstar$,}
\end{equation}
and the expectation is taken with respect to the randomness of $N$. For us, in both $(K_{n,n},\ell_n)$ and the  weighted infinite tree $(N_{\infty,\circ},\ell_\infty)$ we define below, the only source of randomness is the edge weights. It is easy to see that if $N$ is vertex transitive, so that every vertex has the same distribution of neighborhoods, then $\expect U(N)$ is the law of $[N(\root)]$ (or of $[N(v)]$ for any vertex $v$). In many settings, e.g. sparse Erd\H{o}s-R\'enyi graphs, $U(N)$ converges in distribution to $N(\root)$, since averaging over all possible root vertices effectively averages over $N$ as well. But taking the expectation over $N$ as we do here avoids having to prove this.

Not all probability measures $\mu \in \mathcal{P}(\Gstar)$ can be random weak limits. The uniform rooting in the measure associated with finite networks implies a modest symmetry property on the asymptotic measure.  One necessary condition for a probability measure to be a random weak limit is called \emph{unimodularity}~\cite{Aldous2007}. 

To define unimodularity, let $\Gstarstar$ denote the set of all isomorphism classes $[\Nso]$, where $\Nso$ ranges over all connected locally finite \emph{doubly-rooted} planted networks---that is, networks with an ordered pair of distinguished vertices. We define $\Gstarstar$ as the set of equivalence classes under isomorphisms that preserve both roots, and equip it with a metric analogous to~\eqref{eq:gstar-metric} to make it complete and separable. A continuous function $f([\Nso(\root,v)])$ is then one which we can approximate arbitrarily well by looking at neighborhoods of increasing size that contain both $\root$ and $v$.

Then we can define unimodularity as follows:
\begin{definition}(\textbf{Unimodularity})
A probability measure $\mu \in \mathcal{P}(\Gstar)$ is \emph{unimodular} if for all Borel functions $f:\Gstarstar \to \mathbb{R}_+$,
	\begin{align}\label{eq:unimod}
	\int_{\Gstar} \sum_{v\in V} f([\Nso(\root,v)]) \,d\mu([\Ns(\root)]) = \int_{\Gstar} \sum_{v\in V} f([\Nso(v,\root)]) \,d\mu([\Ns(\root)]) \,  .
	\end{align}
\end{definition}
In other words, the expectation over $\mu$ of the sum (either finite or $+\infty$) over all $v$ of $f([\Nso(\root,v)])$ remains the same if we swap $\root$ and $v$. Since in a connected graph we can swap any vertex $v$ with $\root$ by a sequence of swaps between $\root$ and its neighbors, each of which moves $v$ closer to the root, this definition is equivalent to one where we restrict $f$ to Borel functions with support on $\{ [\Nso(\root,v)] \mid  \text{$\root$ and $v$ are neighbors} \}$. With this restriction, unimodularity is known as \emph{involution invariance}~\cite[Prop. 2.2]{Aldous2007}: 
\begin{lemma}(\textbf{Involution Invariance})
	A probability measure $\mu \in \mathcal{P}(\Gstar)$ is unimodular if and only if \eqref{eq:unimod} holds for all Borel functions $f:\Gstarstar \to \mathbb{R}_+$ such that $f([\Nso(\root,v)]) = 0$ unless $\{\root, v\}\in E$.
\end{lemma}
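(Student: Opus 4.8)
The plan is to dispose of the easy direction by restriction and to obtain the hard direction by a mass-transport decomposition that removes one edge at a time from the support of the test function.

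The ``only if'' direction is immediate: unimodularity demands \eqref{eq:unimod} for \emph{all} Borel $f:\Gstarstar\to\rplus$, hence in particular for those vanishing off neighbor pairs. For the converse, I would write $\Phi(f):=\int_{\Gstar}\sum_{v\in V}f([\Nso(\root,v)])\,d\mu$ and $\hat f([\Nso(\root,v)]):=f([\Nso(v,\root)])$, so that \eqref{eq:unimod} is exactly the identity $\Phi(f)=\Phi(\hat f)$. Since the graph distance $d(\root,v)$ (the minimum number of edges on a $\root$--$v$ path) is an isomorphism invariant, I would decompose $f=\sum_{m\ge0}f_m$ with $f_m:=f\cdot\one\{d(\root,v)=m\}$; each $f_m$ is Borel, reversal preserves $d$, and monotone convergence reduces the claim to proving $\Phi(f_m)=\Phi(\hat f_m)$ for every $m$. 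This I would prove by induction on $m$, the base cases being $m=0$ (where $\hat f=f$, so there is nothing to check) and $m=1$ (which is precisely the hypothesis).

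For the inductive step, fix $m\ge2$ and $f$ supported on pairs at distance exactly $m$. Given $w$ and $v$ with $d(w,v)=m$, let $D(w,v)\ge1$ be the number of neighbors of $w$ at distance $m-1$ from $v$ (the ``geodesic neighbors''); this is isomorphism invariant. Route the mass uniformly over these neighbors: for $\root\sim u$ set
\[
g([\Nso(\root,u)]):=\sum_{\substack{v:\ d(\root,v)=m\\ d(u,v)=m-1}}\frac{f([\Nso(\root,v)])}{D(\root,v)},
\]
and record the remaining journey by a function supported at distance $m-1$:
\[
f'([\Nso(u,v)]):=\sum_{\substack{w\sim u\\ d(w,v)=m}}\frac{f([\Nso(w,v)])}{D(w,v)}\qquad\text{when }d(u,v)=m-1.
\]
Both $g$ and $f'$ are built only from the adjacency relation and graph distances, so they descend to Borel functions $\Gstarstar\to\rplus$. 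Tonelli's theorem (all summands nonnegative) then yields four bookkeeping identities: $\Phi(g)=\Phi(f)$, since summing $g$ over the neighbors of $\root$ returns $\sum_{v}f([\Nso(\root,v)])$; $\Phi(f')=\Phi(\hat g)$, since both reduce after interchanging sums to $\sum_{w\sim\root}\sum_{v:\,d(w,v)=m,\,d(\root,v)=m-1}f([\Nso(w,v)])/D(w,v)$; and $\Phi(\hat f')=\Phi(\hat f)$, since reindexing by the destination vertex and using that a vertex $w$ with $d(w,\root)=m$ has exactly $D(w,\root)$ neighbors at distance $m-1$ from $\root$ collapses the normalization. Feeding in the neighbor-supported hypothesis $\Phi(g)=\Phi(\hat g)$ gives
\[
\Phi(f)-\Phi(\hat f)=\Phi(g)-\Phi(\hat f')=\Phi(\hat g)-\Phi(\hat f')=\Phi(f')-\Phi(\hat f'),
\]
and the right-hand side vanishes by the induction hypothesis since $f'$ is supported at distance $m-1$. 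This closes the induction.

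The main obstacle is not any single hard estimate but the bookkeeping: one must verify that $g$ and $f'$ are genuinely isomorphism-invariant (which is why we route \emph{uniformly} over all geodesic neighbors rather than selecting a ``canonical'' one — no canonical choice exists in general), that $D(\root,v)\ge1$ whenever it appears as a denominator, and that the rearrangements of the (possibly infinite) nonnegative double sums are legitimate. Local finiteness of the networks in the support of $\mu$ makes these manipulations transparent, though strictly speaking nonnegativity already suffices.
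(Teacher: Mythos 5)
Your strategy is exactly the one the paper is leaning on: the paper does not actually prove this lemma but cites it (Aldous--Lyons, Prop.~2.2), offering only the one-sentence idea that any vertex can be swapped with the root through a chain of neighbor swaps along a geodesic. Your decomposition $f=\sum_m f_m$ by graph distance, the induction on $m$ with base cases $m\le 1$, and the routing of mass through geodesic neighbors is precisely the standard mass-transport argument behind that citation, and your four bookkeeping identities do check out; note that the chain $\Phi(f)=\Phi(g)=\Phi(\hat g)=\Phi(f')=\Phi(\hat f')=\Phi(\hat f)$ holds as equalities in $[0,\infty]$, so you never need the subtractions, which would be delicate if some of these quantities were infinite.

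There is, however, one step that fails in this paper's setting as written. The identity $\Phi(g)=\Phi(f)$ requires $D(\root,v)<\infty$: if $D(\root,v)=\infty$ then $f([\Nso(\root,v)])/D(\root,v)=0$ and the mass of $f$ at $v$ is lost rather than routed. In the usual setting ``locally finite'' means finite degree and this cannot happen, but here $\Gstar$ consists of networks that are locally finite only in the \emph{weighted} sense ($|N_\rho(v)|<\infty$ for the metric $d_\ell$), which permits vertices of infinite degree --- the planted PWIT itself has them --- and hence pairs with $D(w,v)=\infty$ (e.g.\ $w$ and $v$ joined by infinitely many two-edge paths whose edge weights tend to infinity). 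The fix is short and uses the weights: since any un-planted edge $\{w,u\}$ of weight at most $\rho$ places $u$ in $N_\rho(w)$, only finitely many neighbors of $w$ have edge weight below any threshold, so among the geodesic neighbors of $w$ toward $v$ the minimal edge weight is attained on a finite nonempty set; routing uniformly over that set --- still isomorphism-invariant, since network isomorphisms preserve weights --- restores all four identities. You should also record why adjacency, graph distance, and $D$ are Borel on $\Gstarstar$: they are determined by the $\rho$-neighborhoods for large $\rho$ and are therefore pointwise limits of functions of finite neighborhoods; that part is routine, as you say.
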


Aldous in \cite{Aldous2004} uses another characterization of involution invariance. Given a probability measure $\mu \in \mathcal{P}(\Gstar)$, define a measure $\widetilde{\mu}$ on $\Gstarstar$ as the product measure of $\mu$ and the counting measure on the neighbors of the root, i.e.,
\begin{align}\label{eq:involinvardef}
\widetilde{\mu}(\cdot) \coloneqq \int_{\Gstar} \sum_{v : \{\root, v\} \in E} \mathbf{1}([\Nso(\root,v)] \in \cdot) \,d\mu([\Ns(\root)]),
\end{align}
where $\one$ is the indicator function.  
Like $\mu$, $\widetilde{\mu}$ is a $\sigma$-finite measure. Throughout the following sections, we use the $\widetilde{\textrm{tilde}}$ 
to distinguish a measure associated with doubly-rooted planted networks from the corresponding measure associated with singly-rooted ones.

Then Aldous' definition of involution invariance in~\cite{Aldous2004} is as follows.
\begin{definition}(\textbf{Involution Invariance, again}) \label{def:spinvar}
	A probability measure $\mu \in \mathcal{P}(\Gstar)$ is said to be involution invariant if the induced measure $\widetilde{\mu}$ on $\Gstarstar$ is invariant under the involution map $\iota:\Gstarstar\to\Gstarstar$, i.e.,
	\begin{align*}
	\widetilde{\mu}(A) = \widetilde{\mu}(\iota^{-1}(A)) \qquad \text{for all Borel sets $A \subseteq \Gstarstar$,}
	\end{align*}
	where $\iota([\Nso(\root,v)]) = [\Nso(v,\root)]$.
\end{definition}
Crucially, unimodularity and involution invariance are preserved under local weak convergence. Any random weak limit satisfies unimodularity and is involution invariant~\cite{Aldous2007,Aldous2004} (although the converse is an open problem). 

The theory of local weak convergence is a powerful tool for studying random combinatorial problems. In the succeeding sections we will prove a series of propositions analogous to~\cite{Aldous1992,Aldous2001} showing local weak convergence between our planted model of randomly weighted graphs $K_{n,n}$ and a kind of infinite tree $N_{\infty,\circ}$. These propositions make a rigorous connection between the minimum matching on $K_{n,n}$ and the minimum involution invariant matching $\Minfopt$ on $N_{\infty,\circ}$. Finally, we analyze $\Minfopt$ using the RDEs that we solved with differential equations above. 

\section{The PWIT and Planted PWIT}\label{app:planted-PWIT}
In this section we define the planted Poisson Weighted Infinite Tree, and show that it is the weak limit of the planted model $(K_{n,n},\ell_n)$. 

Let us ignore the planted matching for the moment and assume that $\ell_n(e) \sim \exp(n)$ for all $e\in E_n$. The problem of finding the minimum matching on this un-planted network is known as the random assignment problem. Kurtzberg~\cite{Kurtzberg1962} introduced this problem with i.i.d.\ uniform edge lengths on $[0,n]$, and Walkup~\cite{Walkup1979} proved that the expected cost of the minimum matching is bounded and is independent of $n$. In the succeeding years, many researchers tightened the bound for $\expect[X_n]$ (e.g.~\cite{Karp1987,Lazarus1993,Goemans1993}). Using powerful but non-rigorous methods from statistical physics, Me\'zard and Parisi \cite{Parisi1987} conjectured that $\expect[X_n]$ has the limiting value $\zeta(2)=\pi^2/6$ as $n \to \infty$. Aldous first proved~\cite{Aldous1992} that $\expect[X_n]$ indeed has a limit, and then~\cite{Aldous2001} proved the $\pi^2/6$ conjecture, using the local weak convergence approach we follow here. 

Other methods have been introduced to study this problem~\cite{Nair2003,Linusson2004,Wastlund2009}, including the marvelous fact that for finite $n$, the expected cost of the minimum matching is the sum of the first $n$ terms of the Riemann series for $\zeta(2)$, namely $1+1/4+1/9+\cdots+1/n^2$. But these methods rely heavily on the specifics of the matching problem, and we will not discuss them here.

As the first step in applying local weak convergence to the planted problem, we are going to identify the weak limit of the planted model according to Definition~\ref{def:random-weak-limit}: that is, the kind of infinite randomly weighted tree that corresponds to $K_{n,n}$ with weights drawn from our model. To be more precise, we are interested in a probability measure $\mu_\infty \in \mathcal{P}(\Gstar)$ that 
$\mu_n = \expect U(N_n) \in \mathcal{P}(\Gstar)$ 
converges to in the local weak sense, where $N_n = (K_{n,n},\ell_n)$ is the planted model, $U(N_n)$ is the random measure defined in~\eqref{eq:local-uniform} by rooting $N_n$ at a uniformly random vertex, and $\expect U(N_n)$ is the measure defined in~\eqref{eq:local-uniform-expectation}. Since every neighborhood has the same distribution of neighborhoods in the planted model, the root might as well be at vertex $1$, so $\mu_n$ is simply the distribution of $[N_{n,\circ}(1)]$. Thus
\begin{equation}
\label{eq:rootis1}
\begin{aligned}
\mu_n(A) &= 
\expect U(N_n) (A) \\ 
&= \frac{1}{2n} \sum_{v\in V_n} \expect[ \delta_{[N_n(v)]}(A)]
= \prob[[N_{n,\circ}(1)] \in A], \qquad \text{for all Borel sets $A \subseteq \Gstar$.}
\end{aligned}
\end{equation}

In the un-planted model studied by Aldous and others, the weak limit of the random matching problem is the Poisson Weighted Infinite Tree (PWIT). The planted case is similar but more elaborate: the weights of the un-planted edges are Poisson arrivals, but the weights of the planted edges have to be treated separately. We call this the \emph{planted PWIT}, and define it as follows.

We label the vertices $V_\infty$ of the planted PWIT with sequences over $\mathbb{N}_0$, which we denote with bold letters. The root is labeled by the empty sequence $\root$. The children of a vertex $\bs{i} = (i_1,i_2,\cdots,i_t)$, are $\concat{\bs{i}}{j} \coloneqq (i_1,i_2,\cdots,i_t,j)$ for some $j \in \mathbb{N}_0$, and if $t > 0$ its parent is $\parent(\bs{i}) := (i_1,i_2,\cdots,i_{t-1})$. We say that $\bs{i}$ belongs to the $t$th generation of the tree, and write $\gen(\bs{i})=t$.

Appending $j \in \{1, 2, \ldots\}$ to $\bs{i}$ gives the $j$th non-planted child, i.e., the child with the $j$th smallest edge weight among the non-planted edges descending from the parent $\bs{i}$. However, appending $j=0$ indicates $\bs{i}$'s planted child if any, i.e., $\bs{i}$'s partner in the planted matching if its partner is one of its children instead of its parent. Since the planted partner of a planted child is its parent, these sequences never have two consecutive zeroes. (Note that the root has a planted child, so the first entry in the sequence is allowed to be $0$.) We denote the set of such sequences of length $t$ as $\Sigma^t$, and the set of all finite such sequences as $\Sigma = \bigcup_{t \in \mathbb{N}_0} \Sigma^t$. 
Thus the edge set is $E_\infty = \{ \{\bs{i}, \concat{\bs{i}}{j} \} \mid j \in \mathbb{N}_0 \text{ and } \bs{i}, \concat{\bs{i}}{j} \in \Sigma \}$, 
and the planted matching  $\Mplanted_\infty \subset E_\infty$ consists of the edges $\{ \{\bs{i}, \concat{\bs{i}}{0} \} \mid \bs{i}, \concat{\bs{i}}{0} \in \Sigma \}$. Let $T_\infty = (V_\infty = \Sigma, E_\infty, \Mplanted_\infty)$ be the resulting planted tree.


Next we define the random edge weights $\ell_\infty: E_\infty \to \rplus$. The weights of the un-planted edges are distributed just as in the PWIT: that is, for each vertex $\bs{i} \in V_\infty$, the sequence $\big( \ell_\infty(\bs{i},\concat{\bs{i}}{j} ) \big)_{j=1,2,\ldots}$ is distributed jointly as the arrivals $\zeta_1,\zeta_2,\ldots$ of a Poisson process with rate $1$. Then we have the planted edges: if $\concat{\bs{i}}{0} \in \Sigma$, then $\ell_\infty(\bs{i}, \concat{\bs{i}}{0} ) \sim \exp(\lambda)$ independent of everything else. Note that these random weights are independent for different parents $\bs{i} \in V_\infty$.

Finally, let $N_\infty = (T_\infty,\ell_\infty)$ denote the random planted tree and let $N_{\infty,\circ}$ denote the version of $N_\infty$ rooted at $\root$. We call $N_{\infty,\circ}$ the \emph{planted Poisson Weighted Infinite Tree} or the planted PWIT for short. Its structure is shown in Figure~\ref{fig:asympobject}.

As in Section~\ref{app:gen_setup}, let $[N_{\infty,\circ}]$ denote the equivalence class of $N_{\infty,\circ}$ up to rooted isomorphisms, and denote by $\mu_\infty\in\mathcal{P}(\Gstar)$ the probability distribution of $[N_{\infty,\circ}]$ in $\Gstar$. The following theorem shows that $\mu_n$ converges weakly to $\mu_\infty$.


\begin{theorem} \label{thm:lwc}
The planted PWIT is the random weak limit of the planted model on $K_{n,n}$, i.e,  $\mu_n \xrightarrow{w} \mu_\infty$.
\end{theorem}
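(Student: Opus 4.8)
The plan is to establish $\mu_n \xrightarrow{w} \mu_\infty$ by exhibiting an explicit coupling between the finite planted network $(K_{n,n},\ell_n)$ rooted at vertex $1$ and the planted PWIT, showing that for any fixed radius $\rho$ (avoiding the measure-zero set of radii at which $N_{\infty,\circ}$ has a vertex exactly at distance $\rho$), the $\rho$-neighborhoods agree as rooted planted networks with high probability as $n\to\infty$. Concretely, I would perform a breadth-first exploration of $N_{n,\circ}(1)$ in order of the $d_\ell$-distance defined in~\eqref{eq:d-ell} (planted edges having zero weight), uncovering one vertex at a time: when we reach a new left-vertex $i$ we expose the weights of its $n$ un-planted edges to the right, which are i.i.d.\ $\exp(1/n)$, and the weight of its planted edge $\{i,i'\}$, which is $\exp(\lambda)$; symmetrically for right-vertices. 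The key distributional facts are: (i) the ordered un-planted edge weights out of a vertex behave like the first arrivals of a rate-$1$ Poisson process, since $n$ i.i.d.\ $\exp(1/n)$ values, rescaled, converge to a rate-$1$ Poisson process on $\mathbb{R}_+$ on any compact window; and (ii) each vertex has exactly one planted partner, which with probability $1-o(1)$ in a bounded-size exploration is a not-yet-seen vertex, hence appears as a fresh child connected by an $\exp(\lambda)$ edge --- this is precisely the combinatorial structure of $T_\infty$, where sequences never have two consecutive zeros.

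The main steps, in order, are: (1) set up the exploration process and the candidate coupling, matching the $j$th-lightest un-planted neighbor in $K_{n,n}$ with the vertex $\concat{\bs i}{j}$ and the planted partner with $\concat{\bs i}{0}$ (when the partner is a child rather than the parent); (2) prove that within any fixed $d_\ell$-radius $\rho$, the exploration of $K_{n,n}$ uncovers only $O_\rho(1)$ vertices with high probability --- this follows because the number of low-weight un-planted edges out of each vertex within a weight budget $\rho$ is tight (Poisson-like), and planted edges do not add to the budget, so a Galton--Watson-type domination controls the total size; (3) show no collisions occur, i.e., the BFS tree is genuinely a tree on this scale: a collision would require a short alternating cycle of total un-planted weight $\le 2\rho$, and a first-moment count (as in the proof of Theorem~\ref{thm:first_moment_bound}, or the standard PWIT argument) shows the expected number of such cycles is $O(1/n)$; (4) conclude that conditioned on no collisions, the uncovered weights converge jointly in distribution to the weights of $N_{\infty,\circ}$ restricted to radius $\rho$, invoking the convergence of the rescaled $\exp(1/n)$ order statistics to Poisson arrivals together with the exact $\exp(\lambda)$ law of the planted edges; (5) assemble these via Skorokhod's representation (as already invoked in Section~\ref{app:gen_setup}) to get local weak convergence $\mu_n \xrightarrow{w}\mu_\infty$. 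Throughout, \eqref{eq:rootis1} lets us replace the uniform root by the fixed vertex $1$.

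The hard part will be step (2)--(3): controlling the exploration so that within a bounded $d_\ell$-radius one sees only boundedly many vertices \emph{and} no repeated vertices, uniformly enough to pass to the limit. The subtlety is that $d_\ell$ treats planted edges as free, so a naive bound on the number of edges explored within weight $\rho$ must account for arbitrarily long chains of planted edges --- but since planted edges form a matching, each such "free" step is immediately followed by an un-planted step that does cost weight, so alternating structure keeps the exploration finite; making this rigorous requires a careful stochastic domination of the exploration by a multitype branching process whose un-planted offspring is Poisson and whose planted offspring is a single child. The collision analysis is then a routine union bound over short alternating cycles, analogous to the classical PWIT proof in~\cite{Aldous1992,Aldous2001} and to the first-moment computation in~\eqref{eq:Celll-count}, but one must check that allowing zero-weight planted edges does not inflate the relevant cycle counts, which it does not since a cycle in $K_{n,n}$ still uses at least one un-planted edge per left-right crossing. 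The remaining steps are essentially bookkeeping once these estimates are in place.
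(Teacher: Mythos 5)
Your proposal is correct and follows essentially the same strategy as the paper's proof: an exploration process rooted at vertex~$1$, Poissonization of the sorted un-planted weights together with the exact $\exp(\lambda)$ law of the planted edge to a fresh child, a locally-tree-like estimate ruling out collisions, and a standard passage from convergence on finite neighborhoods to local weak convergence (the paper uses a total-variation bound plus the Portmanteau theorem where you invoke Skorokhod). The only organizational difference is that the paper truncates the exploration by arity and depth $m$ first---so the explored tree has a deterministic shape for large $n$ and its law converges in total variation to the truncated planted PWIT via an explicit likelihood-ratio (Radon--Nikodym) computation---and only afterwards shows that the weighted $\rho$-neighborhood is contained in the stage-$m$ tree with high probability, which reduces your ``hard part'' (2)--(3) to a union bound over the boundedly many explored vertex pairs rather than a multitype branching-process domination and a cycle count.
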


\begin{proof}[Sketch of the proof]
Similar to the un-planted case \cite[Lemmas 10 and 11]{Aldous1992} the proof follows from the following steps: 
\begin{enumerate}
\item Recall that $\mu_n$ is the distribution of $[N_{n,\circ}(1)]$. We define an exploration process that explores the vertices of $N_{n,\circ}(1)$ starting from the root vertex $1$ in a series of stages. At stage $m$, this process reveals a tree of depth $m+1$ and maximum arity $m+1$, where the children of each vertex are its $m$ lightest un-planted neighbors (among the remaining vertices) and possibly its planted partner (if its planted partner is not its parent). 
%
\item In the limit $n \to \infty$, the tree explored at each stage is asymptotically the same as a truncated version of the planted PWIT, i.e., the analogous stage-$m$ neighborhood of the root $\root$. 
\item For large enough $m$ (independent of $n$),   the $\rho$-neighborhood $(G_{n,\circ}(1))_\rho$ of vertex $1$ in $N_{n,\circ}(1)$ is a subgraph of the explored tree at stage $m$ of the process with high probability. This is due to the fact that, while $K_{n,n}$ has plenty of cycles that are topologically short, it is very unlikely that any short cycle containing vertex $1$ consists entirely of low-weight edges. 
\item Finally, the result follows by using the Portmanteau Theorem, which enables us to extend the convergence of distributions on local neighborhoods in total variation distance to the desired local weak convergence.
\end{enumerate}
The complete proof is presented in Appendix~\ref{app:proofoflwc}. 
\end{proof}

Since the planted model on $K_{n,n}$ converges to the planted PWIT, we have every reason to believe that---just as Aldous showed for the un-planted problem---the minimum matching on the planted model converges locally weakly to the minimum involution invariant matching on the planted PWIT. We make this statement rigorous in the following sections, following and generalizing arguments in~\cite{Aldous1992,Aldous2001,Aldous2004}.


\section{The Optimal Involution Invariant Matching on the Planted PWIT}
\label{app:optmatch}
In this section we define the optimal involution invariant random matching $\Minfopt$ on the planted PWIT---or more precisely, the joint distribution $(\ell_\infty,\Minfopt)$. We define it in terms of fixed points of a message-passing algorithm, construct it rigorously on the infinite tree, and prove that it is optimal and unique.

Since the planted PWIT is an infinite tree, the total weight of any matching is infinite. This makes it unclear whether there is a well-defined notion of a minimum-weight matching. But since we are ultimately interested in the cost per vertex of the minimum matching on $K_{n,n}$,  
we call a random matching $(\ell_{\infty},\M_{\infty})$ on the planted PWIT optimal if it minimizes the expected cost of the edge incident to the root, $\expect[\ell_{\infty}(\root,\M_{\infty}(\root))]$. 

However, since ${\mu}_n$ is involution invariant and involution invariance is preserved under weak limit, we need to restrict our search for minimum matching to involution invariant matchings. This restriction is crucial. For instance, if we simply want to minimize the expected cost at the root, we could construct a matching as follows, akin to a greedy algorithm: first match the root to its lightest child, i.e., the one with the lowest edge weight. Then match each of its other children with their lightest child, and so on. For this matching, $\expect[\ell_{\infty}(\root,\M_{\infty}(\root))]= \expect[\min(\eta,\zeta)] = 1/(1+\lambda)$ where $\eta \sim \exp(\lambda)$ is the weight of the root's planted edge and $\zeta \sim \exp(1)$ is the weight of its lightest un-planted edge. 

However, as pointed out by Aldous for the un-planted model~\cite[Section 5.1]{Aldous2001}, this matching is not involution invariant. For instance, suppose $1$ is $\root$'s lightest child, but that $1$ has a descending edge whose weight is even less. In this case, if we swap $\root$ and $1$, we won't include the edge $\{1,\root\}$ in the resulting matching. Indeed, in the un-planted case the optimal involution invariant matching has expected weight $\pi^2/6$ per vertex, while this greedy matching has expected weight $1$. The lesson here is that the only matchings on the PWIT (or the planted PWIT) that correspond to genuine matchings on $K_{n,n}$ are those that are involution invariant.

Before we proceed, we make a small increment to our formalism. For a network $N$ we define $\mathcal{M}[N]$ as the set of all matchings on $N$. Now, a \emph{random matching} $(\ell,\M)$ on $N$ is a joint distribution of edge weights and matchings, i.e., a probability measure on $\rplus^E \times \mathcal{M}[N]$ with marginal $\ell$ on $\rplus^E$. Intuitively, the reader would probably interpret the phrase ``random matching'' as a measurable function from $\rplus^E \to \mathcal{M}[N]$, assigning a distribution of matchings to each realization of the edge weights $\ell$. However, here we follow Aldous by using it to mean a distribution over both $\ell$ and $\M$. Note that $\M$ may have additional randomness even after conditioning on $\ell$; we will eventually learn, however, that $\Minfopt$ does not.

\subsection{The Message-Passing Algorithm} 
We start by describing a message-passing algorithm on the planted PWIT that we will use to define $\Minfopt$. We have already discussed this, but we do it here in our notation for the infinite tree.

If $(\ell_\infty,\M)$ is involution invariant, $\expect[\ell_\infty(v,\M(v))]$ is independent of the choice of $v \in V_\infty$. 
Let us pretend for now that the total weight of the minimum involution invariant matching $\Minfopt$ is finite, and minimize it with a kind of message-passing algorithm.

For a vertex $v \in V_\infty$, let $T_\infty(v)$ denote the subtree consisting of $v$ and its descendants, rooted at $v$ (in particular, $T_\infty(\root) = T_\infty$). Let $\ell_\infty(T_\infty(v))$ and $\ell_\infty(T_\infty(v) \setminus \{v\})$ denote the total weight of the minimum involution invariant matching on $T_\infty(v)$ and $T_\infty(v) \setminus \{v\}$ respectively. The difference between these, which we denote
\begin{equation}
\label{eq:xv-def}
X_v = \ell_\infty(T_\infty(v)) - \ell_\infty(T_\infty(v) \setminus \{v\}) \, , 
\end{equation}
is the cost of matching $v$ with one of its children, as opposed to leaving it unmatched (or rather matching it with its parent, without including the cost of that edge). This is the difference between two infinite quantities, but as Aldous and Steele say~\cite{Aldous2004} we should ``continue in the brave tradition of physical scientists'' and see where it leads. While we have already seen the resulting RDEs in the proof of Theorem \ref{thm:main}, it will be helpful to restate them here in this more precise notation.

Suppose that in a realization of $\Minfopt$, $\root$ is matched with its child $i$. Then we have
\begin{align*}
\ell_\infty(T_\infty(\root)) &= \ell_\infty(\root,i) + \ell_\infty(T_\infty(i) \setminus \{i\}) + \sum_{ j \neq i}\ell_\infty(T_\infty(j) )\allowdisplaybreaks\\
&= \ell_\infty(\root,i) + \ell_\infty(T_\infty(i) \setminus \{i\}) + \ell_\infty(T_\infty(\root) \setminus\{\root\}) - \ell_\infty(T_\infty(i)).
\end{align*}
Rearranging and using~\eqref{eq:xv-def}, we have
\begin{align*}
X_\root = \ell_\infty(\root,i) - X_i \, . 
\end{align*}
We can read this as follows: by matching $\root$ with its child $i$, we pay the weight $\ell_\infty(\root,i)$ of the edge between them, but avoid the cost $X_i$ of having $i$ matching with one of its own children. But of course we want to match $\root$ with whichever child minimizes this cost, giving
\begin{equation}
\label{eq:xroot-min}
X_\root = \min_{j \ge 0} \left( \ell_\infty(\root,j) - X_j \right) \, . 
\end{equation}
Using the same argument, this relation holds for any vertex $v \in V_\infty$. Recalling that the children of $v$ are labeled $\concat{v}{j}$ (i.e., $v$'s label sequence with $j$ appended) for $j \ge 0$, we have
\begin{equation}
\label{eq:xvmin}
X_v = \min_{j \geq 0, \concat{v}{j} \in V_\infty} \left( \ell_\infty(v, \concat{v}{j} ) - X_{ \concat{v}{j} } \right) \, .
\end{equation}

Now recall that $v$'s planted partner is either its parent or its $0$th child. If the former, then this minimization ranges over $v$'s un-planted children $\concat{v}{j}$ for $j \ge 1$. If the latter, then it also includes $v$'s planted child $\concat{v}{0}$. Let us assume that $X_v$ is drawn from one of two distributions over $\mathbb{R}$, and denote this random variable $X$ in the first case and $Y$ in the second case. We expect these distributions to be fixed if we draw $X_{\concat{v}{j}}$ independently for each $j$, and obtain $X_v$ by applying~\eqref{eq:xvmin}. Since $v$'s un-planted children have planted children, but $v$'s planted child (if any) only has un-planted children, we get
the following recursive distributional equations (RDEs):
\begin{align}
X&\stackrel{d}{=} \min(\{\zeta_i - Y_i\}_{i=0}^\infty),\label{eq:disteqX} \\
Y&\stackrel{d}{=} \min(\eta - X,\{\zeta_i - Y_i\}_{i=0}^\infty)\label{eq:disteqY}
\end{align}
where $X$ is independent of everything else, $\{Y_i\}_{i=1}^\infty$ and $Y$ are  i.i.d.\, and $\{\zeta_i\}_{i=1}^\infty$ are the arrivals of a Poisson process with rate $1$, and $\eta \sim \exp(\lambda)$ is the weight $\ell(v,\concat{v}{0})$ of the planted edge---these are the edge weights of the planted PWIT described in Section~\ref{app:planted-PWIT}. 

As we saw in Section \ref{sec:main-infinite-tree}, the distributional equations~\eqref{eq:disteqX}--\eqref{eq:disteqY} have a unique fixed point supported on $\mathbb{R}$ whenever $\lambda < 4$. 
Our next task is to turn this heuristic derivation into a rigorous construction of random variables on the planted PWIT, and use them to construct the minimum involution invariant random matching $\Minfopt$.
\subsection{A Rigorous Construction of $\Minfopt$}
\label{app:OptMatch-Construct}
The construction is similar to the one in the un-planted model (see~\cite[Section 4.3]{Aldous2001} and \cite[Section 5.6]{Aldous2004}). We draw random variables $X$ from a fixed point of the system of recursive distributional equations \eqref{eq:disteqX}--\eqref{eq:disteqY}. Then we show that these random variables generate an involution invariant random matching, by constructing it (randomly) on finite neighborhoods, and then extending it to the infinite tree. In the next subsection, we analyze this matching and show that it is optimal.
 
Define the set of directed edges $\overleftrightarrow{E}_\infty = \overrightarrow{E}_\infty\cup \overleftarrow{E}_\infty$ of $T_\infty$ by assigning two directions to each edge $e\in E_\infty$: for an edge $e=\{\bs{i}, \concat{\bs{i}}{j}  \}$ let $\overrightarrow{e} = (\bs{i}, \concat{\bs{i}}{j} ) \in \overrightarrow{E}_\infty$ denote the edge directed downward, i.e., away from the root, and let $\overleftarrow{e} = ( \concat{\bs{i}}{j} ,\bs{i} ) \in \overleftarrow{E}_\infty$ denote the edge directed upward toward the root. We use $\overleftrightarrow{e}$ to denote a typical member of $\overleftrightarrow{E}_\infty$.
We extend the edge weights to $\overleftrightarrow{E}_\infty$, as $\ell_\infty(\overleftarrow{e})=\ell_\infty(\overrightarrow{e})=\ell_\infty(e)$. 

The following lemma shows how to define ``costs'' or ``messages'' on $\overleftrightarrow{E}_\infty$. It is essentially identical to~\cite[Lemma 14]{Aldous2001} and \cite[Lemma 5.8]{Aldous2004}, except that we have different distributions of messages on the planted and un-planted edges.

\begin{lemma}
\label{lem:const}
Let $(X_0,Y_0)$ be a solution of the system of recursive distributional equations~\eqref{eq:disteqX}--\eqref{eq:disteqY}. Jointly with the edge weights $\ell_\infty$, we can construct a random function $X: \overleftrightarrow{E}_\infty \to \mathbb{R}$ 
such that the following holds:
	\begin{enumerate}[label=(\roman*)]
		\item For every edge $(u,v)\in\overleftrightarrow{E}_\infty$ we have
		\begin{align}
		\label{eq:baleq}
		X(u,v) = \min_{(v,w) \in E_\infty, w \neq u} ( \ell_\infty(v,w) - X(v,w) ) \, .
		\end{align}
		\item For every planted edge $e \in \Mplanted_\infty$, $X(\overrightarrow{e})$ and $X(\overleftarrow{e})$ each have the same distribution as $X_0$. 
		\item For every un-planted edge $e \notin \Mplanted_\infty$, $X(\overrightarrow{e})$ and $X(\overleftarrow{e})$ each have the same distribution as $Y_0$.
		\item For every edge $e \in E_\infty$, $X(\overrightarrow{e})$ and $X(\overleftarrow{e})$ are independent.
	\end{enumerate}
\end{lemma}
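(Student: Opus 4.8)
The plan is to carry over Aldous's construction of messages on the un-planted PWIT (\cite{Aldous2001}, Lemma 14, and \cite{Aldous2004}, Lemma 5.8), tracking the two edge types throughout. I would first build the \emph{downward} messages $X(\overrightarrow e)$ by a truncation argument, then obtain the \emph{upward} messages $X(\overleftarrow e)$ by a deterministic top-down sweep, and finally read off the independence statement (iv) from the tree structure; the uniqueness of the RDE solution from Section \ref{sec:main-infinite-tree} enters only lightly.

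\textbf{Downward messages.} For $n\ge1$ let $T_\infty^{(n)}$ be the planted PWIT restricted to vertices of generation $\le n$, carrying the weights $\ell_\infty$. To each generation-$n$ vertex $w$ attach an independent seed $S_w^{(n)}$, distributed as $X_0$ if the edge $\{\parent(w),w\}$ is planted and as $Y_0$ otherwise, independent across $n$. Propagate inward: set $M_v^{(n)}=S_v^{(n)}$ for $\gen(v)=n$ and $M_v^{(n)}=\min_{w\text{ child of }v}\bigl(\ell_\infty(v,w)-M_w^{(n)}\bigr)$ for $\gen(v)<n$, so that $M_v^{(n)}$ plays the role of $X(\parent(v),v)$. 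An induction on $n-\gen(v)$, using that distinct subtrees carry independent weights and seeds, shows that $M_v^{(n)}$ has \emph{exactly} the law of $X_0$ (if $v$'s parent-edge is planted) or of $Y_0$ (otherwise) for every $n$ — this is precisely the statement that $(X_0,Y_0)$ solves \eqref{eq:disteqX}--\eqref{eq:disteqY}. Hence $\{\ell_\infty,(M_v^{(n)})_v\}_n$ is tight, and I would pass to a subsequential weak limit and, by Skorokhod's theorem, realize it as an almost-sure limit $(\ell_\infty,(M_v)_v)$ on a single probability space. Setting $X(\parent(v),v):=M_v$ then yields (ii) and (iii) for downward edges, while each $M_v$ is a measurable function of the edge weights strictly inside $T_\infty(v)$ (being an a.s.\ limit of such functions); and the recursion \eqref{eq:baleq} for downward edges follows by passing $M_v^{(n)}=\min_w(\ell_\infty(v,w)-M_w^{(n)})$ to the limit.

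\textbf{Upward messages.} Build these by a deterministic top-down sweep, using \eqref{eq:baleq} as a definition. At the root, for each child $j$ put $X(j,\root):=\min_{k\ne j,\ k\text{ child of }\root}\bigl(\ell_\infty(\root,k)-X(\root,k)\bigr)$, a minimum over already-constructed downward messages; inductively, the upward message out of a generation-$(t{+}1)$ vertex $v$ toward $u=\parent(v)$ is the minimum over $u$'s neighbours other than $v$ of $\ell_\infty(u,\cdot)$ minus the message leaving that neighbour toward $u$ — one upward message (out of $u$, already built, $u$ being one generation higher) together with the downward messages out of $u$'s other children. Property (i) for upward edges then holds by construction, and for (ii)--(iii) on upward edges I would invoke the self-similarity of the planted PWIT (as in Aldous, where it is phrased via involution invariance): the collection of weighted subtrees hanging off $u$ other than the one through $v$ is distributed, once planted/un-planted types are kept track of, like the children-configuration of a vertex of the appropriate type in a fresh planted PWIT, so $X(v,u)$ satisfies the same RDE as the downward message of a vertex of $v$'s type and hence has the claimed law.

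\textbf{Independence and the main obstacle.} For an edge $e=\{u,v\}$ with $u=\parent(v)$, the message $X(\overrightarrow e)=X(u,v)$ is a function of the edge weights strictly inside $T_\infty(v)$ and does not even use $\ell_\infty(e)$, whereas $X(\overleftarrow e)=X(v,u)$ is a function of the weights of all edges outside $T_\infty(v)$, again not using $\ell_\infty(e)$; since the weights of the planted PWIT are mutually independent, $X(\overrightarrow e)$ and $X(\overleftarrow e)$ are independent, giving (iv). I expect the crux to be the limiting step for the downward messages — showing that the countable minimum $\min_{w\text{ child of }v}(\ell_\infty(v,w)-M_w)$ commutes with the weak limit and that no mass escapes to $\pm\infty$. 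As in Aldous, this reduces to controlling the lower tails of $X_0$ and $Y_0$ so that the minimum is attained at a bounded child-index with high probability, which is where the regularity and (for $\lambda<4$) uniqueness of the fixed point, coming from Theorem \ref{thm:ode_unique_solution} and Lemma \ref{lmm:RDE_ODE_exp}, make the argument clean.
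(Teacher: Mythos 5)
Your finite-depth construction coincides with the paper's: seed independent copies of $X_0$/$Y_0$ on the downward edges at the boundary of the depth-$h$ subtree, sweep up to the root and back down using \eqref{eq:baleq}, and use the fixed-point property of \eqref{eq:disteqX}--\eqref{eq:disteqY} to see that the laws are exact at every level. Where you diverge is in the passage to the infinite tree. The paper notes that, again by the fixed-point property, the law of the depth-$h$ configuration $\mathcal{X}_h$ equals the marginal of $\mathcal{X}_{h+1}$ restricted to depth $h$, and then invokes the Kolmogorov consistency theorem; no limit of random variables is ever taken, and \eqref{eq:baleq} holds on the infinite object for free because at each vertex $v$ it is an event measurable with respect to the depth-$(\gen(v)+1)$ restriction and has probability one under every $\mathcal{X}_h$. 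Your route via tightness, a subsequential weak limit, and Skorokhod can be made to work, but it obliges you to prove the ``minimum attained at a bounded child index'' estimate that you yourself flag as the crux (which in turn uses $\int \bar F_Y < \infty$ from \eqref{eq:fbarx}); the consistency route sidesteps this entirely. Note also that uniqueness of the RDE solution plays no role here: the lemma is stated and proved for an arbitrary solution $(X_0,Y_0)$.

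There is one genuine error, in your justification of (iv): you assert that $M_v$ ``is a measurable function of the edge weights strictly inside $T_\infty(v)$ (being an a.s.\ limit of such functions).'' It is not a limit of such functions --- each $M_v^{(n)}$ depends on the boundary seeds $S_w^{(n)}$ as well as on the weights, and whether the limiting messages are determined by $\ell_\infty$ alone is exactly the endogeny question, which the paper explicitly leaves open in the discussion following Corollary~\ref{cor:Mopuniq}. Fortunately (iv) does not require this: $X(\overrightarrow{e})$ is measurable with respect to the weights \emph{and seeds} carried by $T_\infty(v)$, while $X(\overleftarrow{e})$ is measurable with respect to the weights and seeds carried by the complement of $T_\infty(v)$ (neither uses $\ell_\infty(e)$ itself); these two families are independent, and independence is preserved under the limit. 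Phrased that way --- which is how the paper argues, via disjoint subsets of the boundary seeds --- part (iv) goes through.
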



\begin{proof}
The idea is to construct these random variables on the subtree consisting of all edges up to a given depth $h$. We do this by initially ``seeding'' them on the downward-pointing edges at that depth, drawing their $X$ independently from the appropriate fixed-point distribution. We then use the message-passing algorithm given by~\eqref{eq:baleq} to propagate them through this subtree. As with belief propagation on a tree, this propagation consists of one sweep upward to the root, and then one sweep back downward toward the leaves. Finally, we use the Kolmogorov consistency theorem~\cite[p. 115, Theorem 6.16]{Kallenberg2002} to take the limit $h \to \infty$, extending the distribution on these finite-depth subtrees to $T_\infty$.
	
Formally, let $h \in \mathbb{N}_+$. Let $\overrightarrow{E}_\infty(h)$ and $\overleftarrow{E}_\infty(h)$ respectively denote the set of downward- and upward-directed edges at depth $h-1$, and let $\overleftrightarrow{E}_\infty(\leq h)$ denote the set of all directed edges up to depth $h$:
	\begin{align*}
\overrightarrow{E}_\infty(h) 
	&= \{\overrightarrow{e}=(v, \concat{v}{j} ): \gen(v) = h-1,\, \{v, \concat{v}{j} \} \in E_\infty \} ,\\
\overleftarrow{E}_\infty(h) 
	&= \{\overleftarrow{e}=(\concat{v}{j} , v): \gen(v) = h-1,\, \{v, \concat{v}{j} \} \in E_\infty \} ,\\
\overleftrightarrow{E}_\infty(\leq h) 
	&= \{\overleftrightarrow{e}=(v,w): \gen(v),\,\gen(w)  \leq h,\, \{v,w\} \in E_\infty \}.
	\end{align*}
In particular, $\overrightarrow{E}_\infty(h)$ is the set of downward-pointing edges at the leaves of the subtree of depth $h$, and $\overleftrightarrow{E}_\infty(\leq h)$ is the set of all edges, pointing in both directions, within that subtree (see Figure~\ref{fig:Einf}). Our goal is to define $X$ on $\overleftrightarrow{E}_\infty(\leq h)$.
	
	\begin{figure}[!t]
		\centering
		\begin{subfigure}[!t]{.45\linewidth}
			\begin{center}
				\includegraphics[width=\linewidth]{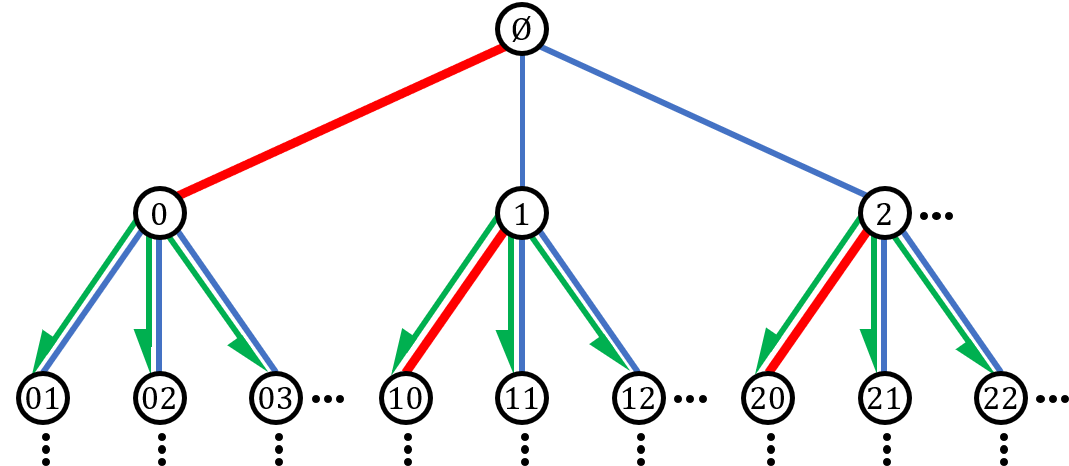}
				\caption*{Green arrows are the members of $\overrightarrow{E}_\infty(2)$.}
			\end{center}
		\end{subfigure}
		\hfill
		\begin{subfigure}[!t]{.45\linewidth}
			\begin{center}
				\includegraphics[width=\linewidth]{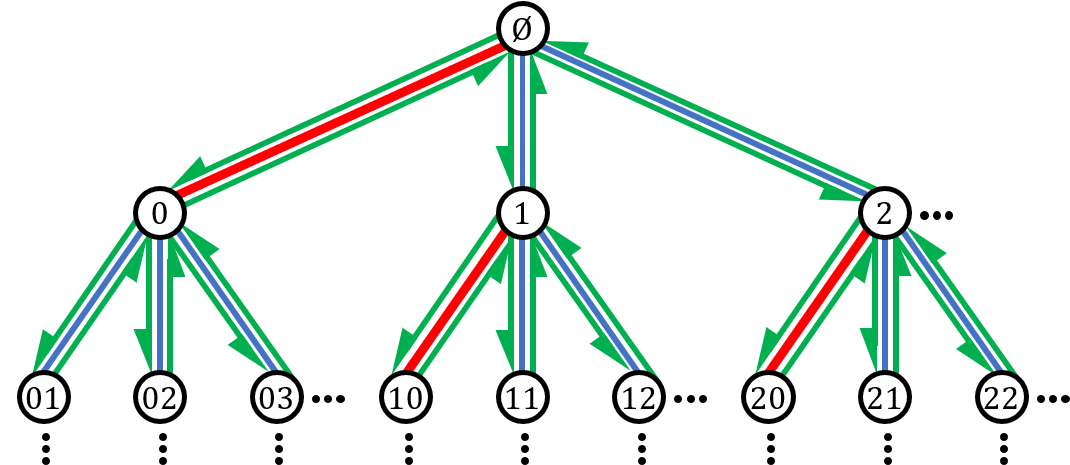}
				\caption*{Green arrows are the members of $\overleftrightarrow{E}_\infty(\leq 2)$. 
				 }
			\end{center}
		\end{subfigure}
		\caption{The sets $\protect\overrightarrow{E}_\infty(2)$ and $\protect\overleftrightarrow{E}_\infty(\leq 2)$. 
		Bold red edges are in $\Mplanted_\infty$ and solid blue edges are in $E_\infty \setminus \Mplanted_\infty$.}
		\label{fig:Einf}
	\end{figure}

To initialize the process, for each $\overrightarrow{e} \in \overrightarrow{E}_\infty(h)$ we assign the random variable $X(\overrightarrow{e})$ by drawing independently from $X_0$ if $e \in \Mplanted_\infty$ and from $Y_0$ if $e \notin \Mplanted_\infty$. 
We then use~\eqref{eq:baleq} recursively to define $\{X(\overrightarrow{e}):\overrightarrow{e} \in \overrightarrow{E}_\infty(k) \}$ for $k\in\{h-1,h-2,\ldots,1\}$. Once we have $X(\overrightarrow{e})$ for all edges incident to the root, we use~\eqref{eq:baleq} to obtain $X(\overleftarrow{e})$ for these edges, i.e., for $\overleftarrow{E}_\infty(1)$. We then move back down the tree, using~\eqref{eq:baleq} at each level to define $\{X(\overleftarrow{e}): \overleftarrow{e} \in \overleftarrow{E}_\infty(k) \}$ for $k\in\{1,2,\ldots,h\}$.

Parts (ii) and (iii) of the lemma follow from the fact that $(X_0,Y_0)$ are fixed points of~\eqref{eq:disteqX}--\eqref{eq:disteqY}. Part (iv) follows from the fact that, for all ${e} \in {E}_\infty(\leq h)$, $X(\overrightarrow{e})$ and $X(\overleftarrow{e})$ are determined by disjoint subsets of $\{X(\overrightarrow{e}): \overrightarrow{e} \in \overrightarrow{E}_\infty(h) \}$ and hence are independent. 

Finally, we extend these random variables to the entire planted PWIT. For each finite depth $h$, the above construction gives a collection of random variables 
	\begin{align*}
	\mathcal{X}_h = \{ \big( \ell_\infty(\overleftrightarrow{e}), X(\overleftrightarrow{e}) \big) : \overleftrightarrow{e}\in \overleftrightarrow{E}_\infty(\leq h) 
	\} \, , 
	\end{align*}
that satisfies (i), (ii), (iii), and (iv). Moreover, the marginal distribution of $\mathcal{X}_{h+1}$ restricted to 
depth $h$ is the same as the distribution of $\mathcal{X}_h$. Now, by the Kolmogorov consistency theorem, there exists a collection of random variables $\mathcal{X}_\infty$ that satisfies (i), (ii), (iii), and (iv), such that the marginal distribution of $\mathcal{X}_\infty$ restricted to depth $h$ is the same as the distribution of $\mathcal{X}_h$.
\end{proof}

One important implication of Lemma~\ref{lem:const} is the following corollary. 

\begin{corollary}
\label{cor:iidX} 
Consider the collection of random variables $\mathcal{X}_\infty$ given by Lemma \ref{lem:const}. 
\begin{enumerate}[label=(\roman*)]
\item Let $e = \{\root,0\}$ denote the planted edge incident to the root. Then $X(\overrightarrow{e})$ and $X(\overleftarrow{e})$ are independent and identically distributed as $X_0$, and are independent of $\ell_\infty(e)$.
\item Suppose we condition on the existence of an un-planted edge $e=\{\root, i\}$ incident to $\root$ with $\ell_\infty(e) = \zeta$. Then $X(\overrightarrow{e})$ and $X(\overleftarrow{e})$ are independent and identically distributed as $Y_0$. 
\end{enumerate}
\end{corollary}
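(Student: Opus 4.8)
The plan is to derive both statements directly from Lemma~\ref{lem:const} together with the explicit structure of the planted PWIT, the key observation being that in each case the two messages carried by the edge $e$ are measurable functions of disjoint collections of the underlying independent data, neither of which contains the weight $\ell_\infty(e)$ of that edge itself.

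For part~(i), let $e = \{\root, 0\}$ be the planted edge at the root. By the balance equation~\eqref{eq:baleq}, $X(\overrightarrow{e}) = X(\root,0)$ is obtained by minimizing $\ell_\infty(0,w) - X(0,w)$ over the children $w \neq \root$ of vertex $0$, so in the finite-depth construction of Lemma~\ref{lem:const} it is a function only of the seed variables and edge weights lying in the subtree $T_\infty(0)$ rooted at $0$; similarly $X(\overleftarrow{e}) = X(0,\root)$ is obtained by minimizing $\ell_\infty(\root,w) - X(\root,w)$ over the un-planted children $w \ge 1$ of the root, hence is a function only of the weights $\ell_\infty(\root,w)$ for $w\ge1$ and of the seeds and edge weights in the subtrees $T_\infty(w)$ for $w\ge1$. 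These two collections are disjoint and neither contains $\ell_\infty(\root,0)$; since all seeds and all edge weights of the planted PWIT are mutually independent, the triple $\bigl(\ell_\infty(e),\,X(\overrightarrow{e}),\,X(\overleftarrow{e})\bigr)$ is mutually independent. That each message is distributed as $X_0$ is parts~(ii) and~(iv) of Lemma~\ref{lem:const}. To pass from the depth-$h$ construction to $\mathcal{X}_\infty$ there is nothing to do: the joint law of $\bigl(\ell_\infty(e),X(\overrightarrow{e}),X(\overleftarrow{e})\bigr)$ is already pinned down at depth $h=1$ and is preserved by the Kolmogorov-consistent extension.

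For part~(ii) the only new point is that the un-planted children of the root are governed by a rate-$1$ Poisson process, so the phrase ``conditioning on the existence of an un-planted edge at the root with $\ell_\infty(e) = \zeta$'' must be read in the Palm sense. I would formalize this by viewing the un-planted children of the root, each equipped with the independent subtree and messages hanging off it, as an i.i.d.-marked Poisson point process on $\rplus$; by Slivnyak's theorem, conditioning on a point at $\zeta$ leaves the remaining marked points a fresh i.i.d.-marked rate-$1$ Poisson process, independent of $\zeta$. Then, exactly as in part~(i), $X(\overrightarrow{e}) = X(\root,i)$ depends only on the (``un-planted child'' type) subtree $T_\infty(i)$ below the new point, while $X(\overleftarrow{e}) = X(i,\root)$ depends only on the root's planted child together with the remaining un-planted children and their subtrees; neither depends on $\zeta$, the two are built from disjoint independent data, and by Lemma~\ref{lem:const}(iii)--(iv) each is distributed as $Y_0$. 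Equivalently, one can bypass Lemma~\ref{lem:const} and recompute the conditional laws directly: $X(\root,i)$ then satisfies~\eqref{eq:disteqY} and $X(i,\root)$ satisfies~\eqref{eq:disteqX}, whose right-hand sides both have law $Y_0$ at the fixed point.

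The only genuinely delicate point --- and where I would spend most of the write-up --- is making the Palm conditioning in part~(ii) rigorous: one must check that enumerating the un-planted children of the root by increasing edge weight does not secretly couple the weight $\zeta_i$ of the $i$-th such edge to the message $X(i,\root)$. This is handled by the marked-Poisson/Slivnyak argument above (equivalently, by realizing the underlying point process in an exchangeable, unordered fashion so that the relevant messages are manifestly measurable with respect to data disjoint from $\ell_\infty(e)$). Everything else is immediate bookkeeping from Lemma~\ref{lem:const} and the independence structure of the planted PWIT.
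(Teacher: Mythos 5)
Your proof is correct and follows essentially the same route as the paper's: part~(i) is read off from the construction in Lemma~\ref{lem:const} via the disjointness of the data determining $X(\overrightarrow{e})$ and $X(\overleftarrow{e})$, and part~(ii) rests on the Palm/Slivnyak property that conditioning a rate-$1$ Poisson process on an arrival at $\zeta$ leaves the remaining arrivals distributed as a fresh rate-$1$ Poisson process. You also correctly flag the same subtlety the paper emphasizes, namely that one must condition on the weight $\zeta$ rather than on the sorted index $i$.
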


\begin{proof}
Part (i) follows immediately from the construction in Lemma \ref{lem:const}. Part (ii) follows from the fact that if we condition on the existence of a Poisson arrival at time $\zeta$, the other arrivals are jointly distributed according to the same Poisson process. There is a subtlety here in that it is important to condition on $\zeta$ but not on $i$, since knowing where $\zeta$ is in the sorted order of the un-planted weights affects their distribution. On the other hand, if we fix an edge $e$ before doing this sorting, then $X(\overrightarrow{e})$ and $X(\overleftarrow{e})$ are independent of $\ell_\infty(e)$ for both planted and un-planted edges, and we will use this fact below.
\end{proof}

Our next task is to transform the above construction into a random matching $(\ell_\infty,\Minfopt)$. There are two ways we might do this. One would be to define a function on $V_\infty$ that yields a proposed partner $w$ for each vertex $v$. As in~\eqref{eq:baleq}, matching $v$ with $w$ would cost the weight of the edge between them, but remove the cost of having $w$ pair with one of its other neighbors. Minimizing this total cost over all neighbors $w$ (rather than over all but one as in the message-passing algorithm) gives
\begin{equation}
\label{eq:Minfdef}
\Minfopt(v) = \argmin_{w:\{v,w\} \in E_\infty} \left( \ell_\infty(v,w) - X(v,w) \right) \, .
\end{equation}
Since each edge weight $\ell_\infty(v,w)$ is drawn from a continuous distribution, and Corollary~\ref{cor:iidX} implies that it is independent of $X(v,w)$, with probability $1$ the elements of the set we are minimizing over are distinct and this $\argmin$ is well-defined.

Alternately, we could define a mark function on $E_\infty$ as described above, namely the indicator function for the event that an edge $e$ is in the matching. Including $e$ in the matching makes sense if $\ell_\infty(e)$ is less than the cost of matching each of its endpoints to one of their other neighbors. So (abusing notation) this suggests
\begin{align}
\label{eq:Minfalter}
\Minfopt(e) = \begin{cases} 
1 & \text{if } \ell_\infty(e) < X(\overrightarrow{e}) + X(\overleftarrow{e}) \\
0 & \text{otherwise.}
\end{cases}
\end{align}
A priori, there is no guarantee that either of these functions is a matching, or that they agree with each other. The following lemma (which is a reformulation of~\cite[Lemma 5.9]{Aldous2004}) gives the good news that they are, and they do. 


\begin{lemma}
\label{lem:match}
The following are equivalent:
\begin{enumerate}
\item $u = \argmin_{w:\{v,w\} \in E_\infty} \left( \ell_\infty(v,w) - X(v,w) \right)$
\item $v = \argmin_{w:\{u,w\} \in E_\infty} \left( \ell_\infty(u,w) - X(u,w) \right)$
\item $\ell_\infty(u,v) < X(u,v) + X(v,u)$.
\end{enumerate}
Therefore, $u=\Minfopt(v)$ if and only if $v=\Minfopt(u)$ (with $\Minfopt$ defined as in~\eqref{eq:Minfdef}), and these are equivalent to $\Minfopt(u,v)=1$ (with $\Minfopt$ defined as in~\eqref{eq:Minfalter}).
\end{lemma}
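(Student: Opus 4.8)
The plan is to derive the three-way equivalence directly from the fixed-point (``balance'') identity \eqref{eq:baleq} for the messages $X$, using condition (3) as the symmetric hub that links (1) and (2), and then to read off the two statements about $\Minfopt$ as immediate corollaries. First I would record the a.s.\ regularity already flagged after \eqref{eq:Minfdef}: with probability one, for every vertex $v$ the map $w \mapsto \ell_\infty(v,w) - X(v,w)$ over the neighbours $w$ of $v$ attains its minimum at a \emph{unique} vertex, and likewise for the restricted minimum over $w \neq u$ appearing in \eqref{eq:baleq}. This holds because the edge weights are continuous and, by Corollary \ref{cor:iidX}, independent of the corresponding messages (so there are no ties a.s.), while the un-planted weights incident to $v$ are the increasing Poisson arrivals $\zeta_1 < \zeta_2 < \cdots \to \infty$ and the messages on those edges are i.i.d.\ copies of $Y_0$ supported on $\mathbb{R}$; hence only finitely many of the quantities $\ell_\infty(v,w) - X(v,w)$ lie below any threshold, so each relevant infimum is in fact a minimum over a finite set.

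Next I would prove (1) $\Leftrightarrow$ (3). If (1) holds then $\ell_\infty(v,u) - X(v,u) < \ell_\infty(v,w) - X(v,w)$ for every neighbour $w \neq u$ of $v$; minimizing over these $w$ and applying \eqref{eq:baleq} (the minimum being attained by the preliminary remark) gives $\ell_\infty(v,u) - X(v,u) < X(u,v)$, which, after rearranging and using $\ell_\infty(u,v) = \ell_\infty(v,u)$, is exactly (3). Conversely, if (3) holds then
\[
\ell_\infty(v,u) - X(v,u) < X(u,v) = \min_{w \neq u}\bigl( \ell_\infty(v,w) - X(v,w) \bigr) \le \ell_\infty(v,w) - X(v,w)
\]
for every neighbour $w \neq u$ of $v$, so $u$ strictly beats every competing neighbour and is therefore the unique argmin, i.e.\ (1).

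For (2) $\Leftrightarrow$ (3) I would simply observe that (3) is symmetric in $u$ and $v$ — since $\ell_\infty(u,v)=\ell_\infty(v,u)$ and $X(u,v)+X(v,u)=X(v,u)+X(u,v)$ — while (2) is literally (1) with the roles of $u$ and $v$ interchanged; applying the previous paragraph with the two vertices swapped closes the loop, so (1), (2), (3) are all equivalent. The two stated consequences then follow at once. By \eqref{eq:Minfdef}, condition (1) says $u = \Minfopt(v)$ and condition (2) says $v = \Minfopt(u)$, so their equivalence is precisely the claimed symmetry of $\Minfopt$; since each $\Minfopt(v)$ is a single well-defined vertex, this symmetry makes $v \mapsto \Minfopt(v)$ an involution on $V_\infty$, hence its graph a genuine matching. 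And writing $e = \{u,v\}$ with orientations $\overrightarrow{e},\overleftarrow{e}$, we have $X(\overrightarrow{e}) + X(\overleftarrow{e}) = X(u,v) + X(v,u)$ and $\ell_\infty(e) = \ell_\infty(u,v)$, so condition (3) is exactly ``$\Minfopt(e)=1$'' in the sense of \eqref{eq:Minfalter}; its equivalence with (1) identifies the mark-function and partner-function descriptions of $\Minfopt$.

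I expect no serious obstacle here: essentially all of the ``real content'' — that deleting the edge $\{u,v\}$ disconnects the tree, so that $X(u,v)$ is the intrinsic cost of $v$'s side and \eqref{eq:baleq} holds — is already packaged into Lemma \ref{lem:const}. The only points needing care are bookkeeping (tracking which orientation of each edge carries which message) and verifying that the minima in play are genuinely attained, so that strict inequalities propagate without degenerating into non-strict ones; both are dispatched by the preliminary regularity observation above.
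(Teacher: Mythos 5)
Your argument is correct and is essentially the paper's own proof: both hinge on the observation that, via the balance equation \eqref{eq:baleq}, condition (1) is equivalent to the strict inequality $\ell_\infty(v,u) - X(v,u) < X(u,v)$, which rearranges to (3), and then (2) follows because (3) is symmetric in $u$ and $v$. The extra regularity remarks you include (uniqueness of the argmin and attainment of the minima) are the same facts the paper invokes just after \eqref{eq:Minfdef}, so nothing further is needed.
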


\begin{proof}
By~\eqref{eq:baleq}, condition (1) holds if and only if
\[
\ell_\infty(v,u) - X(v,u) < \argmin_{w:\{v,w\} \in E_\infty , w \ne u} \left( \ell_\infty(v,w) - X(v,w) \right) = X(u,v) \, . 
\]
Rearranging gives (3), so (1) and (3) are equivalent. Since (3) is symmetric with respect to swapping $u$ and $v$, (2) and (3) are also equivalent.
\end{proof}
%

Finally, given the symmetric dependency of $\Minfopt(e)$ on the values of $X(\overrightarrow{e})$ and $X(\overleftarrow{e})$, it is intuitive that the random matching $(\ell_\infty,\Minfopt)$ is involution invariant. The following lemma corresponds to~\cite[Lemma 24]{Aldous2001} in the un-planted case, but defining the involutions in a way that preserves the (un)planted edges takes a little more work. We give the proof in Appendix~\ref{app:proofofinvol}.
\begin{proposition}
\label{prop:invol}
	The random matching $(\ell_\infty,\Minfopt)$ is involution invariant.
\end{proposition}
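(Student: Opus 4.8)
The plan is to follow Aldous' strategy for the un-planted PWIT (\cite[Lemma 24]{Aldous2001}, \cite[Lemma 5.9]{Aldous2004}), but carefully track the bookkeeping needed because our tree has two distinguished edge types. By Definition~\ref{def:spinvar}, I need to show that the measure $\widetilde\mu$ on $\Gstarstar$ induced by $(\ell_\infty,\Minfopt)$ via~\eqref{eq:involinvardef} is invariant under the root-swap involution $\iota$. Since $\widetilde\mu$ is a $\sigma$-finite measure obtained by summing over the neighbors $v$ of $\root$, it suffices to check invariance separately for the contribution of the planted edge $\{\root,\mathbf 0\}$ and for the contributions of the un-planted edges $\{\root,i\}$, $i\ge 1$. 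This is where the case split lives.

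First I would set up the ``surgery'' picture. Fix a neighbor $v$ of $\root$ and consider the doubly-rooted network $[\Nso(\root,v)]$. Deleting the edge $\{\root,v\}$ splits $T_\infty$ into two subtrees: the subtree $T_\infty(v)$ hanging below $v$ (which, when re-rooted at $v$, is itself a planted PWIT if $v$ is an un-planted child — since $v$ then has its own planted child — and is a ``planted-child-type'' PWIT if $v=\mathbf 0$), and the complementary piece containing $\root$. The key structural observation, exactly as in Aldous, is that $\Minfopt$ restricted to each side of the cut depends only on that side together with the single message crossing the cut: by~\eqref{eq:Minfdef} and~\eqref{eq:baleq}, whether $e'=\{x,y\}$ with $x,y$ on the $\root$-side is in $\Minfopt$ is determined by $\ell_\infty$ and the $X$-values on that side, which in turn are determined by the seed messages on that side plus (if the path from $e'$ to $\root$ passes through $v$) the single message $X(v,\root)$ feeding in from the $v$-side; symmetrically for the $v$-side and $X(\root,v)$. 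By Lemma~\ref{lem:const}(iv) and Corollary~\ref{cor:iidX}, $X(\overrightarrow e)$ and $X(\overleftarrow e)$ for $e=\{\root,v\}$ are i.i.d.\ (distributed as $X_0$ if $v=\mathbf 0$, as $Y_0$ if $v\ge 1$) and independent of $\ell_\infty(e)$. Hence, conditionally on the two subtrees and the crossing messages, the whole configuration $(\ell_\infty,\Minfopt)$ near $\{\root,v\}$ has a product structure that is manifestly symmetric under exchanging the labels ``$\root$-side'' and ``$v$-side,'' \emph{provided} the two subtree-laws are identical or are swapped correctly.

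Now the case analysis. For the planted edge $v=\mathbf 0$: re-rooting the $\mathbf 0$-side at $\mathbf 0$ gives a tree whose root has a planted \emph{parent} and only un-planted children, while the $\root$-side re-rooted at $\root$ has a planted \emph{child}. So the two sides are \emph{not} the same object; but the involution $\iota$ swaps precisely these two types, the crossing message $X_0$ has the same law on both directions, $\eta\sim\exp(\lambda)$ is symmetric, and one checks that the law of ``$\root$-side rooted at $\root$ with an incoming $X_0$-message'' equals the law of ``$\mathbf 0$-side rooted at $\mathbf 0$ with an incoming $X_0$-message'' — this is exactly the content of the RDE~\eqref{eq:disteqX}--\eqref{eq:disteqY} being consistent, i.e.\ the two pieces are the two halves of a planted PWIT glued at a planted edge. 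Invariance of this contribution follows. For an un-planted edge $\{\root,i\}$: here both the $\root$-side (rooted at $\root$) and the $i$-side (rooted at $i$) are planted PWITs of the \emph{same} type — each root has a planted child and un-planted children — the crossing message is $Y_0$ in both directions, and the summation over all un-planted children of $\root$ together with the Poisson structure (Corollary~\ref{cor:iidX}(ii)) makes the $\sigma$-finite sum exactly symmetric under the swap. This is the direct analogue of Aldous' computation, and I would cite~\cite[Lemma 24]{Aldous2001} for the mechanics while supplying the Poisson-reweighting detail.

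The main obstacle, and the place I expect to spend the most care, is the bookkeeping in the un-planted case: because $\widetilde\mu$ weights each singly-rooted network by its number of neighbors and we then integrate, one must verify that the ``change of variables'' identifying the law of $[\Nso(\root,i)]$ with that of $[\Nso(i,\root)]$ correctly absorbs the Poisson intensity — the arrival $\zeta=\ell_\infty(\root,i)$ plays a dual role (it is both a crossing weight and part of the sorted order of un-planted children on each side), and the clean statement is that, \emph{after} forgetting the sorted labels and keeping only the point process of weights, the two sides are exchangeable. Getting this exchangeability statement precise — rather than hand-waving ``by symmetry'' — is the crux; everything else is a faithful adaptation of~\cite{Aldous2001,Aldous2004}. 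I would relegate the detailed verification to Appendix~\ref{app:proofofinvol} as promised.
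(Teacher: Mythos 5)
Your proposal is correct and follows essentially the same route as the paper's proof in Appendix~\ref{app:proofofinvol}: reduce to invariance of the joint law of the weights and messages across the cut edge (using that $\Minfopt$ is a deterministic, relabeling-equivariant function of these via Lemma~\ref{lem:match} and~\eqref{eq:baleq}), split into the planted-edge and un-planted-edge cases, use the i.i.d.\ crossing messages from Lemma~\ref{lem:const}(iv) and Corollary~\ref{cor:iidX}, and resolve the un-planted case by the Poisson/counting-measure identity~\eqref{eq:uniform-i} (the paper's computation $\sum_{l>0}\prob[A_l]=1$), which is exactly the exchangeability point you flag as the crux.
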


\subsection{Optimality of $\Minfopt$}
Now that we have constructed $(\ell_\infty,\Minfopt)$, it is time to prove that $(\ell_\infty,\Minfopt)$ is {\em the} minimum involution invariant random matching. The steps we take to prove this claim are mostly the same as in~\cite[Sections 4.4 and 4.5]{Aldous2001}, but a few details differ in the planted model, so for the sake of completeness and consistency with our notation we give a self-contained proof.

As the first step, we are going to prove that $(\ell_\infty,\Minfopt)$ is a minimum involution invariant matching: that is, it achieves the minimum expected length at the root. We follow the discussion at the beginning of Section 4.5 in~\cite{Aldous2001}.

\begin{proposition}
\label{prop:ineq+eqq}
Let $(\ell_\infty,\M'_\infty)$ be an involution invariant random matching on the planted PWIT. Then $\expect[ \ell_\infty(\root,\M'_\infty(\root))]\geq \expect[ \ell_\infty(\root,\Minfopt(\root))]$.
\end{proposition}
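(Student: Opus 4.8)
The proof follows the opening of Section~4.5 of~\cite{Aldous2001}: we use the message function $X$ produced by Lemma~\ref{lem:const} as a system of ``dual potentials'' and exploit involution invariance to turn a pointwise comparison into one about expected costs at the root. We may assume $\expect[\ell_\infty(\root,\M'_\infty(\root))]<\infty$, as otherwise there is nothing to prove.

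The first ingredient is a pointwise inequality extracted from the balance equations. Fix a realization of $(\ell_\infty,\M'_\infty)$ and of the messages $X$. By~\eqref{eq:baleq}, for every directed edge $(u,v)$ and every neighbor $w\neq u$ of $v$ we have $\ell_\infty(v,w)-X(v,w)\ge X(u,v)$, with equality exactly when $w=\Minfopt(v)\neq u$ (Lemma~\ref{lem:match}). Reading this at the two endpoints of an edge $e=\{v,w\}$ gives, for \emph{any} matching $\M$ on $T_\infty$, a lower bound on $\ell_\infty(v,\M(v))$ in terms of $X(\overrightarrow e)$ and $X(\overleftarrow e)$ along the edges of $\M$, and the characterization~\eqref{eq:Minfalter} of $\Minfopt$ shows that all such bounds are saturated simultaneously when $\M=\Minfopt$. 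Telescoping these inequalities over a finite neighborhood of the root is exactly what the local rule~\eqref{eq:Minfdef}--\eqref{eq:Minfalter} is built to make work.

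Since the telescoping cannot be performed directly on the infinite tree, I would truncate. Take the subtree $T^{(h,K)}$ of generations $\le h$ in which each vertex keeps only its planted child and its $K$ lightest un-planted children, and consider matchings on $T^{(h,K)}$ that cover the root and all internal vertices. On this finite tree the minimum-cost such matching $\M^{(h,K)}$ is well defined; a backward sweep of~\eqref{eq:baleq} from the leaves followed by a forward sweep produces truncated messages for which $\M^{(h,K)}$ is precisely the local-rule matching, so the pointwise bound above yields
\[
\expect\big[\ell^{(h,K)}(\root,\M^{(h,K)}(\root))\big]\ \le\ \expect\big[\ell_\infty(\root,\M'_\infty(\root))\big],
\]
because the restriction of $\M'_\infty$ to $T^{(h,K)}$ is an admissible matching whenever $\M'_\infty(\root)$ lies inside it, an event of probability $\to1$ as $K\to\infty$ by finiteness of the root cost. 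Letting $h,K\to\infty$, the truncated messages converge in law to the unique fixed point of the RDEs~\eqref{eq:disteqX}--\eqref{eq:disteqY} (uniqueness for $\lambda<4$ being established in Section~\ref{sec:main-infinite-tree}), so $\M^{(h,K)}\to\Minfopt$ in the relevant sense, and dominated convergence — with an integrable envelope built from the $\exp(\lambda)$ and $X_0$ tails together with Corollary~\ref{cor:iidX} — gives $\expect[\ell^{(h,K)}(\root,\M^{(h,K)}(\root))]\to\expect[\ell_\infty(\root,\Minfopt(\root))]$. Combining the two estimates proves the proposition.

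The step I expect to be the main obstacle is this passage to the limit. In the untruncated language, $\M'_\infty\triangle\Minfopt$ is a vertex-disjoint union of alternating paths in the tree $T_\infty$; finite paths contribute nonnegatively to $\expect[\ell_\infty(\root,\M'_\infty(\root))]-\expect[\ell_\infty(\root,\Minfopt(\root))]$ by the telescoping above, but the one- or two-sided infinite paths do not telescope and must be shown to be harmless. Involution invariance of \emph{both} matchings is what forces this: along such a path the change in cost is a conserved current whose expected net value across the edge at the root vanishes by~\eqref{eq:unimod}. Making this rigorous — controlling the truncation boundary terms, the convergence of the truncated messages to $X_0$, and the uniform integrability needed to interchange limit and expectation — is the technical heart; the rest is bookkeeping that parallels~\cite[Section~4.5]{Aldous2001}.
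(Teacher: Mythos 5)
Your proposal is a plan rather than a proof: the step you yourself flag as ``the technical heart'' --- controlling the truncation boundary terms and the non-telescoping infinite alternating paths --- is exactly where the content of the proposition lives, and it is left undone. Moreover, the truncation scheme has a structural problem before you even reach the limit. On the finite rooted tree $T^{(h,K)}$ your matching $\M^{(h,K)}$ minimizes the \emph{total} cost, and total-cost optimality does not control the cost of the single edge at the root: a finite rooted tree is not vertex-transitive, so there is no a priori relation between $\expect[\ell^{(h,K)}(\root,\M^{(h,K)}(\root))]$ and a per-vertex average. In addition, the restriction of $\M'_\infty$ to $T^{(h,K)}$ is generally not an admissible competitor, since internal vertices near the truncation boundary may be matched by $\M'_\infty$ to vertices outside $T^{(h,K)}$; requiring only that $\M'_\infty(\root)$ lie inside does not repair this. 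So the claimed inequality $\expect[\ell^{(h,K)}(\root,\M^{(h,K)}(\root))]\le\expect[\ell_\infty(\root,\M'_\infty(\root))]$ is not established.

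The paper's proof avoids truncation and telescoping entirely. Conditioned on the event $A$ that the two matchings disagree at the root, one looks only at the three consecutive vertices $v_{-1}=\M'_\infty(\root)$, $v_0=\root$, $v_1=\Minfopt(\root)$ on the doubly-infinite alternating path. The balance equation \eqref{eq:baleq} together with the optimality of $v_1$ gives the identity $\ell_\infty(v_0,v_1)=X(v_0,v_1)+X(v_{-1},v_0)$ and the inequality $X(v_1,v_0)\le\ell_\infty(v_0,v_{-1})-X(v_0,v_{-1})$. The expected cost difference at the root is then $\expect[(\ell_\infty(v_0,v_{-1})-X(v_0,v_1)-X(v_{-1},v_0))\,\one_A]$, and the single nontrivial step is that involution invariance of \emph{both} matchings --- applied to the swaps $v_0\leftrightarrow v_1$ and $v_0\leftrightarrow v_{-1}$, which preserve the conditioning on $A$ --- lets one replace $X(v_0,v_1)$ by $X(v_1,v_0)$ and $X(v_{-1},v_0)$ by $X(v_0,v_{-1})$ inside the expectation, after which the integrand is pointwise nonnegative. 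That direction-swapping of the two messages at the root is the idea missing from your write-up; even if you pursued the truncation route you would still need an argument of exactly this kind to relate the root cost to a quantity you can bound, at which point the truncation is superfluous.
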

\begin{proof}

Note that in addition to depending on the edge weights $\ell_\infty$, $\M'_\infty$ might also have additional randomness. However, we can always couple $(\ell_\infty,\M'_\infty)$ and $(\ell_\infty,\Minfopt)$ so that if we condition on $\ell_\infty$ then $\M'_\infty$ and $\Minfopt$ are independent. Let $A$ be the event that $\M'_\infty(\root) \neq \Minfopt(\root)$, and assume without loss of generality that $\prob[A] > 0$.
	
Conditioned on $A$, there is a doubly-infinite alternating path that passes through the root $\root$, alternating between edges in $\M'_\infty$ and $\Minfopt$. That is to say, there is a doubly-infinite sequence of distinct vertices $\cdots,v_{-2},v_{-1},v_0,v_1,v_2,\cdots$ where $v_0 = \root$, $v_1 = \Minfopt(\root)$, and $v_{-1} = \M'_\infty(\root)$, and where for all even integers $m$ we have $\Minfopt(v_m) = v_{m+1}$ and $\M'_\infty(v_m) = v_{m-1}$.

By the construction of $\Minfopt$, we know that $v_1$ achieves the minimum in Equation~\eqref{eq:Minfdef}:
\begin{equation}
\label{eq:cris1}
\ell_\infty(v_0,v_1) - X(v_0,v_1) = \min_{w:\{v_0,w\} \in E_\infty} \left( \ell_\infty(v_0,w) - X(v_0,w) \right) \, . 
\end{equation}
We also have the message-passing equation~\eqref{eq:baleq} for $X(v_{-1},v_0)$, 
\begin{equation}
\label{eq:cris2}
X(v_{-1},v_0) = \min_{w: \{v_0,w\} \in E_\infty, w \neq v_{-1}} ( \ell_\infty(v_0,w) - X(v_0,w) ) \, .
\end{equation}
The right-hand sides of~\eqref{eq:cris1} and~\eqref{eq:cris2} are the same except that $v_{-1}$ is excluded in~\eqref{eq:cris2}. But since the minimum is achieved by $v_1$, excluding $v_{-1}$ makes no difference, and the right-hand sides are equal. Rearranging gives
\begin{equation}
\label{eq:cris3}
\ell_\infty(v_0,v_1) = X(v_0,v_1) + X(v_{-1},v_0) \, . 
\end{equation}
On the other hand, \eqref{eq:baleq} also implies $X(u,v) \le \ell_\infty(v,w) - X(v,w)$ for any $u,v,w$ where $u$ and $w$ are distinct neighbors of $v$, and in particular
\begin{equation}
\label{eq:cris-ineq}
X(v_1,v_0) \le \ell_\infty(v_0,v_{-1}) - X(v_0,v_{-1}) \, . 
\end{equation}

Now, using~\eqref{eq:cris3}, the expected difference in the length at the root is 
\begin{align}
\expect[ \left( \ell_\infty(\root,\M'_\infty(\root)) - \ell_\infty(\root,\Minfopt(\root)) \right) ]
&= \expect[( \ell_\infty(v_0,v_{-1}) - \ell_\infty(v_0,v_1) ) \,\one_A ] \nonumber \\
&= \expect[( \ell_\infty(v_0,v_{-1}) - X(v_0,v_1) - X(v_{-1},v_0) ) \,\one_A ] \, . 
\label{eq:cris-align}
\end{align}
Now we use the fact that $\Minfopt$ and $\M'_\infty$ are both involution invariant. There is a subtlety here in that conditioning on $A$ breaks involution invariance, since it requires $\Minfopt$ and $\M'_\infty$ to differ at the root specifically. However, the involutions that swap $v_0$ with $v_1$ or with $v_{-1}$ maintain this conditioning, since $\Minfopt$ and $\M'_\infty$ differ at these vertices as well. It follows that $X(v_0,v_1)$ and $X(v_1,v_0)$ have the same conditional distribution and hence the same conditional expectation, and similarly for $X(v_0,v_{-1})$ and $X(v_{-1},v_0)$. 
Then~\eqref{eq:cris-align} becomes
\begin{equation}
\label{eq:cris-finish}
\expect[ \left( \ell_\infty(\root,\M'_\infty(\root)) - \ell_\infty(\root,\Minfopt(\root)) \right) ]
= \expect[ ( \ell_\infty(v_0,v_{-1}) - X(v_1,v_0) - X(v_0,v_{-1}) ) \,\one_A ]  \, ,
\end{equation}
which is greater than or equal to zero by~\eqref{eq:cris-ineq}.
\end{proof}

Even given Proposition~\ref{prop:ineq+eqq}, it is still possible a priori that there might be a random involution invariant matching $(\ell_\infty,\M'_\infty)$ with the same expected length at the root as $(\ell_\infty,\Minfopt)$. 
If we were simply trying to calculate the expected length of the minimum matching, this would not be an issue. But our object is the overlap, not the length. If there are two minimal matchings with the same length but different overlap, it would not be clear which is the weak limit of the minimum matching on $K_{n,n}$. 

Happily, we can follow a path similar to~\cite[Section 4.4 and 4.5]{Aldous2001} to show that $\Minfopt$ is unique, making the inequality in Proposition~\ref{prop:ineq+eqq} strict. The following is essentially Proposition 18 of~\cite{Aldous2001}.


\begin{proposition}
\label{prop:ineq}
Let $(\ell_\infty,\M'_\infty)$ be an involution invariant random matching on the planted PWIT. If $\prob[\M'_\infty(\root)\! \neq \!\Minfopt(\root)] \!> \!0$ then $\expect[ \ell_\infty(\root,\M'_\infty(\root))] \!>\! \expect[\ell_\infty(\root,\Minfopt(\root))]$.
\end{proposition}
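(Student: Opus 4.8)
The plan is to build directly on the computation carried out inside the proof of Proposition~\ref{prop:ineq+eqq}. Write $A = \{\M'_\infty(\root)\neq\Minfopt(\root)\}$, and on $A$ let $(\dots,v_{-1},v_0=\root,v_1,\dots)$ be the doubly-infinite alternating path through the root, with $v_1=\Minfopt(\root)$ and $v_{-1}=\M'_\infty(\root)$. Equations~\eqref{eq:cris-finish} and~\eqref{eq:cris-ineq} give
\begin{gather*}
\expect\!\left[ \ell_\infty(\root,\M'_\infty(\root)) - \ell_\infty(\root,\Minfopt(\root)) \right] = \expect\!\left[ D\,\one_A \right], \\
D := \ell_\infty(v_0,v_{-1}) - X(v_1,v_0) - X(v_0,v_{-1}) \ge 0 \quad\text{on } A .
\end{gather*}
So it suffices to prove $\expect[D\,\one_A]>0$ whenever $\prob[A]>0$; equivalently, that ``$D=0$ a.s.\ on $A$'' forces $\prob[A]=0$. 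I would argue by contradiction, assuming $\prob[A]>0$ and $D=0$ a.s.\ on $A$.

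First I would pin down the structure of $\Minfopt\symdiff\M'_\infty$: both are perfect matchings of the tree $T_\infty$, so every vertex has degree $0$ or $2$ in the symmetric difference, and a tree contains no cycles, hence $\Minfopt\symdiff\M'_\infty$ is a.s.\ a disjoint union of doubly-infinite alternating paths; on $A$ one such path $P$ runs through $\root$ as above. Next I would convert ``$D=0$'' into a structural statement about $P$. By Corollary~\ref{cor:iidX} each weight $\ell_\infty(v,w)$ is independent of the two messages on its edge, so with probability $1$ every minimum in~\eqref{eq:baleq} and~\eqref{eq:Minfdef} is uniquely attained. Equality in~\eqref{eq:cris-ineq} then says that, in~\eqref{eq:baleq} for $X(v_1,v_0)$, the minimum over neighbours $w\neq v_1$ is attained at $w=v_{-1}$; since $v_1=\Minfopt(\root)$ already attains the minimum over \emph{all} neighbours of $\root$, this means $\M'_\infty(\root)=v_{-1}$ is the \emph{second-best} neighbour of $\root$ and $\Minfopt(\root)=v_1$ the best, where ``best'' refers to minimising $\ell_\infty(\root,w)-X(\root,w)$. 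Since the joint law of $(\ell_\infty,\Minfopt,\M'_\infty)$ — coupled so that $\Minfopt$ and $\M'_\infty$ are conditionally independent given $\ell_\infty$ — is involution invariant (using Proposition~\ref{prop:invol} for $\Minfopt$ and the hypothesis for $\M'_\infty$) and $T_\infty$ is connected, the same conclusion transfers to every vertex: a.s., wherever the two matchings disagree, $\M'_\infty$ points to the second-best neighbour and $\Minfopt$ to the best. Along $P$ this forces the $\Minfopt$-edges to be ``mutual-best'' edges (automatic anyway from~\eqref{eq:Minfdef} and $\Minfopt(\Minfopt(v))=v$) and the $\M'_\infty$-edges to be ``mutual-second-best'' edges.

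The last step — which I expect to be the main obstacle — is to show that a doubly-infinite path on which consecutive edges alternate between mutual-best and mutual-second-best occurs with probability zero, contradicting $\prob[A]>0$. This is the planted-model analogue of the uniqueness argument in~\cite[Prop.\ 18]{Aldous2001}. The natural route is to compare, along a long window $v_{-n},\dots,v_n$ of $P$, the total weight of its $\M'_\infty$-edges with that of its $\Minfopt$-edges: expanding the $\Minfopt$-edge weights via~\eqref{eq:cris3} at the even vertices and the $\M'_\infty$-edge weights via the now-tight form of~\eqref{eq:cris-ineq}, the messages telescope, and the difference of the two window-weights collapses to a boundary term at $v_{\pm n}$ plus a bulk term equal to the sum of the (strictly positive, by uniqueness of the argmins) best-versus-second-best gaps at the interior vertices. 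Taking $\expect[\,\cdot\,\one_A]$, dividing by the window length, and using involution invariance to recognise the boundary contribution as a re-rooted copy of a quantity with finite mean (hence $o(n)$ after normalisation), one obtains that the expected per-edge weight of the $\M'_\infty$-edges of $P$ strictly exceeds that of its $\Minfopt$-edges, contradicting $\expect[D\,\one_A]=0$. The delicate points are the message bookkeeping in the telescoping (two edge types, four directed messages), the exchange of limit and expectation for the boundary term — handled as in~\cite{Aldous2001} via uniform integrability of the messages, which is where $\lambda<4$ enters, ensuring the relevant fixed point is genuinely supported on $\mathbb{R}$ — and checking that the per-vertex gap has strictly positive \emph{mean}, not merely being a.s.\ positive. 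Once this contradiction is in place, the inequality in Proposition~\ref{prop:ineq+eqq} is strict whenever $\prob[A]>0$, which is the claim.
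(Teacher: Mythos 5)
Your reduction is exactly the paper's: reduce to $\expect[D\,\one_A]>0$, note that $D=0$ on $A$ forces $v_{-1}$ to realize the \emph{second} minimum of $\ell_\infty(\root,w)-X(\root,w)$, and use involution invariance to transfer this ``best or second-best'' dichotomy (the paper's~\eqref{eq:argmin}) to every vertex. Up to that point the proposal is sound. The gap is in your final step, and it is not merely technical: the telescoping weight comparison you propose cannot produce a contradiction. Under the contradiction hypothesis, $D=0$ at every even vertex of the path gives $\ell_\infty(v_m,v_{m-1})=X(v_{m+1},v_m)+X(v_m,v_{m-1})$, while~\eqref{eq:cris3} gives $\ell_\infty(v_m,v_{m+1})=X(v_m,v_{m+1})+X(v_{m-1},v_m)$. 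The per-vertex difference of edge weights is therefore a sum of terms of the form $X(u,v)-X(v,u)$ over edges of the path, each of which has expectation zero by involution invariance. The a.s.\ strictly positive quantity you invoke --- the gap between the best and second-best values of $\ell_\infty(v_m,w)-X(v_m,w)$ --- simply does not appear as a summand in this difference, so there is no positive bulk term, and the expected per-edge weights of the two matchings along the window are equal, exactly as the hypothesis $\expect[D\,\one_A]=0$ already asserts. Trying to conclude a strict inequality of per-edge weights from here is circular.

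What is actually needed (and what the paper does, in Proposition~\ref{prop:unique} and Lemma~\ref{lem:BB}) is a purely probabilistic, not variational, argument: define the events $B_t$ that consecutive vertices on the path are mutual second-best partners, observe $A\subset\xbar{B}_\infty=\bigcap_{t\ \mathrm{odd}}B_t$, and use involution invariance (re-rooting two steps along the path) to show $\prob[A]>0$ would force $\prob[B_{-1}\mid\xbar{B}_\infty]=1$. This is then refuted by a conditional-independence computation on the bi-infinite tree: given the messages at the root edge, the event $B_{-1}$ depends only on the ``leftward'' subtree while $\xbar{B}_\infty$ depends only on the ``rightward'' one, and Lemma~\ref{lem:g} shows the conditional probability of $B_{-1}^c$ is the strictly positive quantity $g_\cdot(a,b)$, coming from the inhomogeneous Poisson structure of the values $\ell(I,Ij)-X(I,Ij)$. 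That positive-probability-of-failure-at-each-step mechanism, iterated along the path, is the missing idea; no weight bookkeeping along a window will substitute for it.
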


\begin{proof}
For sake of contradiction, assume there is an involution invariant random matching $(\ell_\infty,\M'_\infty)$ such that $\expect[ \ell_\infty(\root,\M'_\infty(\root))] = \expect[ \ell_\infty(\root,\Minfopt(\root))]$. By the proof of Proposition \ref{prop:ineq+eqq}, we have $\expect[D\boldsymbol{1}_A] = 0$ where
\[
D = \ell_\infty(v_0,v_{-1}) - X(v_1,v_0) - X(v_0,v_{-1})  \ge 0 \, , 
\]
and where $A$ is again the event $\{ \M'_\infty(\root) \neq \Minfopt(\root)\}$, and where the inequality $D \ge 0$ is given by Equation~\eqref{eq:cris-ineq}. Therefore, conditioned on $A$, almost surely
\begin{equation}
\label{eq:blek}
X(v_1,v_0) = \ell_\infty(v_0,v_{-1}) - X(v_0,v_{-1}) \, . 
\end{equation}
Now recall that $v_1$ achieves the minimum, over all $w$ in $v_0$'s neighborhood, of $\ell_\infty(v_0,w) - X(v_0,w)$. By Equation~\eqref{eq:baleq}, $X(v_1,v_0)$ is the minimum of this same quantity over all $w \ne v_1$. But this is the second minimum, i.e., the second-smallest value, and~\eqref{eq:blek} implies
\begin{equation}
\label{eq:blik}
	v_{-1} = {\argmin_{i}}^{[2]} ( \ell_\infty(\root,i) - X(\root,i) ) \, ,
\end{equation}
where ${\min}^{[2]}$ denotes the second minimum. Thus the following holds almost surely: either $\M'_\infty$ agrees with $\Minfopt$ at the root, or it matches the root with the second minimum of $\ell_\infty(\root,i) - X(\root,i)$ rather than the minimum. That is, without conditioning on $A$, 
\[
\prob\!\left( \M'_\infty(\root) \in \left\{ {\argmin_{i}} (\ell_\infty(\root,i) - X(\root,i)) \text{  or  } \argmint_{i} (\ell_\infty(\root,i) - X(\root,i)) \right\} \right) = 1 \, . 
\]
Since $(\ell_\infty,\M'_\infty)$ is involution invariant, the same relation holds for each vertex $v \in V_\infty$, i.e.,
\begin{equation}
\label{eq:argmin}
\prob\!\left(\! \M'_\infty(v) \in \!\left\{\! \argmin_{w:\{w,v\} \in E_\infty}\!\! (\ell_\infty(v,w) - X(v,w)) \text{ or } \argmint_{w:\{w,v\} \in E_\infty}\!\!(\ell_\infty(v,w) - X(v,w)) \!\right\} \!\right)\! = 1 \, .
\end{equation}
Thus any matching with the same expected length as $\Minfopt$ must, almost surely at almost all vertices $v$, match $v$ with its best or second-best partner according to $\ell_\infty(v,w) - X(v,w)$.
	
Surprisingly, no involution-invariant matching can choose the second-best partner with nonzero probability. The following proposition shows that~\eqref{eq:argmin} cannot hold unless $\M'_\infty= \Minfopt$ almost surely.

\begin{proposition}[Proposition 20 of~\cite{Aldous2001}]
\label{prop:unique}
The only involution invariant random matching that satisfies~\eqref{eq:argmin} is $\Minfopt$.
\end{proposition}

\begin{proof}
The reader might be wondering why we can't simply assign everyone to their second-best partner. But recall the key fact from Lemma~\ref{lem:match} that if $$\Minfopt(v) = \argmin_{w:\{v,w\}\in E_\infty} (\ell_\infty(v,w) - X(v,w)),$$ then $\Minfopt(\Minfopt(v)) = v$ and $\{ \{ v, \Minfopt(v) \} : v \in V_\infty \}$ is indeed a matching. The problem is that this fact does not generally hold if we replace $\argmin$ with $\argmint$.  

If $\M'_\infty$ and $\Minfopt$ differ anywhere with positive probability, then by involution invariance they differ at the root with positive probability. In that case, as before, there is a doubly-infinite alternating path from the root to infinity. Thus once $\M'_\infty$ matches the root with its second-best partner, it must keep doing this forever on that path. But in order for $\M'_\infty$ to be involution invariant, it must make the same choices if we follow the path in reverse, and so each vertex on this path must be the second-best partner of its second-best partner. We will see that the probability that this is true on every step of the path, all the way to infinity, is zero.

Let $\cdots$$,v_{-2}$$,v_{-1}$$,v_0$$,v_1$$,v_2$$,\cdots$ be the alternating path defined as follows. First let $v_0=\root$. To define $v_t$ for $t > 0$, we extend the path by alternately apply the best and second-best rules,
\[
v_{t+1} = \begin{cases}
\argmin_{u: \{ u,v_t \} \in E_\infty} (\ell_\infty(v_t,u) - X(v_t,u))  & \text{if $t$ is even} \\
\argmint_{u: \{ u,v_t \} \in E_\infty} (\ell_\infty(v_t,u) - X(v_t,u))  & \text{if $t$ is odd} 
\end{cases}
\]
Similarly, for $t < 0$ we extend the path backwards, 
\[
v_{t-1} = \begin{cases}
\argmint_{u: \{ u,v_t \} \in E_\infty} (\ell_\infty(v_t,u) - X(v_t,u))  & \text{if $t$ is even} \\
\argmin_{u: \{ u,v_t \} \in E_\infty} (\ell_\infty(v_t,u) - X(v_t,u))  & \text{if $t$ is odd} 
\end{cases}
\]
In particular, $v_{1} = \Minfopt(\root)$ and (if $A$ holds) $v_{-1} = \M'_\infty(\root)$.

		
Now for each odd integer $t$, define the event $B_t$ that $v_t$ and $v_{t+1}$ are the second-best partners of each other. For odd $t > 0$ we can write
\[
	B_t = \left\{ v_t = \argmint_{u: \{ u,v_{t+1} \} \in E_\infty} ( \ell_\infty(v_{t+1},u) - X(v_{t+1},u) ) \right \} \, ,
\]
and for odd $t < 0$,  
\[
	B_t = \left\{ v_{t+1} = \argmint_{u: \{ u,v_t \} \in E_\infty} ( \ell_\infty(v_t,u) - X(v_t,u) ) \right \} \, .
\]
As discussed above, since $\M'_\infty$ is involution invariant $A$ implies $B_t$, in particular, for all $t=1,3,5,\ldots$. Thus
\[
	A \subset \xbar{B}_\infty \coloneqq \bigcap_{t=1,3,5,\ldots}^\infty B_t \, . 
\]
Writing $\xbar{B}_t = \bigcap_{t'=1,3,5,\ldots}^t B_{t'}$, this implies
\[
\prob[A] 
\le \prob[\xbar{B}_\infty]
= \prod_{t=1,3,5,\ldots} \prob[B_{t+2} \mid \xbar{B}_t]
= \prod_{t=1,3,5,\ldots} \frac{\prob[\xbar{B}_{t+2}]}{\prob[\xbar{B}_t]} \, . 
\]
and so
\begin{equation}
\label{eq:Bqcriteria}
\text{if $\prob[A] > 0$ then $\lim_{t \to \infty, \,\text{$t$ odd}} \frac{\prob[\xbar{B}_{t+2}]}{\prob[\xbar{B}_t]} = 1$.}
\end{equation}

Now we use involution invariance again. If we root the planted PWIT at $v_2$ instead of $v_0$, sliding the alternating path two steps to the left, the event $\xbar{B}_{t+2}$ becomes the event $B_{-1} \cap \xbar{B}_t$ (and $A$ still holds). By involution invariance the probability of these two events is the same, so
\[
\frac{\prob[\xbar{B}_{t+2}]}{\prob[\xbar{B}_t]}
= \frac{\prob[B_{-1} \cap \xbar{B}_t]}{\prob[\xbar{B}_t]}
= \prob[B_{-1} \mid \xbar{B}_t] \, .
\]
By continuity of probability measure, if $\prob[\xbar{B}_\infty] > 0$ --- which holds if $\prob[A] > 0$ --- we also have
\[
\lim_{t \to \infty} \prob[B_{-1} \cap \xbar{B}_t] 
= \prob[B_{-1} \cap \xbar{B}_\infty]
\quad \text{and} \quad 
\lim_{t \to \infty} \prob[\xbar{B}_t] = \prob[\xbar{B}_\infty] \, , 
\]
in which case 
\[
\lim_{t \to \infty} \prob[B_{-1} \mid \xbar{B}_t]
= \prob[B_{-1} \mid \xbar{B}_\infty] \, . 
\]
Thus~\eqref{eq:Bqcriteria} demands that this conditional probability is $1$. But the following lemma, which generalizes Lemma 22 of~\cite{Aldous2001} to the planted case, shows that this is not so.

\begin{lemma}
\label{lem:BB}
	If $\prob[\xbar{B}_\infty] > 0$ then $\prob[B_{-1} \mid \xbar{B}_\infty] < 1$.
\end{lemma}

\begin{proof}
As in~\cite[Remark on p. 402]{Aldous2001}, the idea is that $B_{-1}$ only depends on what happens on the ``leftward'' branch of the alternating path, $v_0, v_{-1}, v_{-2}, \ldots$, while $\xbar{B}_\infty$ depends only on the ``rightward'' branch $v_0, v_1, v_2, \ldots$  For the details, see Appendix~\ref{app:proofofBB}.
\end{proof}

\noindent
Lemma~\ref{lem:BB} implies that $P(A)=0$, and by the discussion above that $\M'_\infty=\Minfopt$ almost surely. This completes the proof of Proposition~\ref{prop:unique}\ldots
\end{proof}

\noindent \ldots which completes the proof of Proposition~\ref{prop:ineq}.
\end{proof}

An immediate corollary of Proposition~\ref{prop:ineq} is the following.

\begin{corollary}\label{cor:Mopuniq}
In the minimum involution invariant random matching $(\ell_\infty,\Minfopt)$, $\Minfopt$ is a function of the edge lengths $\ell_\infty$. That is to say, given a realization of $(\ell_\infty(e),e\in E_\infty)$, $\Minfopt$ is a fixed matching on the planted PWIT.
\end{corollary}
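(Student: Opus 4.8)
The plan is to apply the uniqueness statement of Proposition~\ref{prop:ineq} to a second, conditionally independent copy of $\Minfopt$, and deduce that the conditional law of $\Minfopt$ given the edge weights is degenerate, i.e.\ a point mass.

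First I would build the coupling. Sample the edge weights $\ell_\infty$ from their law, and then, conditionally on $\ell_\infty$, sample two independent matchings $\M^{(1)}_\infty$ and $\M^{(2)}_\infty$, each from the conditional law of $\Minfopt$ given $\ell_\infty$; also keep $\Minfopt$ coupled to $\ell_\infty$ in the usual way, conditionally independent of $\M^{(1)}_\infty$ given $\ell_\infty$. Marginally, each $(\ell_\infty,\M^{(i)}_\infty)$ has the same law as $(\ell_\infty,\Minfopt)$, so by Proposition~\ref{prop:invol} it is an involution invariant random matching, and it attains the same expected root length, $\expect[\ell_\infty(\root,\M^{(i)}_\infty(\root))]=\expect[\ell_\infty(\root,\Minfopt(\root))]$. (Recall that the construction of $\Minfopt$ in Lemma~\ref{lem:const} used extra ``seed'' randomness at the leaves, so a priori $\M^{(1)}_\infty$ and $\M^{(2)}_\infty$ need not coincide; the goal is to show they do.) Now apply Proposition~\ref{prop:ineq} with $\M'_\infty:=\M^{(1)}_\infty$, coupled to $\Minfopt$ exactly as in the proof of Proposition~\ref{prop:ineq+eqq}: since the expected root lengths are equal, the strict inequality there is impossible, so $\prob[\M^{(1)}_\infty(\root)\neq\Minfopt(\root)]=0$. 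The same holds for $\M^{(2)}_\infty$, and hence $\prob[\M^{(1)}_\infty(\root)\neq\M^{(2)}_\infty(\root)]=0$.

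Next I would promote ``agree at the root'' to ``agree everywhere''. The doubly-marked random network $(\ell_\infty,\Minfopt,\M^{(1)}_\infty)$ inherits involution invariance from $(\ell_\infty,\Minfopt)$ (since $\M^{(1)}_\infty$ is added as a conditionally independent re-sampling of the same mark), so the mass-transport principle implies that any event holding almost surely at the root holds almost surely at every vertex. As $V_\infty$ is countable, a union bound gives $\prob[\Minfopt(v)=\M^{(1)}_\infty(v)\ \text{for all }v\in V_\infty]=1$. In other words, conditionally on $\ell_\infty$, an independent copy of $\Minfopt$ equals $\Minfopt$ almost surely, so the regular conditional distribution of $\Minfopt$ given $\ell_\infty$ is almost surely a point mass; writing it as $\delta_{g(\ell_\infty)}$ for a measurable map $g$ shows that $\Minfopt$ is a measurable function of $\ell_\infty$, which is the assertion of the corollary.

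The main obstacle is the final promotion step: one must check that the doubly-marked object really does inherit involution invariance so that the mass-transport principle applies at every vertex, and then observe that an a.s.\ degenerate regular conditional distribution genuinely comes from a measurable function (standard, but worth stating carefully). Everything before that is bookkeeping layered directly on top of Proposition~\ref{prop:ineq}.
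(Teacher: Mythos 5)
Your proposal is correct and is essentially the paper's own argument: the paper also couples two conditionally i.i.d.\ copies of $\Minfopt$ given $\ell_\infty$ and invokes Proposition~\ref{prop:ineq} to force them to coincide almost surely. You simply spell out the steps the paper leaves implicit (equality of expected root lengths for the resampled copy, the promotion from agreement at the root to agreement at all vertices via involution invariance, and the passage from a degenerate regular conditional distribution to a measurable function), all of which are sound.
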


\begin{proof}
Consider a coupling $(\ell_\infty,\Minfopt,\M'_{\infty,\opt})$ such that conditioned on $(\ell_\infty(e),e\in E_\infty)$, $\Minfopt$ and $\M'_{\infty,\opt}$ are i.i.d..  Then, by Proposition~\ref{prop:ineq} we have $\Minfopt=\M'_{\infty,\opt}$ almost surely.
\end{proof}

In other words, $\Minfopt$ does not have any additional randomness besides its dependence on $\ell_\infty$. This was left as an open question for the un-planted case in~\cite[Remark (d)]{Aldous2001}, although we claim that that paper in fact resolved it! As later stated in~\cite{Aldous2004}, this implies that if we use the construction of Section~\ref{app:OptMatch-Construct} to define random variables $X$ on neighborhoods of depth $h$, then (conditioning on $\ell_\infty$) the random matching defined by these variables becomes concentrated around a single matching as $h \to \infty$. 

This does not quite imply that the messages $X$ on the directed edges of the planted PWIT are determined by $\ell_\infty$. This was shown for the un-planted case by Bandyopadhyay using the concept of endogeny~\cite{Bandyopadhyay2002}. We believe endogeny holds for the planted case, but we leave this as an open question. In any case, as long as the system of recursive distributional equations~\eqref{eq:disteqX}--\eqref{eq:disteqY} has a solution supported on $\mathbb{R}$, whether it is unique or not, the minimum involution invariant random matching $\Minfopt$ is uniquely defined. Therefore, whenever we focus on a realization of $\ell_\infty$, there is no need to call $\Minfopt$ a random matching.

\subsection{Uniqueness of The Solution of RDEs}
Recall from Section~\ref{app:OptMatch-Construct} that $\Minfopt$ is defined by drawing messages at the boundary of neighborhoods of increasing size from a fixed point of the RDEs \eqref{eq:disteqX}--\eqref{eq:disteqY}, propagating these messages throughout the neighborhood, and then including edges $(u,v)$ whose weights $\ell_\infty(u,v)$ are less than the sum of their messages $X(u,v)+X(v,u)$. 

However, Corollary \ref{cor:Mopuniq} shows that $\Minfopt$ is a function of the weights $\ell_\infty$. As we commented there, this doesn't quite imply that the messages $X$ are also functions of $\ell_\infty$. However, Corollary \ref{cor:Mopuniq} imposes strong conditions on the possible solutions of the RDEs. Specifically, if the RDEs have more than one solution, then each one must somehow result in the same matching $\Minfopt$ given the edge weights. In this section, we show that this implies that the fixed point is indeed unique. This provides an interesting counterpart to the dynamical proof of uniqueness given in~Theorem \ref{thm:ode_unique_solution}.

First we show that any solution has a well-defined moment generating function in a neighborhood of the origin.

\begin{lemma}
\label{lem:mgf}
Let $(X,Y)$ be a solution of the system of recursive distributional equations~\eqref{eq:disteqX}--\eqref{eq:disteqY} supported on $\mathbb{R}$. Then the random variable $X$ has a finite moment generating function $\expect[\e^{\mu X}]$ for $\mu$ in an open neighborhood of $0$.
\end{lemma}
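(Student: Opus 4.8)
The plan is to prove a moment generating function (MGF) bound for $X$ by setting up a self-consistent inequality from the RDEs~\eqref{eq:disteqX}--\eqref{eq:disteqY} and a union bound over the Poisson arrivals. First I would record the basic structure: from~\eqref{eq:disteqX}, for any $x \in \mathbb{R}$ we have $\prob[X > x] = \prob[\zeta_i - Y_i > x \text{ for all } i]$, and since the $\zeta_i$ are a rate-$1$ Poisson process independent of the i.i.d.\ $Y_i$'s, this equals $\exp\!\left(-\int_0^\infty \prob[Y > z - x]\,\dz\right) = \exp\!\left(-\int_{-x}^\infty \bar F_Y(z)\,\dz\right)$, exactly as in~\eqref{eq:fbarx}. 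This already shows the right tail of $X$ is doubly-exponentially light (so $\expect[\e^{\mu X}] < \infty$ for \emph{all} $\mu > 0$ is essentially automatic once we know $\bar F_Y$ doesn't decay too fast near $-\infty$, i.e.\ $F_Y(z)\to 0$ as $z\to-\infty$, which holds since $(X,Y)$ is supported on $\mathbb{R}$). The real content is the \emph{left} tail of $X$, i.e.\ controlling $\prob[X < -x]$ for large $x>0$, equivalently showing $\expect[\e^{-\mu X}] < \infty$ for some $\mu > 0$.

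For the left tail, I would use~\eqref{eq:fbary}: $\bar F_Y(y) = \bar F_X(y)\,\expect[F_X(\eta - y)]$, so $\bar F_Y(y) \le \bar F_X(y)$, hence $F_Y(y) \ge F_X(y)$; in particular the left tail of $Y$ is no heavier than that of $X$. Now from~\eqref{eq:disteqX}, $X = \min_i(\zeta_i - Y_i) \le \zeta_1 - Y_1$, but that only gives an upper bound on $X$ and the wrong direction. Instead, to bound $\prob[X < -x]$ from above: $X < -x$ requires $\zeta_i - Y_i < -x$ for some $i$, i.e.\ $Y_i > \zeta_i + x$ for some $i$. By the union bound over the Poisson process, $\prob[X < -x] \le \expect\!\left[\sum_i \one(Y_i > \zeta_i + x)\right] = \int_0^\infty \bar F_Y(z + x)\,\dz = \int_x^\infty \bar F_Y(u)\,\du$. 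Using $\bar F_Y \le \bar F_X$ this is $\le \int_x^\infty \bar F_X(u)\,\du$, which is circular but in a \emph{good} direction: writing $h(x) = \int_x^\infty \bar F_X(u)\,\du$ (the tail integral, finite because the right tail of $X$ is doubly exponential), we have shown $\bar F_X(-x) = \prob[X < -x] \le h(x)$ for all $x$. This is a genuine improvement: feeding $\bar F_X(-x)\le h(x)$ back into the definition of $h$ and iterating, one obtains $h(x) \le \int_x^\infty h(u)\,\du$, and a standard Grönwall-type argument (or directly: $h'(x) = -\bar F_X(x) \le 0$ and the integral inequality forces $h(x) \le h(x_0)\e^{-(x-x_0)}$ for $x \ge x_0$) yields exponential decay $\bar F_X(-x) = O(\e^{-cx})$ for some $c>0$ and all large $x$. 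Exponential left-tail decay immediately gives $\expect[\e^{-\mu X}] < \infty$ for $\mu < c$, and combined with the right-tail estimate we get $\expect[\e^{\mu X}] < \infty$ on an open neighborhood of $0$, as claimed.

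The main obstacle I anticipate is making the Grönwall step rigorous and getting a genuinely \emph{exponential} (rather than merely integrable) bound: one has to be careful that the inequality $h(x) \le \int_x^\infty h(u)\,\du$ actually propagates correctly, since a priori we only know $h$ is nonincreasing and finite. The cleanest route is probably: from $h(x) \le \int_x^\infty h(u)\,\du$ and monotonicity of $h$, deduce that $H(x) \coloneqq \int_x^\infty h(u)\,\du$ satisfies $-H'(x) = h(x) \le H(x)$, whence $(\e^x H(x))' \ge 0$ is false --- rather $\frac{d}{dx}\log H(x) = -h(x)/H(x) \ge -1$... I would instead argue $H(x) \le H(x_0)\e^{-(x - x_0)}$ by noting $-H'(x) = h(x) \le H(x)$ gives $\frac{d}{dx}(H(x)\e^{x}) = \e^x(H'(x) + H(x)) = \e^x(H(x) - h(x)) \ge 0$ --- wait, that gives growth. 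The correct sign: $H' = -h \le 0$ and $h \le H$ means $H' \ge -H$, so $(\e^x H)' = \e^x(H' + H) \ge 0$, i.e.\ $\e^x H(x)$ is nondecreasing, which is the wrong direction for decay. So one must instead use the \emph{reverse} inequality $h \ge$ (something); I would look for it in the first displayed identity: $\bar F_X(x) = \exp(-\int_{-x}^\infty \bar F_Y(z)\,\dz) \ge 1 - \int_{-x}^\infty \bar F_Y(z)\,\dz$, combined with $\bar F_Y \ge \bar F_X \cdot \expect[F_X(\eta - y)]$ and the fact that $\expect[F_X(\eta - y)] \to F_X(+\infty) = 1$ for $y\to -\infty$, to show $\bar F_X(-x) \ge c' \int_x^\infty \bar F_X(u)\,\du$ for large $x$ and some $c' > 0$; \emph{that} two-sided control $c' H(x) \le -H'(x) \le H(x)$ pins down $H$ between two exponentials and delivers $\bar F_X(-x) = \Theta(\e^{-\Theta(x)})$. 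Sorting out the constant $c'$ and the threshold $x_0$ beyond which these estimates hold is the only real work; everything else is bookkeeping with~\eqref{eq:fbary}--\eqref{eq:fbarx}.
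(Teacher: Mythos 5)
Your overall strategy is sound and genuinely different from the paper's, but as written the proof does not close: you end in a Gr\"onwall detour that, as you yourself observe, runs in the wrong direction, and you finish by proposing to hunt for a reverse inequality and ``two-sided control'' that the lemma does not need. The point you are missing is that your chain of bounds is already complete and not circular. Your union bound gives $\prob[X \le -x] \le \int_x^\infty \bar{F}_Y(u)\,\dU \le \int_x^\infty \bar{F}_X(u)\,\dU$ for $x>0$, and the integrand here is the \emph{right} tail of $X$ evaluated at $u \ge x > 0$ --- precisely the quantity your first step already controls. Concretely: since $\bar{F}_Y(z)\to 1$ as $z\to-\infty$, pick $z_0$ with $\bar{F}_Y(z)\ge 1/2$ for $z\le -z_0$; then \eqref{eq:fbarx} gives $\bar{F}_X(u)\le \e^{-(u-z_0)/2}$ for $u\ge z_0$, hence $\prob[X\le -x]\le \int_x^\infty \e^{-(u-z_0)/2}\,\dU = 2\e^{-(x-z_0)/2}$. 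Both tails decay exponentially and the MGF is finite for $|\mu|<1/2$. No iteration, no differential inequality. Two slips you should also fix: the right tail is \emph{singly} exponential (with rate tending to $1$), not doubly exponential, so $\expect[\e^{\mu X}]$ is \emph{not} finite for all $\mu>0$ --- fortunately the lemma only asks for a neighborhood of $0$; and $\bar{F}_X(-x)=\prob[X>-x]$, not $\prob[X<-x]$ --- the left tail is $F_X(-x)$.

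For comparison, the paper takes a different and equally short route: from \eqref{eq:RDE_ODE_1} and the monotonicity of $F_X(\eta\pm x_0)$ in the shift it deduces the density symmetry $f_X(x_0)\ge f_X(-x_0)$ for $x_0>0$, hence $\prob[X<-x]\le\prob[X>x]$, reducing the left tail to the right tail in one stroke; the right tail is bounded by $\e^{-x\bar{F}_Y(0)}$, with $\bar{F}_Y(0)=\bar{F}_X(0)\,\expect[F_X(\eta)]\ge\tfrac12\expect[F_X(\eta)]>0$ supplied by \eqref{eq:fbary} and the same symmetry. Your route replaces the density symmetry with a union bound over the Poisson points plus the domination $\bar{F}_Y\le\bar{F}_X$ from \eqref{eq:fbary}, and replaces the constant $\bar{F}_Y(0)$ with the cruder but sufficient fact that $\bar{F}_Y\to 1$ at $-\infty$. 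Once you delete the Gr\"onwall paragraph and insert the one-line observation above, your argument is a valid alternative proof.
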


\begin{proof}
Recall that $F_X$ and $F_Y$ denote the cumulative distribution functions of $X$ and $Y$ respectively, and $\bar{F}_X$ and $\bar{F}_Y$ denote their complements.  
On the one hand, by~\eqref{eq:fbarx}, for all $x>0$ we have
\[
	\bar{F}_X(x) = \exp\!\left( - \int_{z=-x}^\infty \bar{F}_Y(z) \,\dz \right) \leq \exp\!\left(-x \bar{F}_Y(0) \right).
\]
On the other hand, for every $x_0 > 0$, Lemma \ref{lem:rde-ode} gives
\[
	f_X(x_0) 
	= \bar{F}_X(x_0) \bar{F}_X(-x_0) \,\expect[F_X(\eta+x_0)] 
	\geq \bar{F}_X(-x_0) \bar{F}_X(x_0) \,\expect[F_X(\eta-x_0)] = f_X(-x_0)\, ,
\]
where the inequality follows by the fact that $F_X(\eta-x_0) \leq F_X(\eta+x_0)$ for all $\eta$. Hence,
\[
	 \prob[X < -x]  \leq \prob[X > x] \leq \exp\!\left(-x \bar{F}_Y(0) \right) \, ,
\]
and $\bar{F}_X(0) \ge 1/2$. Then~\eqref{eq:fbary} implies
\[
\bar{F}_Y(0) = \bar{F}_X(0) \,\expect[F_X(\eta)] \ge \frac{1}{2} \,\expect[F_X(\eta)] > 0 \, .
\]
where the last inequality holds because $\eta$ can be arbitrarily large and $X$ is supported on $\mathbb{R}$. 
The result now follows by simple algebra. If $0 \le \mu < \bar{F}_Y(0)$ we have
\begin{align*}
	\expect[\e^{\mu X}] 
	&= \int_0^\infty \prob[\e^{\mu X} > s] \,\ds
	\leq 1 + \int_1^\infty \prob[\e^{\mu X} > s] \,\ds \\
	&= 1 + \int_1^\infty \prob\!\left[ X > \frac{\ln s}{\mu} \right] \,\ds 
	\le 1 + \int_1^\infty s^{-\bar{F}_Y(0) / \mu} \,\ds 
	< \infty \, , 
\end{align*}
and the proof for $-\bar{F}_Y(0) < \mu \le 0$ is similar. Hence $\expect[\e^{\mu X}] < \infty$ for $\mu \in (-\bar{F}_Y(0), \bar{F}_Y(0))$.
\end{proof}

Now recall that by Lemmas~\ref{lem:const} and Lemma~\ref{lem:match}, given $\ell(u,v) = x$, the probability that $(u,v) \in \Minfopt$ equals $\prob[X+X' > x]$ where $X=X(u,v)$ and $X'=X(v,u)$ are  i.i.d.\ copies of the random variable $X$. If the RDEs have two distinct solutions $(X_1,Y_1)$ and $(X_2,Y_2)$, Corollary \ref{cor:Mopuniq} implies that $\prob[X_1+X'_1 > x] = \prob[X_2+X'_2 > x]$ for all $x$, so that $X_1+X'_1$ and $X_2+X'_2$ have the same distribution. But since 
\[
\expect[\e^{\mu(X_1+X'_1)}] = \left( \expect[\e^{\mu X_1}] \right)^2 \, , 
\]
and similarly for $X_2$, this implies that $X_1$ and $X_2$ have the same moment generating function, which by Lemm \ref{lem:mgf} is well-defined in a neighborhood of the origin. It follows that $X_1$ and $X_2$ have the same distribution~\cite[Theorem 1]{Curtiss1942}. Using \eqref{eq:RDE1}, $Y_1$ and $Y_2$ are equidistributed as well, and we have proved the following theorem:


\begin{theorem}
\label{thm:unique-from-pwit}
Assume the system of recursive distributional equations~\eqref{eq:disteqX}--\eqref{eq:disteqY} has a solution supported on $\mathbb{R}$. Then such solution is unique. 
\end{theorem}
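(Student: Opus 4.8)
The plan is to combine two facts already established in the excerpt: that the optimal involution invariant matching $\Minfopt$ is a deterministic function of the edge weights (Corollary~\ref{cor:Mopuniq}), and that any RDE solution supported on $\mathbb{R}$ has light (exponential) tails and hence an analytic moment generating function near the origin (Lemma~\ref{lem:mgf}). First I would recall, via Lemmas~\ref{lem:const} and~\ref{lem:match}, the bookkeeping identity: if $\Minfopt$ is built from a fixed point $(X,Y)$ of \eqref{eq:disteqX}--\eqref{eq:disteqY}, then conditioned on $\ell_\infty(u,v) = x$, the probability that $(u,v)\in\Minfopt$ equals $\prob[X + X' > x]$, where $X,X'$ are i.i.d.\ copies of the first-coordinate marginal. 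Since Corollary~\ref{cor:Mopuniq} says $\Minfopt$ depends only on $\ell_\infty$, the left-hand side is a quantity intrinsic to the model and in particular cannot depend on which fixed point was used to construct $\Minfopt$. So if $(X_1,Y_1)$ and $(X_2,Y_2)$ are two solutions supported on $\mathbb{R}$, then $\prob[X_1 + X_1' > x] = \prob[X_2 + X_2' > x]$ for every $x$, i.e.\ $X_1 + X_1' \overset{d}{=} X_2 + X_2'$.

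Next I would pass from the doubled variables back to $X$ itself using the MGF. By independence, $\expect[\e^{\mu(X_i + X_i')}] = \bigl(\expect[\e^{\mu X_i}]\bigr)^2$, and by Lemma~\ref{lem:mgf} each $\expect[\e^{\mu X_i}]$ is finite on a common open interval around $0$. Equality in distribution of $X_1 + X_1'$ and $X_2 + X_2'$ forces $\bigl(\expect[\e^{\mu X_1}]\bigr)^2 = \bigl(\expect[\e^{\mu X_2}]\bigr)^2$ on that interval; both sides are strictly positive, so taking square roots gives $\expect[\e^{\mu X_1}] = \expect[\e^{\mu X_2}]$ there. A moment generating function that is finite in a neighborhood of the origin determines the distribution (Curtiss~\cite{Curtiss1942}), so $X_1 \overset{d}{=} X_2$. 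Finally, equation~\eqref{eq:RDE1} (equivalently \eqref{eq:disteqY}) expresses the law of $Y$ as a fixed functional of the law of $X$ together with independent Poisson and $\exp(\lambda)$ randomness, so $Y_1 \overset{d}{=} Y_2$ as well, which yields the uniqueness claim of Theorem~\ref{thm:unique-from-pwit}.

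The step that carries the real weight is Lemma~\ref{lem:mgf}: without the exponential tail bound on $X$, the ``square root and match'' manipulation of generating functions would be vacuous, and the MGF would not determine the law. That lemma in turn rests on the structural identities \eqref{eq:fbarx}--\eqref{eq:fbary} and on $\bar F_Y(0) > 0$, which uses that $\eta$ can be arbitrarily large and $X$ is supported on all of $\mathbb{R}$. The other nontrivial input, Corollary~\ref{cor:Mopuniq}, is itself a consequence of the delicate uniqueness argument for $\Minfopt$ (Propositions~\ref{prop:ineq+eqq}--\ref{prop:unique}), but I would treat that as a black box here. Given those two ingredients, the present theorem is a short analytic coda, and the only genuinely new observation is that equality of the symmetrized variables $X_i + X_i'$ is enough to recover equality of the $X_i$ precisely because of the tail control.
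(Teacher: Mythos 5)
Your proposal is correct and follows essentially the same route as the paper: both use Corollary~\ref{cor:Mopuniq} together with the identity from Lemmas~\ref{lem:const} and~\ref{lem:match} to deduce $X_1+X_1'\overset{d}{=}X_2+X_2'$, then invoke Lemma~\ref{lem:mgf} and Curtiss's theorem to recover $X_1\overset{d}{=}X_2$, and finally~\eqref{eq:RDE1} for the $Y$ marginal. The only (harmless) addition is your explicit remark about taking positive square roots of the squared moment generating functions, which the paper leaves implicit.
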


\section{Convergence of the Minimum Matching on $(K_{n,n},\ell_{n})$ to $\Minfopt$}
\label{app:jointconv}
At this point we have constructed $(\ell_\infty,\Minfopt)$ and shown that it is the unique involution invariant matching on the planted PWIT that minimizes the weight at the root. It is finally time to show that the minimum matching $(\ell_n,\Mnopt)$ on our original planted model on $K_{n,n}$ converges to $(\ell_\infty,\Minfopt)$ in the local weak sense. This implies that these two objects have the same joint distribution of edge weights, and which edges they include in the matching, on neighborhoods of any finite radius. In particular, they have the same expected overlap---which is the overlap we computed in Section \ref{sec:main-infinite-tree}. Thus we finally complete the proof of Theorem~\ref{thm:main}.

To use the framework of local weak convergence to study minimum matchings, we append $\{0,1\}$ to the edges of planted networks in $\Gstar$. In a slight abuse of terminology, we add a $\widehat{\text{hat}}$ and also call $\widehat{N} = (G,\ell,\M)$ a planted network where $\ell$ is the weight function and $\M: E \to \{0,1\}$ is a mark function (which may or may not be a matching). This lets us discuss the joint distribution of edge weights, the planted matching, and the minimum matching of vertex neighborhoods in either model.

In particular, let $\widehat{\mu}_n$ be this distribution in the vicinity of a uniformly random vertex in the finite model,  
\[
\widehat{\mu}_n = \mathbb{E} U(\widehat{N}_{n}) 
\quad \text{is the law of} \quad 
[\widehat{N}_{n,\circ}(1)]
\quad \text{where} \quad
\widehat{N}_{n} = (K_{n,n}, \ell_n, \Mnopt) \, , 
\]
and let $\widehat{\mu}_\infty$ be the analogous distribution at the root of the planted PWIT, 
\[
\widehat{\mu}_\infty 
\quad \text{is the law of} \quad 
[\widehat{N}_{\infty,\circ}]
\quad \text{where} \quad 
\widehat{N}_{\infty} = (T_{\infty}, \ell_\infty,\Minfopt) \, . 
\]
We will show that $\widehat{\mu}_\infty$ is the weak limit of $\widehat{\mu}_n$. Thus the two models have all the same local statistical properties, including their expected weight and overlap.


The proof consists of two main steps, namely, the easier half and the harder half. 
In the easier half, using a simple compactness argument we prove that any subsequence of probability measures $\widehat{\mu}_n$ has a subsequence that converges to an involution invariant random matching on planted PWIT. Using Skorokhod's theorem this shows that the weight of the minimum matching on $K_{n,n}$ is at least that of $\Minfopt$:
\begin{equation}
\label{eq:easyhalf}
\liminf_{n\to\infty}\expect[\ell_n(1,\Mnopt(1))] \geq \expect[\ell_\infty(\root,\Minfopt(\root))] \, .
\end{equation}
Informally, this follows by contradiction. If $\liminf_n \ell_n(1,\Mnopt(1))$ were smaller than this, then the subsequence of sizes $n$ on which it converges to that smaller value would itself have a subsequence that convergences to an involution invariant matching on the planted PWIT with that weight\ldots but this would contradict the optimality of $\Minfopt$.

In the harder half, using $(\ell_\infty,\Minfopt)$ we follow the strategy of~\cite{Aldous1992}. First we construct an almost-perfect matching on $(K_{n,n},\ell_n)$ with weight close to $\expect[\ell_\infty(\root,\Minfopt(\root))]$. Then, we fix this almost-perfect matching matching to make a perfect matching on $(K_{n,n},\ell_n)$ without changing the weight too much. This proves that
\begin{equation}
\label{eq:hardhalf}
\limsup_{n\to\infty}\expect[\ell_n(1,\Mnopt(1))] \leq \expect[\ell_\infty(\root,\Minfopt(\root))] \, .
\end{equation}
Combining~\eqref{eq:easyhalf} and~\eqref{eq:hardhalf}, we have 
\begin{equation}
\label{eq:bothhalves}
\lim_{n \to \infty}\expect[\ell_n(1,\Mnopt(1))] = \expect[\ell_\infty(\root,\Minfopt(\root))] \, .
\end{equation}
As in Aldous' proof of the $\zeta(2)$ conjecture for the un-planted model, this establishes the expected weight of the minimum matching in the planted model. But much more is true. Since $\Minfopt$ is unique, we get the following theorem.

\begin{theorem}
\label{thm:itallworks}
The random weak limit of $(K_{n,n}, \ell_n, \Mnopt)$ is $(T_\infty, \ell_\infty, \Minfopt)$, i.e., $\widehat{\mu}_n \xrightarrow{w} \widehat{\mu}_\infty$. In particular, their expected overlap is equal to 
	\begin{align*}
	\alpha(\lambda) &\coloneqq \lim_{n\to\infty} \frac{1}{n} \,\expect[|\Mnopt \cap \Mplanted_n|]\\
	&= \lim_{n\to\infty} \prob( \{1,1'\} \in \Mplanted_n ) = \prob( \{ \root, 0 \} \in \Minfopt ) \, .
	\end{align*}
\end{theorem}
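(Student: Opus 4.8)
The plan is to adapt Aldous' objective-method argument~\cite{Aldous1992,Aldous2001} in the two-step form sketched just before the theorem --- an \textbf{easy half} giving a lower bound on the asymptotic weight at the root and a \textbf{hard half} giving the matching upper bound --- and then to bootstrap the resulting convergence of $\expect[\ell_n(1,\Mnopt(1))]$ into full local weak convergence of the marked networks using the \emph{uniqueness} of $\Minfopt$ (Proposition~\ref{prop:ineq}). For the easy half, I would first record the a priori bound $\expect[\ell_n(1,\Mnopt(1))] = \tfrac1n\expect[w(\Mnopt)] \le \tfrac1n\expect[w(\Mplanted_n)] = 1/\lambda$, so these expectations are uniformly bounded. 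Since the mark alphabet $\{0,1\}$ is compact and the unmarked projections $\mu_n$ converge (Theorem~\ref{thm:lwc}) and hence are tight, the family $(\widehat\mu_n)$ on $\widehat\Gstar$ is tight; along any subsequence extract $\widehat\mu_{n_k}\xrightarrow{w}\widehat\mu_*$ and, via Skorokhod, realize the convergence almost surely. The limit is a random marked planted network whose unmarked part is the planted PWIT; the uniform weight bound rules out the root being matched ``at infinity'', so the limiting mark function is a.s.\ a matching $\M'_\infty$, and involution invariance passes to the weak limit, so $(\ell_\infty,\M'_\infty)$ is an involution invariant random matching. Truncating $\ell(\root,\M(\root))$ at a level $K$ makes it a bounded, a.e.-continuous functional; applying weak convergence and then letting $K\to\infty$ with Fatou gives $\expect[\ell_\infty(\root,\M'_\infty(\root))]\le\liminf_k\expect[\ell_{n_k}(1,\Mnopt(1))]$, and Proposition~\ref{prop:ineq+eqq} then yields
\[
\liminf_{n\to\infty}\expect[\ell_n(1,\Mnopt(1))]\ \ge\ \expect[\ell_\infty(\root,\Minfopt(\root))]\, .
\]

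For the hard half, fix a depth $h$. For each vertex $v$ of $K_{n,n}$, its weighted neighborhood out to the relevant $d_\ell$-radius is, for $n$ large, isomorphic to a truncation of the planted PWIT; I would run the message-passing rule~\eqref{eq:baleq} inside depth-$h$ neighborhoods and include an edge $e=\{u,v\}$ exactly when $\ell_n(e)$ is below the sum of the two messages across $e$, mimicking the definition of $\Minfopt$. The resulting edge set fails to be a matching only through boundary disagreements, whose expected number is $o(n)$ for fixed $h$; deleting those edges leaves a partial matching on all but $o(n)$ vertices with per-vertex weight within some $\epsilon(h)\to 0$ of $\expect[\ell_\infty(\root,\Minfopt(\root))]$. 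Completing it to a perfect matching on $K_{n,n}$ can be done at total extra weight $o(n)$, since the residual $o(n)\times o(n)$ bipartite graph has a perfect matching using only light un-planted edges (a greedy/augmenting-path estimate). Sending $n\to\infty$ then $h\to\infty$ gives $\limsup_n\expect[\ell_n(1,\Mnopt(1))]\le\expect[\ell_\infty(\root,\Minfopt(\root))]$, and combining the two halves, $\lim_n\expect[\ell_n(1,\Mnopt(1))]=\expect[\ell_\infty(\root,\Minfopt(\root))]$.

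Returning to the easy half with this equality in hand, every subsequential weak limit $\widehat\mu_*$ gives an involution invariant matching $(\ell_\infty,\M'_\infty)$ with $\expect[\ell_\infty(\root,\M'_\infty(\root))]\le\expect[\ell_\infty(\root,\Minfopt(\root))]$; Proposition~\ref{prop:ineq} then forces $\prob[\M'_\infty(\root)\ne\Minfopt(\root)]=0$ (and Corollary~\ref{cor:Mopuniq} identifies $\M'_\infty$ with the deterministic function $\Minfopt$ of $\ell_\infty$), so $\widehat\mu_*=\widehat\mu_\infty$. As every subsequence of $(\widehat\mu_n)$ has a further subsequence converging to the same limit, $\widehat\mu_n\xrightarrow{w}\widehat\mu_\infty$. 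For the overlap, vertex-transitivity of the planted model gives $\tfrac1n\expect[|\Mnopt\cap\Mplanted_n|]=\prob[\{1,1'\}\in\Mnopt]$, which is the $\widehat\mu_n$-expectation of the bounded, locally constant (hence a.e.-continuous) functional $[\widehat{N}_\circ]\mapsto\one(\{\root,0\}\in\M)$ reading off the mark on the root's planted edge; the weak convergence just established then gives
\[
\lim_{n\to\infty}\tfrac1n\expect[|\Mnopt\cap\Mplanted_n|]=\lim_{n\to\infty}\prob[\{1,1'\}\in\Mnopt]=\prob[\{\root,0\}\in\Minfopt]=\alpha(\lambda)\, .
\]

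The technically heaviest point, and the main obstacle, is the hard half: quantifying the $o(n)$ boundary conflicts in the PWIT-driven construction and, above all, bounding the weight of the repair turning the near-perfect matching into a perfect one without spoiling the per-vertex cost. This is the step where the argument of~\cite{Aldous1992} must be genuinely generalized to accommodate the two edge types and the planted structure, and where most of the real work lies; the easy half and the uniqueness bootstrap are comparatively routine given Theorem~\ref{thm:lwc} and Propositions~\ref{prop:ineq+eqq}--\ref{prop:ineq}.
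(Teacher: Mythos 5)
Your architecture is the paper's: a compactness/Skorokhod/Fatou argument plus Proposition~\ref{prop:ineq+eqq} for the lower bound $\liminf_n\expect[\ell_n(1,\Mnopt(1))]\ge\expect[\ell_\infty(\root,\Minfopt(\root))]$, an explicit construction for the matching upper bound, and then the uniqueness statement of Proposition~\ref{prop:ineq} to upgrade convergence of the expected weight at the root into weak convergence of every subsequential limit of $\widehat{\mu}_n$ to $\widehat{\mu}_\infty$, from which the overlap formula follows since $\one(\{\root,0\}\in\M)$ is a bounded continuous functional of the marked neighborhood of the root. The easy half and the uniqueness bootstrap are essentially as in the paper.

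The concrete gap is in your repair step for the hard half. You complete the $(1-\theta)$ partial matching by finding ``a perfect matching using only light un-planted edges'' inside the residual $o(n)\times o(n)$ bipartite graph. This fails: if $k$ vertices remain on each side, the available un-planted weights between them are $\exp(1/n)$, i.e.\ $n$ times $\exp(1)$ variables, so the minimum assignment restricted to the residual graph has total expected weight of order $n\sum_{i\le k}i^{-2}\approx n\,\zeta(2)$ no matter how small $k=o(n)$ is --- the same order as the entire minimum matching, not $o(n)$. The fix, which is what the paper imports as Proposition~\ref{prop:swap} (\cite[Proposition 9]{Aldous1992}), is to repair by \emph{swapping}: pick a random set $S$ of roughly $\theta n$ already-matched left vertices, rematch the unmatched right vertices to $S$ and the unmatched left vertices into $\nu_n(S)$, using fresh independent edge weights obtained from the splitting $\ell_n=\min(\ell_n^1/(1-\alpha),\ell_n^2/\alpha)$; this costs $O(\theta^{1/2})$ per vertex, which vanishes as $\theta\to0$. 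Your phrase ``augmenting-path'' points in the right direction only if the paths are allowed to leave the residual graph. A second, smaller issue: thresholding finite-depth messages on $K_{n,n}$ can leave a vertex claimed by zero or by two neighbors, since the argmin reciprocity of Lemma~\ref{lem:match} is only guaranteed for exact messages on the tree; the paper sidesteps this by assigning each edge the conditional probability that it lies in $\Minfopt$ given its truncated neighborhood, proving the resulting matrix $Q_n$ is nearly doubly stochastic (Lemma~\ref{lem:2limits}), and extracting the partial matching via Birkhoff--von Neumann together with \cite[Proposition 7]{Aldous1992}. These are the two points where your $o(n)$ claims need that machinery rather than a greedy estimate.
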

\begin{proof}
In the easy-half proof, we show that every subsequence of $\widehat{\mu}_n$ has a further subsequence that converges to an involution random matching on the planted PWIT (see next subsection). Now, by \eqref{eq:bothhalves} and Proposition \ref{prop:ineq}, every subsequence of $\widehat{\mu}_n$ converges to $\widehat{\mu}_\infty$; so does the whole sequence. Hence, the random weak limit of $(K_{n,n}, \ell_n, \Mnopt)$ is $(T_\infty, \ell_\infty, \Minfopt)$. 
\end{proof}

Finally, using Corollary \ref{cor:iidX} we have for the expected overlap 
\begin{equation}
\label{eq:mininf-overlap}
\prob( \{ \root, 0 \} \in \Minfopt ) = \prob( X + \widehat{X} > \ell_{\infty}(\root,0) ) \, ,
\end{equation}
where $X$ and $\widehat{X}$ are independent copies of $X_0$ and $(X_0,Y_0)$ is the unique solution of the system of recursive distributional equations~\eqref{eq:disteqX}--\eqref{eq:disteqY}. But this computation is exactly what we have done in Section \ref{sec:main-infinite-tree} by transforming these distributional equations into a system of ordinary differential equations. This completes the proof of Theorem~\ref{thm:main}.

\subsection{The Easy Half: A Simple Compactness Argument}
As the first step toward the proof of the local weak convergence of $\widehat{\mu}_{n}$ to $\widehat{\mu}_{\infty}$, we show that for any sequence of $n$ that tends to infinity, there is a subsequence that converges weakly to some involution invariant random matching $\widehat{\mu}'_{\infty}$ on the planted PWIT. Saying this again in symbols, for any sequence $(n_k)$ there is an involution invariant $\widehat{\mu}'_{\infty}$ and a subsequence $(n_j) \subseteq (n_k)$ such that $\widehat{\mu}_{n_j} \xrightarrow{w} \widehat{\mu}'_{\infty}$. Our argument is somewhat simplified from~\cite[Section 5.8, pp. 53-54]{Aldous2004}.

By Theorem \ref{thm:lwc}, we already know $\mu_{n_k} \xrightarrow{w} \mu_{\infty}$: that is, the two models agree on their  local distributions of weighted neighborhoods. Since $\Gstar$ is a Polish space, the Prokhorov theorem~\cite[p. 309, Thm. 16.3]{Kallenberg2002} implies that the sequence $\mu_{n_k}$ is tight, i.e., for every $\epsilon > 0$ there is a compact set $\mathcal{K} \subset \Gstar$ such that $P([N_{{n_k},\circ}(1)] \in \mathcal{K}) > 1-\epsilon$ (recall that $\mu_{n_k}$ is the law of $[N_{{n_k},\circ}(1)]$). 

Define $\widehat{\mathcal{K}}$ by appending $\{0,1\}$ to the edges of each member of $\mathcal{K}$. Since $\mathcal{K}$ is compact, so is $\widehat{\mathcal{K}}$. Moreover, since $P([N_{{n_k},\circ}(1)] \in \mathcal{K}) > 1-\epsilon$, so is $P([\widehat{N}_{{n_k},\circ}(1)] \in \widehat{\mathcal{K}}) > 1-\epsilon$. Hence, the sequence $\widehat{\mu}_{n_k}$ is also tight, and by the Prokhorov theorem there is a further subsequence $n_j$ such that $\widehat{\mu}_{n_j}\xrightarrow{w}\widehat{\mu}'_{\infty}$ where $\widehat{\mu}'_{\infty}$ is some random matching on the planted PWIT. Since involution invariance passes through limit, $\widehat{\mu}'_{\infty}$ is involution invariant. Note that we cannot invoke the Portmanteau theorem to show that $\expect[\ell_{n_j}(1,M_{n_j,\min}(1))] \to \expect[\ell_{\infty}(\root,\M'_{\infty}(\root))]$, since the weight of the minimum matching is not a bounded continuous function with respect to the topology of (local) weak convergence.

By Skorokhod's theorem we can assume that $[\widehat{N}_{n_j,\circ}(1)] \lwc [\widehat{N}'_{\infty,\circ}]$ almost surely, where $[\widehat{N}'_{\infty,\circ}]\sim \widehat{\mu}'_{\infty}$, and $\widehat{N}'_{\infty} = (T_\infty,\ell_{\infty},\M'_\infty)$. By the definition of local convergence,
\[
\ell_{n_j}(1,M_{n_j,\min}(1)) \rightarrow \ell_{\infty}(\root,\M'_{\infty}(\root)) \text{ as } n\to\infty, \text{ almost surely}.
\]
Using Fatou's Lemma, we have
\[
\liminf_{n_j\to\infty}\expect[\ell_{n_j}(1,M_{n_j,\min}(1))] \geq \expect[\ell_{\infty}(\root,\M'_{\infty}(\root))].
\]
The lower bound~\eqref{eq:easyhalf} follows by assuming $n_k$ is a subsequence of $n$ that achieves $\liminf_{n \to \infty} \expect[\ell_{n}(1,\allowbreak\Mnopt(1))]$.

\subsection{The Harder Half}
In the ``easy half'' above we proved the inequality~\eqref{eq:easyhalf}, namely that the average weight of the minimum matching on $(K_{n,n},\ell_n)$ is bounded below by $\expect[\ell_{\infty}(\root,\Minfopt(\root))]$. Now, we are going to prove the inequality~\eqref{eq:hardhalf} in the opposite direction, and therefore that inequality can be replaced by equality within arbitrarily small $\epsilon$. The key idea is to construct a low weight matching on $(K_{n,n},\ell_n)$ using $(\ell_{\infty},\Minfopt)$. In particular, we want the average weight of this matching to be arbitrarily close to $\expect[\ell_{\infty}(\root,\Minfopt(\root))]$ for large enough $n$.

Recall that the weight of planted and un-planted edges in $(K_{n,n},\ell_n)$ are distributed as $\exp(\lambda)$ and $\exp(1/n)$ respectively, independent of everything else. Intuitively speaking, $K_{n,n}$ viewed from the planted directed edge $(1,1')$ corresponds to the doubly-rooted planted PWIT viewed from the root's planted edge $(\root,0)$, and $K_{n,n}$ viewed from an un-planted directed edge such as $(1,2)$ corresponds to the doubly-rooted planted PWIT viewed from an un-planted edge incident to the root, namely $(\root,i)$ where $i \in \nplus$ is arbitrary. 

Now, using this ``edge-centric'' viewpoint, and following the approach of Aldous in~\cite{Aldous1992,Aldous2001}, we will assign possibly fractional values to the edges of $(K_{n,n},\ell_n)$ such that the value assigned to the edge $e = \{i,j'\} \in E_n$ corresponds to the probability that $\Minfopt(e) = 1$, assuming its local neighborhood in $(K_{n,n},\ell_n)$ is a realization of the planted PWIT. 

Consider the $n \times n$ matrix $Q_n$ where $q_{i,j'}$ is the value assigned to the edge $(i,j'$). If these values corresponded exactly to probabilities, then $Q_n$ would be doubly-stochastic. That is, $Q_n$ would be a fractional matching, i.e., an element of the matching polytope, with weight 
\[
\ell_n(Q_n) = \sum_{i \in [n], j \in [n']} q_{i,j'} \ell_n(i,j') \, . 
\]
Since minimizing the weight is a linear programming problem, there is a vertex of this polytope, i.e., an honest matching  $M(\pi)_{i,j'} = 1$ if $j' = \pi(i)$ and $0$ otherwise for some permutation $\pi$, whose weight is less than or equal to $\ell_n(Q_n)$. Alternatively, we could use the Birkhoff theorem to write $Q$ as a convex combination of permutation matrices, 
\[
Q = \sum_{\pi} c_\pi M(\pi) \, .
\]
Then if we choose a random matching $\pi$ with probability $c_\pi$, the expected weight would be $\ell_n(Q_n)$. Finally, if $(1/n) \ell_n(Q_n)$ also converges to the expected weight of $\Minfopt$, we would be done.

All this is almost true. As we will see, we will define $Q_n$ by looking at a bi-infinite version of the planted PWIT, extending it in either direction from an edge~\cite[Section 5.2]{Aldous2001}. By looking at $(\ell_\infty, \Minfopt)$ on large neighborhoods of this edge, we will obtain probabilities that almost, but not quite, sum to $1$, since the true partner of a vertex in $\Minfopt$ might be outside this neighborhood. As a result, $Q_n$ is almost doubly-stochastic in a certain sense. Following~\cite{Aldous1992}, we then build an almost-perfect matching with weight close to that of $\Minfopt$, and then---by swapping a small fraction of edges---convert this into a perfect matching within increasing the weight very much. At this point in the proof, we can use lemmas in~\cite{Aldous1992} virtually unchanged.



\begin{remark}
\label{rem:simplif}
For sake of notational simplicity, for the remainder of the section, we will drop the subscripts $\infty$ and $\circ\circ$. Thus objects without the subscript $n$ live on the planted PWIT, while those with $n$ live on the finite model.
\end{remark}
\subsubsection{The Bi-infinite Planted PWIT} 
\label{app:BiinfTree}
\begin{figure}[!t]
	\begin{subfigure}[t!]{.45\linewidth}
		\begin{center}
			\includegraphics[width=\linewidth]{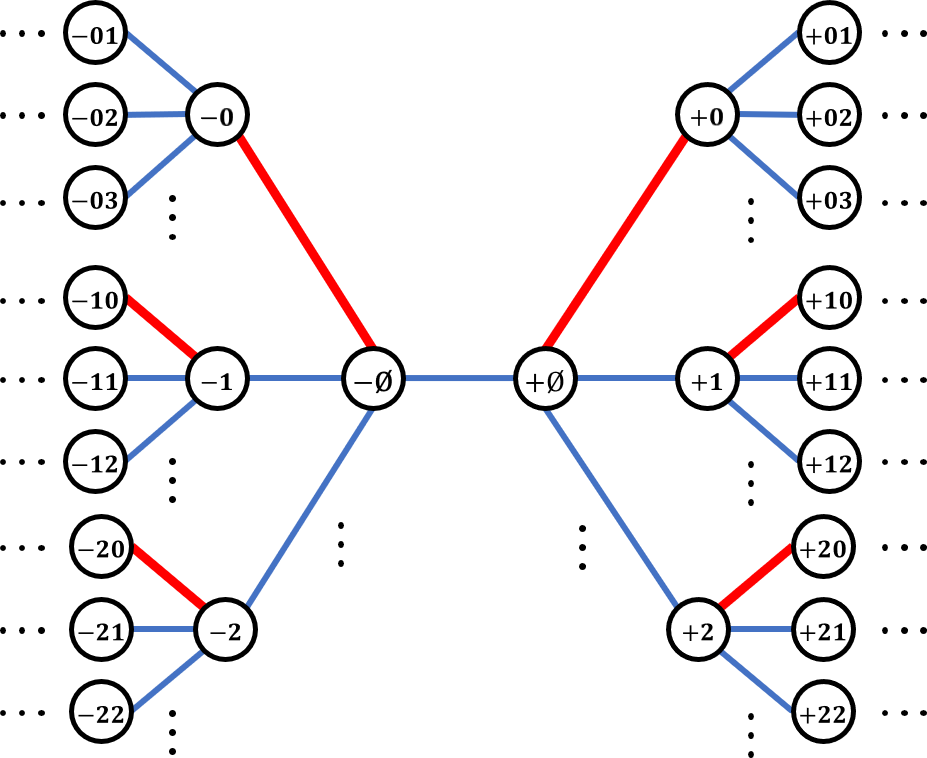}
			\caption{}\label{fig:Tu}
		\end{center}
	\end{subfigure}
	\hfill
	\begin{subfigure}[t!]{.45\linewidth}
		\begin{center}
			\includegraphics[width=\linewidth]{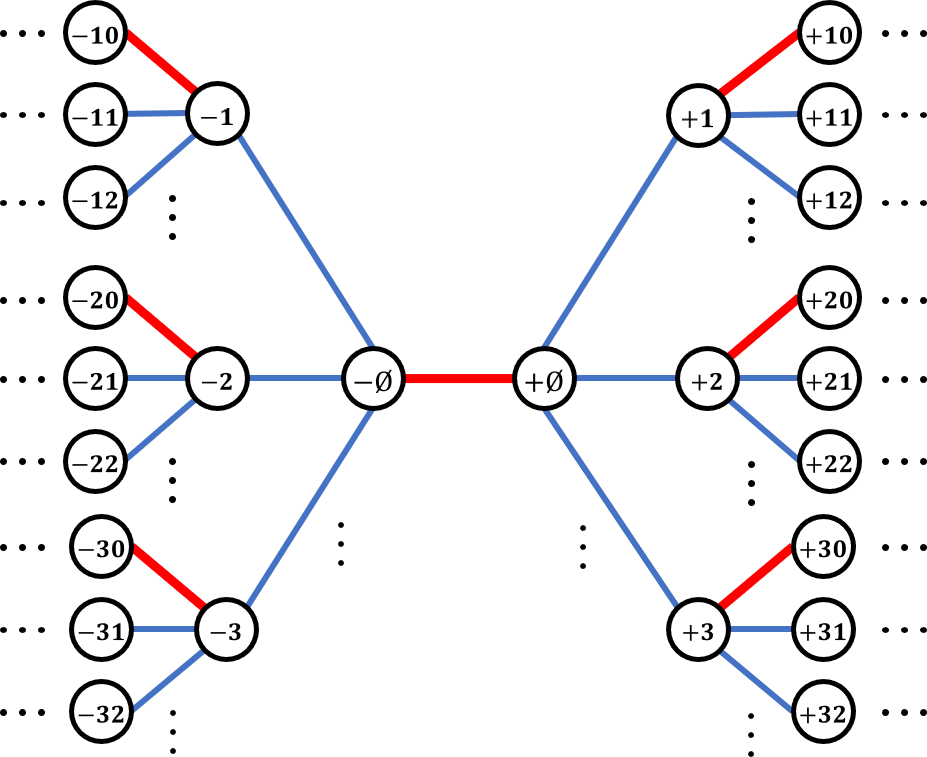}
			\caption{}\label{fig:Tp}
		\end{center}
	\end{subfigure}
	\caption{The structure of (a) $T^{\lra}_{u}$ and (b) $T^{\lra}_{p}$. The bold red edges are planted edges and the solid blue edges are un-planted.}
	\label{fig:biinf}
\end{figure}

As the first step toward the proof of the ``harder half'', we change the viewpoint from a vertex to an edge. There are two doubly-rooted infinite versions of the planted PWIT that we need to study. One is rooted at a planted edge $(\root,0)$, and the other is rooted at an un-planted edge $(\root,i)$ for some $i \in \nplus$. We illustrate these in Figure \ref{fig:biinf}. 

The measure on $i \in \nplus$ will be the uniform counting measure---that is, every position in the order of $\root$'s un-planted edges has the same measure. Although this measure is not normalizable, we will speak informally of $i$ as ``uniformly chosen.''  One intuition for this uniformity comes from the finite model $K_{n,n}$. If we choose uniformly from the $n(n-1)$ un-planted edges, or (by symmetry) from the $n-1$ un-planted edges of the root vertex $1$, then its order in the sorted list of edge weights is uniform on the set $\{1,\ldots,n-1\}$. 

The other intuition is as follows. The edge weight of an un-planted edge is distributed as $\exp(1/n)$, which for weights of constant size is asymptotically $1/n$ times the Lebesgue measure on $\rplus$. If the weight is $x$, the probability that the Poisson process of weight $1$ generates $i-1$ arrivals in the interval $[0,x]$ is $\e^{-x} x^{i-1} / (i-1)!$. The total Lebesgue measure of this event is then
\begin{equation}
\label{eq:uniform-i}
\int_0^\infty \frac{\e^{-x} x^{i-1}}{(i-1)!} \dx = \frac{\Gamma(i)}{(i-1)!} = 1 
\quad \text{for all $i$} \, ,
\end{equation}
so every $i$ has equal measure.

We make this intuition rigorous below, showing how the appropriate measure on bi-infinite PWITs around both types of edge is related to the planted PWIT by extending the strategy of~\cite[Section 5.2]{Aldous2001} to the planted case.


In the case of the planted edge $\{\root,0\}$, the corresponding bi-infinite tree is just a relabeling of the vertices: relabel $\root$ as $-\root$ and relabel $0$ as $+\root$, and then relabel all the other vertices ``below'' these two roots as we did before (see Figure \ref{fig:biinf}). However, for a planted edge $\{\root,i\}$ for a ``uniformly'' selected $i \in \nplus$, things are a bit more complicated. Since $i \in \nplus$ is ``uniformly'' selected, the cost of the edge $\{\root,i\}$ is ``uniformly distributed'' over $\mathbb{R}_+$, i.e., Lebesgue measure on $\mathbb{R}_+$. On the other hand, as we pointed out in the proof of Corollary \ref{cor:iidX}, the other un-planted children of $\root$ are still arrivals of a Poisson process. Specifically, if we remove $\{\root,i\}$, the remaining connected component of $\root$ is still a planted PWIT. Hence, the corresponding bi-infinite tree is obtained by gluing two independent copies of the planted PWIT, using an edge with edge weight ``distributed'' as Lebesgue measure, and then relabeling the vertices: $\root$ as $-\root$, $i$ as $+\root$, and the others accordingly (See Figure \ref{fig:biinf}). In the rest of the section, we are going to give detailed construction of the bi-infinite trees, and then we show that they are equivalent to the corresponding doubly-rooted planted PWIT.

\begin{remark}
	To distinguish the edge-centric viewpoint from the singly-rooted vertex viewpoint, we use the superscript $\lra$.
\end{remark}

Let $T_{u}$ denote the planted PWIT rooted at $\root$, and let $T_{p}$ denote the subtree, rooted at $0$, obtained by removing the edge $\{\root,0\}$. Relabel the root of $T_{p}$ to be $\root$, and then relabel all the vertices of $T_p$ using the same rule as in the planted PWIT.
Let $T_{u}^-$ and $T_{u}^+$ be two independent copies of $T_{u}$. Similarly, let $T_{p}^-$ and $T_{p}^+$ be two independent copies of $T_{p}$. Relabel the vertices of $T_{u}^+$ and $T_{p}^+$ by adding ``$\,+\,$'' sign to the original labels, and relabel the vertices of $T_{u}^-$ and $T_{p}^-$ by adding ``$\,-\,$'' sign to the original labels. 

Now, let $T^{\lra}_{u}$ denote a bi-infinite tree, rooted at $(-\root,+\root)$, obtained by joining the roots of $T_{u}^-$ and $T_{u}^+$. Let $V^{\lra}_{u}$ and $E^{\lra}_{u}$ denote the vertices and the edges of $T^{\lra}_{u}$ respectively. Let $M^{\lra}_{u}$ denote the set of planted edges of $T^{\lra}_{u}$, which is the union of the planted edges in $T_{u}^-$ and $T_{u}^+$. Let $\ell^{\lra}_u$ denote the function that assigns weight to the edges of $T^{\lra}_{u}$ by using the weight of the edges in $T_{u}^-$ and $T_{u}^+$, and specifying the weight of $\{-\root,+\root\}$ to be uniformly ``distributed'' on $[0,\infty)$, i.e., Lebesgue measure on $\mathbb{R}_+$, independent of everything else. Write $T^{\lra}_{u} = (G^{\lra}_{u},\ell^{\lra}_u)$, where $G^{\lra}_{u} = (V^{\lra}_{u},E^{\lra}_{u},M^{\lra}_{u},(-\root,+\root))$, and let $\mu^{\lra}_{u}$ denote the $\sigma$-finite measure associated with $T^{\lra}_{u}$.

Similarly, define $T^{\lra}_{p}$ by joining the roots of $T_{p}^-$ and $T_{p}^+$. However, this time include $\{-\root,+\root\}$ as a planted edge in $M^{\lra}_{p}$, and specify the weight of $\{-\root,+\root\}$ to be an exponentially distributed random variable with parameter $\lambda$, independent of everything else. Write $T^{\lra}_{p} = (G^{\lra}_{p},\ell^{\lra}_p)$, where $G^{\lra}_{p} = (V^{\lra}_{p},E^{\lra}_{p},M^{\lra}_{p},(-\root,+\root))$, and let $\mu^{\lra}_{p}$ denote the probability distribution of $T^{\lra}_{p}$. Figure \ref{fig:biinf} illustrates a realization of $T^{\lra}_{u}$ and $T^{\lra}_{p}$.

Recall that the doubly-rooted planted PWIT is the product measure $\mu\times \texttt{count} \text{ on }\{0,1,2,\allowbreak3,\cdots\}$, where $\texttt{count}$ is the counting measure. We can think of it as a product measure on $[0,\infty)^E \times \{0,1,2,\cdots\}$. Similarly, we can think of $\mu^{\lra}_{u}$ and $\mu^{\lra}_{p}$ as a $\sigma$-finite measure on $[0,\infty)^{E^{\lra}_u}$ and a probability measure on $[0,\infty)^{E^{\lra}_p}$, respectively.
Now, depending on whether the second root is $0$ or not, there is a natural map from the doubly-rooted planted PWIT to $T^{\lra}_{p}$ or $T^{\lra}_{u}$: (1) If the second root is $0$, relabel the vertices as 
\begin{enumerate}[label=\roman*.]
	\item relabel $0$ as $+\root$ and $\root$ as $-\root$,
	\item relabel any sequence $0i_1i_2\cdots i_l$ for $l\geq 0$ as $+i_1i_2\cdots i_l$,
	\item relabel any sequence $i_1i_2\cdots i_l$ for $l\geq 1$ such that $i_1 \neq 0$ as $-i_1i_2\cdots i_l$.
\end{enumerate}
This relabeling induces a bijection $\psi\big|_p : [0,\infty)^E \times \{0\} \to [0,\infty)^{E^{\lra}_p}$; (2) If the second root is $k\in \{1,2,3,\cdots\}$, relabel the vertices as
\begin{enumerate}[label=\roman*.]
	\item relabel $k$ as $+\root$ and $\root$ as $-\root$,
	\item relabel any sequence $ki_1i_2\cdots i_l$ for $l\geq 0$ as $+i_1i_2\cdots i_l$,
	\item relabel any sequence $i_1i_2\cdots i_l$ for $l\geq 1$ such that $i_1 > k$ as $-(i_1-1)i_2\cdots i_l$,
	\item relabel any sequence $i_1i_2\cdots i_l$ for $l\geq 1$ such that $i_1 < k$ as $-i_1i_2\cdots i_l$.
\end{enumerate}
This relabeling induces a bijection $\psi\big|_u : [0,\infty)^E \times \{1,2,3,\cdots \} \to [0,\infty)^{E^{\lra}_u}$. Clearly, $\psi\big|_p$ maps $\mu\times \delta_{0}$ to $\mu^{\lra}_{p}$, where $\delta_{0}$ is the delta measure on the second root. It is also easy to see that $\psi\big|_u$ maps $\mu\times \texttt{count} \text{ on } \{1,2,3,\cdots \}$ to $\mu^{\lra}_{u}$ using the following lemma. Finally, note that
\begin{align*}
\mu\times \texttt{count} \text{ on } \{0,1,2,3,\cdots \} = \mu\times \delta_{0} + \mu\times \texttt{count} \text{ on } \{1,2,3,\cdots \}.
\end{align*}

\begin{lemma}\cite[Lemma 25]{Aldous2001}
	Write $\Delta \coloneqq \{(x_i): 0< x_1 < x_2 < \cdots, x_i\to\infty \}$. Write \texttt{Pois} for the probability measure on $\Delta$ which is the distribution of the Poisson process of rate $1$. Consider the map $\mathcal{X}:\Delta \times \{1,2,3,\cdots\} \to \Delta \times [0,\infty)$ which takes $((x_i),k)$ to $((x_i,i\neq k),x_k)$. Then $\mathcal{X}$ maps $\texttt{Pois} \times \texttt{count}$ to $\texttt{Pois} \times \texttt{Leb}$, where \texttt{Leb} is the Lebesgue measure on $[0,\infty)$.
\end{lemma}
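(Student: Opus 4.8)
The plan is to derive this as an instance of the Mecke equation (the refined Campbell theorem, a.k.a.\ the Slivnyak--Mecke identity) for the rate-$1$ Poisson process on $\rplus$. Since $\texttt{Pois}\times\texttt{count}$ and $\texttt{Pois}\times\texttt{Leb}$ are $\sigma$-finite rather than probability measures, I would phrase the claim as equality of the two pushforward measures tested against an arbitrary nonnegative measurable function, and only afterwards promote it to all Borel sets by the usual uniqueness/monotone-class argument. Concretely, writing $\Pi$ for the rate-$1$ Poisson process on $\rplus$ and identifying a realization with its increasing sequence of points $(x_i)_{i\ge1}$, the identity to verify is
\begin{equation*}
\sum_{k=1}^{\infty}\expect\bigl[\,g\bigl((x_i)_{i\neq k},\,x_k\bigr)\bigr]
\;=\;\int_{0}^{\infty}\expect\bigl[\,g\bigl((x_i)_{i\ge1},\,t\bigr)\bigr]\,dt
\qquad\text{for all measurable }g\ge0\text{ on }\Delta\times\rplus .
\end{equation*}

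First I would check that $\mathcal{X}$ is well defined $\texttt{Pois}$-almost surely: a rate-$1$ Poisson process on $\rplus$ has, almost surely, infinitely many pairwise distinct points with no finite accumulation point, so its realization lies in $\Delta$, and deleting the $k$-th point leaves a strictly increasing sequence diverging to $+\infty$, hence still in $\Delta$. The key observation is then that $(x_i)_{i\neq k}$, reindexed in increasing order, is exactly the reduced configuration $\Pi\setminus\{x_k\}$, so (by Tonelli for the nonnegative sum) the left-hand side above equals $\expect\bigl[\sum_{x\in\Pi}g(\Pi\setminus\{x\},\,x)\bigr]$. The Mecke equation for the Poisson process with diffuse intensity $\texttt{Leb}$ on $\rplus$ gives $\expect\bigl[\sum_{x\in\Pi}h(x,\Pi)\bigr]=\int_{0}^{\infty}\expect[h(t,\Pi+\delta_t)]\,dt$; applying it with $h(x,\pi)=g(\pi-\delta_x,\,x)$, so that $h(t,\Pi+\delta_t)=g(\Pi,\,t)$ for $\texttt{Leb}\otimes\texttt{Pois}$-a.e.\ $(t,\Pi)$, produces exactly the right-hand side. (If one wishes to stay fully self-contained and avoid quoting Mecke, the same identity can be extracted by hand from the spacing representation $x_i=E_1+\cdots+E_i$ with $E_j$ i.i.d.\ $\exp(1)$ via a change of variables, but this route is fiddlier and I would only fall back on it if citing Palm theory is deemed undesirable.)

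I do not expect a genuine obstacle here: the content is entirely the classical Palm calculus of the Poisson process, and the only care needed is the $\sigma$-finiteness bookkeeping noted above (one cannot compare the two measures set-by-set because $\texttt{count}$ and $\texttt{Leb}$ are infinite) together with the a.s.\ well-definedness of the deletion map. Once the left-hand side is recognized as a sum over the points of $\Pi$ of a functional of the chosen point and the reduced configuration, the lemma is immediate, and it supplies precisely the missing verification in Section~\ref{app:BiinfTree} that $\psi\big|_u$ pushes $\mu\times\texttt{count}$ on $\{1,2,3,\cdots\}$ forward to $\mu^{\lra}_{u}$.
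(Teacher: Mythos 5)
Your proof is correct. The paper does not actually prove this lemma---it is imported verbatim as \cite[Lemma 25]{Aldous2001}---so there is no in-paper argument to compare against; the closest the paper comes is the marginal computation in~\eqref{eq:uniform-i}, which only checks that each fixed index $k$ receives total Lebesgue measure $1$ and is strictly weaker than the full pushforward identity you establish. Your route through the Mecke (Slivnyak--Mecke) equation is the standard Palm-calculus proof and is complete as sketched: the essential step is recognizing the integral of $g\circ\mathcal{X}$ against $\texttt{Pois}\times\texttt{count}$ as $\expect\bigl[\sum_{x\in\Pi}g(\Pi\setminus\{x\},x)\bigr]$ and applying Mecke with $h(x,\pi)=g(\pi-\delta_x,x)$, and the points you flag for care---testing against nonnegative measurable functions because both measures are only $\sigma$-finite, the a.s.\ well-definedness of the deletion map on $\Delta$, and diffuseness of the intensity so that $h(t,\Pi+\delta_t)=g(\Pi,t)$ for a.e.\ $(t,\Pi)$---are exactly the right ones. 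The elementary alternative you mention (the change of variables on the spacings $x_i=E_1+\cdots+E_i$, which is essentially how Aldous argues) buys self-containedness at the cost of more computation; citing Mecke buys brevity at the cost of invoking Palm theory. Either is acceptable here.
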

\begin{remark}
	For sake of brevity, we use $E^\lra_{\cdot}$ instead of $E^\lra_u$ or $E^\lra_p$ whenever the subscript does not affect the discussion. Similar changes applies to other symbols. More specifically, all the following arguments are true if we replace all the ``$\,_{\cdot}\,$'' with``$\,_{u}\,$'' or ``$\,_{p}\,$''.
\end{remark}
We introduce two additional pieces of notation that we will use in the following subsections. Recall that $G^{\lra}_{\cdot} = (V^{\lra}_{\cdot},E^{\lra}_{\cdot},M^{\lra}_{\cdot},(-\root,+\root))$ is a bi-infinite tree (without the edge weights). We define $G^{\lra}_{\cdot,B}$ to be a subtree of $G^{\lra}_{\cdot}$ induced by $V^{\lra}_{\cdot,B}\coloneqq \{\pm\root \} \cup \{\pm i_1i_2\cdots i_l \in V^{\lra}_{\cdot} : i_s \in \{0,1,\cdots, B\} \}$, i.e., the subtree obtained by restricting the number of un-planted children of each vertex to $B$. We also define $G^{\lra}_{\cdot,B,H}$ to be a subtree of $G^{\lra}_{\cdot,B}$ induced by $V^{\lra}_{\cdot,B,H}\coloneqq \{\pm\root \} \cup \{\pm i_1i_2\cdots i_l \in V^{\lra}_{\cdot,B} : l \leq H+1 \text{ and } l = H+1 \text{ iff }i_l = 0\}$, i.e., the subtree obtained by restricting the depth of vertices to $H$ or $H+1$, depending on whether the vertex is a planted pair of its parent or not. Define $\ell^{\lra}_{\cdot,B}$ to be the restriction of $\ell^{\lra}_{\cdot}$ to $\ell^{\lra}_{\cdot,B}$, and $\ell^{\lra}_{\cdot,B,H}$ to be the restriction of $\ell^{\lra}_{\cdot,B}$ to $\ell^{\lra}_{\cdot,B,H}$. Now, define $T^{\lra}_{\cdot,B} \coloneqq (G^{\lra}_{\cdot,B},\ell^{\lra}_{\cdot,B})$ and $T^{\lra}_{\cdot,B,H} \coloneqq (G^{\lra}_{\cdot,B,H},\ell^{\lra}_{\cdot,B,H})$, and let $\mu^{\lra}_{\cdot,B}$ and $\mu^{\lra}_{\cdot,B,H}$ to be the associated measures. Note that there is a natural restriction ${\rho}_{\cdot,H}$ that maps $T^{\lra}_{\cdot,B}$ to $T^{\lra}_{\cdot,B,H}$.

Thus far, we show that the doubly-rooted planted PWITs and bi-infinite trees are equivalent. Next, we are going establish the connection between the planted model $(K_{n,n},\ell_n)$ and the bi-infinite trees. 

\subsubsection{The Unfolding Map}
Now that we have discussed how to view the planted PWIT from an edge (a planted edge or a ``uniformly'' selected un-planted edge), we are going to discuss the similar viewpoint in $K_{n,n} = (V_n,E_n,\Mplanted_n)$. This is done via an unfolding map that unfolds $(K_{n,n},\ell_n)$ viewed from a planted or un-planted edge. This unfolding map is similar to the one discussed in \cite[Section 3.2]{Aldous1992} with two additional properties: the number of un-planted children of every vertex is the same (for a fixed $B$ when $n$ is large enough),
and the set of un-planted edges $\Mplanted_n$ is a matching on the unfolded graph. Eventually, we want to show that the local neighborhood of an edge in $K_{n,n}$ is the same as the local neighborhood of the edge in the bi-infinite trees. This should not come as a surprise, given Theorem \ref{thm:lwc}.

Next, we are going to give detailed construction of the unfolding map. Fix $B \in \mathbb{N}_+$, and fix some edge $\{i,j'\}$ from $K_{n,n}$. We are going to unfold $(K_{n,n},\ell_n)$ from the viewpoint of the edge $\{i,j'\}$ and construct the doubly-rooted tree $U^{\lra}_{B,n}(i,j')$, rooted at $(i,j')$, with maximum $B+1$ arity. This unfolding map resembles the exploration process discussed in the proof of Theorem \ref{thm:lwc}, with one key difference: this map unfolds $(K_{n,n},\ell_n)$ from an edge viewpoint.

Include the edge $\{i,j'\}$ in $U^{\lra}_{B,n}(i,j')$. 
The unfolding map proceeds as follows.  In this process, all vertices will be ``live'', ``dead'', or ``neutral''. The live
vertices will be contained in a queue. Initially, $i$ and $j'$ are live and the queue consists of only
$i$, $j'$ in order and all the other vertices are neutral.  
At each time step, a live vertex $v$ is popped from the head of the queue. 
If $v$'s planted neighbor $M_n^*(v)$ is neutral, then include the edge $\{v, M_n^*(v)\}$ in 
the doubly-rooted tree 
$U^{\lra}_{B,n}(i,j')$, add $M_n^*(v)$ to the end of the queue and let $M_n^*(v)$ live. 
Let $v_1, v_2, \ldots, v_B$ denote the 
$B$ closest un-planted, neutral neighbors of vertex $v$ in a non-decreasing order of distance to $v$. 
Include all the edges $\{v,v_k\}$ in the doubly-rooted tree 
$U^{\lra}_{B,n}(i,j')$ for $1 \le k \le B$. Also, include all the edges $\{v_k, M_n^*(v_k)\}$ 
in $U^{\lra}_{B,n}(i,j')$ for $ 1 \le k \le B$. The popped vertex $v$ is dead. Add those neutral vertices $v_1, v_2, \ldots, v_B$ to the end of the queue in order and they are live. Also, add  those neutral vertices 
$M_n^*(v_1), M_n^*(v_2), \ldots, M_n^*(v_B)$ to the end of the queue in order and they are live.
The process ends when the queue is empty.  According to this rule, 
the order of the vertices that are selected after $j'$ is
 the planted neighbor $i_0$ of $i$ (if $\{i,j'\}$ is not planted), $i_1,\, i_2,\,\cdots,\,i_B$, 
$\Mplanted_n(i_1),\,\Mplanted_n(i_2),\,\cdots,\,\Mplanted_n(i_B)$,
the planted neighbor $j'_0$ of $j'$ (if $\{i,j'\}$ is not planted), $j'_1,\,j'_2,\,\cdots,\,j'_B$,  $\Mplanted_n(j'_1),\,\Mplanted_n(j'_2),\,\cdots,\,\Mplanted_n(j'_B)$, etc.
This unfolding process stops when all vertices of $K_{n,n}$ are included in $U^{\lra}_{B,n}(i,j')$. Figure \ref{fig:unfold} illustrates the $U^{\lra}_{1,4}(1,2')$ 
and $U^{\lra}_{1,4}(1,1')$ for the planted network given in Figure \ref{fig:pmod}. Note that in both cases, $\Mplanted_n$ is a matching.

\begin{figure}[!t]
	\centering
	\begin{subfigure}[t!]{.65\linewidth}
		\begin{center}
			\includegraphics[width=\linewidth]{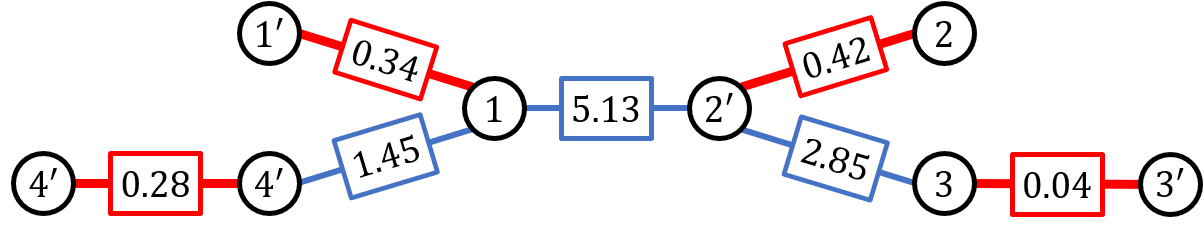}
			\caption{}
		\end{center}
	\end{subfigure}
	\hfill
	\begin{subfigure}[t!]{.65\linewidth}
		\begin{center}
			\includegraphics[width=\linewidth]{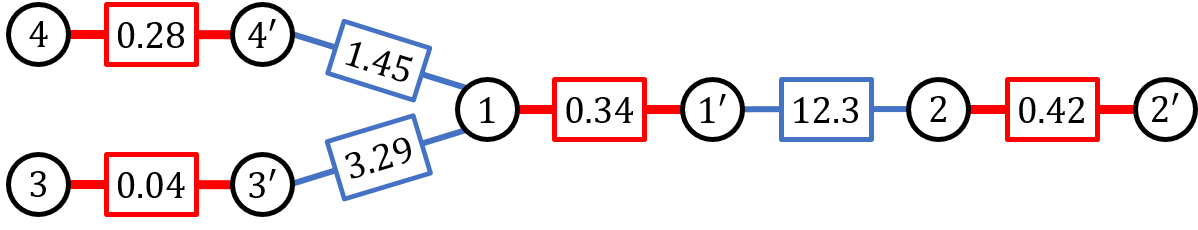}
			\caption{}
		\end{center}
	\end{subfigure}
	\caption{The doubly-rooted trees (a) $U^{\lra}_{1,4}(1,2')$ and (b) $U^{\lra}_{1,4}(1,1')$ obtained by unfolding the planted network in Figure \ref{fig:pmod}.}\label{fig:unfold}
\end{figure}

Let us define a relabeling bijection $\phi^{i,j'}_{B,n}$ (not to be confused with $\phi\big|_p$ and $\phi\big|_u$) from $V_n$ to a subset of $+\Sigma\, \cup -\Sigma$, 
where $+\Sigma:= \{+\bs{i}:\bs{i}\in\Sigma\}$ and  $-\Sigma:= \{-\bs{i}:\bs{i}\in\Sigma\}$ denote the set of vertex labelings of the bi-infinite planted PWIT. 
 Define $\phi^{i,j'}_{B,n}(i) = -\root$ and $\phi^{i,j'}_{B,n}(j') = +\root$. At each step of the unfolding process, when we pick vertex $v$ and add $\{v,v_k\}$ and $\{v_k,\Mplanted_n(v_k)\}$ to $U^{\lra}_{B,n}(i,j')$, set $\phi^{i,j'}_{B,n}(v_k) = \concat{\phi^{i,j'}_{B,n}(v)k}$ and $\phi^{i,j'}_{B,n}(\Mplanted_n(v_k)) = \concat{\phi^{i,j'}_{B,n}(v_k)0}$. 

Here, using the bijection $\phi^{i,j'}_{B,n}$, we first map $U^{\lra}_{B,n}(i,j')$ to a subtree of $T^{\lra}_{\cdot,B}$ and then apply  the restriction map $\rho_{\cdot,H}$ on this subtree. For ease of notation, we abbreviate $\rho_{\cdot,H} \left( \phi^{i,j'}_{B,n} \left( U^{\lra}_{B,n}(i,j') \right) \right) $ as $\rho_{\cdot,H} \left( U^{\lra}_{B,n}(i,j') \right)$ whenever the context is clear.
It is easy to see that the graph structure of $\rho_{u,H}\left(U^{\lra}_{B,n}(i,j')\right)$ for $j'\neq i'$ and all sufficiently large $n$ is isomorphic to $G^{\lra}_{u,B,H}$, and the graph structure of $\rho_{p,H}\left(U^{\lra}_{B,n}(i,i')\right)$ for all sufficiently large $n$ is isomorphic to $G^{\lra}_{p,B,H}$. Let $\mu^{\lra}_{n,u,B,H}$ denote the probability measure associated with $\rho_{u,H}\left(U^{\lra}_{B,n}(i,j')\right)$ for $j'\neq i'$, and let $\mu^{\lra}_{n,p,B,H}$ denote the probability measure associated with $\rho_{u,H}\left(U^{\lra}_{B,n}(i,i')\right)$. Note that if $D_{u,x} = \{T^{\lra}_{u,B,H}: \ell^{\lra}_{u,B,H}( \{-\root,\root\} )< x \}$, then $\mu^{\lra}_{n,u,B,H}(D_{u,x}) = 1 - \exp(-x/n) \approx x/n$ as $n\to\infty$. Also, note that $\mu^{\lra}_{u,B,H}(D_{u,x}) = x$. We now generalize Lemma 10 in \cite{Aldous1992} to the planted case.
\begin{lemma}\label{lem:unfold}
	For any fixed $B, H\in \mathbb{N}_+$, and $x>0$ we have
	\begin{align*}
	n \mu^{\lra}_{n,u,B,H}(D_{u,x} \cap  \cdot \,) &\xrightarrow{TV} \mu^{\lra}_{u,B,H}(D_{u,x} \cap  \cdot \,)\\
	\mu^{\lra}_{n,p,B,H}&\xrightarrow{TV} \mu^{\lra}_{p,B,H}, 
	\end{align*}
	where the total variation convergence of positive measures is defined as:
	\begin{align*}
	\mu_n\xrightarrow{TV} \mu \text{ iff } \sup_A|\mu_n(A) - \mu_(A)| \to 0 .
	\end{align*}
\end{lemma}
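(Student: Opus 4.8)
The plan is to reduce both statements to a pointwise convergence of Lebesgue densities on a fixed finite-dimensional space, then invoke Scheff\'e's lemma, after first showing that the ``conflicts'' that can occur in the unfolding map are negligible even in total variation. As a preliminary step I would record what the text already observes: for all sufficiently large $n$, the graph underlying $\rho_{p,H}(U^{\lra}_{B,n}(i,i'))$ is isomorphic to the fixed finite graph $G^{\lra}_{p,B,H}$, and the graph underlying $\rho_{u,H}(U^{\lra}_{B,n}(i,j'))$ for $j'\neq i'$ is isomorphic to $G^{\lra}_{u,B,H}$. Hence for large $n$ both $\mu^{\lra}_{n,p,B,H}$ and $\mu^{\lra}_{n,u,B,H}$ may be viewed as measures on $\rplus^{E}$, with $E$ the fixed edge set of the appropriate bi-infinite truncated tree, and it suffices to compare Lebesgue densities on $\rplus^{E}$. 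Alongside the true measures I would introduce an \emph{idealized} law $\nu_{n}$ on $\rplus^{E}$ generated by following the BFS order of $G^{\lra}_{\cdot,B,H}$: give each planted edge an independent $\exp(\lambda)$ weight, and give the $B$ un-planted child-edges of each vertex $v$ (in BFS order) the order statistics of the $B$ smallest of $m_{v}\coloneqq n-c_{v}$ fresh i.i.d.\ $\exp(1/n)$ variables, where $c_{v}\le K\coloneqq|V^{\lra}_{\cdot,B,H}|$ is the number of vertices preceding $v$ on $v$'s side in the BFS order; for the $u$-tree one additionally gives the root edge $\{-\root,+\root\}$ an independent $\exp(1/n)$ weight.

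The core of the argument is to show $\|\mu^{\lra}_{n,\cdot,B,H}-\nu_{n}\|_{TV}=O(1/n)$. Let $E_{n}$ be the ``clean'' event that throughout the unfolding, whenever a vertex $v$ is popped, its $B$ closest \emph{neutral} un-planted neighbors coincide with its $B$ un-planted neighbors of smallest weight in all of $K_{n,n}$, and, whenever $v$ is to acquire a planted child, $\Mplanted_{n}(v)$ is still neutral. Since the unfolding touches at most $K$ vertices, and by exchangeability each un-planted neighbor of $v$ is equally likely to hold any given rank among $v$'s incident un-planted weights, a union bound gives $\prob[E_{n}^{c}]=O(1/n)$. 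On $E_{n}$ the unfolding is conflict-free: the only previously revealed edge incident to a popped $v$ is the edge to its parent; $m_{v}$ is then the deterministic constant $n-c_{v}$; and the implicit conditioning recorded when $v$ is processed, namely that $v$'s non-chosen un-planted neighbors are heavier than its $B$-th, constrains only edges incident to $v$, which is thereafter dead, so it never affects a later step. Consequently the law of the weighted tree produced by the unfolding, restricted to $E_{n}$ and renormalized, differs from $\nu_{n}$ only through the $E_{n}$-conditioning of the fresh weights at each step, which perturbs each per-vertex order-statistic family by $O(1/n)$ in total variation; summing over the $\le K$ steps, $\|\mu^{\lra}_{n,\cdot,B,H}-\nu_{n}\|_{TV}\le 2\,\prob[E_{n}^{c}]+O(K/n)=O(1/n)$. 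Making this last bookkeeping precise, i.e.\ controlling in total variation (not merely weakly) both the conflict probability and the $E_{n}$-conditioning of the per-step weights, is the step I expect to be the main obstacle; everything else is routine.

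It then remains to show $\nu_{n}\to\mu^{\lra}_{\cdot,B,H}$ in the stated sense. By construction the planted-edge factors of the density of $\nu_{n}$ are exactly the $\lambda\e^{-\lambda w}$ factors appearing in $\mu^{\lra}_{\cdot,B,H}$, and each vertex's un-planted factor is the order-statistic density $\frac{m_{v}!}{(m_{v}-B)!}\,n^{-B}\,\e^{-(w_{1}+\cdots+w_{B})/n}\,\e^{-(m_{v}-B)w_{B}/n}$ on $\{0<w_{1}<\cdots<w_{B}\}$; since $m_{v}=n-c_{v}$ with $c_{v}=O(1)$, this converges pointwise to $\e^{-w_{B}}$, the joint density of the first $B$ arrivals of a rate-$1$ Poisson process, i.e.\ the matching factor in $\mu^{\lra}_{\cdot,B,H}$. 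As the number of factors is fixed, the full densities converge pointwise, so for the planted case Scheff\'e's lemma (pointwise convergence of probability densities implies $L^{1}$, hence total-variation, convergence) gives $\mu^{\lra}_{n,p,B,H}\xrightarrow{TV}\mu^{\lra}_{p,B,H}$. For the un-planted case the root-edge weight is $\exp(1/n)$ and hence not tight, which is exactly why one restricts to $D_{u,x}$ and multiplies by $n$: the root-edge factor $\frac{1}{n}\e^{-w_{0}/n}\mathbf{1}\{w_{0}<x\}$ becomes, after multiplication by $n$, $\e^{-w_{0}/n}\mathbf{1}\{w_{0}<x\}\to\mathbf{1}\{w_{0}<x\}$, the density of Lebesgue measure on $[0,x]$, while all other factors behave as above. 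Thus $n\,\mu^{\lra}_{n,u,B,H}(D_{u,x}\cap\cdot\,)$ and $\mu^{\lra}_{u,B,H}(D_{u,x}\cap\cdot\,)$ are finite measures with total masses $n(1-\e^{-x/n})\to x$ and $x$ respectively, whose Lebesgue densities converge pointwise, and the form of Scheff\'e's lemma for finite measures with convergent total mass yields the claimed total-variation convergence. Up to the extra $\exp(\lambda)$ factors on planted edges and the conflict-control of the previous paragraph, this is the computation of~\cite[Lemma~10]{Aldous1992}.
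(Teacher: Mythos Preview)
Your proposal is correct and follows the same density-comparison route as the paper, which simply defers to the proof of Lemma~\ref{lem:tvconv}: there one bounds the Radon--Nikodym derivative of the finite-model law against the truncated-PWIT law from below by a product tending to $1$, whereas you equivalently show pointwise convergence of Lebesgue densities and invoke Scheff\'e (with the extra root-edge factor $n\cdot\frac{1}{n}\e^{-w_0/n}\to 1$ on $D_{u,x}$, exactly as the paper remarks). Your intermediate law $\nu_n$ and clean event $E_n$ are more machinery than needed, and the step you flag as the ``main obstacle'' does not in fact arise: the unfolding is constructed so that for all sufficiently large $n$ its graph structure is \emph{deterministically} isomorphic to $G^{\lra}_{\cdot,B,H}$ (whenever un-planted children $v_1,\dots,v_B$ are added, their planted partners $\Mplanted_n(v_k)$ are automatically still neutral, since otherwise $v_k$ itself would already be non-neutral), and at each pop the $B$ revealed un-planted weights are exactly the $B$ smallest of a deterministic count $n-O(1)$ of fresh i.i.d.\ $\exp(1/n)$ variables, independently across pops---so $\mu^{\lra}_{n,\cdot,B,H}$ already has the explicit product density you would assign to $\nu_n$, and your Scheff\'e step applies to it directly.
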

\begin{proof}
	The proof is almost identical to the proof of Lemma \ref{lem:tvconv}, and has been omitted. The only subtle difference is the restriction to $D_{u,x}$, which causes no problem since $n\times \frac{1}{n}\e^{-x/n} \to 1$ as $n\to\infty$.
\end{proof}
Next, we are going to use the involution invariant random matching $(\ell,\M_{\opt})$ 
on the planted PWIT to assign values to the edges of $(K_{n,n},\ell_n)$. Ideally, the value assigned to an edge $e = \{i,j'\}$ corresponds to the probability of $e$ being in the matching that we want to construct.
\subsubsection{Assigning Values to the edges of $(K_{n,n},\ell_n)$}
Now it's time to assign fractional values to the edges of $(K_{n,n},\ell_n)$. This is done by pretending that the local neighborhood of an edge $e$ in $(K_{n,n},\ell_n)$, is a realization of the local neighborhood of the corresponding bi-infinite tree. So, as the first step toward assigning values to the edges of $(K_{n,n},\ell_n)$, we need to know how to assign value to $\{-\root,+\root\}$ in the bi-infinite tree using the minimum matching on the planted PWIT. The idea is to use the inverse image of $\psi\big|_\cdot$ and map the edge $\{-\root,+\root\}$ to the corresponding edge on the planted PWIT. This gives us a function $g_{\cdot}:[0,\infty)^{E^{\lra}_{\cdot}}\to[0,1]$, which we then use to assign fractional values to the edges of $(K_{n,n},\ell_n)$, by conditioning on its neighborhood. We are going to discuss these steps in detail. We follow the discussion in \cite[Section 5.5]{Aldous2001} and then \cite[Introduction]{Aldous1992}.

Recall that $\widetilde{\mu} = \mu \times \texttt{count} \text{ on }\{0,1,2,3,\cdots\}$ is the measure associated with the doubly-rooted planted PWIT. Also, recall that since $\mu$ is involution invariant, the product measure $\widetilde{\mu}$ is invariant under the involution map $\iota$ that swaps the roots. Following the discussion of \cite[Section 5.5]{Aldous2001}, we define the function $\gamma_{\opt}:[0,\infty)^E \times \{0,1,2,\cdots\}\to[0,1]$ by
\begin{align*}
\gamma_{\opt}(\bs{w},i) = \prob[\M_{\opt}(\root) = i | \ell(e) = w(e)\, \forall e \in E ]
\end{align*}
As Aldous points out, the function $\gamma_{\opt}$ satisfies certain consistency properties:
\begin{enumerate}[label=(\roman*)]
	\item $\sum_{i=0}^\infty \gamma_{\opt}(\bs{w},i) = 1$ since $\M_{\opt}$ is a matching.
	\item $\gamma_{\opt}(\iota(\bs{w},i)) = \gamma_{\opt}(\bs{w},i)$ since the random matching $(\ell,\M_\opt)$ is involution invariance, where $\iota(\bs{w},i)$ swaps $i$ and $\root$ given $\{\ell(e) = w(e)\, \forall e\in E\}$.
\end{enumerate}
Also, we have (iii) $\expect[\ell(\M_{\opt}(\root),\root)] = \int_{\bs{w}} \sum_{i=0}^\infty \gamma_{\opt}(\bs{w},i)w(\root,i)\, \mu(d\bs{w})$. 
Now, define $g_{\cdot}:[0,\infty)^{E^{\lra}_{\cdot} }\to[0,1]$ as follows:
\begin{align*}
g_.(\bs{w}^{\lra}_.) = \gamma_{\opt}(\psi\big|_\cdot^{-1}(\bs{w}^{\lra}_.)),
\end{align*}
where $\psi\big|_\cdot$ is the bijection map that we defined earlier in Section \ref{app:BiinfTree}, and $\psi\big|_\cdot^{-1}$ is its inverse.
We can think of the function $g_\cdot$ as the probability that the edge $\{ -\root,+\root \}$ is in the matching. The function $g_\cdot$ satisfies similar consistency properties corresponding to (i) and (ii):
\begin{enumerate}[label=(\roman*$^\prime$)]
	\item The function $g_\cdot$ assigns honest probabilities to the neighbors of $-\root$ as well as $+\root$. To be more specific,
	let $\iota_{p,i}^+,\iota_{p,i}^-:[0,\infty)^{E^{\lra}_{p}}\to[0,\infty)^{E^{\lra}_{u}}$ denote the root swapping maps, that change the root from $(-\root,+\root)$ to $(-\root,-i)$, and from $(-\root,+\root)$ to $(+i,+\root)$ respectively (and then relabel all the vertices). Similarly, define $\iota_{u,i}^+$ and $\iota_{u,i}^-$. We have:
	\begin{enumerate}[label=(\alph*)]
		\item In $T^{\lra}_p$, the values assigned to the neighbors of $-\root$ as well as $+\root$ sums to one:
			\begin{align*}
			&g_p(\bs{w}^{\lra}_p) + \sum_{i=1}^\infty g_u(\iota_{p,i}^{\pm}(\bs{w}^{\lra}_p))= 1.
			\end{align*}
		\item	In $T^{\lra}_u$, the values assigned to the neighbors of $-\root$ as well as $+\root$ sums to one:
			\begin{align*}
			&g_u(\bs{w}^{\lra}_u) + g_p(\iota_{u,0}^\pm(\bs{w}^{\lra}_u)) + \sum_{i=1}^\infty g_u(\iota_{u,i}^\pm(\bs{w}^{\lra}_u))= 1.
			\end{align*}
		\end{enumerate}
	\item Let $\iota^\lra_\cdot$ denote the root swamping map, that swaps the root $(-\root,+\root)$ to $(+\root,-\root)$ (and then relabels all the vertices). We have $g_.(\bs{w}^{\lra}_.) = g_.(\iota^\lra_\cdot(\bs{w}^{\lra}_.))$.
\end{enumerate}
Also note that the function $g_\cdot$ is measurable with respect to the product $\sigma$-algebra on $[0,\infty)^{E^\lra_\cdot}$. Hence, (iii) becomes (iii$^\prime$)
\begin{align*}
\expect[\ell(\M_{\opt}(\root),\root)] = &  \int_{\bs{w}^{\lra}_p} \bs{w}^{\lra}_p(-\root,+\root) \,g_p(\bs{w}^{\lra}_p) \, \mu^{\lra}_p(d\bs{w}^{\lra}_p) \\ & + \int_{\bs{w}^{\lra}_u} \bs{w}^{\lra}_u(-\root,+\root)\, g_u(\bs{w}^{\lra}_u) \, \mu^{\lra}_u(d\bs{w}^{\lra}_u).
\end{align*}

Now, we use the function $g_\cdot$ to assign fractional values to the edges of the planted model. We store the values assigned to the edges of $(K_{n,n},\ell_n)$ in an $n\times n$ matrix $Q_n=[q_{i,j'}]_{i,j'}$, where $q_{i,j'}$ is the value assigned to the edge $\{i,j'\}$.
Let $g_{\cdot,B,H}$ to be the conditional expectation of $g_\cdot$ given $\sigma(\ell_{\cdot}^{\lra}(e),e\in E^{\lra}_{\cdot,B,H})$, with respect to the measure $\mu^{\lra}_{\cdot}$. Also, let $g_{u,x,B,H} = g_{u,B,H}\bs{1}_{D_{u,x}}$. Now, $q_{i,j'}$ is defined as follows:
\begin{align}
q_{i,j'} \coloneqq 
\begin{cases*}
g_{u,x,B,H}\left(\rho_{u,H}\left(U^{\lra}_{B,n}(i,j')\right)\right)&if $j'\neq i'$\\
g_{p,B,H}\left(\rho_{p,H}\left(U^{\lra}_{B,n}(i,i')\right)\right)&if $j'= i'$
\end{cases*}
\label{eq:construction_Qn}
\end{align}
As we mentioned before, if $Q_n$ were doubly-stochastic, we could have used it to construct a matching on $K_{n,n}$. However, it is not hard to see that $\prob[Q_n \text{ is doubly-stochastic}] = 0$. Nevertheless, we expect $Q_n$ to become almost doubly-stochastic, as a larger neighborhood is revealed. 

Let us define a discrimination factor, as Aldous does in~\cite[Introduction]{Aldous1992}:
\begin{align*}
\X(Q_n) = \frac{1}{n}\sum_{i=1}^{n}\left|1 - \sum_{j'=1'}^{n'} q_{i,j'}\right| + \frac{1}{n}\sum_{j'=1'}^{n'}\left|1 - \sum_{i=1}^n q_{i,j'}\right|
\end{align*}
Note that if $Q_n$ is doubly-stochastic, then $\X(Q_n) =0$. Naturally, we should expect that $\expect[\X(Q_n)] \approx 0$, for large values $x$, $B$, $H$, and $n$. We should also expect the average expected cost of $Q_n$ to be close to the expected cost of $\M_{\opt}$, i.e.,
\begin{align*}
\frac{1}{n}\expect[\sum_{i,j'} q_{i,j'} \ell_n(i,j')] \approx \expect[\ell(\M_{\opt}(\root),\root)],
\end{align*}
for large enough values of $x$, $B$, $H$, and $n$. Using the same set of inequalities as in \cite[Section 3.4]{Aldous1992}, it follows that the both intuitions are correct.
\begin{lemma}\label{lem:2limits}
	\begin{enumerate}[label=(\roman*)]
		\item For any $\epsilon > 0$, there is an $x_0$, $B_0$, $H_0$ and $n_0$ such that for all $x>x_0$, $B>B_0$, $H>H_0$, and $n>n_0$ we have $\expect[\X(Q_n)] < \epsilon$. In the other words,
		\begin{align*}
				\lim_{x\to\infty}\limsup_{B\to\infty}\limsup_{H\to\infty}\limsup_{n\to\infty}\expect[\X(Q_n)]=0.
		\end{align*}
		\item For any $\delta > 0$, there is an $x_0$ and $n_0$ such that for all $x>x_0$ and $n>n_0$, the average expected cost of $Q_n$ is in the $\delta$ neighborhood of the cost of $\M_{\opt}$ on the planted PWIT, for all $B$ and $H$. In the other words,
		\begin{align*}
		\lim_{x\to\infty}\limsup_{n\to\infty}\frac{1}{n}\expect[\sum_{i,j'} q_{i,j'} \ell_n(i,j')]=\expect[\ell(\M_{\opt}(\root),\root)] \qquad \text{for all $B$ and $H$}.
		\end{align*}
	\end{enumerate}
\end{lemma}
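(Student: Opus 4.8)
The plan is to reproduce the chain of inequalities of Aldous~\cite[Section~3.4]{Aldous1992}, with the planted/un-planted bookkeeping supplied by Lemma~\ref{lem:unfold} and the consistency identities (i$'$)--(iii$'$) for $g_\cdot$. Since $K_{n,n}$ with i.i.d.\ weights is vertex-transitive and symmetric under swapping the two sides, $\expect[\X(Q_n)] = 2\,\expect[\,|1-\sigma_1|\,]$ and $\frac1n\expect[\sum_{i,j'}q_{i,j'}\ell_n(i,j')] = \expect[\sum_{j'}q_{1,j'}\ell_n(1,j')]$, where $\sigma_1\coloneqq\sum_{j'}q_{1,j'}$. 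Splitting off the planted edge, $\sigma_1 = q_{1,1'} + \sum_{j'\neq1'}q_{1,j'}$, and the construction~\eqref{eq:construction_Qn} puts $q_{1,j'} = 0$ whenever $\ell_n(1,j')\ge x$, so at most $1 + \#\{j'\neq1' : \ell_n(1,j')<x\}$ terms are nonzero, and this count is $\mathrm{Bin}(n-1,1-\e^{-x/n})$, hence uniformly integrable in $n$ and convergent in law to $1 + \mathrm{Pois}(x)$. This uniform integrability is what will let me pass limits through the expectations below.

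First I fix $x,B,H$ and let $n\to\infty$. By Lemma~\ref{lem:unfold}, after the $n\mu^\lra_{n,u,B,H}\to\mu^\lra_{u,B,H}$ rescaling on the un-planted side, the unfoldings $\rho_{p,H}(U^\lra_{B,n}(1,1'))$ and $\rho_{u,H}(U^\lra_{B,n}(1,2'))$ converge in total variation to the bi-infinite trees with laws $\mu^\lra_{p,B,H}$ and $\mu^\lra_{u,B,H}$; combined with the above uniform integrability this gives $\expect[\,|1-\sigma_1|\,]\to\expect[\,|1-\hat\sigma|\,]$ and $\frac1n\expect[\sum_{i,j'}q_{i,j'}\ell_n(i,j')]\to \int g_{p,B,H}\,w\,d\mu^\lra_{p,B,H} + \int_{D_{u,x}} g_{u,B,H}\,w\,d\mu^\lra_{u,B,H}$, where $\hat\sigma$ denotes the analogous sum of $g_{\cdot,B,H}$-values over the neighbours of the distinguished vertex of the bi-infinite tree and $w$ the weight of the root edge. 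Now I let $B,H\to\infty$. Since $g_{\cdot,B,H}$ is the conditional expectation of $g_\cdot$ given the depth-$H$, arity-$B$ edge weights, the martingale convergence theorem gives $g_{\cdot,B,H}\to g_\cdot$ a.s.\ and in $L^1$; moreover by Corollary~\ref{cor:Mopuniq} the matching $\Minfopt$ is a deterministic function of $\ell_\infty$, so $g_p$ and $g_u$ are $\{0,1\}$-valued. Hence $\hat\sigma$ converges to $\bs1\{\Minfopt(\root)=0\} + \bs1\{\Minfopt(\root)=i \text{ for some } i\ge1 \text{ with } \ell_\infty(\root,i)<x\}$, which is an honest indicator because $\root$ has exactly one partner, while the cost functional converges to $\int g_p\,w\,d\mu^\lra_p + \int_{D_{u,x}} g_u\,w\,d\mu^\lra_u$. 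Transporting through the bijections $\psi|_p,\psi|_u$ of Section~\ref{app:BiinfTree} and using property (iii$'$), this last quantity equals $\expect[\,\ell_\infty(\root,\Minfopt(\root))\,\bs1\{\Minfopt(\root)=0 \text{ or } \ell_\infty(\root,\Minfopt(\root))<x\}\,]$.

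Finally I let $x\to\infty$. The partner's edge weight $\ell_\infty(\root,\Minfopt(\root))$ is a.s.\ finite --- indeed its expectation is finite by the easy half~\eqref{eq:easyhalf} --- so the event $\{\Minfopt(\root)=0 \text{ or } \ell_\infty(\root,\Minfopt(\root))<x\}$ has probability tending to $1$; thus $\expect[\,|1-\hat\sigma|\,]\to 0$, which is part~(i), and the cost limit tends to $\expect[\ell_\infty(\root,\Minfopt(\root))]$ by dominated convergence, which is part~(ii). Note that the limits above did not depend on $B$ or $H$, so the inner $\limsup_B\limsup_H$ in~(i) and the ``for all $B,H$'' in~(ii) come for free. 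I expect the main obstacle to be the second paragraph: Lemma~\ref{lem:unfold} only transfers bounded functionals of truncated neighbourhoods, so upgrading it to convergence of the weighted quantity and of $\expect[\,|1-\sigma_1|\,]$ needs the a-priori $\mathrm{Bin}$-domination for uniform integrability and, crucially, Corollary~\ref{cor:Mopuniq} (that $g_\cdot\in\{0,1\}$) to identify the $B,H\to\infty$ limit as a genuine matching indicator rather than merely something with the right mean. The planted/un-planted split itself, and the simultaneous use of the $\sigma$-finite measure $\mu^\lra_u$ and the probability measure $\mu^\lra_p$, is routine once Lemma~\ref{lem:unfold} is in hand, and follows~\cite[Section~3.4]{Aldous1992} closely.
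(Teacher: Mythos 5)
Your proposal is correct and follows essentially the same route as the paper's proof, which simply defers to Aldous~\cite[Section 3.4]{Aldous1992} and lists exactly the ingredients you deploy --- Lemma~\ref{lem:unfold}, the consistency identities (i$'$)--(iii$'$) for $g_u,g_p$, and an a priori integrability bound --- so your write-up in fact supplies more detail than the paper does (and, as a minor aside, the identity $g_p+\sum_i g_u=1$ from (i$'$) already makes $1-\hat\sigma\ge 0$ and lets you compute $\expect[|1-\hat\sigma|]$ by linearity, so Corollary~\ref{cor:Mopuniq} is not actually needed there). The one ingredient the paper explicitly names that you leave implicit is the locally tree-like property (no short cycle of low-weight edges through vertex $1$): Lemma~\ref{lem:unfold} only gives marginal convergence of each edge-rooted unfolding, whereas the convergence of the whole row sum $\sigma_1=\sum_{j'}q_{1,j'}$ requires the unfoldings from all edges at vertex $1$ to converge \emph{jointly} to views of a single planted PWIT from its various edges --- this is buried in your appeal to Aldous's chain of inequalities and should be stated.
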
	
\begin{proof}

The proof is almost identical to the proof presented in \cite[Section 3.4]{Aldous1992}. The proof of part (ii) is a direct consequence of Lemma \ref{lem:unfold} and linearity of expectation. The proof of part (i) needs more work, but the inequalities are the same as the one presented in \cite[Section 3.4]{Aldous1992}. The key factor is the consistency properties of functions $g_u$ and $g_p$, Lemma \ref{lem:unfold}, and the fact that there is no short cycle containing the root consists entirely of low-weight edges (as mentioned in the proof of Theorem~\ref{thm:lwc}).
\end{proof}

Now that we know $Q_n$ eventually becomes a doubly-stochastic matrix with weight close to that of $\M_{\opt}$, we will construct a perfect matching on $(K_{n,n},\ell_n)$ with (within $\epsilon$) the same weight. First, by invoking \cite[Proposition 7]{Aldous1992}, we construct a partial matching with the cost close to the cost of $Q_n$. Next, using~\cite[Proposition 9]{Aldous1992}, we construct a perfect matching by swapping operation while keeping the cost almost the same. The changes required to extend this analysis to the planted case are very minor, but we present the strategy in the next section for completeness.

\subsubsection{Construction of the Matching}
Finally, we are going to give the precise construction of the low cost matching on $(K_{n,n},\ell_n)$ and prove~\eqref{eq:hardhalf}. By Lemma \ref{lem:2limits}, we know that $Q_n$ becomes arbitrary close to a doubly-stochastic matrix, i.e., $\X(Q_n)$ becomes arbitrary close to $0$ with high probability. Now, using $Q_n$ for sufficiently large $n$, we construct a low cost partial matching which matches most of the vertices and leaves a small fraction of vertices isolated. This is done by invoking Proposition 7 in~\cite{Aldous1992}. Then we use Proposition 9 in~\cite{Aldous1992} and swap some of the edges to obtain a perfect matching on $(K_{n,n},\ell_n)$ while keeping the cost almost the same. On the rare occasions  that we fail to either construct the partial matching or swap the edges, we use $\Mplanted_n$. Since these events are rare, the use of planted matching won't affect the total cost. We follow the discussion in \cite[Section 2]{Aldous1992}.

Let us begin with describing how to construct a partial matching using an almost doubly-stochastic matrix. We say $Q_n$ is an almost doubly-stochastic matrix if its discrimination factor is close enough to $0$, or more precisely, if $\X(Q_n) < 1/200$. We say $\nu_n:E_n\to\{0,1,\root\}$ is a $(1-\theta)$ partial matching if vertices in $U(\nu_n) = \{i \in [n] : \nu_n(i, j') = 1 \text{ for some } j' \in [n'] \}$ are matched to different vertices in $[n']$ and $|U(\nu_n)| \geq (1-\theta)n$.

The first step is to convert $Q_n$ to a doubly-stochastic matrix. Define an $n\times n$ matrix $A_n = [a_{i,j'}]$ as follows:
\begin{align*}
a_{i,j'} \coloneqq \frac{q_{i,j'}}{\max(1, q_{i,:})\max(1, q_{:,j'})},
\end{align*}
where $q_{i,:} \coloneqq \sum_{j'\in[n']} q_{i,j'} \text{ and }q_{:,j'} \coloneqq \sum_{i\in[n]} q_{i,j'}$. Similarly, define $a_{i,:}$ and $a_{:,j'}$ for all $i\in[n]$ and $j'\in[n']$. Note that $a_{i,:} \leq 1$ and $a_{:,j'}\leq 1$. Define an $n\times n$ matrix $B_n  = [b_{i,j'}]$ as follows:
\begin{align*}
b_{i,j'} = \frac{(1 - a_{i,:})(1 - a_{:,j'})}{n - \sum_{i\in[n]}\sum_{j'\in[n']}{a_{i,j'}}}.
\end{align*}
It is easy to check that $A_n  + B_n$ is doubly-stochastic. Hence, by Birkhoff--von Neumann theorem $A_n+B_n$ can be written as a convex combination of permutations. Hence, there exists a random matching $\M_n$ on $(K_{n},\ell_n)$ such that $\forall i\in[n],\,j'\in[n']:\, P(\M_n(i) = j') = a_{i,j'}  + b_{i,j'}$.
Note that for all $i\in [n]$ and $j'\in[n']$ we have $a_{i,j'}\leq q_{i,j'}$, however, there is no such a bound for $b_{i,j'}$. As a result, we may end up assigning high probabilities to undesired edges which can affect the expected cost of the matching $\M_n$. Now, the idea is to use some part of the matching $\M_n$ that is behaving well enough.
\begin{proposition}\cite[Proposition 7]{Aldous1992}\label{prop:nonrand}
		Let $Q_n = [q_{i,j'}]$ and $\ell_n=[\ell_n(i,j')]$ be given non-random $n\times n$ matrices. Suppose $200 \X(Q_n) \leq \theta < 1$. Consider the random matching $\M_n$ given as above and define the random set $D(\M_n) \coloneqq \{i\in[n]:\, b_{i,\M_n(i)} \leq \eta a_{i,\M_n(i)}\}$ where $\eta = \sqrt{3\X(Q_n)/\theta}$. Then
		\begin{align*}
		&\expect[\sum_{i\in D(\M_n)} q_{i,\M_n(i)} \ell_n(i,\M_n(i))] \leq (1 + \eta) \sum_{i\in[n]} \,\sum_{j'\in[n']} q_{i,j'}\ell_n(i,j'),\\
		&\prob[|D(\M_n)| \geq (1-\theta)n ] \geq 1 - \frac{3(1+\eta^{-1})\X(Q_n)}{\theta},
		\end{align*}
		that is, the random matching $\M_n$ restricted to the random set $D(\M_n)$ is a $(1-\theta)$ matching with high probability with cost close to the cost of $Q_n$.
		Specifically, there is a $(1-\theta)$ partial matching $\nu_n$(non-random) such that
		\begin{align*}
		\sum_{i \in U(\nu_n)} q_{i,\nu_n(i)} \ell_n(i,\nu_n(i)) \leq (1 + 4\sqrt{\X(Q_n)/\theta}) \sum_{i\in[n]} \,\sum_{j'\in[n']} q_{i,j'} \ell_n(i,j') .
		\end{align*}
\end{proposition}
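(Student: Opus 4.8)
\emph{The plan is to observe that the statement is purely deterministic:} it takes two arbitrary non-negative $n\times n$ matrices $Q_n$ and $\ell_n$ and makes no reference to the planted model, to the randomness of $\ell_n$, or even to the fact that $Q_n$ arose via~\eqref{eq:construction_Qn}. Consequently it is literally Proposition~7 of~\cite{Aldous1992}, and the plan is simply to reproduce Aldous' argument; the planted structure re-enters only in the next subsection, when the specific $Q_n$ built from $\Minfopt$ is substituted. For completeness I would proceed as follows. First I would set up the Birkhoff scaffolding: from the definition of $A_n$ one reads off $0\le a_{i,j'}\le q_{i,j'}$, and summing over a row (resp.\ column) gives $a_{i,:}\le 1$ and $a_{:,j'}\le 1$ — if the row sum exceeds $1$ the factor $\max(1,q_{i,:})$ cancels it, and otherwise every summand is already divided by a quantity $\ge 1$ — so in particular $a_{i,j'}\le a_{i,:}\le 1$. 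Since $n-\sum_{i,j'}a_{i,j'}=\sum_i(1-a_{i,:})=\sum_{j'}(1-a_{:,j'})$, the definition of $B_n$ forces $\sum_{j'}(a_{i,j'}+b_{i,j'})=\sum_i(a_{i,j'}+b_{i,j'})=1$, so $A_n+B_n$ is doubly-stochastic and Birkhoff--von Neumann yields the random matching $\M_n$ with $\prob[\M_n(i)=j']=a_{i,j'}+b_{i,j'}$.

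For the cost bound I would expand the expectation as a sum over pairs,
\[
\expect\!\Big[\textstyle\sum_{i\in D(\M_n)} q_{i,\M_n(i)}\ell_n(i,\M_n(i))\Big]=\sum_{i,j'}(a_{i,j'}+b_{i,j'})\,\one\{b_{i,j'}\le\eta a_{i,j'}\}\,q_{i,j'}\ell_n(i,j'),
\]
and note that on the indicator event $a_{i,j'}+b_{i,j'}\le(1+\eta)a_{i,j'}$; dropping the indicator and using $a_{i,j'}\le 1$ bounds this by $(1+\eta)\sum_{i,j'}q_{i,j'}\ell_n(i,j')$, which is the first displayed inequality. For the size bound I would argue likewise: $\expect[|D(\M_n)^c|]=\sum_{i,j'}(a_{i,j'}+b_{i,j'})\one\{b_{i,j'}>\eta a_{i,j'}\}\le(1+\eta^{-1})\sum_{i,j'}b_{i,j'}=(1+\eta^{-1})\sum_i(1-a_{i,:})$, and an elementary entrywise estimate — using $\tfrac{1}{\max(1,t)}\ge 1-|t-1|$ together with the definition of $\X(Q_n)$, exactly as in~\cite[\S3.4]{Aldous1992} — gives $\sum_i(1-a_{i,:})\le 3\,n\,\X(Q_n)$; Markov's inequality then yields $\prob[|D(\M_n)^c|\ge\theta n]\le\expect[|D(\M_n)^c|]/(\theta n)\le 3(1+\eta^{-1})\X(Q_n)/\theta$, the second displayed bound.

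Finally I would derandomize. Let $E=\{|D(\M_n)|\ge(1-\theta)n\}$, which by the previous step has probability $\ge 1-3(1+\eta^{-1})\X(Q_n)/\theta>0$ since $200\,\X(Q_n)\le\theta$. The nonnegative variable $Z=\sum_{i\in D(\M_n)}q_{i,\M_n(i)}\ell_n(i,\M_n(i))$ satisfies $\expect[Z\mid E]\le\expect[Z]/\prob[E]$, so some realization lies in $E$ with $Z$ at most this value; taking $\nu_n$ to be that realization of $\M_n$ restricted to $D(\M_n)$ produces a $(1-\theta)$ partial matching of cost at most $(1+\eta)\big(1-3(1+\eta^{-1})\X(Q_n)/\theta\big)^{-1}\sum_{i,j'}q_{i,j'}\ell_n(i,j')$, and since $\eta=\sqrt{3\X(Q_n)/\theta}$ and $200\,\X(Q_n)\le\theta$, this prefactor is at most $1+4\sqrt{\X(Q_n)/\theta}$ by a routine estimate.

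I do not expect a genuine obstacle here: every step is elementary and already carried out in~\cite{Aldous1992}, and none of it touches the bipartite or planted structure, so the proposition applies unchanged. The only mildly delicate points are the entrywise inequality $\sum_i(1-a_{i,:})\le 3n\,\X(Q_n)$ and verifying that the numerical constants ($200$ in the hypothesis, $4$ in the conclusion) fit together; both are pure constant-chasing imported verbatim from Aldous' proof.
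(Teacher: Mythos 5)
Your reading of the situation is right: the paper offers no proof of this proposition at all --- it is stated with the citation \cite[Prop.~7]{Aldous1992} and used as a black box --- and the statement is indeed a purely deterministic fact about two arbitrary non-negative matrices, so reproducing Aldous' argument is exactly the intended route. Your scaffolding is correct: $0\le a_{i,j'}\le\min(q_{i,j'},1)$, the row/column sums of $A_n+B_n$ equal $1$, the two displayed bounds follow from expanding the expectations over $\prob[\M_n(i)=j']=a_{i,j'}+b_{i,j'}$ and splitting on the indicator $\{b_{i,j'}\le\eta a_{i,j'}\}$, and the key entrywise estimate does work (indeed $1-\tfrac{1}{st}\le(1-\tfrac1s)+(1-\tfrac1t)$ for $s,t\ge1$ gives $\sum_{i,j'}(q_{i,j'}-a_{i,j'})\le n\X(Q_n)$ and hence $\sum_i(1-a_{i,:})=\sum_i(1-q_{i,:})+\sum_{i,j'}(q_{i,j'}-a_{i,j'})\le 2n\X(Q_n)$).

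The one step that fails as written is the last sentence of your derandomization. The prefactor $(1+\eta)\bigl(1-3(1+\eta^{-1})\X(Q_n)/\theta\bigr)^{-1}$ is \emph{not} at most $1+4\sqrt{\X(Q_n)/\theta}$ throughout the allowed range: at the boundary $\X(Q_n)=\theta/200$ one has $s:=\sqrt{\X/\theta}=1/\sqrt{200}$, $\eta=\sqrt3\,s\approx0.1225$, $3(1+\eta^{-1})s^2\approx0.1375$, so the prefactor is $\approx1.301$ while $1+4s\approx1.283$. So you cannot close the argument using only the two displayed conclusions of the proposition; the "routine estimate" is false. The repair is already implicit in your own sketch: carry the sharper intermediate bound $\expect[|D(\M_n)^c|]\le(1+\eta^{-1})\sum_{i,j'}b_{i,j'}\le 2(1+\eta^{-1})\,n\,\X(Q_n)$ (constant $2$, not $3$) into the derandomization. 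With $p:=2(1+\eta^{-1})s^2=2s^2+2s/\sqrt3$ one checks $(1+\sqrt3\,s)\le(1+4s)(1-p)$ for all $0<s\le1/\sqrt{200}$, since the difference equals $s\bigl[4-\sqrt3-2/\sqrt3-(2+8/\sqrt3)s-8s^2\bigr]\ge s\cdot0.6>0$; then the conditional-expectation argument gives a realization in $\{|D(\M_n)|\ge(1-\theta)n\}$ with cost at most $(1+\eta)(1-p)^{-1}\le1+4\sqrt{\X(Q_n)/\theta}$ times $\sum_{i,j'}q_{i,j'}\ell_n(i,j')$, as claimed. (The stated probability bound with constant $3$ remains true, just too weak to feed back into the final step.)
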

Next, we patch the partial matching $\nu_n$ given by the Proposition \ref{prop:nonrand} (which exists for almost all realization of the edge weights and $Q_n$),  to construct a perfect matching without distorting the expected cost and the given partial matching too much. Aldous suggests using the greedy algorithm to do that. In the planted setting, the idea is to simply remove all the planted edges and then use the greedy algorithm. 
\begin{proposition}\cite[Proposition 9]{Aldous1992}\label{prop:swap}
	Fix $0< \theta < 1/10$, and let $ k  = \floor{\theta n}$. Let $\ell_n = [\ell_n(i,j')]$ denote the matrix of the edge weights. Let $\nu_n$ denote a
	 $1-\theta$ partial matching. Let $N_0 = [n]\setminus U(\nu_n)$ and $N_0' = [n']\setminus \nu_n(U(\nu_n))$ denote the set of unpaired vertices in either side. (Here, $N_0$, $N_0'$ and $\nu_n$ may depend on $\ell_n$ in an arbitrary way.) Then there exists a random subset $S\subset [n]\setminus N_0$ of size $k$, random bijections $\nu_1 : S \to N_0'$, $\nu_2 : N_0 \to \nu_n(S)$, and events $\Omega_n$ with $P(\Omega_n)\to 0$ such that
	\begin{align*}
	\limsup_{n\to\infty} \frac{1}{n}\expect\left[1_{\Omega_n^c} \left(\sum_{ i \in N_0} \ell_n({i,\nu_2(i)}) + \sum_{i\in S}\ell_n({i,\nu_1(i)}) \right)\right] \leq 24\, \theta^{1/2}.
	\end{align*}
\end{proposition}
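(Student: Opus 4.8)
The plan is to follow Aldous' proof of \cite[Proposition 9]{Aldous1992} almost verbatim, making only the minimal changes needed to accommodate the two edge types of the planted model. Recall the task: we are handed a $(1-\theta)$ partial matching $\nu_n$ leaving $N_0$ (left) and $N_0'$ (right) uncovered, with $|N_0|=|N_0'|=k=\floor{\theta n}$, and we must turn $\nu_n$ into a perfect matching by \emph{rewiring} $k$ of its edges: choose a set $S$ of $k$ currently matched left-vertices, delete the edges $\{s,\nu_n(s)\}$ for $s\in S$, and rematch the freed-up vertices via bijections $\nu_2\colon N_0\to\nu_n(S)$ and $\nu_1\colon S\to N_0'$. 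The extra cost is $\sum_{i\in N_0}\ell_n(i,\nu_2(i))+\sum_{s\in S}\ell_n(s,\nu_1(s))$, and the claim is that, up to a vanishing-probability event $\Omega_n$ on which we instead fall back to $\Mplanted_n$, this cost is $O(\theta^{1/2})$ per vertex.

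First I would reduce to the un-planted situation. Since $\Mplanted_n$ consists of only $n$ edges forming a matching, deleting all planted edges from $K_{n,n}$ leaves a graph in which every edge weight is i.i.d.\ $\exp(1/n)$, which is exactly the setting of \cite{Aldous1992}; I carry out the rewiring using only un-planted edges, so the planted partner of any vertex is simply an excluded target, removing at most one candidate per vertex and affecting none of the estimates. Then I would construct $S$, $\nu_1$, $\nu_2$ greedily as in \cite[Section~2]{Aldous1992}. The key structural fact is that the number of un-planted edges of weight at most $t$ is concentrated around $\approx nt$ (from the density $Q'(0)=1/n$), so for a suitable constant $C$ the subgraph of edges of weight $\le C$ is a good expander; crucially, this expansion holds \emph{simultaneously for all} subsets of $[n]$ and $[n']$ of size $k$, since the failure probability for a fixed subset is exponentially small in $n$ and survives the union bound over the $\binom{n}{k}^2$ choices once $C$ is large enough. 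This lets one match the defect sets $N_0$, $N_0'$ into the pool of matched vertices (and back through $\nu_n$) using only weight-$O(1)$ edges, uniformly over the adversarial, $\ell_n$-dependent choice of $\nu_n$, $N_0$, $N_0'$; the cost bound then follows from the same chain of inequalities as \cite[Section~3.4]{Aldous1992}, and $\Omega_n$ is the (vanishing-probability) event that the expansion estimate or the greedy construction fails. On $\Omega_n$ we revert to $\Mplanted_n$, which contributes nothing to the stated quantity because of the indicator $\one_{\Omega_n^c}$.

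The main obstacle is precisely this robustness-against-adversarial-defects step: because $\nu_n$, $N_0$ and $N_0'$ may depend on $\ell_n$ in an arbitrary way, one cannot treat the relevant edge weights as fresh and must establish a worst-case bound over all defect sets via the union-bound/expansion argument above — this is the technical heart of \cite[Proposition~9]{Aldous1992}, and the somewhat lossy nature of these estimates is exactly why the final bound is $O(\theta^{1/2})$ rather than $O(\theta)$. A secondary subtlety is the coupling between $\nu_1$ and $\nu_2$ through the common set $S$: the right-vertices used by $\nu_2$ determine, via $\nu_n^{-1}$, the left-vertices that $\nu_1$ must rematch, so the rewired edges have to be chosen judiciously rather than at random (a random choice would force a \emph{balanced} assignment between two $\theta n$-sets and hence an $\Omega(n)$-per-vertex cost); the abundance of matched edges — at least $(1-\theta)n$, far more than the $k$ we delete — provides enough slack to arrange this, again following \cite{Aldous1992}. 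Since none of these arguments interacts with the planted/un-planted distinction beyond the initial removal of the $n$ planted edges, the extension is indeed routine, and combining this proposition with Proposition~\ref{prop:nonrand} and Lemma~\ref{lem:2limits} yields the upper bound~\eqref{eq:hardhalf}.
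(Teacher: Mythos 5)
Your proposal matches the paper's treatment: the paper does not reprove this statement but imports it directly from~\cite[Proposition 9]{Aldous1992}, noting only that in the planted setting one first removes the $n$ planted edges so that the remaining weights are i.i.d.\ $\exp(1/n)$ and Aldous' greedy patching argument applies unchanged --- which is exactly the reduction you make before sketching Aldous' union-bound/expansion argument. Your account of why the adversarial dependence of $\nu_n$, $N_0$, $N_0'$ on $\ell_n$ forces the lossy $O(\theta^{1/2})$ bound is a faithful summary of the cited proof, so the proposal is correct and takes essentially the same route.
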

It remains to combine Proposition \ref{prop:nonrand} and Proposition \ref{prop:swap} to construct a matching on $K_{n,n}$. The key idea is to rewrite the edge weights of $K_{n,n}$ as the minimum of two independent exponential random variables.
\begin{lemma}\label{lem:elem}
	If $X\sim \exp(\mu_1)$ and $Y\sim\exp(\mu_2)$ are independent, then $\min(X,Y)\sim \exp(\mu_1 + \mu_2)$.
\end{lemma}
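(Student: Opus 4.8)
The plan is to compute the survival function (complementary CDF) of $\min(X,Y)$ directly and recognize it as that of an exponential random variable. Recall from the notational conventions in Section~\ref{app:gen_setup} that an $\exp(\nu)$ random variable $Z$ is characterized by $\prob[Z > t] = \e^{-\nu t}$ for $t \ge 0$, so it suffices to show $\prob[\min(X,Y) > t] = \e^{-(\mu_1+\mu_2) t}$ for all $t \ge 0$.

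First I would observe that the event $\{\min(X,Y) > t\}$ coincides with $\{X > t\} \cap \{Y > t\}$, since the minimum of two numbers exceeds $t$ if and only if both do. Next, using the independence of $X$ and $Y$, the probability of this intersection factors as $\prob[X > t]\,\prob[Y > t]$. Finally I would substitute the known survival functions $\prob[X > t] = \e^{-\mu_1 t}$ and $\prob[Y > t] = \e^{-\mu_2 t}$, so that $\prob[\min(X,Y) > t] = \e^{-\mu_1 t}\,\e^{-\mu_2 t} = \e^{-(\mu_1+\mu_2)t}$, which is exactly the survival function of $\exp(\mu_1+\mu_2)$. Since a distribution on $\rplus$ is determined by its survival function, this gives $\min(X,Y) \sim \exp(\mu_1+\mu_2)$.

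There is essentially no obstacle here; the only point requiring a word of care is that we are claiming equality of distributions from equality of survival functions on $[0,\infty)$, which is justified because the survival function determines the law. The same computation also records, as a byproduct that will be used when splitting the edge weights of $(K_{n,n},\ell_n)$, that $\min(X,Y)$ is again exponential with the added rate, so a single $\exp(1/n)$ weight can be realized as $\min$ of independent $\exp(\mu_1)$ and $\exp(\mu_2)$ weights whenever $\mu_1 + \mu_2 = 1/n$.
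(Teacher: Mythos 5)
Your proof is correct and is the standard argument via the survival function, $\prob[\min(X,Y) > t] = \prob[X>t]\,\prob[Y>t] = \e^{-(\mu_1+\mu_2)t}$; the paper states this lemma without proof precisely because it is this elementary. Nothing further is needed.
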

Now, for any fixed $0<\alpha< 1$, we can write
\begin{align}
\ell_n(i,j') = \min\left(\frac{\ell^{1}_n(i,j')}{1-\alpha},\frac{\ell^{2}_n(i,j')}{\alpha} \right), \label{eq:cost_split}
\end{align}
where $\ell^{1}_n(i,j')$ and $\ell^{2}_n(i,j')$ are independent copies of $\ell_n(i,j')$. We use $\ell^{1}_n(i,j')$ to construct the partial matching, and then $\ell^{2}_n(i,j')$ to patch the partial matching and construct a complete matching. On the event $\Omega^*_n$ that the construction is not possible, that is either Proposition \ref{prop:nonrand} or Proposition \ref{prop:swap} failed, we can always use the planted matching to construct a matching on $K_{n,n}$. Since the probability of this failure goes to zero, this does not affect the cost of the matching that much. Specifically, we need to show that $L_n \coloneqq \frac{1}{n}\sum_{i\in[n]}\ell_n(i,i')$ is uniformly integrable.
\begin{lemma}
	There exists a function $\delta(\cdot)$ with $\delta(x)\to 0$ as $x \to 0$ such that for arbitrary events $\Omega^*_n$,
	\begin{align}
	\limsup_{n\to\infty} \expect[L_n \bs{1}_{\Omega^*_n}] \leq \delta(\epsilon) ,  \label{eq:planted_cost_integrable}
	\end{align}
	where $\epsilon = \limsup_{n\to\infty} \prob[\Omega^*_n]$.
	\begin{proof}
		Since $\expect[\left(L_n\right)^2] < \infty$, $L_n$ is uniformly integrable. In particular, by Cauchy-Schwarz inequality, we have $ \expect[L_n \bs{1}_{\Omega^*_n}] \le  \sqrt{\expect[\left(L_n\right)^2]  \prob[\Omega^*_n]} = 
		\sqrt{\expect[\left(L_n\right)^2] \epsilon}.$
	\end{proof}
\end{lemma}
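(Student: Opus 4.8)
The plan is to establish the uniform integrability of $L_n$ through a uniform bound on its second moment, and then close the estimate with the Cauchy--Schwarz inequality.

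First I would use the structure of the planted model from Section~\ref{app:gen_setup}: the weights $\ell_n(1,1'),\dots,\ell_n(n,n')$ of the planted edges are i.i.d.\ copies of an $\exp(\lambda)$ random variable. Hence $L_n = \frac{1}{n}\sum_{i\in[n]}\ell_n(i,i')$ is an empirical average of $n$ i.i.d.\ exponentials with mean $1/\lambda$ and variance $1/\lambda^2$, so
\[
\expect[L_n^2] = \Big(\frac{1}{\lambda}\Big)^2 + \frac{1}{n}\cdot\frac{1}{\lambda^2} = \frac{1}{\lambda^2}\Big(1+\frac{1}{n}\Big) \le \frac{2}{\lambda^2}
\]
for every $n\ge 1$. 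In particular $\sup_{n\ge 1}\expect[L_n^2] < \infty$, which is precisely the assertion that $\{L_n\}_{n\ge1}$ is uniformly integrable.

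Next, for an arbitrary event $\Omega^*_n$ the Cauchy--Schwarz inequality gives
\[
\expect[L_n \bs{1}_{\Omega^*_n}] \le \sqrt{\expect[L_n^2]\,\prob[\Omega^*_n]} \le \frac{1}{\lambda}\sqrt{\Big(1+\tfrac{1}{n}\Big)\prob[\Omega^*_n]}.
\]
Taking $\limsup_{n\to\infty}$ on both sides and writing $\epsilon = \limsup_{n\to\infty}\prob[\Omega^*_n]$ yields $\limsup_{n\to\infty}\expect[L_n\bs{1}_{\Omega^*_n}] \le \tfrac{1}{\lambda}\sqrt{\epsilon} =: \delta(\epsilon)$, and clearly $\delta(\epsilon)\to 0$ as $\epsilon\to 0$, which is the claim.

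There is no substantial obstacle here; the only point worth flagging is that $\Omega^*_n$ is allowed to depend on the edge weights in an arbitrary measurable way (it is the failure event of the matching construction in Propositions~\ref{prop:nonrand} and~\ref{prop:swap}), so the bound cannot exploit independence between $\Omega^*_n$ and $L_n$ and must instead come from a moment estimate, i.e.\ from uniform integrability. The explicit exponential form of $P$ makes the second moment trivial to compute, but the same argument goes through for any planted-weight distribution with finite second moment.
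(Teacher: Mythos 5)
Your proof is correct and follows essentially the same route as the paper's: bound $\expect[L_n^2]$ (which the paper simply asserts is finite, while you compute it explicitly from the i.i.d.\ $\exp(\lambda)$ planted weights) and then apply Cauchy--Schwarz to get $\expect[L_n\bs{1}_{\Omega^*_n}]\le\sqrt{\expect[L_n^2]\,\prob[\Omega^*_n]}$. Your remark that $\Omega^*_n$ may depend arbitrarily on the weights, so a moment bound rather than independence is what is needed, is exactly the right point.
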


Now, we are ready to present the formal proof of~\eqref{eq:hardhalf}, which closely follows the proof of Proposition 2 in \cite{Aldous1992}. Fix $0 <\theta <1$ and let $\epsilon=\theta^3/200$. Construct $Q_n=[q_{i,j'}]_{i,j'}$ as per \eqref{eq:construction_Qn} using the edge cost $\ell_n^{1}$. It follows from Lemma \ref{lem:2limits} that for all sufficiently large $x, B, H, n$, $\expect[\X(Q_n)] < \epsilon$ and 
\begin{align}
\frac{1}{n}\expect[\sum_{i,j'} q_{i,j'} \ell_n^1 (i,j')] \le \expect[\ell(\M_{\opt}(\root),\root)]  +\epsilon. 
\label{eq:Q_n_cost}
\end{align}
Define event $\Omega_{n}^1=\{200 \X(Q_n ) > \theta^2\}$. Then by Markov's inequality, 
$$
\prob[\Omega_n^1] = \prob[ 200 \X(Q_n ) > \theta^2 ] \le \frac{200 \expect[ \X(Q_n) ] }{\theta^2} \le \frac{200 \epsilon}{\theta^2} = \theta.
$$
Outside event $\Omega_{n}^1$, we have $200 \X(Q_n) \le \theta^2<\theta$. 
It follows from Proposition \ref{prop:nonrand} that outside event $\Omega_{n}^1$, 
 there exists a $(1-\theta)$ partial matching $\nu_n$  such that
		\begin{align}
		\sum_{i  \in U(\nu_n)} q_{i,\nu_n(i)} \ell_n^1(i,\nu_n(i)) \leq (1 + \theta ) \sum_{i\in[n]} \,\sum_{j'\in[n']} q_{i,j'} \ell_n^1(i,j'),
\label{eq:partial_cost}
		\end{align}
where $U(\nu_n) $ is the set of vertices $i$ matched under $\nu_n$.
Now, condition on a realization (outside of $\Omega_{n}^1$) of $\ell_n^1$, and apply Proposition \ref{prop:swap} to $\ell_n^2$.
Since $\ell_n^1$ and $\ell_n^2$ are independent, it follows from Proposition \ref{prop:swap} that there exist an event $\Omega_n^2$ and bijections $\nu_1$ and $\nu_2$ such that for all sufficiently large $n$, $\prob[\Omega_n^2] \le \theta$ and 
\begin{align}
 \frac{1}{n}\expect\left[1_{ (\Omega_n^1)^c} \left(\sum_{ i \in [n] \setminus U(\nu_n)
 } \ell^2_n({i,\nu_2(i)}) + \sum_{i\in  \nu_n^{-1} ( \nu_2 ([n] \setminus U(\nu_n) ) } 
 \ell^2_n({i,\nu_1(i)}) \right)\right] \leq 24\, \theta^{1/2}.
 \label{eq:patch_cost}
\end{align}
Outside event $\Omega_n^*=\Omega_n^1 \cup \Omega_n^2$, we can construct a complete matching 
$\pi_n: [n] \to [n']$ such that $\pi_n(i) = \nu_2(i)$ if $ i \in [n] \setminus U(\nu_n)$; and
$\pi_n(i) = \nu_1(i)$ if $i\in  \nu_n^{-1} ( \nu_2 ([n] \setminus U(\nu_n) )$; and 
$\pi_n(i) =\nu_n(i)$ otherwise.  On event $\Omega_n^*$, we just let $\pi_n$ to be the planted matching. 
Combining \eqref{eq:cost_split}, \eqref{eq:Q_n_cost}, \eqref{eq:partial_cost}, \eqref{eq:patch_cost}, and \eqref{eq:planted_cost_integrable} yields that
\begin{align*}
\limsup_{n\to \infty} \frac{1}{n} \expect[\sum_{i,j'} \pi_n( i, j')  \ell_n (i,j')] \le \frac{(1+\theta) \left( \expect[\ell(\M_{\opt}(\root),\root)]  +\epsilon \right) }{1-\alpha} + \frac{24 \theta^{1/2}}{\alpha} + \delta(2\theta). 
\end{align*}
Letting $\theta$ (and hence $\epsilon$) $\to 0$, then letting $\alpha \to 0$, we establish~\eqref{eq:hardhalf}.

\section{Proof of Theorem~\ref{thm:lwc}}\label{app:proofoflwc}
\noindent {\bf Step 1: The Exploration Process}

As we pointed out in the sketch of the proof, the first step is to define an exploration process that explores vertices of $N_{n,\circ}(1)$ in a series of stages. The stage $m$ of the exploration process reveals a rooted subtree of $(K_{n,n},\ell_n)$, denoted by $N_{n,\circ}[m]$. The root of $N_{n,\circ}[m]$ is vertex $1$, the number of un-planted children of every vertex is $m$, and the set of planted edges restricted to $N_{n,\circ}[m]$ is a matching. Next, we provide a formal construction of $N_{n,\circ}[m]$.

The construction begins with vertex $1$. Include the edge $\{1,\Mplanted_n(1)=1'\}$ in $N_{n,\circ}[m]$. Let $\{v_1,v_2,\cdots,v_m\}$ denote the $m$ closest un-planted neighbors of vertex $1$. Add all the edges $\{1,v_k\}$ and then $\{v_k,\Mplanted_n(v_k)\}$ to $N_{n,\circ}[m]$. Next, continue with the vertex $1'$. Let $\{w_1,w_2,\cdots,w_m\}$ denote the $m$ closest un-planted neighbors of vertex $1'$, among all the vertices that has not been added to $N_{n,\circ}[m]$. Include all the edges $\{1',w_k\}$ and then $\{w_k,\Mplanted_n(w_k)\}$ to $N_{n,\circ}[m]$. At each step of the construction, we follow two simple rules: $1)$ the next vertex to pick is the oldest one in $N_{n,\circ}[m]$; $2)$ when we add $m$ closest un-planted children of this vertex and their planted pairs, we avoid all the vertices that has already been added to $N_{n,\circ}[m]$.

The construction continues until we are about to pick a vertex at depth $m$, at which point it stops. Note that the only vertices at depth $m+1$ are the planted partners of the vertices at depth $m$. Let  $V_n^{(m)}$ and $E_n^{(m)}$ denote the set of the vertices and the edges of $N_{n,\circ}[m] $ respectively. Note that for all sufficiently large $n$, $|V_n^{(m)}|$ and $|E_n^{(m)}|$ are independent of $n$. Let $\mu_n^{(m)} \in \mathcal{P}(\Gstar)$ denote the law of $[N_{n,\circ}[m]]$. Figure \ref{fig:expproc} demonstrates the construction of $N_{n,\circ}[2]$ for the graph given by Figure \ref{fig:pmod}.

\begin{figure}[!t]
	\centering
	\begin{subfigure}[t!]{.45\linewidth}
		\begin{center}
			\includegraphics[scale=0.5]{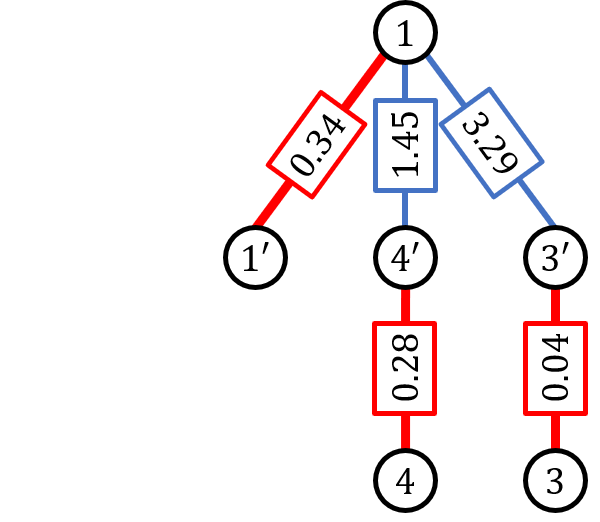}
			\caption*{Step 1 of the construction of $N_{n,\circ}[2]$.}
		\end{center}
	\end{subfigure}
	\hfill
	\begin{subfigure}[t!]{.45\linewidth}
		\begin{center}
			\includegraphics[scale=0.5]{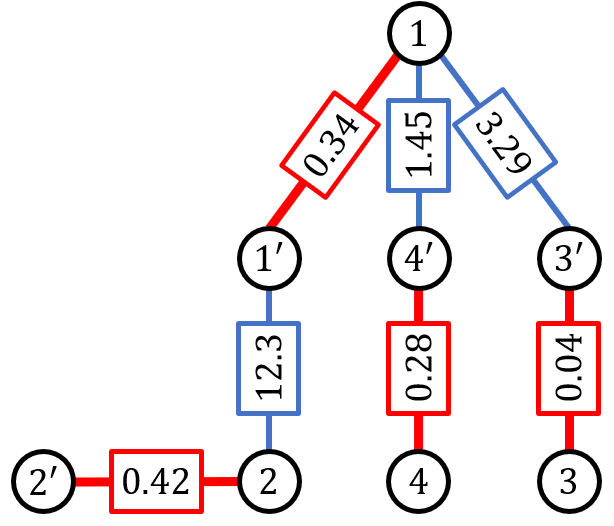}
			\caption*{Step 2 of the construction of $N_{n,\circ}[2]$.}
		\end{center}
	\end{subfigure}
	\caption{Stage $2$ of the exploration process on $(K_{4,4},\ell_4)$ given by Figure \ref{fig:pmod}.}\label{fig:expproc}
\end{figure}

\medskip

\noindent {\bf Step 2: A Total Variation Convergence}

The rooted planted tree $N_{n,\circ}[m]$ has the same graph structure as a truncated version of the planted PWIT: remove all vertices $\bs{i} = (i_1,i_2,\cdots i_l)$ such that either 
$(1)$ $i_s > m$ for some $s\in\{1,2,\cdots,l\}$, or 
$(2)$ $l>m+1$, or 
$(3)$ $l = m+1$ and $i_{l} \neq 0$. In particular, the number of un-planted children of every vertex in the truncated version is $m$, the depth of vertices are bounded by $m+1$, and the only vertices at depth $m+1$ are the planted pairs of the vertices at depth $m$. Let $N_{\infty}[m]$ denote the truncated planted PWIT, and let $\mu_{\infty}^{(m)}$ denote the law of $[N_{\infty}[m]]$. Now, using the same approach as in \cite[Lemma 10]{Aldous1992}, we show that $\mu_{n}^{(m)}$ converges to $\mu_{\infty}^{(m)}$ in total variation norm.
\begin{lemma}\label{lem:tvconv}
	For any fixed $m$, $\mu_n^{(m)} \xrightarrow{TV} \mu_\infty^{(m)}$ where the total variation convergence of positive measures is defined as follows:
	\begin{align*}
		\mu_n\xrightarrow{TV} \mu \text{ iff } \sup_A|\mu_n(A) - \mu(A)| \to 0 
	\end{align*}
\end{lemma}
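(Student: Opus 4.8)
The plan is to prove the stronger statement that the joint law of the \emph{edge weights} of the explored tree converges in total variation, and then to deduce the claim for the isomorphism classes $[N_{n,\circ}[m]]\in\Gstar$, following \cite[Lemma~10]{Aldous1992}.

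First I would note that there is an $n_0(m)$ such that for all $n\ge n_0(m)$ the exploration process never exhausts the available vertices, so the underlying rooted planted graph of $N_{n,\circ}[m]$ is \emph{deterministically} a fixed finite graph $G^{(m)}$ with edge set $E^{(m)}$, and this graph is precisely the one underlying the truncated planted PWIT $N_\infty[m]$ of Step~2. Hence, for $n\ge n_0(m)$, both $\mu_n^{(m)}$ and $\mu_\infty^{(m)}$ are images, under the (total-variation-nonincreasing) map sending a weighted tree to its isomorphism class, of the laws of the edge-weight vectors $\big(\ell_n(e)\big)_{e\in E^{(m)}}$ and $\big(\ell_\infty(e)\big)_{e\in E^{(m)}}$ on $\rplus^{E^{(m)}}$, under the canonical labeling of $E^{(m)}$ induced by the deterministic exploration order. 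So it suffices to show these two edge-weight laws converge in total variation.

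Next I would write down the two joint densities. In both models the planted edge weights are i.i.d.\ $\exp(\lambda)$, so these factors are identical. For each vertex $v$ of $G^{(m)}$, the weights of its $m$ un-planted child edges are, in $N_\infty[m]$, the first $m$ arrivals $0<y_1<\cdots<y_m$ of an independent rate-$1$ Poisson process, with joint density $\e^{-y_m}$ on the simplex $\{0<y_1<\cdots<y_m\}$; in $N_{n,\circ}[m]$ they are the $m$ smallest among $N_v=n-k_v$ i.i.d.\ $\exp(1/n)$ weights, where $k_v=O_m(1)$ is the (deterministic) number of vertices explored on $v$'s side before $v$, and these per-vertex groups, together with the planted weights, are mutually independent because the exploration inspects pairwise disjoint families of previously unexamined edges of $K_{n,n}$. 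The finite-$n$ group density on $\{0<y_1<\cdots<y_m\}$ is
\[
\frac{N_v!}{(N_v-m)!\,n^{m}}\exp\!\left(-\frac1n\Big(\sum_{i=1}^{m-1}y_i + (N_v-m+1)\,y_m\Big)\right).
\]
Multiplying all the per-vertex factors and the planted-edge densities gives the full density $f_n$ of $\big(\ell_n(e)\big)_{e\in E^{(m)}}$ with respect to Lebesgue measure. As $n\to\infty$ one has $\prod_{j=0}^{m-1}(N_v-j)/n\to1$, $(N_v-m+1)/n\to1$ and $\tfrac1n\sum_{i<m}y_i\to0$, so each group density converges pointwise to $\e^{-y_m}$ and hence $f_n\to f_\infty$ pointwise, where $f_\infty$ is the joint edge-weight density of $N_\infty[m]$. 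Since $f_n$ and $f_\infty$ are probability densities, Scheff\'e's lemma yields $\int|f_n-f_\infty|\to0$, i.e.\ total variation convergence of the edge-weight laws, and therefore $\mu_n^{(m)}\xrightarrow{TV}\mu_\infty^{(m)}$.

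The main obstacle is the combinatorial bookkeeping behind the per-vertex density: one must verify carefully that the exploration reveals, for each vertex, exactly the $m$ order statistics of a \emph{deterministic} number $N_v=n-O_m(1)$ of fresh i.i.d.\ $\exp(1/n)$ weights, mutually independent across vertices — in particular that the rule ``avoid already-explored vertices'' neither re-examines edges nor makes $N_v$ random, and that the weights discarded at each step (the rejected un-planted neighbors) play no role later. Granted this structural description, the analytic part — pointwise convergence of $f_n$ followed by Scheff\'e — is entirely routine, exactly as in \cite[Lemma~10]{Aldous1992}.
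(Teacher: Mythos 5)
Your proposal is correct and follows essentially the same route as the paper: both arguments reduce the lemma to comparing, step by step, the density of the $m$ revealed un-planted weights (order statistics of $n-O_m(1)$ i.i.d.\ $\exp(1/n)$ variables on fresh, disjoint edge sets) with the density of the first $m$ arrivals of a rate-$1$ Poisson process, with the planted $\exp(\lambda)$ factors cancelling. The only cosmetic difference is that you conclude via pointwise convergence of densities plus Scheff\'e, whereas the paper bounds the Radon--Nikodym derivative below by $\left(1-|V^{(m)}|/n\right)^{|E^{(m)}|-|V^{(m)}|/2}\to 1$ and uses that both measures are probability measures.
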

\begin{proof}
	It is easy to see that $\mu_n^{(m)}$ is absolutely continuous with respect to $\mu_\infty^{(m)}$. Moreover, the Radon-Nikodym derivative of $\mu_n^{(m)}$ with respect to $\mu_\infty^{(m)}$ equals the likelihood ratio.
	
	Consider similar steps on the planted PWIT to construct $N_{\infty}[m]$. 
	Conditioned on the first $t-1$ steps of the construction of $N_{\infty}[m]$ and $N_{n,\circ}[m]$, we will calculate the ratio of the conditional densities for the next step of the construction. Since planted edges have the same $\exp(\lambda)$ distribution in both cases, we are only interested in the corresponding ratio of un-planted edges. 
	
	At $t$th step of the construction of $N_{\infty}[m]$, the conditional density of $(x_1,x_1+x_2,\cdots,x_1+\cdots+x_m)$ is 
	$\exp\left(-(x_1+x_2+\cdots+x_m)\right)$. At $t$th step of the construction of $N_{n,\circ}[m]$, using the memoryless property of exponential random variables, the conditional density of $(x_1,x_1+x_2,\cdots,x_1+\cdots+x_m)$ is
	\begin{align*}
	 \prod_{i=0}^{m-1} \frac{|I_{t-1}|-i}{n} \exp\left(-\frac{x_{i+1}(|I_{k-1}|-i)}{n}\right).
	\end{align*}
	where $I_{t-1}$ is the set of vertices that has not been added to $N_{\infty}[m]$ yet up to the $t$-th step.
	Hence, the ratio of conditional densities is at least
	\begin{align*}
	\prod_{i=0}^{m-1} \frac{|I_{k-1}|-i}{n} \geq \left(\frac{n - \left|V^{(m)}\right|}{n}\right)^{m},
	\end{align*}
	and we have
	\begin{align*}
	\frac{d \mu_n^{(m)}}{d \mu_\infty^{(m)}} \geq \left(1 - \frac{ \left|V^{(m)}\right|}{n}\right)^{\left|E^{(m)}\right| -  \left|V^{(m)}\right|/2} \, , 
	\end{align*}
	where the exponent $\left|E^{(m)}\right| -  \left|V^{(m)}\right|/2$ is the number of un-planted edges explored. 

Note that $\left|V^{(m)}\right|$ and $\left|E^{(m)}\right|$ do not depend on $n$, for all sufficiently large $n$. Hence, as $n\to\infty$ the right-hand side of the above inequality goes to $1$. Now, the result follows by the fact that $\mu_n^{(m)}$ and $\mu_\infty^{(m)}$ are probability measures.
\end{proof}

\noindent {\bf Step 3: Locally Tree-Like Property}

Fix some $\rho > 0$. Recall that $(G_{n,\circ}(1))_{\rho}$ denotes the $\rho$-neighborhood of vertex $1$ in $N_{n,\circ}(1)$ as is defined in Section~\ref{app:gen_setup}. Similarly, $(G_{n,\circ}[m])_\rho$ denotes the neighborhood $\rho$ of node $1$ in $N_{n,\circ}[m]$. The question is, whether these two neighborhoods are the same. Note that $(G_{n,\circ}[m])_\rho$ is a tree but $(G_{n,\circ}(1))_{\rho}$ is not necessary. However, it becomes a tree with high probability.
\begin{lemma} \label{lem:neigh}
	Fix $\epsilon > 0 $ and $\rho>0$. Then there exists large enough $m_0(\epsilon, \rho)$ such that for all fixed $m > m_0(\epsilon,\rho)$,
\begin{align*}
\prob[ (G_{n,\circ}(1))_{\rho}\neq (G_{n,\circ}[m])_\rho ]  \leq \epsilon, \quad  \text{ as }n\to\infty.
\end{align*}
\end{lemma}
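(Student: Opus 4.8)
The plan is to follow Aldous' locally-tree-like argument \cite{Aldous1992}, adapted to the two edge types. Write $B_\rho = (G_{n,\circ}(1))_\rho$ for the $\rho$-ball of vertex $1$ in $(K_{n,n},\ell_n)$. One half of the desired identity is automatic: since $N_{n,\circ}[m]$ is a subnetwork of $(K_{n,n},\ell_n)$ carrying the same edge weights, the distance $d_\ell$ measured inside $N_{n,\circ}[m]$ dominates $d_\ell$ measured in $K_{n,n}$, so $(G_{n,\circ}[m])_\rho \subseteq B_\rho$ as weighted rooted subgraphs, with no conditions. Hence it suffices to bound the probability that this containment is strict, and I would do so in three steps: (1) the ball $B_\rho$ is uniformly small; (2) $B_\rho$ has no cycles with high probability; (3) on the event that $B_\rho$ is a small tree, taking $m$ large makes $N_{n,\circ}[m] \supseteq B_\rho$.

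For step (1), reveal $B_\rho$ by breadth-first search. At a vertex with remaining budget $b\le\rho$, the number of its un-planted neighbors of weight $\le b$ among the as-yet-unrevealed vertices is stochastically dominated, regardless of the history, by $\mathrm{Binomial}(n-1,1-\e^{-\rho/n})$, whose mean is $\le\rho$. Thus $|V(B_\rho)|$ is dominated by the total progeny of a branching process with one free ``planted'' child per vertex and an extra offspring count of mean $\le\rho$; the integral recursion $m(\rho)=1+\int_0^\rho m(s)\,ds$ (doubled for the planted partners) gives $\expect|V(B_\rho)|\le 2\e^\rho$ uniformly in $n$, so by Markov there is $K_0=K_0(\epsilon,\rho)$ with $\prob[\,|V(B_\rho)|>K_0\,]<\epsilon/4$ for every $n$; on that event the graph-depth of $B_\rho$ is also at most $K_0$. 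For step (2), on $\{|V(B_\rho)|\le K_0\}$ run the same exploration: a cycle in $B_\rho$ can be closed only by a ``back edge'' from a currently-explored vertex to one of the $\le K_0$ previously-revealed ones, which is un-planted of weight $\le\rho$ with probability $\le 1-\e^{-\rho/n}\le\rho/n$, or it is closed by the planted partner of a freshly-revealed vertex already lying in the revealed set, which happens with probability $\le 2K_0/n$ (the identity of a freshly-revealed vertex being, conditionally on the weight information used so far, nearly uniform on its side). A union bound over the $\le K_0^2$ opportunities gives $\prob[\,B_\rho\text{ contains a cycle}\,]\le\epsilon/4+o(1)$; equivalently, one can run a direct first-moment count over short cycles built only from light edges as in \cite{Aldous1992}, noting that an alternating cycle of length $2t$ contains exactly $t$ un-planted edges. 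Let $A_n$ be the complementary ``$B_\rho$ is a tree'' event.

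For step (3), on $A_n$ the ball $B_\rho$ is a finite tree, and $(G_{n,\circ}[m])_\rho$ already equals $B_\rho$ as soon as $N_{n,\circ}[m]$ contains $B_\rho$ as a subgraph; since inside $B_\rho$ every vertex $v$'s in-budget un-planted neighbors are precisely its tree-children (any lighter un-planted neighbor not an ancestor of $v$ would, on $A_n$, be a child), containment holds whenever $m$ exceeds both the depth of $B_\rho$ and $\max_{v\in V(B_\rho)}$ of the number of un-planted children of $v$ in $B_\rho$. The depth is $\le K_0$, so I take $m\ge K_0$; and since the number of un-planted neighbors of a vertex of weight $\le\rho$ is $\mathrm{Binomial}(n-1,1-\e^{-\rho/n})\Rightarrow\mathrm{Poisson}(\rho)$, applying this along the $\le K_0$ exploration steps and union-bounding yields $\prob[\,\max_{v\in V(B_\rho)}(\#\text{un-planted children})>m\,]\le K_0\,\prob[\mathrm{Binomial}(n-1,1-\e^{-\rho/n})>m]+\epsilon/4$. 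Choosing $m_0=m_0(\epsilon,\rho)\ge K_0$ with $\prob[\mathrm{Poisson}(\rho)>m_0]<\epsilon/(8K_0)$, this is $<\epsilon/2$ for every $m>m_0$ once $n$ is large. Adding the three pieces, $\limsup_{n\to\infty}\prob[\,(G_{n,\circ}(1))_\rho\neq(G_{n,\circ}[m])_\rho\,]\le\epsilon$ for all $m>m_0$, which is the assertion.

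The main obstacle is step (2). Two subtleties drive the structure of the argument: the set of vertices lying in the weighted ball and the weights that could close a cycle are correlated, so one must first pin down a \emph{deterministic} bound $K_0$ on the ball size and only then argue conditionally; and the planted edges carry zero $d_\ell$-weight, so the ball automatically drags in planted partners, forcing a separate bound on the chance that such a partner closes a cycle. The same ``first fix the ball size, then union-bound'' device is also what makes the choice of $m_0$ uniform in $n$ in step (3), since the Poisson tail cannot be union-bounded over all $2n$ vertices of $K_{n,n}$.
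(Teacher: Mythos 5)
Your proof is correct and follows essentially the same locally-tree-like argument as the paper's: tightness of the ball size together with the absence of light ``chords'' among explored vertices, with the chord/cycle probabilities controlled by the memoryless property of the un-planted weights and the near-uniformity of the identity of a freshly revealed vertex (hence of its planted partner). The only structural differences are cosmetic: the paper's single event $\{|V(B_\rho)|\le m\}$ already subsumes both your depth condition and your maximum-degree condition (a ball with at most $m$ vertices gives every vertex fewer than $m$ in-budget un-planted neighbors, all of which then lie among its $m$ closest), so the separate Poisson-tail step in your part (3) is harmless but redundant; and your branching-process bound in part (1) --- a step the paper merely asserts --- should read $2\e^{2\rho}$ rather than $2\e^{\rho}$, since \emph{both} members of each planted pair generate rate-one un-planted offspring (the recursion is $m(b)=2+2\int_0^b m(s)\,\ds$), though the precise constant is immaterial to the Markov bound.
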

\begin{proof}
	Let $m_0$ to be large enough such that 
	$$
	\prob[ \text{number of vertices in }(G_{n,\circ}(1))_{\rho} > m_0 ] < \epsilon/2.
	$$ 
	Fix $m> m_0$. Consider the event $\{(G_{n,\circ}(1))_{\rho} \neq (G_{n,\circ}[m])_\rho\}$. This event may happens if either the number of vertices in $(G_{n,\circ}(1))_{\rho}$ is greater than $m$ or there are two vertices $v, w \in V_n^{(m)}$ such that $\{v,w\}\notin E_n^{(m)}$ but $\ell_n(v,w) \leq \rho$. The probability of the first event is bounded by $\epsilon/2$. For the other event, note that if $v,w\in V_n^{(m)}$ and $\{v,w\}\in E_n \setminus E_n^{(m)}$, then $\ell_n(v,w)$ is dominated by an exponentially distributed random variable with mean $n$. 
	(To see this, assume that $v$ was revealed earlier than $w$. Then we know that the cost of the edge $\{v,w\}$ is larger than the cost $c_m$ of the edge $\{v,v_m\}$, where $v_m$ is the $m$th closest un-planted neighbor of $v$. Hence, the probability distribution of the weight of $\{v,w\}$, using the memoryless property, is $1/n\exp((x-c_m)/n)$ which is stochastically larger than a random variable distributed as $\exp(1/n)$. Hence,
	\begin{align*}
	\prob[\exists v,w \in V_n^{(m)} \text{ such that } \{v,w\}\notin E_n^{(m)} \text{ and }\ell_n(v,w)\leq \rho] \\
	\quad \leq 
	{|V_n^{(m)}| \choose 2} (1 - \exp( -t/n)) \to 0 \text{ as } n\to\infty.
	\end{align*}
	As we mentioned before, $|V_n^{(m)}|$ is independent of $n$ for all sufficiently large $n$. Now, the result follows by combining the last two displayed inequalities.
\end{proof}
Now, combining Lemma \ref{lem:tvconv} and Lemma \ref{lem:neigh}, we get the following corollary.
\begin{corollary}\label{cor:tvconv}
	Fix $\rho>0$. Let $\mu_{n,\rho}$ denote the law of $[((G_{n,\circ}(1))_{\rho},\ell_n)]$, and let $\mu_{\infty,\rho}$ denote the law of $[((G_\infty)_{\rho},\ell_\infty)]$. Then $\mu_{n,\rho}\xrightarrow{TV}\mu_{\infty,\rho}$.
\end{corollary}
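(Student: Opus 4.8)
The plan is to squeeze the $\rho$-neighborhood of the root, both in the finite model $(K_{n,n},\ell_n)$ and in the planted PWIT, between the $\rho$-neighborhoods of the corresponding stage-$m$ truncations, and then let $n\to\infty$ using Lemma~\ref{lem:tvconv} and let $m\to\infty$ using Lemma~\ref{lem:neigh} (together with local finiteness of the planted PWIT). The only structural ingredient I would set up first is the deterministic Borel map $\Phi_\rho$ that sends a connected locally finite rooted planted network $[(\Gs,\ell)]$ to its $\rho$-neighborhood $[((\Gs)_\rho,\ell)]$; since taking a pushforward never increases total variation distance, $\sup_A|(\Phi_{\rho})_*\sigma(A)-(\Phi_{\rho})_*\sigma'(A)|\le \sup_A|\sigma(A)-\sigma'(A)|$ for any probability measures $\sigma,\sigma'$ on $\Gstar$. (Measurability of $\Phi_\rho$ is routine; in fact, since the weights are continuous, $\Phi_\rho$ is $\mu_{\infty,\rho}$-a.s. continuous.) Write $\lambda_n^{(m)}\coloneqq (\Phi_\rho)_*\mu_n^{(m)}$ and $\lambda_\infty^{(m)}\coloneqq (\Phi_\rho)_*\mu_\infty^{(m)}$ for the laws of $[((G_{n,\circ}[m])_\rho,\ell_n)]$ and $[((G_\infty[m])_\rho,\ell_\infty)]$ respectively.

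For the finite side, fix $\epsilon>0$. By Lemma~\ref{lem:neigh} there is an $m_0=m_0(\epsilon,\rho)$ such that for every fixed $m>m_0$ we have $\limsup_{n\to\infty}\prob[(G_{n,\circ}(1))_\rho\neq (G_{n,\circ}[m])_\rho]\le\epsilon$. On the complementary event the two rooted subgraphs coincide, and since both inherit the same weights from $\ell_n$, the rooted networks $[((G_{n,\circ}(1))_\rho,\ell_n)]$ and $[((G_{n,\circ}[m])_\rho,\ell_n)]$ are equal there; hence the coupling inequality for total variation gives $\limsup_{n\to\infty}\,d_{TV}(\mu_{n,\rho},\lambda_n^{(m)})\le\epsilon$.

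For the infinite side, I would use that the planted PWIT is a.s.\ locally finite, so $(G_\infty)_\rho$ is a.s.\ finite, and that $N_\infty[m]$ increases to $N_\infty$ as $m\to\infty$. Consequently there is a.s.\ a (random) $m^\ast$ with $(G_\infty)_\rho\subseteq N_\infty[m]$ for all $m\ge m^\ast$; since $N_\infty$ is a tree, the geodesic from $\root$ to any vertex of $(G_\infty)_\rho$ stays inside $(G_\infty)_\rho$, so on $\{m\ge m^\ast\}$ the truncation does not shrink the $\rho$-ball and $(G_\infty[m])_\rho=(G_\infty)_\rho$. By continuity of probability, after enlarging $m_0$ if necessary, $\prob[(G_\infty[m])_\rho\neq (G_\infty)_\rho]\le\epsilon$ for all $m>m_0$, whence $d_{TV}(\lambda_\infty^{(m)},\mu_{\infty,\rho})\le\epsilon$.

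Finally, fix any $m>m_0$. Lemma~\ref{lem:tvconv} gives $\mu_n^{(m)}\xrightarrow{TV}\mu_\infty^{(m)}$, and applying the contraction under $(\Phi_\rho)_*$ gives $\lambda_n^{(m)}\xrightarrow{TV}\lambda_\infty^{(m)}$. By the triangle inequality,
\[
d_{TV}(\mu_{n,\rho},\mu_{\infty,\rho})\le d_{TV}(\mu_{n,\rho},\lambda_n^{(m)})+d_{TV}(\lambda_n^{(m)},\lambda_\infty^{(m)})+d_{TV}(\lambda_\infty^{(m)},\mu_{\infty,\rho}),
\]
so taking $\limsup_{n\to\infty}$ yields $\limsup_n d_{TV}(\mu_{n,\rho},\mu_{\infty,\rho})\le \epsilon+0+\epsilon=2\epsilon$; since $\epsilon>0$ is arbitrary, $\mu_{n,\rho}\xrightarrow{TV}\mu_{\infty,\rho}$. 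This is mostly bookkeeping with the triangle inequality, and I do not expect a genuine obstacle: the two real lemmas are already done. The points that need a line or two of care are (i) that the $\rho$-ball operation is a bona fide deterministic function, so TV convergence passes through it, and (ii) the a.s.\ local finiteness of the planted PWIT invoked on the infinite side; the latter is the step I would spell out most carefully.
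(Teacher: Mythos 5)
Your argument is correct and is exactly the route the paper takes: the paper's entire proof of this corollary is the single sentence ``combining Lemma~\ref{lem:tvconv} and Lemma~\ref{lem:neigh}, we get the following corollary,'' and your write-up supplies precisely the missing bookkeeping (the TV-contraction under the $\rho$-ball pushforward, the coupling bound from Lemma~\ref{lem:neigh} on the finite side, and a.s.\ local finiteness of the planted PWIT on the infinite side). No gaps; the two points you flag for extra care are indeed the only ones that need it.
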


\medskip

\noindent {\bf Step 4: Portmanteau Theorem}

For a fixed $R>0$, since the condition $d([N_{\circ}],[T_{\circ}]) < (R+1)^{-1}$ is equivalent to $d([((G_{\circ})_R,\ell)],[T_{\circ}])\allowbreak < (R+1)^{-1}$, 
Corollary~\ref{cor:tvconv} implies that, for all finite rooted planted trees $[T_{\circ}] \in \Gstar$, we have
\begin{align}\label{eq:totalvarconv}
\left|\mu_n (A_{T_{\circ}}) - \mu_{\infty}(A_{T_{\circ}})\right| \to 0 \text{ as }n\to\infty \, ,
\end{align}
where $A_{T_{\circ}}$ is defined as
\begin{align*}
A_{T_{\circ}} \coloneqq \{[N_{\circ}]\in\Gstar: d([N_{\circ}],[T_{\circ}]) < (R+1)^{-1} \}.
\end{align*}
Note that the support of $\mu_{\infty}$ is rooted planted trees. Moreover, recall that $\Gstar$ is separable, hence, the restriction of $\Gstar$ to the rooted planted trees is also separable. Since $\mu_{\infty}$ is a probability measure, for any $R>0$ and any $\epsilon > 0$, there exists a finite set  $S(R,\epsilon)$ consisting of 
rooted planted trees $T_{\circ} = (G_{\circ},\ell)$ with $(G_{\circ})_R = G_{\circ}$ such that
\begin{align*}
\mu_{\infty} \left(\bigcup_{T_{\circ} \in S(R,\epsilon)} A_{T_{\circ}} \right) > 1-\epsilon.
\end{align*}
Using Corollary \ref{cor:tvconv}, there exists $n_0(\epsilon)\in\mathbb{N}_+$ such that for all $n > n_0(\epsilon)$,
\begin{align*}
\mu_{n} \left(\bigcup_{T_{\circ} \in S(R,\epsilon)} A_{T_{\circ}} \right) > 1-2\epsilon .
\end{align*}

Now, we are going to prove that $\mu_n \xrightarrow{w} \mu_{\infty}$. By definition $\mu_n \xrightarrow{w} \mu_{\infty}$, if for any continuous bounded function $f:\Gstar \to \mathbb{R}$,
\begin{align*}
\int_{\Gstar} f\,d\mu_n \to \int_{\Gstar} f\,d\mu_{\infty}.
\end{align*}
Using the Portmanteau Theorem, we can restrict our attention to the uniformly continuous bounded functions. Let $f:\Gstar \to \mathbb{R}$ be a uniformly continuous bounded function. Now, for any $\epsilon > 0$, there is a $\delta > 0$ such that if $d([\Ns],[\Nsp]) < \delta$ then $|f([\Ns]) - f([\Nsp]) | \leq \epsilon$. Fix the value of $\epsilon$ and let $R$ to be large enough such that $(R+1)^{-1} < \delta$. We have,
\begin{align*}
\left|\int_{\Gstar} f d\mu_n - \int_{\Gstar} fd\mu_\infty\right| \leq 3\epsilon |f|_{\infty} + \sum_{T_{\circ} \in S(R,\epsilon)} f(T_{\circ}) \left|\mu_n(A_{T_{\circ}}) - \mu_\infty(A_{T_{\circ}}) \right| + 2\epsilon.
\end{align*}
where $|f|_{\infty} \coloneqq \sup_{N\in\Gstar} |f(N)|$. The result follows by arbitrary choice of $\epsilon$, the fact that $|S(R,\epsilon)| < \infty$, and \eqref{eq:totalvarconv}.
\section{Proof of Proposition~\ref{prop:invol}}\label{app:proofofinvol}
By Lemma~\ref{lem:match}, $\Minfopt$ is a deterministic function of the collection of random variables $$\mathcal{C}_\infty = \left\{\ell_{\infty}(e),X(\overleftrightarrow{e});e\in E_{\infty}, \text{ and }\overleftrightarrow{e}\text{ is directed}\right\}.$$Also note that $X(\overleftrightarrow{e})$ satisfies \eqref{eq:baleq}, which does not depend on the relabeling of the vertices. Now, by construction, we only need to show that the distribution of $(X(\overrightarrow{e});e\in \overrightarrow{E}_{\infty}(h))$ for $h \geq  1$ is invariant with respect to the involution map $\iota$.

Abusing the notation, let $\widehat{\mu}_{\infty,\opt}$ denote the law of $\mathcal{C}_\infty$ and define $\widetilde{\mu}_{\infty,\opt}$ to be $\widehat{\mu}_{\infty,\opt} \times \texttt{count} \text{ on }\mathbb{N}_+$, similar to \eqref{eq:involinvardef}. Fix the second root $k \geq 0$ and let
\begin{align}
B = \left\{(X(\overrightarrow{e}),\overrightarrow{e}\in \overrightarrow{E}_{\infty}(h))\in \cdot \, \text{, $k$ is distinguished}\right\}
\end{align}
denote a measurable subset on $\Gmadstarstar$, 
where $\Gmadstarstar$ is defined similar to $\Gstarstar$ (the set all isomorphism classes of connected locally finite doubly-rooted planted networks) with an additional mark on the edges $\overleftrightarrow{e}$ representing $X(\overleftrightarrow{e})$.
Let $(X_0,Y_0)$ denote a solution of the system of recursive distributional equations \eqref{eq:disteqX}--\eqref{eq:disteqY} as in Lemma~\ref{lem:match}. Recall that if $\overrightarrow{e}$ is a planted edge then $X(\overrightarrow{e})$ and $X_0$ have the same distribution; otherwise, $X(\overrightarrow{e})$ and $Y_0$ have the same distribution. Note that the collection of random variables $(X(\overrightarrow{e}),\overrightarrow{e} \in \overrightarrow{E}_{\infty}(h))$ are independent since they depend on messages received from disjoint subtrees.

We need to show that 
\begin{align*}
\widetilde{\mu}_{\infty,\opt}(\iota^{-1}(B)) = \widetilde{\mu}_{\infty,\opt}(B).
\end{align*}
We treat the cases $k=0$ or $k > 0$ separately.
\begin{enumerate}[label=(\roman*)]
	\item If $k = 0$, then we have
	\begin{align*}
	\iota^{-1}(B) = \big\{\text{the double root is $(\root,0)$} \big\} \cap \big\{(X(\overrightarrow{e}),\overrightarrow{e}\in \overrightarrow{E}_{\infty}(h,0))\in \cdot \,\big\} ,
	\end{align*}
	where $\overrightarrow{E}_{\infty}(h,0) \coloneqq \{(v,vj): \{v,vj\} \in E_{\infty} \text{ s.t. } v=i_1 i_2 i_3 \cdots i_{h-2} \text{ with } i_1\neq 0 \text{ or } v=0i_2i_3\cdots i_{h} \}$. 
	Now to complete the proof, it suffices to show that 
$(X(\overrightarrow{e}),\overrightarrow{e}\in \overrightarrow{E}_{\infty}(h,0))$ has the same distribution as $(X(\overrightarrow{e}),\overrightarrow{e}\in \overrightarrow{E}_{\infty}(h))$.
Clearly, the collection of random variables $(X(\overrightarrow{e}),\overrightarrow{e}\in \overrightarrow{E}_{\infty}(h,0))$ are independent, and 
$X(\overrightarrow{e})$ has the same distribution as $X_0$ or $Y_0$ depending on whether $\overrightarrow{e}$ is planted or un-planted. Thus it remains to prove that 
there is a one-to-one map from $\overrightarrow{E}_{\infty}(h,0)$ to $\overrightarrow{E}_{\infty}(h)$ that maps (un)planted edges to (un)planted ones. 
Consider the relabeling function $\phi$ defined as follows
	\begin{align*}
	&\phi(0i_2i_3\cdots i_{h}) = i_2i_3\cdots i_{h}&&\forall\, 0i_2i_3\cdots i_{h} \in V_\infty, \\
	&\phi(i_1i_2\cdots i_{h-2}) = 0i_1i_2\cdots i_{h-2}&&\forall\, i_1i_2\cdots i_{h-2} \text{ with }i_1\neq 0,
	\end{align*}
	and define $\gamma:\overrightarrow{E}_{\infty}(h,0) \to \overrightarrow{E}_{\infty}(h)$ by $\gamma(\{v,vj\}) \coloneqq \{\phi(v),\phi(v)j \}$.
	\item If $k > 0 $, then we have
	\begin{align*}
	\iota^{-1}(B) = \bigcup_{l>0} \bigg\{ \big\{\text{the double root is $(\root,l)$} \big\} \cap A_l \cap \big\{(X(\overrightarrow{e}),\overrightarrow{e}\in \overrightarrow{E}_{\infty}(h,l))\in \cdot \,\big\} \bigg\},
	\end{align*}
	where
	\begin{align*}
	&A_l \coloneqq \big\{ \ell_{\infty}(l,l(k-1)) < \ell_{\infty}(\root,l) < \ell_{\infty}(l,lk) \big\},
	\end{align*}
	and $\overrightarrow{E}_{\infty}(h,l)$ is defined similar to $\overrightarrow{E}_{\infty}(h,0)$. Note that the events $\{(X(\overrightarrow{e}),\overrightarrow{e}\in \overrightarrow{E}_{\infty}(h,l))\in \cdot \,\} $ and $A_l$ are independent, and the distribution of $(X(\overrightarrow{e}),\overrightarrow{e}\in \overrightarrow{E}_{\infty}(h,l))$ does not depend on $l$. Also, note that by \eqref{eq:uniform-i}, we have
\[
	\sum_{l>0} \prob[A_l]
	= \int_{0}^{\infty} \prob[\text{exactly $k-1$ arrivals before $x$}] \,\dx 
	= 1.
\]
The result then follows using the same argument as in the previous case.
\end{enumerate}
\section{Proof of Lemma~\ref{lem:BB}}
\label{app:proofofBB}
There are two branches that we are interested in: the alternating path from $\root$ through $v_{-1}$, and the alternating path from $\root$ through $v_1$. It is more convenient to study these two branches on the doubly-rooted planted PWIT, rooted at $(\root,v_1)$. 
The proof uses the discussion of bi-infinite planted PWITs $T^{\lra}_u$ and $T^{\lra}_p$ in Section~\ref{app:BiinfTree}. We follow the same notation and simplification (Remark \ref{rem:simplif} in Section~\ref{app:BiinfTree}) here.

Using the relabeling maps $\psi\big|_{p}$ and $\psi\big|_{u}$, we already know that $\mu_{\infty}\times \delta_0$ is equivalent to $\mu^{\lra}_{p}$ and $\mu_{\infty}\times\texttt{count} \text{ on } \{1,2,3,\cdots \}$ is equivalent to $\mu^{\lra}_{u}$.
We can use the relabeling map $\psi\big|_{\cdot}$ to define $\{X_{\cdot}(\overrightarrow{e}),\overrightarrow{e} \text{ is a directed edge in } E^\lra_{\cdot} \}$ jointly with $\{\ell^{\lra}_{\cdot}(e),e\in E^\lra_{\cdot} \}$. Note that the joint distribution of $\{\ell^{\lra}_{\cdot}(e), X_{\cdot}(\overrightarrow{e}); e \in E^\lra_{\cdot} \text{ and }\overrightarrow{e}\text{ is directed} \}$ is exactly the same as if we use the construction of Lemma \ref{lem:const} by redefining $\overrightarrow{E}(h)$ as
\begin{align}\label{eq:Mconst_lra}
	\overrightarrow{bE}^{\lra}_{\cdot}(h) \coloneqq \{\overrightarrow{e}=(-v,-vj): \text{gen}(v) = h-1\} \cup  \{\overrightarrow{e}=(+v,+vj): \text{gen}(v) = h-1\}.
\end{align}
Now, given $\{X_{\cdot}(\overrightarrow{e}),\overrightarrow{e} \text{ is a directed edge in } E^\lra_{\cdot} \}$ we can define a minimum matching $\M^{\lra}_{\cdot,\opt}$ on $T^{\lra}_{\cdot}$, same as in Lemma \ref{lem:match}, i.e., 
\begin{align}
&\forall v\in V^{\lra}_{\cdot}: \M^{\lra}_{\cdot,\opt}(v) = \argmin_{w:\{v,w\}\in E^{\lra}_{\cdot}} \left(\ell^{\lra}_{\cdot}(v,w) - X_{\cdot}(v,w) \right)\label{eq:Mdef_lra}\\
&\forall e\in E^{\lra}_{\cdot}: \M^{\lra}_{\cdot,\opt}(e) = 1 \text{ if and only if } \ell^{\lra}_{\cdot}(e) < X_{\cdot}(\overrightarrow{e}) + X_{\cdot}(\overleftarrow{e}).\label{eq:Mdefalt_lra}
\end{align}
Next, we are going to show that the bi-infinite tree $T^{\lra}_{u}$ ($T^{\lra}_{p}$) restricted to $\M^{\lra}_{u,\opt}(-\root,+\root) = 1$ ($\M^{\lra}_{p,\opt}(-\root,+\root) = 1$) is equivalent to the doubly-rooted planted PWIT, rooted at $\{\root,\M_{\opt}(\root)\}$, restricted to $\M_{\opt}(\root) \neq 0$ ($\M_{\opt}(\root) = 0$).

On the planted PWIT $T_u$ (recall that $T_u=T(\root)$ as defined in Section~\ref{app:BiinfTree}), define $X_{u}^{\downarrow} = \min_{i \geq 0} (\ell(\root,i) - X(\root,i))$. Let $\nu_u(x)$ denote the conditional distribution of the set $$\{\ell(e), X(\overrightarrow{e}); e \in E_{h} \text{ and }\overrightarrow{e}\text{ is directed away from }\root \}$$ given $X_{u}^{\downarrow} = x$. 
Similarly, on the subtree $T_p$ (recall that $T_p$ is a relabeling of $T(0)$ as defined in Section~\ref{app:BiinfTree}), define $X_{p}^{\downarrow} = \min_{i \geq 1} (\ell(\root,i) - X(\root,i))$. Let $\nu_p(x)$ denote the conditional distribution of the set $$\{\ell(e), X(\overrightarrow{e}); e \in E_p \text{ and }\overrightarrow{e} \text{ is directed away from } \root\}$$ given $X_{p}^{\downarrow} = x$. 
On the bi-infinite tree $T^{\lra}_{\cdot}$, define $\mu^{1}_{\cdot}$ to be the measure obtained by restricting $\mu^{\lra}_{\cdot}$ to the set $\{\ell^{\lra}_{\cdot}(-\root, +\root) < X_{\cdot}(-\root, +\root) + X_{\cdot}(+\root, -\root)\}$, i.e., $\M^{\lra}_{\cdot,\opt}(-\root,+\root) = 1$. Let $E^{\lra +}_{\cdot}$ and $E^{\lra -}_{\cdot}$ denote all edges of form $\{+v,+vj\}$ and $\{-v,-vj\}$ respectively. Clearly, $E^{\lra}_\cdot = E^{\lra +}_{\cdot} \cup E^{\lra -}_{\cdot} \cup \{-\root,+\root \}$.
Let $(X_0,Y_0)$ be a solution of the system of recursive distributional equations~\eqref{eq:disteqX}--\eqref{eq:disteqY}.
\begin{lemma}\label{lem:changepres}
	The measures $\mu^{1}_{u}$ and $ \mu^{1}_{p}$ are finite positive measures. The total mass of $\mu^{1}_{u}$ equals $\prob[\M_{\opt}(\root) \neq 0]$ and the total mass of $\mu^{1}_{p}$ is $\prob[\M_{\opt}(\root) = 0]$.
	Under $\mu^{1}_{\cdot}$ we have:
	\begin{enumerate}[label=(\roman*)]
		\item \sloppy The joint density of $\left(\ell^{\lra}_{u}(-\root,+\root),X_{u}(+\root,-\root),X_{u}(-\root,+\root)\right)$ at point $(l,x_1,x_2)$ is $f_{u}(x_1) f_{u}(x_2)\bs{1}_{(0<l<x_1+x_2)}$, where $f_u(\cdot)$ is the density of $Y_0$ and $\bs{1}_{(0<l<x_1+x_2)}$ is the indicator function; the joint density of $\left(\ell^{\lra}_{p}(-\root,+\root),X_{p}(+\root,-\root),X_{p}(-\root,+\root)\right)$ at point $(l,x_1,x_2)$ is $f_{p}(x_1)f_{p}(x_2)\lambda\exp(-\lambda l)$, where $f_p(\cdot)$ is the density of $X_0$. 
		\item Conditioned on $\left(\ell^{\lra}_{\cdot}(-\root,+\root),X_{\cdot}(+\root,-\root),X_{\cdot}(-\root,+\root)\right) = (l,x_1,x_2)$ with $x_1+x_2 > l$, the distribution of the family
		\begin{align*}
		\{\ell^{\lra}_{\cdot}(e), X_{\cdot}(\overrightarrow{e}); e \in E^{\lra -}_{\cdot} \text{ and }\overrightarrow{e} \text{ is directed away from } -\root \}
		\end{align*}
		is the image of $\nu_{\cdot}(x_1)$ under the natural embedding $T_{\cdot}\to T_{\cdot}^{-} \subset  T_{\cdot}^\lra$; the distribution of the family
		\begin{align*}
		\{\ell^{\lra}_{\cdot}(e), X_{\cdot}(\overrightarrow{e}); e \in E^{\lra +}_{\cdot} \text{ and }\overrightarrow{e} \text{ is directed away from } +\root \}
		\end{align*}
		is the image of $\nu_{\cdot}(x_2)$ under the natural embedding $T_{\cdot}\to T_{\cdot}^+ \subset  T_{\cdot}^\lra$ and these two families are conditionally independent. 
	\end{enumerate}
\end{lemma}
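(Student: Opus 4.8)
The plan is to reduce everything to the joint law of the message construction of Lemma~\ref{lem:const}, transported through the relabeling bijections $\psi\big|_{u}$ and $\psi\big|_{p}$ of Section~\ref{app:BiinfTree}. Recall that $\mu^{\lra}_{u}$ is the image of $\mu_\infty \times \texttt{count}$ on $\{1,2,3,\dots\}$ and $\mu^{\lra}_{p}$ the image of $\mu_\infty \times \delta_0$, and that the messages $X_\cdot$ on $T^{\lra}_\cdot$ are by construction the transports of the messages $X$ on the doubly-rooted planted PWIT, which may equivalently be produced by running the construction of Lemma~\ref{lem:const} directly on the bi-infinite tree with the leaf edge set~\eqref{eq:Mconst_lra}. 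The first observation is that, applying the balance equation~\eqref{eq:baleq} at $-\root$ and using that the neighbors of $-\root$ other than $+\root$ are exactly the children of $-\root$ inside the copy $T^-_\cdot$, we get $X_\cdot(+\root,-\root) = \min\{\ell^{\lra}_\cdot(-\root,w) - X_\cdot(-\root,w) : w \text{ a child of } -\root \text{ in } T^-_\cdot\}$; that is, $X_\cdot(+\root,-\root)$ is precisely the ``downward minimum'' $X^{\downarrow}_u$ computed on $T^-_u$ (resp.\ $X^{\downarrow}_p$ on $T^-_p$), and symmetrically $X_\cdot(-\root,+\root)$ is the downward minimum at the root of $T^+_\cdot$. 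Since $T^-_\cdot$ and $T^+_\cdot$ are independent copies, this immediately decouples the two sides.

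Next I would read off the marginal law of the triple. In the un-planted case the root of $T_u$ has a planted child and un-planted children, so by~\eqref{eq:RDE1} (equivalently~\eqref{eq:disteqY}) the downward minimum $X^{\downarrow}_u$ has the law of $Y_0$; hence $X_u(+\root,-\root)$ and $X_u(-\root,+\root)$ are i.i.d.\ copies of $Y_0$ with density $f_u$, and $\ell^{\lra}_u(-\root,+\root)$ is an independent Lebesgue-``distributed'' weight, giving joint density $f_u(x_1)f_u(x_2)$ before the restriction. In the planted case $T_p$ has only un-planted children, so by~\eqref{eq:RDE2} (equivalently~\eqref{eq:disteqX}) the downward minimum has the law of $X_0$; hence $X_p(+\root,-\root), X_p(-\root,+\root)$ are i.i.d.\ copies of $X_0$ with density $f_p$, and $\ell^{\lra}_p(-\root,+\root)\sim \exp(\lambda)$ is independent, giving joint density $f_p(x_1)f_p(x_2)\lambda\e^{-\lambda l}$. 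Passing to $\mu^{1}_\cdot$ amounts to intersecting with $\{\M^{\lra}_{\cdot,\opt}(-\root,+\root)=1\} = \{\ell^{\lra}_\cdot(-\root,+\root) < X_\cdot(-\root,+\root)+X_\cdot(+\root,-\root)\}$ by~\eqref{eq:Mdefalt_lra}, which multiplies the density by $\one_{(0<l<x_1+x_2)}$, yielding part (i). For the total mass I would transport $\{\M^{\lra}_{\cdot,\opt}(-\root,+\root)=1\}$ back through $\psi\big|_\cdot^{-1}$: on the doubly-rooted planted PWIT rooted at $(\root,k)$ it becomes $\{\M_\opt(\root)=k\}$, so integrating against $\mu_\infty\times\texttt{count}$ on $\{1,2,\dots\}$ (resp.\ $\mu_\infty\times\delta_0$) gives total mass $\sum_{k\ge 1}\prob[\M_\opt(\root)=k] = \prob[\M_\opt(\root)\ne 0]$ (resp.\ $\prob[\M_\opt(\root)=0]$). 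Positivity and finiteness then follow since $\prob[\M_\opt(\root)=0] = \alpha(\lambda)\in(0,1)$ by Theorems~\ref{thm:main} and~\ref{thm:itallworks}.

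For part (ii) the key point is that the event $\{0<l<x_1+x_2\}$ defining $\mu^{1}_\cdot$ is measurable with respect to $\sigma\bigl(\ell^{\lra}_\cdot(-\root,+\root), X_\cdot(-\root,+\root), X_\cdot(+\root,-\root)\bigr)$, so conditioning on the triple already absorbs the restriction to $\mu^{1}_\cdot$, and it suffices to describe, under $\mu^{\lra}_\cdot$, the conditional law of the downward $(\ell,X)$-data on $E^{\lra -}_\cdot$ and $E^{\lra +}_\cdot$ given the triple. But by the identification in the first paragraph, conditioning on $X_\cdot(+\root,-\root)=x_1$ is exactly conditioning the independent copy $T^-_\cdot$ on $\{X^{\downarrow}_\cdot = x_1\}$, whose conditional law on the downward data is by definition $\nu_\cdot(x_1)$ pushed forward by the embedding $T_\cdot \to T^-_\cdot \subset T^{\lra}_\cdot$; symmetrically for the $+$ side with $\nu_\cdot(x_2)$; and the conditional independence of the two families is inherited from the independence of $T^-_\cdot$ and $T^+_\cdot$.

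I expect the main obstacle to be the bookkeeping needed to make the identification ``$X_\cdot(+\root,-\root) = X^{\downarrow}_\cdot$ on $T^-_\cdot$'' rigorous at the level of joint laws: one must verify that running the construction of Lemma~\ref{lem:const} directly on the bi-infinite tree with the leaf edge set~\eqref{eq:Mconst_lra} produces the same joint distribution of $\{(\ell^{\lra}_\cdot(e), X_\cdot(\overrightarrow{e}))\}$ as transporting the doubly-rooted construction through $\psi\big|_\cdot$, so that the disintegration with respect to the triple genuinely factors through the two sides, together with the $\sigma$-finite disintegration for $\mu^{\lra}_u$. This is the planted analogue of the consistency check underlying the bi-infinite tree correspondence in~\cite[Section 5.2]{Aldous2001}; the only new ingredient is tracking the two edge types and the corresponding message laws $X_0, Y_0$, which is routine once the relabelings $\psi\big|_u, \psi\big|_p$ and the RDEs~\eqref{eq:disteqX}--\eqref{eq:disteqY} are in hand.
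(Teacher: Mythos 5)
Your proposal is correct and, for parts (i) and (ii), follows essentially the same route as the paper: read off the marginal law of the central triple from the construction of Lemma~\ref{lem:const} transported through $\psi\big|_\cdot$ (identifying $X_\cdot(+\root,-\root)$ with the downward minimum $X^\downarrow_\cdot$ at the root of the independent copy $T^-_\cdot$, and likewise for the $+$ side), multiply by the indicator $\one_{(0<l<x_1+x_2)}$ coming from \eqref{eq:Mdefalt_lra}, and deduce the conditional independence and the conditional laws $\nu_\cdot(x_1)$, $\nu_\cdot(x_2)$ in part (ii) from the independence of the two subtrees. You also correctly flag the one genuine bookkeeping obligation, namely that running the construction of Lemma~\ref{lem:const} directly on $T^{\lra}_\cdot$ with leaf set \eqref{eq:Mconst_lra} reproduces the law obtained by transporting the doubly-rooted construction through $\psi\big|_\cdot$; the paper asserts this in the preamble of the appendix rather than re-deriving it. The one place where you genuinely diverge is the total-mass computation: the paper writes the mass as the explicit integrals \eqref{eq:eq1Y0}--\eqref{eq:eq1X0} against $f_u$ resp.\ $f_p$ and then proves a separate sub-lemma (Lemma~\ref{lem:eq1}, via Corollary~\ref{cor:iidX}) identifying $\expect[(Y_1+Y_2)^+]$ with $\prob[\M_\opt(\root)\neq 0]$ and $\prob[X_1+X_2>\eta]$ with $\prob[\M_\opt(\root)=0]$, whereas you pull the event $\{\M^{\lra}_{\cdot,\opt}(-\root,+\root)=1\}$ back through $\psi\big|_\cdot^{-1}$ to $\{(\root,\M_\opt(\root))\text{ is the double root}\}$ and sum over the second root. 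Your route is shorter and avoids the integral identity entirely, at the cost of invoking the pullback identification of the defining event (which the paper itself establishes, though only in the text immediately after the lemma); the paper's route has the advantage of producing the explicit integral formulas, which it does not otherwise need here. Both are valid.
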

\begin{remark}\label{rem:onedirection}
	Conditioned on $\left(\ell^{\lra}_{\cdot}(-\root,+\root),X_{\cdot}(+\root,-\root),X_{\cdot}(-\root,+\root)\right) = (l,x_1,x_2)$ with $x_1+x_2 > l$, we have $ \M^{\lra}_{\cdot,\opt}(-\root,+\root) = 1$. Now, by \eqref{eq:Mdef_lra} and the construction of $X_{\cdot}$ on $T_{\cdot}^\lra$, we have $X_{\cdot}(-i,-\root) = \ell^{\lra}_{\cdot}(-\root,+\root) - X_{\cdot}(-\root,+\root)$ for all $i$. Similarly, $X_{\cdot}(+i,+\root) = \ell^{\lra}_{\cdot}(+\root,-\root) - X_{\cdot}(+\root,-\root)$ for all $i$. This combined with the families in part (ii) of Lemma \ref{lem:changepres}, specifies $X_{\cdot}$ on $T_{\cdot}^\lra$ under $\mu^{1}_{\cdot}$.
\end{remark}
\begin{proof}
	By construction of $X_{\cdot}$ on $T_{\cdot}^\lra$, we already know that $X_{\cdot}(+\root,-\root)$ and $X_{\cdot}(-\root,+\root)$ are independent with density $f_\cdot(\cdot)$. Moreover, $\ell^{\lra}_{u}(-\root,+\root)$ has uniform ``distribution'' on $[0,\infty)$, and $\ell^{\lra}_{p}(-\root,+\root)$ is an exponentially distributed random variable with parameter $\lambda$.
	Hence, the joint density has the form mentioned in (i). Moreover, the total mass of $\mu^{1}_{u}$ is
	\begin{align}\label{eq:eq1Y0}
	\int_{x_1= -\infty}^{\infty}\int_{x_2 = -\infty}^{\infty} (x_1+x_2)^+ f_{u}(x_1)f_{u}(x_2)\,dx_2\,dx_1,
	\end{align}
	and the total mass of $\mu^{1}_{p}$ is
	\begin{align}\label{eq:eq1X0}
	\int_{x_1= -\infty}^{\infty}\int_{x_2 = -\infty}^{\infty} (1 - \exp( -\lambda (x_1+x_2)^+)) f_{p}(x_1)f_{p}(x_2)\,dx_2\,dx_1,
	\end{align}
	where $(x_1+x_2)^+ = \max(x_1+x_2,0)$. Now, using the joint density above, we can calculate the total mass of $\mu^{1}_{u}$ and $\mu^{1}_{p}$ as shown by the following lemma.
	\begin{lemma}\label{lem:eq1}
	The equation \eqref{eq:eq1Y0} equals $\prob[\M_{\opt}(\root) \neq 0]$, and the equation \eqref{eq:eq1X0} equals $\prob[\M_{\opt}(\root) = 0]$.
	\end{lemma}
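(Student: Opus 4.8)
The plan is to evaluate the total masses of the finite positive measures $\mu^1_u$ and $\mu^1_p$ by a route complementary to the one used just above. By the joint-density computation already carried out, the total mass of $\mu^1_u$ equals the integral in~\eqref{eq:eq1Y0} and the total mass of $\mu^1_p$ equals the integral in~\eqref{eq:eq1X0}, so it suffices to show separately that the total mass of $\mu^1_u$ is $\prob[\M_{\opt}(\root)\neq 0]$ and the total mass of $\mu^1_p$ is $\prob[\M_{\opt}(\root)=0]$. To this end I will transport each $\mu^1_{\cdot}$ back to the doubly-rooted planted PWIT through the relabeling bijections $\psi\big|_u$ and $\psi\big|_p$ of Section~\ref{app:BiinfTree}. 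Recall that $\psi\big|_p$ pushes $\mu\times\delta_0$ forward to $\mu^{\lra}_p$ and $\psi\big|_u$ pushes $\mu\times\texttt{count}$ on $\{1,2,3,\dots\}$ forward to $\mu^{\lra}_u$, and that each $\psi\big|_{\cdot}$ is an honest vertex relabeling: it preserves incidences, the planted/un-planted label of every edge, and all edge weights, hence also the messages $X$ determined by~\eqref{eq:baleq} and, by~\eqref{eq:Mdef_lra}--\eqref{eq:Mdefalt_lra} versus~\eqref{eq:Minfdef}, the optimal matching. In particular the event $\{\M^{\lra}_{\cdot,\opt}(-\root,+\root)=1\}$ defining $\mu^1_{\cdot}$ corresponds under $\psi\big|_{\cdot}^{-1}$ to the event that the edge $\{\root,k\}$ of the planted PWIT lies in $\M_{\opt}$, where $k=0$ in the planted case and $k$ ranges over $\{1,2,3,\dots\}$ in the un-planted case.

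Carrying this out is then bookkeeping. For $\mu^1_p$ the pulled-back event is simply $\{\{\root,0\}\in\M_{\opt}\}$, so its total mass is $\prob[\{\root,0\}\in\M_{\opt}]=\prob[\M_{\opt}(\root)=0]$; together with the identity ``total mass of $\mu^1_p$ equals~\eqref{eq:eq1X0}'' this proves the second assertion. For $\mu^1_u$, integrating the pulled-back event against $\mu\times\texttt{count}$ on $\{1,2,3,\dots\}$ yields $\sum_{k\geq 1}\prob[\{\root,k\}\in\M_{\opt}]$. Since $\M_{\opt}$ is a matching, the events $\{\{\root,k\}\in\M_{\opt}\}$ for $k\geq 1$ are pairwise disjoint, so this sum equals $\prob[\exists\,k\geq 1:\{\root,k\}\in\M_{\opt}]$. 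Now $\root$ has no neighbor other than its children $0,1,2,\dots$, and $\root$ is almost surely matched by $\M_{\opt}$: by~\eqref{eq:Minfdef} and Lemma~\ref{lem:match}, $\M_{\opt}(\root)=\argmin_{w:\{\root,w\}\in E_\infty}(\ell_\infty(\root,w)-X(\root,w))$ is well defined a.s., because the edge weights are drawn from a continuous distribution and, by Corollary~\ref{cor:iidX}, are independent of the messages. Hence the events $\{\M_{\opt}(\root)=k\}$ over $k\geq 0$ partition the probability space up to a null set, so $\sum_{k\geq 1}\prob[\{\root,k\}\in\M_{\opt}]=\prob[\M_{\opt}(\root)\neq 0]$; combined with ``total mass of $\mu^1_u$ equals~\eqref{eq:eq1Y0}'', this is the first assertion.

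The one point that deserves genuine care, as opposed to unwinding definitions, is the claim that $\M_{\opt}$ matches $\root$ to exactly one of its children almost surely, so that the individual probabilities $\prob[\{\root,k\}\in\M_{\opt}]$ add up to $\prob[\M_{\opt}(\root)\neq 0]$ (and, in passing, so that~\eqref{eq:eq1Y0} plus~\eqref{eq:eq1X0} equals $1$); this rests on the a.s.\ uniqueness of the minimizer in~\eqref{eq:Minfdef}, which in turn uses the independence of the weights $\ell_\infty(\root,w)$ from the messages $X(\root,w)$ furnished by Corollary~\ref{cor:iidX}. As an independent check of the planted case, one can also obtain~\eqref{eq:eq1X0} directly: by Corollary~\ref{cor:iidX}(i), $\{\root,0\}\in\M_{\opt}$ iff $\ell_\infty(\root,0)<X(\overrightarrow{e})+X(\overleftarrow{e})$ for $e=\{\root,0\}$, where $X(\overrightarrow{e})$ and $X(\overleftarrow{e})$ are i.i.d.\ copies of $X_0$ and $\ell_\infty(\root,0)\sim\exp(\lambda)$ is independent of them; conditioning on the two messages gives $\prob[\ell_\infty(\root,0)<x_1+x_2]=1-\e^{-\lambda(x_1+x_2)^+}$, and averaging over $x_1,x_2$ against the density of $X_0$ reproduces the integrand of~\eqref{eq:eq1X0}.
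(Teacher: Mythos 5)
Your proof is correct, and it reaches the identity by essentially the same underlying facts as the paper, just packaged through the measure-theoretic scaffolding rather than by direct evaluation. The paper's proof computes the integrals head-on: it recognizes \eqref{eq:eq1Y0} as $\expect[(Y_1+Y_2)^+]=\int_0^\infty \prob[Y_1+Y_2>\zeta]\,\mathrm{d}\zeta$ and \eqref{eq:eq1X0} as $\prob[X_1+X_2>\eta]$, then invokes Corollary~\ref{cor:iidX} together with the criterion $\M_{\opt}(e)=1 \Leftrightarrow \ell(e)<X(\overrightarrow{e})+X(\overleftarrow{e})$ of Lemma~\ref{lem:match} to read off the two probabilities; the passage from $\int\mathrm{d}\zeta$ to $\prob[\M_{\opt}(\root)\neq 0]$ is exactly the Lebesgue-versus-counting-measure correspondence of \eqref{eq:uniform-i}. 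You instead take the already-established identity ``total mass of $\mu^1_{\cdot}$ equals the integral'' and compute the total mass a second time by pulling $\mu^1_{\cdot}$ back through $\psi\big|_{\cdot}$, which makes that same measure correspondence explicit and reduces the probabilistic content to the disjointness of the events $\{\M_{\opt}(\root)=k\}$. Both routes rest on the same three ingredients (Corollary~\ref{cor:iidX}, the characterization \eqref{eq:Minfalter}, and a.s.\ uniqueness of the minimizer in \eqref{eq:Minfdef}), and you correctly isolate the one point needing care, namely that $\root$ is matched to exactly one child almost surely. Your route has the small advantage of reusing the pushforward statements the paper records immediately after Lemma~\ref{lem:changepres}; the paper's has the advantage of being self-contained within the singly-rooted PWIT. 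No gaps.
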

	\begin{proof}[proof of Lemma \ref{lem:eq1}]
		Let $X_1$ and $X_2$ denote two independent copies of $X_0$, and $Y_1$ and $Y_2$ denote two independent copies of $Y_0$. Let $\eta$ denote an exponentially distributed random variable with parameter $\lambda$. 
		
		Using Corollary \ref{cor:iidX}, we have
		\begin{align*}
		\text{Equation \eqref{eq:eq1Y0}} &= \expect[(Y_1+Y_2)^+]\\
		&= \int_{x=0}^{\infty} \prob[Y_1+Y_2 > x] \,dx  \\
		&= \int_{\zeta=0}^{\infty} \prob[X(\overrightarrow{e})+X(\overleftarrow{e}) > \ell(e) \,|\, \exists e = \{\root,i\geq 1\}, \ell(e) = \zeta] \,d\zeta\\
		&= \prob[\M_{\opt}(\root) \neq 0], \\
				\text{Equation \eqref{eq:eq1X0}} &= \prob[X_1 + X_2 > \eta] = \prob[X(\root,0) + X(0,\root) > \ell(\root,0)]=\prob[\M_{\opt}(\root) = 0].
		\end{align*}
	\end{proof}
	Next, by construction based on $(X_\cdot(\overrightarrow{e}), \overrightarrow{e} \in \overrightarrow{bE}^{\lra}_{\cdot}(h))$ as in Lemma \ref{lem:const}, under $\mu^{\lra}_{\cdot}$ the families 
	\begin{align*}
	\{\ell^{\lra}_{\cdot}(e), X_{\cdot}(\overrightarrow{e}); e \in E^{\lra -}_{\cdot} \text{ and }\overrightarrow{e} \text{ is directed away from } -\root \} \cup X_{\cdot}(+\root,-\root),
	\end{align*} 
	and
	\begin{align*}
	\{\ell^{\lra}_{\cdot}(e), X_{\cdot}(\overrightarrow{e}); e \in E^{\lra +}_{\cdot} \text{ and }\overrightarrow{e} \text{ is directed away from } +\root \} \cup X_{\cdot}(-\root,+\root),
	\end{align*} 
	are independent of each other and $\ell^{\lra}_{\cdot}(-\root,+\root)$. Therefore, 
	the desired conditional independence in part (ii) follows,
	when conditioned on $\left(\ell^{\lra}_{\cdot}(-\root,+\root),X_{\cdot}(+\root,-\root),X_{\cdot}(-\root,+\root)\right) = (l,x_1,x_2)$. 
	 
	Finally, note that each families under $\mu^{\lra}_{\cdot}$ is distributed as the image of corresponding family on $T_{\cdot}$, where $X_\cdot(+\root,-\root)$ (or $X_\cdot(-\root,+\root)$) corresponds to $X^{\downarrow}_{\cdot}$. Now, the independence of these two families under $\mu^{\lra}_{\cdot}$ implies that the conditional distribution of families under $\mu^{1}_{\cdot}$ depends only on the corresponding value of $X^{\downarrow}_{\cdot}$, i.e., $x_1$ for the first family and $x_2$ for the second one. 
\end{proof}
Recall that $\psi\big|_p : [0,\infty)^E \times \{0\} \to [0,\infty)^{E^{\lra}_p}$ maps $\mu\times \delta_{0}$ to $\mu^{\lra}_{p}$, and $\psi\big|_u : [0,\infty)^E \times \{1,2,3,\cdots \} \to [0,\infty)^{E^{\lra}_u}$ maps $\mu\times \texttt{count} \text{ on } \{1,2,3,\cdots \}$ to $\mu^{\lra}_{u}$. Note that the inverse image of the event $\{\ell^{\lra}_{\cdot}(-\root, +\root) < X_{\cdot}(+\root, -\root) + X_{\cdot}(-\root, +\root)\}$ under $\psi\big|_\cdot$ is the event $$\{(\root,\M_{\opt}(\root))\text{ is the double root}\}.$$ Hence, $\psi\big|_p^{-1}$ maps the measure $\mu^{1}_{p}$ to $\mu\times \delta_{0}$ restricted to $\{\text{the second root is }\M_{\opt}(\root) = 0\}$, and $\psi\big|_u^{-1}$ maps the measure $\mu^{1}_{u}$ to $\mu\times  \texttt{count} \text{ on } \{1,2,3,\cdots \}$ restricted to $$\{\text{the second root is } \M_{\opt}(\root)  \text{ and }\M_{\opt}(\root) \neq 0\}.$$

Hence, to study the events $\xbar{B}_\infty$ and $B_{-1}$ on the planted PWIT, we can relabel the vertices by setting $\root$ to be $-\root$, $\M_{\opt}(\root)$ to be $+\root$, map the doubly-rooted planted PWIT, rooted at $(\root,\M_{\opt}(\root))$, to the corresponding bi-infinite tree ($T_{u}^\lra$ or $T_{p}^\lra$ depending on whether $\M_{\opt}(\root) = 0$ or not), and study the image of these events under $\mu^{1}_{p}$ and $\mu^{1}_{u}$.

\subsection{A Lemma on $T_u$ and $T_p$}
Before analyzing the image of $B_{-1}$ on $T_{u}^\lra$ (or $T_{p}^\lra$ depending on whether $\M_{\opt}(\root) = 0$ or not), let us present a technical lemma which generalizes Lemma 23 in \cite{Aldous1992} to the planted case.
\begin{lemma}\label{lem:g}
	On the planted PWIT, define
	\begin{align*}
	&X_{u}^{\downarrow} = \min_{i \geq 0} (\ell(\root,i) - X(\root,i)),\\
	&X_{p}^{\downarrow} = \min_{i \geq 1} (\ell(\root,i) - X(\root,i)),\\
	&I_u = \arg{\min_{i\geq 0}} (\ell(\root,i) - X(\root,i)),\\
	&I_p = \arg{\min_{i\geq 1}} (\ell(\root,i) - X(\root,i)).
	\end{align*}
	For $-\infty < b < a < \infty$ define 
	\begin{align*}
	&g_u(a,b) \coloneqq \prob[\ell(\root,I_u) - b > {\mint_{j: \{I_u,I_uj\} \in E}} (\ell(I_u,I_uj) - X(I_u,I_uj) )\,|\,X_u^\downarrow = a], \\
	&g_p(a,b) \coloneqq \prob[\ell(\root,I_p) - b > {\mint_{j: \{I_p,I_pj\} \in E}} (\ell(I_p,I_pj) - X(I_p,I_pj) )\,|\,X_p^\downarrow = a].
	\end{align*}
	Then $g_u(a,b),g_p(a,b) > 0$.
\end{lemma}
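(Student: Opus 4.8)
The plan is to reduce the statement to a positive‑probability event about the grandchildren of the root along the spine, and the argument is identical for $g_u$ and $g_p$: the only difference is whether the spinal vertex is allowed to be the planted child $0$ of the root, which is immaterial below. So I describe it for $g_u$, writing $I=I_u$ and $X^\downarrow=X^\downarrow_u$. First I would rewrite the event as a statement about a ``gap''. Conditioning on $\{X^\downarrow=a\}$ forces the minimizing child $I$ to satisfy $\ell(\root,I)-X(\root,I)=a$, while the message‑passing identity~\eqref{eq:baleq} gives $X(\root,I)=\min_{j:\{I,Ij\}\in E}\big(\ell(I,Ij)-X(I,Ij)\big)$. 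Substituting $\ell(\root,I)=a+X(\root,I)$ into the defining event $\ell(\root,I)-b>\mint_{j:\{I,Ij\}\in E}\big(\ell(I,Ij)-X(I,Ij)\big)$, the common term $X(\root,I)$ cancels, and the event becomes
\[
\mint_{j:\{I,Ij\}\in E}\big(\ell(I,Ij)-X(I,Ij)\big)-\min_{j:\{I,Ij\}\in E}\big(\ell(I,Ij)-X(I,Ij)\big)<a-b,
\]
i.e.\ the gap between the smallest and second‑smallest of the values $\{\ell(I,Ij)-X(I,Ij)\}_j$ is less than $a-b$ (these minima being attained a.s.\ because~\eqref{eq:disteqX}--\eqref{eq:disteqY} has a solution supported on $\mathbb R$). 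So it suffices to show this gap is $<a-b$ with positive conditional probability.

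Next I would strip off the conditioning on the null event $\{X^\downarrow=a\}$ by a routine disintegration. Fix a candidate $i_0\ge 0$ for $I$ and condition on $\{X^\downarrow=a,\ I=i_0\}$; write $Z=\min_{j}\big(\ell(i_0,i_0j)-X(i_0,i_0j)\big)$ for the minimum attached to $i_0$'s subtree. The conditioning asks that $\ell(\root,i_0)=a+Z$ and that $\ell(\root,i)>a+X(\root,i)$ for every other child $i$ of the root. Since $\ell(\root,i_0)$ is independent of $i_0$'s subtree (Corollary~\ref{cor:iidX}) with a bounded density $p_{i_0}$ that is strictly positive on $(0,\infty)$ (a $\mathrm{Gamma}(i_0,1)$ density when $i_0\ge1$, an $\exp(\lambda)$ density when $i_0=0$), and since the constraints on the other children depend only on the other, independent subtrees, the conditional law of $i_0$'s subtree given $\{X^\downarrow=a,\ I=i_0\}$ is just its unconditional law reweighted by a constant times $p_{i_0}(a+Z)\,\mathbf 1_{\{Z>-a\}}$. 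This reweighting is finite (as $p_{i_0}$ is bounded) and nondegenerate (as $Z$ is distributed as $X_0$ if $i_0=0$, or $Y_0$ if $i_0\ge1$, both supported on all of $\mathbb R$), and has strictly positive density on $\{Z>-a\}$. Hence it is enough to exhibit, under the \emph{unconditioned} law of $i_0$'s subtree, a positive‑probability event on which the gap at $i_0$ is $<a-b$ and $Z>-a$; since this then holds for every $i_0\ge0$, it yields $g_u(a,b)>0$, and the same argument over $i_0\ge1$ yields $g_p(a,b)>0$.

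Finally I would exhibit such a configuration. Pick $0<\delta<a-b$ and require that the two lightest un‑planted children $i_01,i_02$ of $i_0$ both have $\ell(i_0,i_0j)-X(i_0,i_0j)\in(-a,-a+\delta)$, while every other child of $i_0$ (the remaining un‑planted children, and the planted child $i_00$ when $i_0\ge1$) has $\ell(i_0,i_0\cdot)-X(i_0,i_0\cdot)\ge-a+\delta$. On this event the two smallest values are precisely those at $i_01$ and $i_02$, so the gap is at most $\delta<a-b$ and $Z>-a$. Each ingredient has positive probability: the un‑planted edge weights of $i_0$ are Poisson arrivals $\zeta_1<\zeta_2<\cdots$, so $\zeta_2$ lies within $\delta$ of $\zeta_1$ with positive probability; the messages $X(i_0,i_0j)$ are i.i.d.\ $Y_0$ on the un‑planted children and $\sim X_0$ on the planted child, all supported on $\mathbb R$, so every open interval has positive probability; and the event that all children beyond the second lie to the right of $-a+\delta$ has positive probability because $\sum_j\bar F_Y(\zeta_j+a-\delta)<\infty$ a.s.---using the exponential tail of $X$, hence of $Y$, established in the proof of Lemma~\ref{lem:mgf}---so the associated infinite product of survival factors is a.s.\ positive with positive expectation. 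Conditioned on $\zeta_1,\zeta_2$, these ingredients involve independent randomness, so their intersection has positive probability, completing the proof. This is the planted analogue of Lemma~23 of~\cite{Aldous1992}; the genuinely new points---bookkeeping for the two kinds of grandchildren with their two message distributions, and checking that the tilt by $p_{i_0}$ and the extra planted grandchild do not destroy positivity---are exactly where I expect the only real care to be needed.
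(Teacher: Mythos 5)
Your proof is correct in substance and reaches the same pivotal reduction as the paper --- after substituting $\ell(\root,I)=a+X(\root,I)$ and $X(\root,I)=\min_j(\ell(I,Ij)-X(I,Ij))$ from \eqref{eq:baleq}, the event becomes ``the gap between the smallest and second-smallest of the grandchildren's values $\ell(I,Ij)-X(I,Ij)$ is less than $a-b$'' --- but from there the two arguments diverge. The paper conditions on $X(\root,I)=x$ and invokes a splitting observation: the points $\{\ell(I,Ij)-X(I,Ij)\}_{j\ge1}$ form an inhomogeneous Poisson process with everywhere-positive intensity, so conditioned on the first arrival at $x$ the second minimum has positive density on all of $(x,\infty)$; this gives $\tilde g(a,b,x)>0$ for \emph{every} $x$, and positivity of $g$ follows by taking the conditional expectation over $x$ given $X^\downarrow=a$, which sidesteps any explicit disintegration of the null event. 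You instead disintegrate over the identity $i_0$ of the minimizing child, argue the conditional law of $i_0$'s subtree is mutually absolutely continuous with its unconditional law on $\{Z>-a\}$, and then exhibit an explicit witness configuration (two un-planted grandchildren within $\delta$ of each other, everything else pushed right), paying for it with the tail estimate $\sum_j\bar F_Y(\zeta_j+a-\delta)<\infty$ from Lemma~\ref{lem:mgf} to make the infinite product of survival factors positive. Your route is more hands-on and arguably more robust to losing the Poisson structure, but one step is stated imprecisely: the reweighting density is not $c\,p_{i_0}(a+Z)\mathbf 1_{\{Z>-a\}}$, because the $\ell(\root,i)$ are \emph{ordered} arrivals of a single Poisson process (hence mutually dependent) and the constraints $\ell(\root,i)-X(\root,i)>a$ at the other children also tilt the law of $\ell(\root,i_0)$. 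What is true, and all your argument actually uses, is that the tilt is a finite function of $Z$ alone that is strictly positive on $\{Z>-a\}$; with that correction the proof goes through. The paper's Poisson-splitting route avoids this bookkeeping entirely, which is what it buys.
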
 
\begin{proof}
	Let us begin with an observation, which is the continuous analogous of the splitting property of a Poisson process.
	\begin{observation}
		Let $\{X_i\}_{i=1}^\infty$ be independent real-valued continuous random variables with common distribution $\alpha_X$. Let $\{\zeta_i\}_{i=1}^\infty$ denote the arrivals of a Poisson process with parameter 1. Then $\{(\zeta_i,X_i)\}_{i=1}^\infty$ forms a Poisson point process on $[0,\infty)\times (-\infty,+\infty)$ with mean intensity $\beta(z,x)dzdx = dz\alpha_X(dx)$. 
		
		Now, let $Y_i = \zeta_i - X_i$. The set of points $\{Y_i\}_{i=1}^\infty$ forms a certain inhomogeneous Poisson process on $(-\infty,\infty)$ with mean intensity $\gamma(y)dy = \alpha_X([-y,\infty))dy$. Finally, it is easy to see that conditioned on the time of the first arrival to be $y_0$, the other points in $\{Y_i \}_{i=1}^\infty$ are the points of a certain inhomogeneous Poisson process on $(y_0,\infty)$. Similar statement holds if we condition on no arrival before $y_0$.
	\end{observation}
Note that by \eqref{eq:baleq}, we have
\begin{align*}
X(\root,I_\cdot) = {\min_{j: \{I_\cdot,I_\cdot j\} \in E}} (\ell(I_\cdot ,I_\cdot j) - X(I_\cdot ,I_\cdot j) ).
\end{align*}
Now, by the above observation, the set of points $\{\ell(I_\cdot ,I_\cdot j) - X(I_\cdot ,I_\cdot j), j\geq1 \}$ conditioned on $X(\root,I_\cdot ) = x$, are the points of a certain inhomogeneous Poisson process on $(x,\infty)$ (note that the claim is true regardless of whether ${\argmin_{j: \{I_\cdot,I_\cdot j\} \in E}} (\ell(I_\cdot ,I_\cdot j) - X(I_\cdot ,I_\cdot j) ) = 0$ or not). Hence,
\begin{align*}
\prob[{\mint_{j: \{I_\cdot,I_\cdot j\} \in E}} (\ell(I_\cdot,I_\cdot j) - X(I_\cdot,I_\cdot j) ) \in [y,y+dy] \,|\,X(\root,I_\cdot) = x] \geq \widetilde{\beta}_x(y)dy,
\end{align*}
where $\widetilde{\beta}_x(y) > 0$ for all $y > x$. Since the above term does not depend on the value of $\ell(\root,I_\cdot )$, we have
\begin{align*}
&\widetilde{g}_\cdot(a,b,x) \coloneqq \\
&\qquad \prob[\ell(\root,I_\cdot) - b > {\mint_{j: \{I_\cdot,I_\cdot j\} \in E}} (\ell(I_\cdot,I_\cdot j) - X(I_\cdot,I_\cdot j) ) \,|\,X(\root,I_\cdot) = x, \ell(\root,I_\cdot) = a+x] >0,
\end{align*}
for all $-\infty < b < a < \infty $, and $-\infty < x < \infty$. Now, the result follows by
\begin{align*}
g_\cdot (a,b) = \expect[\widetilde{g}_\cdot (a,b,X(\root,I_\cdot))\,|\,X_\cdot ^\downarrow = a],
\end{align*}
since $X_\cdot ^\downarrow = \ell(\root,I_\cdot) - X(\root,I_\cdot)$.
\end{proof}
\subsection{Calculating with the Bi-infinite Tree}
On the bi-infinite tree $T_{\cdot}^\lra$, define the event $C_{\cdot,-1}$ as
\begin{align*}
C_{\cdot,-1} \coloneqq \left\{ -\root = \argmint_{y:\{y,-I_{\cdot}\}\in E^{\lra}_\cdot} \left(\ell^{\lra}_\cdot(-I_{\cdot},y) - X_{\cdot}(-I_{\cdot},y)\right) \right\},
\end{align*}
where 
\begin{align*}
-I_{\cdot} = \argmin_{-i:\{-\root,-i\}\in E^{\lra}_\cdot}(\ell^{\lra}_\cdot(-\root,-i) - X_{\cdot}(-\root,-i)).
\end{align*}
The event $C_{\cdot,-1}$ under $\mu^{1}_{\cdot}$ corresponds to the event $B_{-1}$ on the doubly-rooted planted PWIT, rooted at $(\root,v_1)$, where $v_1 = \M_\opt(\root)$. Define the following $\sigma$-algebras:
\begin{align*}
&\mathcal{F}_{\cdot}^- = \sigma\left(X_\cdot(\overrightarrow{e}),\ell_{\cdot}^{\lra}(e): e\in E^{\lra-}_{\cdot} \text{ and $\overrightarrow{e}$ is directed} \right),\\
&\mathcal{F}_{\cdot}^+ = \sigma\left(X_\cdot(\overrightarrow{e}),\ell_{\cdot}^{\lra}(e): e\in E^{\lra+}_{\cdot} \text{ and $\overrightarrow{e}$ is directed} \right),\\
&\mathcal{F}_{\cdot}^{\root} = \sigma\left(\ell_{\cdot}^{\lra}(-\root,+\root),X_\cdot(+\root,-\root),X_\cdot(-\root,+\root) \right).
\end{align*}
\begin{lemma}\label{lem:mu1calc}
	$\mu^{1}_{\cdot}(C_{\cdot,-1}^c\big|\mathcal{F}_{\cdot}^+,\mathcal{F}_{\cdot}^{\root}) = g_\cdot\left(X_{\cdot}(+\root,-\root),\ell^{\lra}_{\cdot}(-\root,+\root) - X_{\cdot}(-\root,+\root)\right)$, where $C_{\cdot,-1}^c$ is the complement of the event $C_{\cdot,-1}$.
\end{lemma}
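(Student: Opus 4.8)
\textbf{Proof proposal for Lemma~\ref{lem:mu1calc}.}

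The plan is to compute the conditional probability by peeling off, step by step, the information carried by the two $\sigma$-algebras $\mathcal{F}^+_\cdot$ and $\mathcal{F}^\root_\cdot$, reducing everything to a statement about the ``leftward'' branch of the bi-infinite tree that is exactly the quantity $g_\cdot$ defined in Lemma~\ref{lem:g}. The first observation I would record is that, conditioned on $\mathcal{F}^\root_\cdot$, i.e.\ on $\left(\ell^\lra_\cdot(-\root,+\root),X_\cdot(+\root,-\root),X_\cdot(-\root,+\root)\right)=(l,x_1,x_2)$ with $x_1+x_2>l$, we are automatically on the event $\{\M^\lra_{\cdot,\opt}(-\root,+\root)=1\}$, which is the full support of $\mu^1_\cdot$; by Lemma~\ref{lem:changepres}(ii) and Remark~\ref{rem:onedirection}, the leftward family $\{\ell^\lra_\cdot(e),X_\cdot(\overrightarrow e): e\in E^{\lra-}_\cdot\}$ is, conditionally, distributed as the image of $\nu_\cdot(x_1)$ under the embedding $T_\cdot\to T^-_\cdot\subset T^\lra_\cdot$, and is conditionally independent of $\mathcal{F}^+_\cdot$. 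Hence conditioning further on $\mathcal{F}^+_\cdot$ adds nothing, and $\mu^1_\cdot(C^c_{\cdot,-1}\mid\mathcal{F}^+_\cdot,\mathcal{F}^\root_\cdot)$ is a function of $x_1$ and of $l-x_2=\ell^\lra_\cdot(-\root,+\root)-X_\cdot(-\root,+\root)$ alone.

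Next I would translate the event $C^c_{\cdot,-1}$ into the language of the planted PWIT $T_\cdot$ through the embedding. On $T^\lra_\cdot$ the index $-I_\cdot=\argmin_{-i}\left(\ell^\lra_\cdot(-\root,-i)-X_\cdot(-\root,-i)\right)$ picks out the child of $-\root$ achieving the downward minimum, which under the embedding corresponds precisely to $I_\cdot$ on $T_\cdot$, and $X^\downarrow_\cdot=\ell(\root,I_\cdot)-X(\root,I_\cdot)$ corresponds to $x_1$ because, under $\nu_\cdot(x_1)$, the downward minimum at the relabelled root equals $x_1$ by definition of $\nu_\cdot$. Using~\eqref{eq:Mdef_lra}/\eqref{eq:Mdefalt_lra} and the message-passing identity~\eqref{eq:baleq}, the upward message out of $-I_\cdot$ toward $-\root$ satisfies $X_\cdot(-I_\cdot,-\root)=\min^{[2]}_{j}\left(\ell^\lra_\cdot(-I_\cdot,-I_\cdot j)-X_\cdot(-I_\cdot,-I_\cdot j)\right)$ on the full support of $\mu^1_\cdot$ (the second minimum appears because one branch, namely $-\root$, must be excluded in~\eqref{eq:baleq} but the first minimum is attained there). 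Since $-\root$ is the $\argmint$ out of $-I_\cdot$ iff $\ell^\lra_\cdot(-I_\cdot,-\root)-X_\cdot(-I_\cdot,-\root)$ exceeds that second minimum computed over the genuine descendants of $-I_\cdot$, and since $\ell^\lra_\cdot(-I_\cdot,-\root)=\ell^\lra_\cdot(-\root,+\root)$ while $X_\cdot(-I_\cdot,-\root)=\ell^\lra_\cdot(-\root,+\root)-X_\cdot(-\root,+\root)=l-x_2$ by Remark~\ref{rem:onedirection}, the event $C^c_{\cdot,-1}$ becomes $\left\{\ell(\root,I_\cdot)-(l-x_2)>\min^{[2]}_{j}\left(\ell(I_\cdot,I_\cdot j)-X(I_\cdot,I_\cdot j)\right)\right\}$ on $T_\cdot$, conditioned on $X^\downarrow_\cdot=x_1$. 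That is exactly the definition of $g_\cdot(a,b)$ with $a=x_1=X_\cdot(+\root,-\root)$ and $b=l-x_2=\ell^\lra_\cdot(-\root,+\root)-X_\cdot(-\root,+\root)$, which is the claimed identity; note $b<a$ holds $\mu^1_\cdot$-a.s.\ because the minimum over descendants of $I_\cdot$ (attained at $X(\root,I_\cdot)\ge X^\downarrow_\cdot$ shifted appropriately) forces $\ell(\root,I_\cdot)-b$ to be comparable, and in any case $g_\cdot$ is defined for $b<a$ and the measure of $\{b\ge a\}$ is handled by the support condition $x_1+x_2>l$.

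The main obstacle I anticipate is the bookkeeping in the second paragraph: getting the directions of the edges, the exclusion in~\eqref{eq:baleq}, and the appearance of $\min^{[2]}$ rather than $\min$ all exactly right, and verifying that the embedding $T_\cdot\to T^-_\cdot$ really does carry $\nu_\cdot(x_1)$ to the conditional law of the leftward family so that the event on $T^\lra_\cdot$ matches the event defining $g_\cdot$ on $T_\cdot$ measure-for-measure. Everything else — the conditional independence, the irrelevance of $\mathcal{F}^+_\cdot$, the substitution of the explicit values of the upward message out of $-I_\cdot$ — follows directly from Lemma~\ref{lem:changepres} and Remark~\ref{rem:onedirection}, so the content of the proof is really just this careful identification of events under the relabelling.
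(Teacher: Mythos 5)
Your overall route is the same as the paper's: use the conditional independence from Lemma~\ref{lem:changepres}(ii) to discard $\mathcal{F}^+_\cdot$, identify the conditional law of the leftward family as the image of $\nu_\cdot(x_1)$, and then unfold the definition of $g_\cdot$ until it matches the event $C^c_{\cdot,-1}$. Your first paragraph, and your final identification of the event (including the observation that $b<a$ is guaranteed $\mu^1_\cdot$-a.s.\ by the support condition $x_1+x_2>l$), agree with the paper.

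However, the middle of your second paragraph contains a genuine error. The identity $X_\cdot(-I_\cdot,-\root)=\mint_{j}\left(\ell^\lra_\cdot(-I_\cdot,-I_\cdot j)-X_\cdot(-I_\cdot,-I_\cdot j)\right)$ is false: by \eqref{eq:baleq}, $X_\cdot(-I_\cdot,-\root)$ is a minimum over the neighbors of $-\root$ \emph{other than} $-I_\cdot$, so it is determined by the $+$ side and the remaining $-i$ branches, not by the descendants of $-I_\cdot$; on the support of $\mu^1_\cdot$ it equals $\ell^\lra_\cdot(-\root,+\root)-X_\cdot(-\root,+\root)=l-x_2$, which is exactly Remark~\ref{rem:onedirection} and is the value you correctly substitute one line later. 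Worse, your parenthetical justification (``the first minimum is attained there,'' i.e.\ at $-\root$) asserts the opposite of what is true and of what the argument needs: since $\mu^1_\cdot$ is supported on $\{\M^{\lra}_{\cdot,\opt}(-\root,+\root)=1\}$, Lemma~\ref{lem:match} gives $\M^{\lra}_{\cdot,\opt}(-I_\cdot)\neq-\root$, so $-\root$ is \emph{never} the argmin of $\ell^\lra_\cdot(-I_\cdot,y)-X_\cdot(-I_\cdot,y)$ over neighbors $y$ of $-I_\cdot$. It is precisely this fact that converts ``$-\root$ is not the second argmin'' into ``$\ell^\lra_\cdot(-I_\cdot,-\root)-X_\cdot(-I_\cdot,-\root)$ exceeds $\mint_j$ over the children of $-I_\cdot$''; without it, $C^c_{\cdot,-1}$ would also have to include the possibility that $-\root$ is the \emph{first} argmin. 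Relatedly, your ``iff'' is stated backwards: $-\root$ is the argmint iff its value lies strictly between the first and second minima over the children, not iff it exceeds the second minimum (as written, your iff would make the displayed event equal to $C_{\cdot,-1}$ rather than its complement). Once these two points are repaired --- replace the false formula for $X_\cdot(-I_\cdot,-\root)$ by the Remark~\ref{rem:onedirection} value $l-x_2$, and invoke Lemma~\ref{lem:match} to rule out $-\root$ being the first argmin --- the remainder of your identification goes through and coincides with the paper's proof.
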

\begin{proof}
	By Remark \ref{rem:onedirection} right after Lemma \ref{lem:changepres},
	\begin{align*}
	\mathcal{F}_{\cdot}^- \cap  \mathcal{F}_{\cdot}^{\root} = \mathcal{F}_{\cdot}^{\root} \cap \sigma\left(X_\cdot(\overrightarrow{e}),\ell_{\cdot}^{\lra}(e): e\in E^{\lra-}_{\cdot} \text{ and $\overrightarrow{e}$ is directed away from }-\root \right) , \\
	\mathcal{F}_{\cdot}^+ \cap  \mathcal{F}_{\cdot}^{\root} = \mathcal{F}_{\cdot}^{\root} \cap \sigma\left(X_\cdot(\overrightarrow{e}),\ell_{\cdot}^{\lra}(e): e\in E^{\lra+}_{\cdot} \text{ and $\overrightarrow{e}$ is directed away from }+\root \right). 
	\end{align*}
	Now, since $C_{\cdot,-1}$ is $\mathcal{F}_{\cdot}^-$ measurable, by conditional independence of Lemma \ref{lem:changepres} part (ii) we have
	\begin{align*}
	\mu^{1}_{\cdot}(C_{\cdot,-1}^c\big|\mathcal{F}_{\cdot}^+,\mathcal{F}_{\cdot}^{\root}) = \mu^{1}_{\cdot}(C_{\cdot,-1}^c\big|\mathcal{F}_{\cdot}^{\root}).
	\end{align*}
	Hence, we need to show that for all $(l,x_1,x_2)$,
	\begin{align*}
	\mu^{1}_{\cdot}\left\{C_{\cdot,-1}^c\big|(\ell^{\lra}_{\cdot}(-\root,+\root),X_{\cdot}(+\root,-\root),X_{\cdot}(-\root,+\root)) = (l,x_1,x_2)\right\} = g_\cdot\left(x_1,l-x_2\right).
	\end{align*}
	By Lemma \ref{lem:changepres}, conditioned on the event $(\ell^{\lra}_{\cdot}(-\root,+\root),X_{\cdot}(+\root,-\root),X_{\cdot}(-\root,+\root))\!= \!(l,x_1,x_2)$, the distribution of the family
	\begin{align*}
	\{\ell^{\lra}_{\cdot}(e), X_{\cdot}(\overrightarrow{e}); e \in E^{\lra -}_{\cdot} \text{ and }\overrightarrow{e} \text{ is directed away from } -\root \},
	\end{align*}
	is the image of $\nu_{\cdot}(x_1)$ under the natural embedding $T_{\cdot}\to T_{\cdot}^- \subset  T_{\cdot}^\lra$. Recall that $\nu_{\cdot}(x_1)$ is the distribution of $\{\ell(e), X(\overrightarrow{e}); e \in E_\cdot \text{ and }\overrightarrow{e} \text{ is directed away from } \root\}$ on $T_{\cdot}$, given $X_{\cdot}^{\downarrow} = x_1$. Hence,
	\begin{align*}
	g_\cdot\left(x_1,l-x_2\right) &= \mu^{1}_{\cdot}\bigg\{ \ell^{\lra}_{\cdot}(-\root,-I_{\cdot}) - (l-x_2) > \\
	&\qquad\qquad{\mint_{j: \{-I_\cdot,-I_\cdot j\} \in E^{\lra-}_{\cdot}}} (\ell^{\lra}_{\cdot}(-I_\cdot,-I_\cdot j) - X_\cdot(-I_\cdot,-I_\cdot j) )\\ &\qquad\qquad\qquad\qquad\,\big|\,(\ell^{\lra}_{\cdot}(-\root,+\root),X_{\cdot}(+\root,-\root),X_{\cdot}(-\root,+\root)) = (l,x_1,x_2)\bigg\} .
	\end{align*}
	However, under this conditioning 
	\begin{align*}
	 \ell^{\lra}_{\cdot}(-\root,-I_{\cdot}) - (l-x_2) &= \ell^{\lra}_{\cdot}(-\root,-I_{\cdot}) - (\ell^{\lra}_{\cdot}(-\root,+\root) - X_{\cdot}(-\root,+\root))\\
	 &=  \ell^{\lra}_{\cdot}(-I_{\cdot},-\root) - X_{\cdot}(-I_{\cdot},-\root),
	\end{align*}
	where the last equality follows by \eqref{eq:baleq} and the fact that under $\mu^{1}_{\cdot}$, $\M^{\lra}_{\cdot,\opt}(-\root,\root) = 1$ (see Remark \ref{rem:onedirection}).
	Finally, note that under $\mu^{1}_{\cdot}$, 
	\begin{align*}
	C_{\cdot,-1}^c = \bigg\{\ell^{\lra}_{\cdot}(-I_{\cdot},-\root) - X_{\cdot}(-I_{\cdot},-\root) > 
	{\mint_{j: \{-I_\cdot,-I_\cdot j\} \in E^{\lra-}_{\cdot}}} (\ell^{\lra}_{\cdot}(-I_\cdot,-I_\cdot j) - X_\cdot(-I_\cdot,-I_\cdot j) )\bigg\}.
	\end{align*}
\end{proof}
Now, we have all the machinery to finish the proof of Lemma \ref{lem:BB}. By using the relabeling bijections, $\prob[B_{-1}\,|\,\xbar{B}_\infty, \{\M_{\opt}(\root) = 0\}]$ equals $\mu^{1}_{p}\left(C_{p,-1}\,|\, C_p\right)$ where $C_p$ is a certain event which is measurable with respect to $\mathcal{F}_{p}^+ \cap \mathcal{F}_{p}^{\root}$ such that $\mu^{1}_{p}(C_p) = \prob[\xbar{B}_\infty\cap \{\M_{\opt}(\root) = 0\}]$.
Similarly, $\prob[B_{-1}\,|\,\xbar{B}_\infty, \{\M_{\opt}(\root) \neq 0\}]$ equals $\mu^{1}_{u}\left(C_{u,-1}\,|\, C_u\right)$ for a certain event $C_u$ that is defined similar to $C_p$. Now, by Lemma \ref{lem:g} and Lemma \ref{lem:mu1calc}, if $\mu^{1}_{\cdot }(C_\cdot) > 0$, then we have
\begin{align*}
\mu^{1}_{\cdot}\left(C_{\cdot,-1}^c \cap C_\cdot \right) &= \mathbb{E}_{\mu^{1}_{\cdot}} \left[ \bs{1}_{C_\cdot}\, \mu^{1}_{\cdot}(C_{\cdot,-1}^c\big|\mathcal{F}_{\cdot}^+,\mathcal{F}_{\cdot}^{\root}) \right]\\
&= \mathbb{E}_{\mu^{1}_{\cdot}} \left[ \bs{1}_{C_\cdot} \, g_\cdot\left(X_{\cdot}(+\root,-\root),\ell^{\lra}_{\cdot}(-\root,+\root) - X_{\cdot}(-\root,+\root)\right) \right] > 0.
\end{align*}

\end{document}